\documentclass[preprint,11pt,authoryear]{elsarticle}

\usepackage{amsmath,amssymb,amsfonts,amsthm,mathtools}
\usepackage{mathrsfs}
\DeclareMathAlphabet{\pazocal}{OMS}{zplm}{m}{n}

\usepackage{algorithm}
\usepackage{algorithmic}
\usepackage{graphicx}
\usepackage{booktabs}
\usepackage{adjustbox}
\usepackage{pifont}
\usepackage{subcaption}
\usepackage{comment}
\usepackage{makecell}
\newcommand{\cmark}{\ding{51}}
\newcommand{\xmark}{\ding{55}}
\newcommand{\pmark}{$\triangle$}

\usepackage{setspace}
\usepackage[table]{xcolor}

\usepackage[normalem]{ulem}
\usepackage{xparse}

\newif\ifshowchanges
\showchangesfalse

\DeclareRobustCommand{\added}[1]{%
  \ifshowchanges
    \textcolor{blue}{#1}%
  \else
    #1%
  \fi
}

\AtBeginDocument{%
\setlength{\abovedisplayskip}{5pt plus 1pt minus 1pt}
\setlength{\belowdisplayskip}{5pt plus 1pt minus 1pt}
\setlength{\abovedisplayshortskip}{5pt plus 1pt minus 1pt}
\setlength{\belowdisplayshortskip}{5pt plus 1pt minus 1pt}
\setlength{\jot}{2pt}
}

\AtBeginDocument{%
\setlength{\textfloatsep}{5pt plus 1pt minus 2pt} 
\setlength{\floatsep}{5pt plus 1pt minus 2pt}     
\setlength{\intextsep}{5pt plus 1pt minus 2pt}    
\setlength{\abovecaptionskip}{2pt}
\setlength{\belowcaptionskip}{2pt}
}

\theoremstyle{plain}
\newtheorem{theorem}{Theorem}
\newtheorem{proposition}[theorem]{Proposition}
\newtheorem{lemma}[theorem]{Lemma}
\newtheorem{definition}[theorem]{Definition}
\newtheorem{assumption}[theorem]{Assumption}

\journal{European Journal of Operational Research}
\usepackage[
    colorlinks=true,
    linkcolor=blue,
    urlcolor=blue,
    citecolor=blue
]{hyperref}
\begin{document}
\begin{frontmatter}
\title{Shared Infrastructure Investment and Pricing: Stackelberg Equilibria in Risk-Aware Take-or-Pay Contracts}
\author[a,b]{Amal Sakr\corref{cor1}}
\ead{amalsakr@illinois.edu}
\author[a]{Andrea Araldo}
\author[b]{Tamer Başar}
\author[a]{Tijani Chahed}
\cortext[cor1]{Corresponding author}
\affiliation[a]{
    organization={Institut Polytechnique de Paris},
    city={Palaiseau},
    postcode={91120},
    country={France}
}
\affiliation[b]{
    organization={University of Illinois Urbana-Champaign},
    city={Urbana},
    postcode={61801},
    state={IL},
    country={USA}
}
\begin{abstract}
\noindent We study a shared infrastructure deployed by an Infrastructure Provider (InP) and used by multiple firms generating revenues through resource usage. We focus on a challenging setting where (i) infrastructure deployment requires substantial upfront investment, which the InP recovers via payments by firms that depend on their uncertain future revenues; (ii) firms' resource usage is jointly influenced by exogenous factors, infrastructure pricing, operational costs, and resource congestion; and (iii) firms exhibit heterogeneous risk aversion. These aspects are typical of emerging technologies, such as Mobile Edge Computing (MEC). \added{Yet, their joint effect on the InP’s capacity dimensioning and pricing decisions and the firms' usage commitments remains poorly understood.}

\added{We establish conditions for the uniqueness of the equilibrium among the firms.} We do so by introducing a Stackelberg game with risk-aware take-or-pay contracting and firm-side operational and congestion costs, in which the InP acts as the leader, while firms act as followers that share the infrastructure and commit upfront to future resource usage under uncertain revenues. Followers' heterogeneous risk aversion is modeled through Conditional Value-at-Risk (CVaR). We prove the existence of a Stackelberg equilibrium (SE), in which the followers' decisions constitute a generalized Nash equilibrium, and develop a polynomial-time algorithm that boundedly approximates the SE. We derive a lower bound on the followers' Probability of Profit (PoP). 
Simulations in a realistic MEC scenario show that higher followers' risk aversion reduces capacity, pricing, and leader profit, while increasing the lower bound on PoP.
\end{abstract}
\begin{keyword}
Investment analysis \sep Pricing \sep Game theory \sep
Risk management.
\end{keyword}
\end{frontmatter}
\section{Introduction}
\noindent Several emerging technologies are technically ready for broad deployment, yet their adoption depends on costly shared infrastructure that must be installed before future utilization is known. Mobile edge computing (MEC), for example, requires distributed edge capacity to support latency-sensitive services~\citep{cruz2022edge}. 
Similar investment challenges arise in other infrastructure systems, such as electric vehicle (EV) charging networks~\citep{dimanchev2023accelerating} and electricity storage \citep{huang2025strategic}. 

In these settings, the Infrastructure Provider (InP), who deploys and maintains shared infrastructure, is faced with the crucial decisions of how much capacity to install and how to price it to the firms that will use such an infrastructure. Overbuilding leaves costly capacity underutilized, whereas underbuilding creates congestion, limits service quality, and may slow adoption. The InP is typically reluctant to take the risk of such investments, which prevents critical infrastructure from emerging \citep[p.~4]{Ehlers2014Infrastructure}, \citep{dimanchev2023accelerating}.
The reasons for this reluctance are twofold: (i) uncertainty regarding future infrastructure utilization and, consequently, the revenues that the InP may eventually collect, which materialize long after the upfront deployment costs are incurred; and (ii) the fact that revenues generated through the infrastructure are primarily captured by the firms using it, making it difficult for the InP to obtain a fair share of the value created during operation~\citep{GSMA2023FairShare, Telefonica2022OTTtraffic}.

\emph{Take-or-pay} contracts are designed to mitigate this barrier by requiring firms to commit in advance to the resources they will use (and pay for it) \cite[p.~60]{WorldBank2014PPA}. In doing so, risk is transferred from the InP to the firms: while a larger commitment provides the firms with access to greater infrastructure resources, which in principle allow them to generate higher future revenues, it also increases exposure to losses if realized revenues fall short of expectations.

\added{Existing research in Operations Research~\citep{zahur2025long,jin2007capacity, boyaci2010information} overlooks two important features of shared-infrastructure investment: (i)~congestion arising from the joint use of shared resources and (ii)~firms’ risk awareness. Congestion affects service quality and, consequently, investment profitability, whereas risk awareness shapes firms' commitment decisions.
 Therefore, omitting these features substantially limits the applicability of existing models to real-world take-or-pay contracts. A comprehensive method for analyzing such contracts is therefore still lacking. We fill this gap via an original Stackelberg game formulation, with risk-aware firms on a congested shared resource. The InP acts as the leader and determines capacity and access price, while the firms act as followers and choose commitments that determine infrastructure utilization and the InP’s revenue.
 Via this formulation, we establish the existence and uniqueness of an equilibrium among firms with heterogeneous levels of risk-aversion.}

The main contributions of this paper are as follows.
\vspace{-2mm}
\begin{itemize}
    \item  We derive a sufficient condition on the structure of firm-side operational and congestion costs that guarantees the uniqueness of the followers’ variational equilibrium (VE), which is a generalized Nash equilibrium. Followers are heterogeneous in their revenue profiles and risk aversion, modeled through Conditional Value-at-Risk (CVaR) (Theorem \ref{thm:gne_uniqueness}).
    \item We prove the existence of a Stackelberg equilibrium, which includes the decisions of the leader and the followers’ VE (Theorem~\ref{theor:stackelberg_exists}).
    \item We establish an ex-post profit guarantee for ex-ante risk-aware decisions by deriving a lower bound on the followers’ \emph{Probability of Profit} (PoP), \added{quantifying how CVaR-based commitments mitigate downside-profit risk} (Theorem~\ref{thm:global_bound_pi}). 
    \item Under realistic functional forms, we develop a polynomial-time procedure for computing an approximate Stackelberg equilibrium with a bounded optimality gap (Proposition \ref{prop:computational_tractability}).
\end{itemize}
\added{We evaluate the model through a realistic MEC case study calibrated with public cloud/edge cost data and Monte Carlo simulations of uncertain revenues. }
We further validate the model on a real-world dataset and compare it with representative state-of-the-art benchmarks, showing high InP profit, high utilization, and strong lower bounds on followers’ PoP.

The paper is organized as follows. \S~\ref{sec:related} reviews related work; \S~\ref{sec:model} presents the system model and Stackelberg game, establishes the followers’ VE existence and uniqueness, proves SE existence, and derives a PoP guarantee; \S~\ref{sec:analytical} gives the computation procedure; \S~\ref{sec:numerical} reports MEC results; and \S~\ref{sec:conclusion} concludes.


\section{Related Work and Positioning}\label{sec:related}

Our paper contributes to four streams of operations research literature: capacity--pricing decisions, capacity investment under uncertainty, take-or-pay contracts, and access pricing and infrastructure sharing. We review each stream and position our contributions.
\vspace{-3mm}
\paragraph{\textbf{Joint Capacity and Pricing Decisions}}
\added{\citet{maglaras2003pricing} and \citet{maglaras2005pricing} study shared-resource service systems where prices and congestion induce aggregate demand, considering single-class and differentiated guaranteed/best-effort services, respectively. Recent work examines stochastic or unknown price-dependent demand in joint pricing--inventory \citep{chen2024optimal} and capacity-adjustment/pricing models \citep{chen2026capacity}. In contrast, our demand arises endogenously from strategic, risk-aware firms' equilibrium resource commitments to shared capacity. Revenue uncertainty affects the InP through these commitments: firms choose resources under mean--CVaR preferences, thereby determining shared infrastructure utilization, access revenue, capacity, and price.
Relatedly,} \citet{basar2002revenue} study congestion-based bandwidth pricing with capacity scaling proportionally to users, and \citet{shen2007optimal} extend this setting to nonlinear pricing and incentive design. We instead endogenize capacity dimensioning. \citet{acemoglu2009price} analyze deterministic capacity--price competition with private capacities, while \citet{harks2024stackelberg} study a multi-leader Stackelberg game in which each leader chooses the capacity and price of a separate resource. By contrast, we model uncertainty and a single InP that jointly dimensions and prices one shared infrastructure.
The closest stream focuses on shared-storage models. \citet{zhang2024shared} study deterministic capacity dimensioning and leasing prices, while \citet{huang2025strategic} propose a Stackelberg model where a leader chooses storage investment and service fees and followers choose resources under peak-demand uncertainty using prospect theory. Unlike their setting, which assumes no mutual impact among followers, we consider the more realistic and challenging case of congestion in the shared resource, requiring characterization of a generalized Nash equilibrium (GNE). Moreover, none of these works provides an ex-post profit guarantee for ex-ante risk-aware decisions.
\vspace{-3mm}
\paragraph{\textbf{Capacity Investment under Uncertainty}}
\citet{huang2009value} study multistage capacity expansion, while 
\citet{yu2025value} examine risk-averse capacity planning, and \citet{wang2023production} consider CVaR-based production planning. When risk is modeled, it is borne by the capacity planner; in our setting, risk belongs to followers and shapes their resource commitments. Related studies address cloud server deployment \citep{liu2025efficient}, private-cloud capacity planning \citep{furman2025optimal}, electricity-market \citep{garcia2025strategic} and transmission investment \citep{lavrutich2023transmission}, and infrastructure co-investment \citep{sakr2026coanor}. These works are closely related to ours because they emphasize the value of accounting for uncertain future demand when making capacity decisions. We instead consider a more general setting in which the price is also optimized jointly with the capacity while anticipating the equilibrium 
commitments.
\vspace{-3mm}
\paragraph{\textbf{Take-or-Pay Contracts}}
\added{\citet{jin2007capacity} analyze capacity-reservation and take-or-pay contracts under stochastic demand, while \citet{boyaci2010information} study advance-selling for demand learning and capacity timing, and \citet{ozer2006strategic} use reservation and advance-purchase commitments for credible forecast sharing. \citet{keutz2025assessing} examine take-or-pay hydrogen import contracts and infrastructure planning under weather variability. We instead model take-or-pay contracts as endogenous resource-usage decisions of strategic firms: firms choose committed utilization under revenue downside risk, congestion, and shared infrastructure, and these risk-aware commitments feed back into the InP’s joint capacity--price decision.}
\vspace{-8mm}
\paragraph{\textbf{Access Pricing and Infrastructure Sharing}}
\citet{datar2022strategic} study Stackelberg pricing with generalized Nash games over 5G slicing resources while \citet{ardagna2012generalized} formulate cloud provisioning as a generalized Nash game. Related uncertainty-based pricing studies include \citet{jiang2020multi}, on miner-population uncertainty, and \citet{dhamal2025admission}, on stochastic demand in network-slicing pricing and admission control. In contrast, we study long-term capacity investment jointly with access pricing, where firms’ future revenue uncertainty enters the InP’s problem through followers’ equilibrium commitments.
\added{Distributionally robust GNE models, such as \citet{fabiani2023distributionally} and \citet{wen2025distributionally}, also study uncertainty in games, but their guarantees focus on constraint satisfaction, whereas ours focus on the ex-post probability that followers earn positive realized profit after making risk-aware ex-ante commitments.}

\added{Taken together, these differences position our paper as an operations-research contribution to shared-infrastructure design under uncertainty.
Table~\ref{tab:related_positioning} summarizes how our paper compares with representative studies across the main methodological dimensions.}
\begin{table}[t]
\centering
\caption{Positioning relative to the state of the art}
\label{tab:related_positioning}
\begin{adjustbox}{width=\textwidth}
\begin{tabular}{c c c c c c c}
\toprule
\textbf{Paper}
& \makecell{\textbf{Capacity}\\\textbf{dimensioning}}
& \makecell{\textbf{Pricing}\\\textbf{decision}}
& \makecell{\textbf{Uncertainty}\\\textbf{/ risk}}
& \makecell{\textbf{Take-or-pay}\\\textbf{or advance commitment}}
& \makecell{\textbf{Strategic downstream}\\\textbf{decisions}}
& \makecell{\textbf{Shared-capacity}\\\textbf{GNE/VE}} \\
\midrule

\citet{maglaras2003pricing}
& \cmark
& \cmark
& \cmark
& \xmark
& \xmark
& \xmark \\

\citet{chen2026capacity}
& \cmark
& \cmark
& \cmark
& \xmark
& \xmark
& \xmark \\

\citet{harks2024stackelberg}
& \cmark
& \cmark
& \xmark
& \xmark
& \cmark
& \xmark \\

\citet{huang2025strategic}
& \cmark
& \cmark
& \cmark
& \xmark
& \cmark
& \pmark \\

\citet{furman2025optimal}
& \cmark
& \xmark
& \cmark
& \xmark
& \xmark
& \xmark \\

\citet{liu2025efficient}
& \cmark
& \xmark
& \cmark
& \xmark
& \xmark
& \xmark \\

\citet{lavrutich2023transmission}
& \cmark
& \xmark
& \cmark
& \xmark
& \cmark
& \xmark \\

\citet{jin2007capacity}
& \cmark
& \pmark
& \cmark
& \cmark
& \cmark
& \pmark \\

\citet{boyaci2010information}
& \cmark
& \cmark
& \cmark
& \cmark
& \xmark
& \xmark \\

\citet{datar2022strategic}
& \xmark
& \cmark
& \xmark
& \xmark
& \cmark
& \cmark \\

\citet{dhamal2025admission}
& \xmark
& \cmark
& \cmark
& \xmark
& \cmark
& \xmark \\

\textcolor{blue}{\textbf{Our paper}}
& \textcolor{blue}{\cmark}
& \textcolor{blue}{\cmark}
& \textcolor{blue}{\cmark}
& \textcolor{blue}{\cmark}
& \textcolor{blue}{\cmark}
& \textcolor{blue}{\cmark} \\
\bottomrule
\end{tabular}
\end{adjustbox}
\scriptsize{\cmark: central feature; \pmark: partial/indirect feature; \xmark: not modeled.}
\end{table}
\vspace{-6mm}

\section{System Model and Stackelberg Game}\label{sec:model}
\subsection{System Overview and Notation}\label{sec:overview}
We consider an investment period \(I\) divided into time slots \(t \in \pazocal T=\{1,\dots,T\}\). An infrastructure provider (InP) deploys an infrastructure resource with capacity \(C\ge 0\) and sets an access price \(\theta\ge 0\). After observing \((C,\theta)\), each firm \(i\in\pazocal N=\{1,\dots,N\}\) chooses resources \(h_i^t\ge 0\) at each time slot \(t\in\pazocal T\) to serve end-user load and generate revenue.
This setting raises three main questions: (i) How should the InP choose \(C\) and \(\theta\) when future resource usage by firms is uncertain?
(ii) How much resource \(h_i^t\) should each firm \(i\in\pazocal N\) choose at each time slot~\(t\), under revenue uncertainty, congestion, and risk sensitivity?
(iii) What probabilistic profit guarantees can be established for the firms under uncertainty?
To answer these questions, we resort to a game-theoretic framework. Since capacity must be deployed before it can be used, it is natural to model the InP as the first decision maker, choosing \(C\) and \(\theta\) in anticipation of the firms' responses. After observing the InP decisions, and under take-or-pay contracting, firms commit upfront to future resource levels. This leads to a Stackelberg game in which the InP acts as the leader and the firms act as followers.
At the follower level, all firms share the same infrastructure capacity. Their decisions are mutually dependent: the resource commitment of one firm affects the congestion costs borne by the others, and all commitments are coupled through the common capacity constraint.
This leads to a generalized Nash equilibrium (GNE) problem among the followers, whose decisions are made under revenue uncertainty. \ref{appendix:notation} in the Supplementary Material summarizes the main notation.
\subsection{Stackelberg Game Formulation}\label{sec:stackelberg_game}

We formalize the Stackelberg game by first defining the follower game, then the leader problem and the resulting Stackelberg equilibrium.

\subsubsection{Follower Game}\label{sec:follower_problem}
For a given leader decision $(C,\theta)$, each follower $i\in\pazocal N$ enters into a take-or-pay contract by committing upfront to a resource reservation $h_i^t\in\mathbb R_+$ at each time slot $t\in\pazocal T$, where
$\mathbb{R}_+$ is the set of nonnegative real numbers. This commitment determines the access payment $p_\theta(h_i^t)$, where \(\theta\) is the access price per unit of resource $h_i^t$ in each time slot \(t\). This payment is due regardless of the realized revenue or the actual use of the reserved resources. We denote by $\mathbf h^t=(h_1^t,\ldots,h_N^t)\in\mathbb R_+^N$ the vector of resource commitments at time slot~$t$.
Revenue uncertainty is modeled on a probability space $(\Omega,\pazocal F,\mathbb P)$, where \(\Omega\) is the sample space, \(\pazocal F\) is the set of events, and \(\mathbb P\) is the probability measure.  For each realization $\omega\in\Omega$, the revenue of follower $i$ at time slot $t$ is denoted by $r_{i,\omega}^t(h_i^t)$. A larger commitment $h_i^t$ may increase revenue by allowing follower $i$ to serve more user load, but it also increases costs. First, it raises the access payment \(p_\theta(h_i^t)\) transferred to the InP.
Second, it may generate firm-side costs \(\psi_i^t(\mathbf h^t)\), borne by firm \(i\) apart from the payments transferred to the InP. These costs include the operational cost of managing the committed resource level and the congestion cost arising from shared infrastructure.
Thus, committing to a large $h_i^t$ exposes follower \(i\) to unfavorable outcomes when access payments and firm-side costs remain high while realized revenue is low. Conversely, committing to a small $h_i^t$ may limit the ability to serve demand and reduce realized revenue. Hence, each follower balances profitability against downside risk when choosing its resource commitment.
To do so, we first define the random profit~\(\Pi_i^t(\mathbf h^t;\theta)\) of follower \(i\) at time slot \(t\). 
Its realization in $\omega\in\Omega$ is defined as
\begin{equation}
\Pi_{i,\omega}^t(\mathbf h^t;\theta)
:=
r_{i,\omega}^t(h_i^t)
-
p_\theta(h_i^t)
-
\psi_i^t(\mathbf h^t)
\label{eq:profit_sp_realization}
\end{equation}
\label{page:assumption_continuity}We use \(h\in\mathbb R_+\) for a scalar resource commitment
and \(\mathbf h\in\mathbb R_+^N\) for a vector of resource commitments of all followers. 
We assume that $h \mapsto r_{i,\omega}^t(h)$ is a continuous function on $\mathbb R_+$; $p_\theta(h):\mathbb R_+\to\mathbb R_+$ with $(h,\theta)\mapsto p_\theta(h)$ continuous on \(\mathbb R_+^2\); and~$\mathbf h \mapsto \psi_i^t(\mathbf h)$ is continuous on $\mathbb R_+^N$. Define the random loss \(L_i^t(\mathbf h^t;\theta)\), with realization
\begin{equation}
L_{i,\omega}^t(\mathbf h^t;\theta)
:=
-\Pi_{i,\omega}^t(\mathbf h^t;\theta)
\label{eq:loss_i_t}
\end{equation}
Due to the continuity of the aforementioned functions, $\mathbf h\in\mathbb R_+^N \mapsto \Pi_{i,\omega}^t(\mathbf h;\theta)$ and $\mathbf h\mapsto L_{i,\omega}^t(\mathbf h;\theta)$ are continuous, for any $\omega\in\Omega$, $\theta$, and~$t\in\pazocal T$.

We next specify how downside risk is measured. For a confidence level \(\alpha_i\in(0,1)\), the Conditional Value-at-Risk (CVaR) of the loss is the expected loss in the worst \((1-\alpha_i)\) fraction of realizations, capturing how severe losses can be~\citep{rockafellar2000optimization}.
Following~\citep[Eqs.~(28) and (29)]{li2022risk},
\(\mathrm{CVaR}_{\alpha_i}(L_i^t(\mathbf h^t;\theta))
=
\mathbb E_\omega\!\left[
L_i^t(\mathbf h^t;\theta)
\,\middle|\,
L_i^t(\mathbf h^t;\theta)\ge
\mathrm{VaR}_{\alpha_i}(L_i^t(\mathbf h^t;\theta))
\right],
\)
where 
\(\mathrm{VaR}_{\alpha_i}(L_i^t(\mathbf h^t;\theta))\) denotes the Value-at-Risk of the loss \(L_i^t(\mathbf h^t;\theta)\) at confidence level \(\alpha_i\), defined as the smallest threshold not exceeded with probability at least \(\alpha_i\), i.e.,
\(
\mathrm{VaR}_{\alpha_i}\!\left(L_i^t(\mathbf{h}^t;\theta)\right)
=
\min \left\{ z \mid \mathbb P\!\left(L_i^t(\mathbf{h}^t;\theta) \le z\right) \ge \alpha_i \right\}
\).

We define the mean-CVaR utility of follower $i$ as follows. 
\begin{equation}
U_i(\mathbf h_i,\mathbf h_{-i};\theta)
:=
\sum_{t\in\pazocal T}
\left(
\mathbb E_\omega[\Pi_i^t(\mathbf h^t;\theta)]
-
\beta_i\,\mathrm{CVaR}_{\alpha_i}(L_i^t(\mathbf h^t;\theta))
\right),
\label{eq:u_follower}
\end{equation}
where \(
\mathbf h_i := (h_i^t)_{t\in\pazocal T}\in\mathbb R_+^T
\)
denotes the resource commitment vector of follower~\(i\) over the time horizon,
\(
\mathbf h^t := (h_1^t,\dots,h_N^t)\in\mathbb R_+^N
\)
denotes the vector of resource commitments across all followers at time slot~\(t\), and
\(
\mathbf h_{-i} := (\mathbf h_j)_{j\in\pazocal N,\; j\neq i}
\)
denotes the collection of resource commitment vectors of all followers except~\(i\).
Parameter $\beta_i\ge 0$ measures the firm's sensitivity to downside risk. When $\beta_i=0$, follower $i$ is \emph{risk-neutral} and maximizes expected profit only. When $\beta_i>0$, the follower is \emph{risk-averse}, placing additional weight on adverse outcomes. 
Allowing $(\beta_i,\alpha_i)$ to differ across followers enables the model to capture heterogeneity in followers' attitudes toward risk and uncertainty.
In~\eqref{eq:u_follower}, each time slot includes a CVaR-based downside-risk term, providing robustness against adverse revenue realizations within that slot~\citep{wang2025bi,yu2020robust}. Since CVaR is applied before the summation, adverse outcomes in one time slot are not canceled out by favorable outcomes in another slot.

For fixed \((C,\theta)\), the lower-level interaction among the followers defines a generalized Nash game with shared constraints~\citep{facchinei2007generalizedproblems}. 
A GNE is a strategy profile
\(
\mathbf H^*
=
(\mathbf h_1^*,\dots,\mathbf h_N^*)
\in \mathbb R_+^{NT},
\)
where \(\mathbf h_i^*=(h_i^{t*})_{t\in\pazocal T}\), such that, for every \(\mathbf h_i\in\mathbb R_+^T\) and every $(C, \theta)$,
\begin{equation}
    U_i(\mathbf h_i^*,\mathbf h_{-i}^*;\theta)
\ge
U_i(\mathbf h_i,\mathbf h_{-i}^*;\theta), \qquad i\in\pazocal N
\label{eq:utility_gne}
\end{equation} satisfying
\begin{align}
h_i^t+\sum_{j\in\pazocal N\setminus\{i\}} h_j^{t*}
&\le C,\qquad
h_i^t\ge 0,
\qquad \forall i\in\pazocal N,\; \forall t\in\pazocal T,
\label{eq:constraints}
\end{align}
where \(\mathbf h_{-i}^*:=(\mathbf h_j^*)_{j\in\pazocal N,\;j\neq i}\).
This GNE problem is consistent with pricing and sharing models \citep{jiang2020multi,dhamal2025admission,datar2022strategic}.

\noindent At a GNE, no follower can improve its utility 
\(U_i(\mathbf h_i^*,\mathbf h_{-i}^*;\theta)\)~\eqref{eq:u_follower}  
by changing its resource commitment, given the other followers' decisions and the shared capacity constraint. Since the follower game may admit multiple GNEs, an equilibrium-selection criterion is needed. 
We focus on the variational equilibrium (VE)~\citep{facchinei2007generalizedproblems} as the selected GNE. In a VE, the common capacity constraint is associated with the same shadow price for all followers.
In our setting, the VE has a clear economic interpretation: the shared capacity constraint is enforced uniformly across all followers, consistent with a leader who announces a common capacity and a common price, without giving any follower priority in infrastructure use. 
Existence of such a VE and sufficient conditions for its uniqueness are established later in \S~\ref{sec:ve_existence} and \S~\ref{sec:ve_uniqueness}. Under these conditions, the selected follower response is single-valued, so the leader problem in the next subsection is well defined.

\subsubsection{Leader Problem}\label{sec:leader_problem}
At the upper level, the leader chooses capacity dimensioning and access price while anticipating the followers' VE response. Its objective is to maximize profit by balancing access revenue against the investment cost \(Cost(C)\), which includes deployment, maintenance, and technical intervention costs and is continuous in \(C\).
For each \((C,\theta)\), let
\(
\mathbf H^*(C,\theta)
:=
\bigl(h_i^{t*}(C,\theta)\bigr)_{i\in\pazocal N,\;t\in\pazocal T}
\)
denote the selected VE response. The leader solves
\begin{equation}
(C^*,\theta^*)
\in
\arg\max_{(C,\theta)\in\pazocal Y}
\left\{
\sum_{t\in\pazocal T}\sum_{i\in\pazocal N}
p_\theta\!\left(h_i^{t*}(C,\theta)\right)
-
\mathrm{Cost}(C)
\right\}
\label{eq:leader_problem}
\end{equation}
where \(\pazocal Y\) is the leader feasible set, defined as
\begin{equation}
\pazocal Y
:=
\left\{
(C,\theta)\in\mathbb R_+^2
\;\middle|\;
0\le C\le \overline C,\;
0\le \theta\le \overline\theta
\right\}
\label{eq:leader_set}
\end{equation}
Here, $\overline C>0$ and $\overline\theta>0$ denote the maximum admissible capacity and price.

\subsubsection{Stackelberg Equilibrium}
We now introduce the Stackelberg equilibrium (SE). 
\begin{definition}\label{def:stack}
A Stackelberg equilibrium is a triple \((C^*,\theta^*,\mathbf H^*)\) such that: (i) \(\mathbf H^*=\mathbf H^*(C^*,\theta^*)\) is the selected VE, hence a GNE, of the follower game associated with \((C^*,\theta^*)\);
(ii) \((C^*,\theta^*)\) solves the leader problem~\eqref{eq:leader_problem}.
\end{definition}

\subsection{Stackelberg Equilibrium Analysis}\label{sec:equilibrium}
We now establish the existence of an SE by proving existence of the selected VE, providing sufficient conditions for its uniqueness, and then proving existence of an optimal solution to the leader problem.

\subsubsection{Existence of the Selected Variational Equilibrium}\label{sec:ve_existence}
We now analyze the selected VE of the follower game. 
The following standard assumption ensures that each follower's utility is differentiable with respect to its resource commitment.
\begin{assumption}
\label{ass:fct_c1}
For every \(i\in\pazocal N\), \(t\in\pazocal T\), and \(\omega\in\Omega\), 
\(h \mapsto r_{i,\omega}^t(h)\) is continuously differentiable on \(\mathbb R_+\).
For every \(\theta\in[0,\overline \theta]\), 
\(h \mapsto p_\theta(h)\) is continuously differentiable on \(\mathbb R_+\), and $(h,\theta)\mapsto \frac{\partial p_\theta(h)}{\partial h}$ is continuous on
$\mathbb R_+\times[0,\overline \theta]$.
For every \(i\in\pazocal N\) and \(t\in\pazocal T\),
\(\mathbf h \mapsto \psi_i^t(\mathbf h)\) is continuously differentiable on \(\mathbb R_+^N\).
\end{assumption}
We now ensure that VaR varies continuously with resource commitments. Hence, changes in capacity or price do not cause discontinuous jumps in the loss quantile through their effect on resource commitments. We further ensure that CVaR depends smoothly on the followers’ resource commitments.
\begin{assumption}\label{ass:var_continuity}
For every \(i\in\pazocal N\), let
\(\pazocal O\subset\mathbb R^N\) be an open set containing
\(
\mathcal K:=\Bigl\{\mathbf h\in\mathbb R_+^N:\sum_{j\in\pazocal N}h_j\le \overline C\Bigr\}
\).
For every fixed \(\theta\in[0,\overline \theta]\), and \(t\in\pazocal T\), let
\(
F_i^t(z,\mathbf h;\theta)
:=
\mathbb P\!\left(L_i^t(\mathbf h;\theta)\le z\right), (z,\mathbf h)\in\mathbb R\times \pazocal O,
\)
denote the cumulative distribution function (CDF) of the loss
\(L_i^t(\mathbf h;\theta)\). We assume that:
\begin{enumerate}
    \item the CDF \(F_i^t\) is jointly continuous in \((z,\mathbf h)\) on
    \(\mathbb R\times \pazocal O\);

    \item for every \(\mathbf h\in\pazocal O\) the random variable
    \(L_i^t(\mathbf h;\theta)\) admits a density \(f_i^t(z;\mathbf h,\theta)\), \(z\in\mathbb R\),
    which is strictly positive in a neighborhood of
    \(
    \mathrm{VaR}_{\alpha_i}(L_i^t(\mathbf h;\theta)).
    \)
\end{enumerate}
\end{assumption}

\begin{assumption}\label{ass:cvar}
For every \(i\in\pazocal N\), $t \in \pazocal T$ and fixed \(\theta\in[0,\overline \theta]\), assume that
\begin{enumerate}
\item for every \(\mathbf h\in \pazocal O\), random loss \(L_i^t(\mathbf h;\theta)\) is
integrable;
\item the derivative of the random loss term with respect to the resource vector
\(\mathbf h\) is
uniformly bounded on \(\pazocal O\); that is, there exists an integrable random variable \(Z_i^t\) such that
\(
\sup_{\mathbf h \in \pazocal O}
\left\|
\nabla_{\mathbf h}L_{i,\omega}^t(\mathbf h;\theta)
\right\|
\le Z_{i,\omega}^t, \text{for a.e. }\omega\in\Omega.
\)
\end{enumerate}
\end{assumption}

\begin{lemma}\label{lemma:cvar}
For every \(i\in\pazocal N\), $t \in \pazocal T$ and fixed \(\theta\in[0,\overline \theta]\), \(
\mathbf h \mapsto
\mathrm{VaR}_{\alpha_i}\!\left(L_i^t(\mathbf h;\theta)\right)
\)
is continuous on~\(\pazocal O\) and
\(\mathbf h\mapsto \mathrm{CVaR}_{\alpha_i}(L_i^t(\mathbf h;\theta))\)
is continuously differentiable on \(\pazocal O\), and thus on~\(\mathcal K\).
\end{lemma}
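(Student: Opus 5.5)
We use the Rockafellar--Uryasev representation of CVaR, the implicit-function theorem for the quantile, and differentiation under the integral sign. Fix \(i\), \(t\), \(\theta\), and write \(L(\mathbf h)=L_i^t(\mathbf h;\theta)\), \(F=F_i^t\), \(q(\mathbf h)=\mathrm{VaR}_{\alpha_i}(L(\mathbf h))\).

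\textbf{Step 1: continuity of VaR.} By Assumption~\ref{ass:var_continuity}, \(L(\mathbf h)\) has a density that is strictly positive near \(q(\mathbf h)\). Hence \(F(\cdot,\mathbf h)\) is continuous and strictly increasing near \(q(\mathbf h)\), so \(q(\mathbf h)\) is the unique solution of \(F(z,\mathbf h)=\alpha_i\) in that neighborhood.

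Fix \(\mathbf h_0\in\pazocal O\) and \(\varepsilon>0\) small. Strict monotonicity gives
\[
F(q(\mathbf h_0)-\varepsilon,\mathbf h_0)<\alpha_i<F(q(\mathbf h_0)+\varepsilon,\mathbf h_0).
\]
Joint continuity of \(F\) keeps both strict inequalities for all \(\mathbf h\) near \(\mathbf h_0\). Since \(q(\mathbf h)=\min\{z: F(z,\mathbf h)\ge\alpha_i\}\), we get \(q(\mathbf h)\in(q(\mathbf h_0)-\varepsilon,\,q(\mathbf h_0)+\varepsilon]\). This proves continuity of \(q\) on \(\pazocal O\).

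\textbf{Step 2: CVaR representation.} Because \(L(\mathbf h)\) has a density, the Rockafellar--Uryasev result \citep{rockafellar2000optimization} gives
\[
\mathrm{CVaR}_{\alpha_i}(L(\mathbf h))=\min_z G(z,\mathbf h),\qquad G(z,\mathbf h)=z+\tfrac{1}{1-\alpha_i}\mathbb E_\omega[(L_\omega(\mathbf h)-z)^+].
\]
Since the distribution is continuous at the quantile, this coincides with the conditional-expectation definition used in the paper. The minimizer is unique and equals \(q(\mathbf h)\), because \(\partial_z G=1-\tfrac{1-F(z,\mathbf h)}{1-\alpha_i}\) vanishes only at \(F(z,\mathbf h)=\alpha_i\).

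\textbf{Step 3: envelope argument.} For fixed \(z\), the map \(\mathbf h\mapsto (L_\omega(\mathbf h)-z)^+\) is differentiable off the event \(\{L_\omega(\mathbf h)=z\}\), which has probability zero thanks to the density. It is locally Lipschitz with constant \(Z_{i,\omega}^t\) by Assumption~\ref{ass:cvar}, and \(L(\mathbf h)\) is integrable. Dominated convergence then yields
\[
\nabla_{\mathbf h}G(z,\mathbf h)=\tfrac{1}{1-\alpha_i}\mathbb E_\omega\bigl[\nabla_{\mathbf h}L_\omega(\mathbf h)\,\mathbf 1\{L_\omega(\mathbf h)>z\}\bigr].
\]
Assumption~\ref{ass:fct_c1} ensures \(\nabla_{\mathbf h}L_\omega\) exists and is continuous. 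A Danskin-type argument, justified by the unique minimizer \(q(\mathbf h)\) and its continuity from Step 1, gives
\[
\nabla_{\mathbf h}\mathrm{CVaR}_{\alpha_i}(L(\mathbf h))=\tfrac{1}{1-\alpha_i}\mathbb E_\omega\bigl[\nabla_{\mathbf h}L_\omega(\mathbf h)\,\mathbf 1\{L_\omega(\mathbf h)>q(\mathbf h)\}\bigr].
\]
Alternatively, one can differentiate \(G(q(\mathbf h),\mathbf h)\) directly and use \(\partial_zG=0\) at \(z=q(\mathbf h)\). To make the Danskin step rigorous, restrict \(z\) to a compact interval around \(q(\mathbf h_0)\), which is possible by Step 1.

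\textbf{Step 4: continuity of the gradient (main obstacle).} Take \(\mathbf h_n\to\mathbf h\). Then \(\nabla L_\omega(\mathbf h_n)\to\nabla L_\omega(\mathbf h)\) pointwise, and \(q(\mathbf h_n)\to q(\mathbf h)\) by Step 1. Consequently
\[
\mathbf 1\{L_\omega(\mathbf h_n)>q(\mathbf h_n)\}\to\mathbf 1\{L_\omega(\mathbf h)>q(\mathbf h)\}
\]
for every \(\omega\) outside \(\{L_\omega(\mathbf h)=q(\mathbf h)\}\), which is a null set by the density assumption. The integrand is dominated by \(Z_{i,\omega}^t\), so dominated convergence gives continuity of the gradient.

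The delicate points are the null-set tie event and the requirement that \(Z_{i,\omega}^t\) dominate uniformly on \(\pazocal O\), which Assumption~\ref{ass:cvar} supplies. Since \(\mathcal K\subset\pazocal O\), both conclusions hold on \(\mathcal K\) as well.
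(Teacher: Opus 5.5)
Your proof is correct. The paper gives no in-text proof of this lemma; it only refers to \citet{sakr2026continuityvarcontinuousdifferentiability}, so there is no argument in the paper to compare against line by line. What you wrote is a self-contained proof, and each hypothesis plays the role the paper's assumptions are designed for:
\begin{itemize}
\item joint continuity of $F_i^t$ and local positivity of the density give continuity of VaR through your bracketing argument;
\item integrability of the loss and the dominating $Z_i^t$ justify differentiating the Rockafellar--Uryasev function under the expectation;
\item continuity of VaR, together with the null tie event, gives continuity of $\nabla_{\mathbf h}\mathrm{CVaR}_{\alpha_i}=\frac{1}{1-\alpha_i}\mathbb E_\omega[\nabla_{\mathbf h}L_\omega(\mathbf h)\mathbf 1\{L_\omega(\mathbf h)>q(\mathbf h)\}]$.
\end{itemize}

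A few points to tidy.

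\textbf{The implicit-function theorem and the chain-rule shortcut.} Drop the announced implicit-function theorem and the ``alternatively, differentiate $G(q(\mathbf h),\mathbf h)$ directly'' shortcut. $F_i^t$ is not assumed differentiable in $\mathbf h$, so $q$ need not be differentiable and the chain rule is unavailable. Your Step~1 never actually uses the implicit-function theorem, and the Danskin route on a compact $z$-interval is what carries Step~3.

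\textbf{Joint continuity for Danskin.} Danskin's theorem on a compact interval $I\ni q(\mathbf h_0)$ needs $G$ and $\nabla_{\mathbf h}G$ to be jointly continuous on $I\times U$. Continuity along $(q(\mathbf h_n),\mathbf h_n)$ alone is not enough, and your stated justification (``unique minimizer and its continuity'') does not supply it. It does follow from your Step~4 argument essentially verbatim:
\begin{itemize}
\item Because $L(\mathbf h)$ has a density, $\{L_\omega(\mathbf h)=z\}$ is null for every $z$, not just for $z=q(\mathbf h)$. This gives joint continuity of $\nabla_{\mathbf h}G$.
\item On a ball $B(\mathbf h_0,r)\subset\pazocal O$, the bound $|L_\omega(\mathbf h_0)|+rZ_{i,\omega}^t+\sup_{z\in I}|z|$ dominates $(L_\omega(\mathbf h)-z)^+$. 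This gives joint continuity of $G$.
\end{itemize}

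\textbf{Domain of differentiability.} Assumption~\ref{ass:fct_c1} only gives $C^1$ regularity on $\mathbb R_+$ and $\mathbb R_+^N$. Differentiability and continuity of $\nabla_{\mathbf h}L_\omega$ on the open set $\pazocal O\supset\mathcal K$ is therefore an extension you inherit from the paper, which already presupposes it in Assumption~\ref{ass:cvar}. You should say so rather than attributing it to Assumption~\ref{ass:fct_c1}.
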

The proof of Lemma~\ref{lemma:cvar} follows from \citet{sakr2026continuityvarcontinuousdifferentiability}.

We impose curvature assumptions on the revenue, access-payment, and firm-side cost functions. 
These assumptions are standard in resource allocation models and are commonly used in network applications: 
concave revenue captures diminishing returns from additional demand~\citep{basar2002revenue,iiduka2018distributed}; 
convex access payments are consistent with pay-as-you-go pricing~\citep{basar2002revenue,chi2015fairness}; 
and convex firm-side costs capture increasing operational burden and congestion-induced performance degradation under shared-resource use~\citep{zhang2019price, johari2005efficiency, chen2012design}.


\begin{assumption}\label{ass:convex_follower}
For every \(i\in\pazocal N\), \(t\in\pazocal T\), and
\(\omega\in\Omega\),
\(h\mapsto r_{i,\omega}^t(h)\) is concave on \(\mathbb R_+\).
For every \(\theta\in[0,\overline\theta]\), \(h\mapsto p_\theta(h)\) is convex on \(\mathbb R_+\).
Moreover, for every fixed \(\mathbf h_{-i}^t\), \(
h\mapsto \psi_i^t(h,\mathbf h_{-i}^t)
\)
is convex on \(\mathbb R_+\).
\end{assumption}
\begin{proposition}\label{prop:ve_gne_existence}
For any fixed \((C,\theta)\in\pazocal Y\), the follower game admits at least one VE, and therefore at least one GNE.
\end{proposition}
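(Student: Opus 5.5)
The plan is to reduce the VE to a variational inequality over a compact convex set with a continuous operator, and then invoke the standard existence theorem for such inequalities. Fix \((C,\theta)\in\pazocal Y\). Define the shared feasible set
\[
\pazocal X(C):=\Bigl\{\mathbf H=(h_i^t)\in\mathbb R_+^{NT}:\ \textstyle\sum_{i\in\pazocal N}h_i^t\le C,\ \forall t\in\pazocal T\Bigr\},
\]
which is nonempty (it contains \(\mathbf 0\)), convex (it is cut out by linear inequalities), and compact (it is bounded by \(C\le\overline C\) and closed). Following \citet{facchinei2007generalizedproblems}, a VE is a solution \(\mathbf H^*\in\pazocal X(C)\) of \(\mathrm{VI}(\pazocal X(C),F)\), where \(F(\mathbf H):=\bigl(-\nabla_{\mathbf h_i}U_i(\mathbf h_i,\mathbf h_{-i};\theta)\bigr)_{i\in\pazocal N}\). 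Every VE solution is a GNE provided each player's problem is convex in its own variables. In that case the VI condition, restricted to deviations of player \(i\) alone, is exactly the first-order optimality condition for maximizing \(U_i(\cdot,\mathbf h_{-i}^*;\theta)\) over \(\{\mathbf h_i\ge0: h_i^t+\sum_{j\ne i}h_j^{t*}\le C\}\), and this condition is sufficient under concavity.

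\textbf{Continuity of \(F\).} Since \(\pazocal X(C)\subseteq\mathcal K^T\), each time slot's vector \(\mathbf h^t\) lies in \(\mathcal K\subset\pazocal O\). The utility \(U_i\) is a sum over \(t\) of terms depending only on \(\mathbf h^t\), so it suffices to treat one slot. The CVaR term is \(C^1\) on \(\pazocal O\) by Lemma~\ref{lemma:cvar}. For the expectation term \(\mathbb E_\omega[\Pi_i^t]\), differentiability under the integral follows from Assumption~\ref{ass:fct_c1} together with the integrable bound \(Z_i^t\) of Assumption~\ref{ass:cvar}, via dominated convergence and the mean value theorem. Continuity of the resulting derivative follows from dominated convergence applied to the continuous integrand gradients. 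Hence \(F\) is continuous on \(\pazocal X(C)\).

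\textbf{Concavity in own variables.} By Assumption~\ref{ass:convex_follower}, for each \(\omega\) the loss \(L_{i,\omega}^t\) is convex in \(h_i^t\): it equals \(-r\) (convex) plus \(p_\theta\) (convex) plus \(\psi_i^t\) (convex in own variable). The expectation therefore preserves concavity of \(\Pi_i^t\) in \(h_i^t\). For the risk term, I would use the Rockafellar--Uryasev representation \(\mathrm{CVaR}_{\alpha}(L)=\min_z\{z+(1-\alpha)^{-1}\mathbb E[(L-z)_+]\}\). The function \((h,z)\mapsto z+(1-\alpha)^{-1}\mathbb E[(L_\omega(h)-z)_+]\) is jointly convex, because \((\cdot)_+\) is convex and nondecreasing and \(L_\omega\) is convex in \(h\). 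Partial minimization in \(z\) preserves convexity, so CVaR is convex in \(h_i^t\). With \(\beta_i\ge0\), \(U_i\) is thus concave in \(\mathbf h_i\), being a separable sum over \(t\).

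\textbf{Existence.} The VI with a continuous operator on a nonempty compact convex set has a solution (Hartman--Stampacchia, e.g.\ \citet[Cor.~2.2.5]{facchinei2007generalizedproblems} style). This gives a VE, and by the previous step it is a GNE. The main obstacle I anticipate is justifying that the VE selection is a GNE in the presence of the degenerate case \(C=0\). There the set reduces to \(\{\mathbf 0\}\) and the claim is trivial, so no issue arises. The remaining subtle point is the interchange of differentiation and expectation. This is handled by Assumption~\ref{ass:cvar}, and if needed I would cite the same argument as in \citet{sakr2026continuityvarcontinuousdifferentiability}.
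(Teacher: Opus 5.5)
Your proof is correct and follows the same route as the paper: a VE is a solution of the VI on the nonempty compact convex set $X(C)$ with a continuous pseudo-gradient (continuity via Lemma~\ref{lemma:cvar} and dominated convergence using Assumption~\ref{ass:cvar}), and VE$\subseteq$GNE follows from convexity of each player's disutility in its own variables. The differences are minor. You obtain convexity of the CVaR term via the Rockafellar--Uryasev representation and partial minimization; this agrees with the paper's conditional-expectation definition of CVaR because Assumption~\ref{ass:var_continuity} rules out atoms. The paper instead uses monotonicity and convexity of CVaR, and it cites the Facchinei--Kanzow theorem where you verify VE$\subseteq$GNE directly from the first-order condition.
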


\begin{proof}[Proof sketch]
Fix \((C,\theta)\in\pazocal Y\), and define
\(
X(C):=\{\mathbf H\in\mathbb R_+^{NT}:
\sum_{j\in\pazocal N}h_j^t\le C,\ \forall t\in\pazocal T\}.
\)
Let
\(
P_i(\mathbf h_i,\mathbf h_{-i};\theta)
:=-U_i(\mathbf h_i,\mathbf h_{-i};\theta)
\)
be follower \(i\)'s disutility, defined in \eqref{eq:u_follower}, and let
\(
\mathcal G(\mathbf H;\theta)
:=
(\nabla_{\mathbf h_i}P_i(\mathbf h_i,\mathbf h_{-i};\theta))_{i\in\pazocal N}.
\)
By definition \citep[\S~5.2]{facchinei2007generalized}, a VE solves the variational inequality
\(\mathrm{VI}(X(C),\mathcal G(\cdot;\theta))\), namely,
\(
\left\langle
\mathcal G(\mathbf H^*;\theta),
\mathbf H-\mathbf H^*
\right\rangle
\ge 0,\ \forall \mathbf H\in X(C).
\)
\citep[Proposition~2.2]{facchinei2007generalized} establishes the existence of a solution to this VI when
\(\mathbf H\mapsto\mathcal G(\mathbf H;\theta)\) is continuous and
\(X(C)\) is nonempty, compact, and convex. These properties of \(X(C)\) are easy to verify, while Assumptions~\ref{ass:fct_c1}--\ref{ass:cvar} and Lemma~\ref{lemma:cvar} imply the continuity of
\(\mathbf H\mapsto\mathcal G(\mathbf H;\theta)\).
Moreover, according to
\citep[Theorem~5]{facchinei2007generalizedproblems}, every VE is a GNE if \(X(C)\) is closed and convex, each follower's feasible set is a section of \(X(C)\), and, for every fixed \(\mathbf h_{-i}\), the mapping
\(\mathbf h_i\mapsto P_i(\mathbf h_i,\mathbf h_{-i};\theta)\)
is continuously differentiable and convex. Here, follower \(i\)'s feasible set is
\(
\{\mathbf h_i\in\mathbb R_+^T:
(\mathbf h_i,\mathbf h_{-i})\in X(C)\},
\)
and is therefore a section of \(X(C)\). Assumptions~\ref{ass:fct_c1}--\ref{ass:cvar} and Lemma~\ref{lemma:cvar} imply continuous differentiability, while Assumption~\ref{ass:convex_follower}, the monotonicity and convexity of CVaR
\citep[Prop.~2(iv),(v)]{pflug2000some}, and the preservation of convexity under expectations and finite sums
\citep[\S~3.2.1]{boyd2004convex}
imply convexity. Thus, every VE is a GNE.
See Supplementary Material, \ref{appendix:ve_gne_existence}, for the full proof.
\end{proof}

\subsubsection{Uniqueness of the Variational Equilibrium}\label{sec:ve_uniqueness}
We now establish the uniqueness of the VE. This ensures that each leader's decision \((C,\theta)\) induces a well-defined, single-valued follower response.
Function $\psi_i^t(\mathbf h^t)$ represents costs borne by firm $i$, apart from the payments transferred to the InP. We decompose it into an operational cost
$q_i^t(h_i^t)$ and a congestion cost $g_i^t(\mathbf h^t)$.
The operational cost $q_i^t(h_i^t)$ captures the firm-specific effort associated
with a larger committed resource level. A larger~$h_i^t$ may require additional
configuration, coordination, monitoring, and internal resource management. In
MEC, this may correspond to serving a larger workload over a reserved
compute slice, managing a larger amount of reserved capacity, orchestrating more containers, and increasing the firm's workflow and organizational costs. In EV charging, it may correspond to
scheduling and coordinating a larger set of charging commitments across stations
and time slots.
The congestion cost $g_i^t(\mathbf h^t)$ captures the loss borne by firm $i$
when other firms increase their commitments and thereby raise the load on
infrastructure components that remain shared across firms. In MEC,
even if CPU and memory are reserved per firm, cross-firm effects may still arise
through contention for storage I/O, data paths, network links, or backhaul
resources, consistent with~\citep{pu2012your}. 
Similar effects arise when multiple EVs share a distribution transformer: each charging decision adds to aggregate load, which in turn affects every EV’s charging cost~\citep{beaude2012charging}.

The following theorem states that, if firms are more sensitive to their own operational cost than to the congestion effect, then the equilibrium is unique. Sensitivities to operational cost and congestion cost are measured via bounds on second-order sensitivity coefficients of $q_i^t(\cdot)$ and $g_i^t(\cdot)$, namely 
\begin{align}
\label{eq:Q-and-G}
Q:=\inf_{i,t,h_i^t}(q_i^t)''(h_i^t)
\text{ and }
G:=\sup_{i,j,t,\mathbf h^t}\left|\partial^2 g_i^t(\mathbf h^t)/\partial h_i^t\partial h_j^t\right|
\text{, respectively}.
\end{align}

\begin{theorem}\label{thm:gne_uniqueness}
For any \((C,\theta)\in\pazocal Y\), a sufficient condition for the uniqueness of the variational equilibrium is that
(i)~the firm-side cost admits the decomposition
\(
\psi_i^t(\mathbf h^t)
=
q_i^t(h_i^t)
+
g_i^t(\mathbf h^t)
\);
(ii)~functions \(q_i^t\) and \(g_i^t\) are twice continuously differentiable;
(iii) the following condition holds
\begin{equation}
Q>\max_{i\in\pazocal N}\xi_i\,G,
\qquad
\xi_i
=
\frac{
2|\pazocal N|+|\pazocal N|\beta_i+\sum_{j\in\pazocal N}\beta_j
}{
2(1+\beta_i)
}.
\label{eq:ass_q_g}
\end{equation}
\end{theorem}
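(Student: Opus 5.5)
The plan is to use the variational characterization from the proof of Proposition~\ref{prop:ve_gne_existence}. Every VE solves $\mathrm{VI}(X(C),\mathcal G(\cdot;\theta))$, and a VI over a closed convex set with a strictly monotone map has at most one solution. So it suffices to show that $\mathbf H\mapsto\mathcal G(\mathbf H;\theta)$ is strictly (in fact strongly) monotone on $X(C)$ under (i)--(iii). Existence is already given by Proposition~\ref{prop:ve_gne_existence}.

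\textbf{Step 1: separate what depends only on $h_i^t$ from what is coupled.} Since $q_i^t(h_i^t)$, $g_i^t(\mathbf h^t)$ and $p_\theta(h_i^t)$ are deterministic, translation invariance of CVaR gives
\[
\mathrm{CVaR}_{\alpha_i}(L_i^t)=\mathrm{CVaR}_{\alpha_i}\bigl(-r_i^t(h_i^t)\bigr)+p_\theta(h_i^t)+q_i^t(h_i^t)+g_i^t(\mathbf h^t).
\]
Therefore the disutility decomposes as
\[
P_i=\sum_{t}\Bigl[\phi_i^t(h_i^t)+(1+\beta_i)\bigl(q_i^t(h_i^t)+g_i^t(\mathbf h^t)\bigr)\Bigr],
\]
where
\[
\phi_i^t(h):=-\mathbb E[r_{i,\omega}^t(h)]+\beta_i\,\mathrm{CVaR}_{\alpha_i}\bigl(-r_i^t(h)\bigr)+(1+\beta_i)\,p_\theta(h).
\]
Each $\phi_i^t$ is a $C^1$ convex function of the single scalar $h_i^t$: it is $C^1$ by Lemma~\ref{lemma:cvar}, and convex by Assumption~\ref{ass:convex_follower} together with the monotonicity and convexity of CVaR, exactly as in Proposition~\ref{prop:ve_gne_existence}. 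Hence the map $(\partial\phi_i^t/\partial h_i^t)_{i,t}$ is separable with nondecreasing components, and so it is monotone. No second-order smoothness of the CVaR term is needed.

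\textbf{Step 2: the remaining map is strongly monotone.} The remaining part of $\mathcal G$ at slot $t$ is $F^t(\mathbf h^t)_i=(1+\beta_i)\bigl((q_i^t)'(h_i^t)+\partial_{h_i}g_i^t(\mathbf h^t)\bigr)$. This map is $C^1$ by (ii), and the slots are decoupled. Writing $F^t(\mathbf h)-F^t(\mathbf h')=\int_0^1 J^t(\mathbf h'+s(\mathbf h-\mathbf h'))\,ds\,(\mathbf h-\mathbf h')$, it suffices to show $x^\top J^t x\ge c\|x\|^2$ for some $c>0$ uniform over $\mathbf h$. The Jacobian is $J^t=\mathrm{diag}\bigl((1+\beta_i)(q_i^t)''\bigr)+M$, where $M_{ij}=(1+\beta_i)\,\partial^2g_i^t/\partial h_i\partial h_j$. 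By \eqref{eq:Q-and-G}, the diagonal part contributes at least $\sum_i(1+\beta_i)Q\,x_i^2$. For the coupling term, $|M_{ij}|\le(1+\beta_i)G$, and using $|x_ix_j|\le(x_i^2+x_j^2)/2$,
\[
|x^\top Mx|\le \tfrac G2\sum_{i,j}(1+\beta_i)(x_i^2+x_j^2)=\sum_i x_i^2\,G\,\tfrac{2N+N\beta_i+\sum_j\beta_j}{2}=\sum_i(1+\beta_i)\xi_i G\,x_i^2 .
\]
Hence
\[
x^\top J^tx\ge\sum_i(1+\beta_i)(Q-\xi_iG)x_i^2\ge c\|x\|^2,\qquad c:=\min_i(Q-\xi_iG)>0
\]
by \eqref{eq:ass_q_g}. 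Note that $Q-\xi_iG>0$ for each $i$, so a positive multiple of $\|x\|^2$ appears, not merely a positive quantity.

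\textbf{Step 3: conclude.} Summing over $t$ and adding the monotone part from Step~1, $\mathcal G(\cdot;\theta)$ is strongly monotone on $X(C)$ with modulus $c$. Suppose $\mathbf H^*$ and $\mathbf H^{**}$ both solve the VI. Adding the two VI inequalities gives $\langle\mathcal G(\mathbf H^*)-\mathcal G(\mathbf H^{**}),\mathbf H^*-\mathbf H^{**}\rangle\le0$, which by strong monotonicity forces $\mathbf H^*=\mathbf H^{**}$; equivalently, one can invoke \citep{facchinei2007generalized}.

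The main obstacle I expect is Step~1, namely justifying that the CVaR and revenue contributions can be handled purely via convexity in the own variable. This hinges on the translation-invariance split: it moves the only $\mathbf h_{-i}$-dependence out of the CVaR and into the deterministic $g_i^t$, which is also where the $(1+\beta_i)$ weights in $\xi_i$ come from. The second point needing care is that the bounds in \eqref{eq:Q-and-G} must hold uniformly on a convex set containing $X(C)$, so that the integral form of the Jacobian is valid; this is guaranteed by twice continuous differentiability on $\mathbb R_+^N$.
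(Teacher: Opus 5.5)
Your proposal is correct and follows essentially the same route as the paper. Both use the translation-invariance split into a convex own-variable part \(\phi_i^t\) (with monotone derivative) plus the coupled term \((1+\beta_i)\psi_i^t\), show that the coupled term's Jacobian is uniformly positive definite, and conclude that \(\mathcal G\) is strongly monotone so the VI has a unique solution. The only difference is in how positive definiteness is obtained: the paper bounds the entries of the symmetric part \(S^t\), uses strict row diagonal dominance with a Gershgorin-type eigenvalue bound, and then cites the Jacobian criterion, whereas you bound \(x^\top J^t x\) directly via \(|x_ix_j|\le (x_i^2+x_j^2)/2\) and the integral mean-value formula, arriving at the same row-wise quantity \((1+\beta_i)(Q-\xi_iG)\).
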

\begin{proof}[Proof sketch]
Fix \((C,\theta)\in\pazocal Y\). The VE solves
\(\mathrm{VI}(X(C),\mathcal G(\cdot;\theta))\).
According to \citep[Theorem~2.3.3(b)]{facchinei2003finitebook}, this VI has at most one solution if \(\mathcal G(\cdot;\theta)\) is strongly monotone on \(X(C)\). Define \(
\phi_i^t(h):=
-\mathbb E_\omega[r_{i,\omega}^t(h)]
+p_\theta(h)
+\beta_i\mathrm{CVaR}_{\alpha_i}(-r_i^t(h)+p_\theta(h)).
\)
Hence, it is easy to show that, each \((i,t)\)-component of \(\mathcal G\) can be written as
\( \mathcal G_i^t(\mathbf H;\theta) = \Phi_i^t(h_i^t) + M_i^t(\mathbf h^t), \) where
\(
\Phi_i^t (h):=(\phi_i^t)'(h),
\text{ and }
M_i^t:=(1+\beta_i)
\frac{\partial\psi_i^t(\mathbf h^t)}{\partial h_i^t}.
\) We now show that \(\Phi_i^t\) is monotone and that the mapping formed by the \(M_i^t\) terms is strongly monotone, which implies that \(\mathcal G\) is strongly monotone. It is easy to show that Assumptions \ref{ass:fct_c1} and \ref{ass:convex_follower}, \citep[Prop.~2(iv),(v)]{pflug2000some}, and Lemma~\ref{lemma:cvar}, imply that \(h\mapsto\phi_i^t(h)\) is convex and differentiable. Therefore, \(\Phi_i^t\) is monotone \citep[Prop.~3.17]{penot2013calculus}. Recall that \( \psi_i^t(\mathbf h^t) = q_i^t(h_i^t)+g_i^t(\mathbf h^t) \), thus $M_i^t$ becomes \( M_i^t(\mathbf h^t) := (1+\beta_i) \left( (q_i^t)'(h_i^t) + \frac{\partial g_i^t(\mathbf h^t)}{\partial h_i^t} \right). \) For each \(t\in\pazocal T\), define \( M^t(\mathbf h^t):=(M_i^t(\mathbf h^t))_{i\in\pazocal N}. \) We now show that \(M^t\) is strongly monotone. 
Let the Jacobian \(J^t(\mathbf h^t):=\nabla_{\mathbf h^t}M^t(\mathbf h^t)\), and let
\(S^t(\mathbf h^t):=\frac12\bigl(J^t(\mathbf h^t)+J^t(\mathbf h^t)^\top\bigr)\) be the symmetric part of $J^t(\mathbf h^t)$.  
It can be shown that the diagonal and off-diagonal entries of \(S^t\) satisfy
\(S_{ii}^t(\mathbf h^t)\ge(1+\beta_i)(Q-G)\) and, for \(i\ne j\),
\(|S_{ij}^t(\mathbf h^t)|\le\frac12\bigl((1+\beta_i)+(1+\beta_j)\bigr)G\). 
Therefore,
\(
S_{ii}^t(\mathbf h^t)-\sum_{j\ne i}|S_{ij}^t(\mathbf h^t)| \ge (1+\beta_i)(Q-\xi_iG)\overset{\eqref{eq:ass_q_g}}{>0}.
\)
Hence \(S^t(\mathbf h^t)\) is strictly row diagonally dominant with positive diagonal entries, and therefore positive definite; see \citep[Theorem~8.6]{gallier2022applications} and \citep[Theorem 2.4]{agosti2005theoretical}. Define \(m:=\min_{i\in\pazocal N}(1+\beta_i)(Q-\xi_iG)\), which is positive by \eqref{eq:ass_q_g}. 
For every \(i\), \(t\), and \(\mathbf h^t\), \(
S_{ii}^t(\mathbf h^t)-\sum_{j\ne i}|S_{ij}^t(\mathbf h^t)|
\ge
(1+\beta_i)(Q-\xi_iG)
\ge m
\)
holds.  
By \citep[Lemma~2]{luo2022directional}, every eigenvalue of \(S^t(\mathbf h^t)\) is therefore bounded below by \(m\); equivalently, \(S^t(\mathbf h^t)\succeq m I\), uniformly over \(t\in\pazocal T\) and \(\mathbf H\in X(C)\). 
Hence, the Jacobian criterion for strong monotonicity \citep[Prop.~3(a)]{parise2019variational}\citep[Prop.~2.3.2(c)]{facchinei2003finitebook} gives, summing over \(t\),
\(
\sum_t\bigl(M^t(\mathbf h^t)-M^t(\mathbf h^{t\prime})\bigr)^\top(\mathbf h^t-\mathbf h^{t\prime})
\ge m\|\mathbf H-\mathbf H'\|^2, \forall \mathbf H, \mathbf H' \in X(C).
\)
Combining this inequality with the monotonicity of the \(\Phi_i^t\) terms yields the strong monotonicity of \(\mathcal G\) on \(X(C)\). Thus, the VE is unique. 
See Supplementary Material,~\ref{appendix:gne_uniqueness} for the full proof.
\end{proof}
In condition~\eqref{eq:ass_q_g}, parameter \(\xi_i\) captures how the number of followers and
their risk attitudes enter~\eqref{eq:ass_q_g}; a larger \(\xi_i\) makes the
inequality harder to satisfy. When \(G=0\), condition~\eqref{eq:ass_q_g} reduces to the simple requirement \(Q>0\). 
\ref{appendix:condition_unique}, in the Supplementary Material,
discusses how the model can be interpreted when \eqref{eq:ass_q_g} is not
satisfied.

\subsubsection{Leader Problem Analysis and Stackelberg Equilibrium Existence}
By Theorem~\ref{thm:gne_uniqueness}, the follower VE reaction map
\(
\mathbf H^*:\pazocal Y\to\mathbb R_+^{N|\pazocal T|},
(C,\theta)\mapsto \mathbf H^*(C,\theta)
\)
is single-valued, which makes the leader problem~\eqref{eq:leader_problem} well-defined. 
We now show also that this map is continuous, which ensures that the leader can find an optimal $(C,\theta)$.

\begin{proposition}
\label{prop:continuity}
Reaction map
\(
(C,\theta)\mapsto \mathbf{H}^*(C,\theta)
\)
is continuous on $\pazocal Y$.
\end{proposition}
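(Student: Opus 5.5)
We will establish continuity of the reaction map by showing that the VE depends continuously on the data of the variational inequality \(\mathrm{VI}(X(C),\mathcal G(\cdot;\theta))\). We use the uniform strong monotonicity obtained in the proof of Theorem~\ref{thm:gne_uniqueness}; that is, there exists \(m>0\), independent of \((C,\theta)\), such that \(\langle \mathcal G(\mathbf H;\theta)-\mathcal G(\mathbf H';\theta),\mathbf H-\mathbf H'\rangle\ge m\|\mathbf H-\mathbf H'\|^2\) on \(X(\overline C)\). The constant \(m=\min_i(1+\beta_i)(Q-\xi_iG)\) involves neither \(C\) nor \(\theta\), which is essential. The argument then combines two ingredients: continuity of \((\mathbf H,\theta)\mapsto\mathcal G(\mathbf H;\theta)\) jointly on the compact set \(\mathcal K^T\times[0,\overline\theta]\), and Hausdorff continuity of the constraint correspondence \(C\mapsto X(C)\).

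\textbf{Step 1 (joint continuity of \(\mathcal G\)).} Each component is \(-\mathbb E_\omega[(r^t_{i,\omega})'(h_i^t)]+\partial_h p_\theta(h_i^t)+(1+\beta_i)\partial_{h_i^t}\psi_i^t(\mathbf h^t)+\beta_i\,\partial_{h_i^t}\mathrm{CVaR}_{\alpha_i}(L_i^t(\mathbf h^t;\theta))\), with the sign conventions of the proof of Theorem~\ref{thm:gne_uniqueness}. The terms in \(\mathbf h\) alone, and \(\partial_h p_\theta\), are jointly continuous by Assumption~\ref{ass:fct_c1}. For the CVaR gradient we use the representation \(\nabla_{\mathbf h}\mathrm{CVaR}=\mathbb E[\nabla_{\mathbf h}L\mid L\ge \mathrm{VaR}]\), as in Lemma~\ref{lemma:cvar}. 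Lemma~\ref{lemma:cvar} gives continuity in \(\mathbf h\) for fixed \(\theta\); we must extend it to joint continuity in \((\mathbf h,\theta)\). We would do this by repeating the argument behind Lemma~\ref{lemma:cvar} with \((\mathbf h,\theta)\) as the parameter. The loss depends on \(\theta\) only through the deterministic shift \(p_\theta(h_i^t)\), which is jointly continuous. Hence the CDF satisfies \(F_i^t(z,\mathbf h;\theta)=F_i^t(z-p_\theta(h_i)+p_0(h_i),\mathbf h;0)\), so the VaR is jointly continuous. The gradient term is then handled by dominated convergence with the envelope \(Z_i^t\) of Assumption~\ref{ass:cvar}. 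We expect this to be the main technical obstacle, since the assumptions are stated for fixed \(\theta\). The shift structure is what we would try first; if that fails, we would add a mild uniformity-in-\(\theta\) hypothesis.

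\textbf{Step 2 (perturbation estimate).} Fix \((C,\theta)\to(C_0,\theta_0)\) and write \(\mathbf H=\mathbf H^*(C,\theta)\) and \(\mathbf H_0=\mathbf H^*(C_0,\theta_0)\). Let \(\Pi_C\) denote the projection onto \(X(C)\). We project in each direction: \(\mathbf H_0':=\Pi_{C}(\mathbf H_0)\in X(C)\) and \(\mathbf H':=\Pi_{C_0}(\mathbf H)\in X(C_0)\). Since \(X(C)=\{\mathbf H\ge0,\ \sum_j h_j^t\le C\}\), scaling gives \(\|\mathbf H_0'-\mathbf H_0\|\) and \(\|\mathbf H'-\mathbf H\|\) both \(\le c\,|C-C_0|\). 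For example, when \(C<C_0\), the point \((C/C_0)\mathbf H_0\) is feasible, so the projection distance is at most \((1-C/C_0)\|\mathbf H_0\|\); the case \(C_0=0\) is trivial because then \(X(C_0)=\{0\}\). Next we write the two VI inequalities, \(\langle\mathcal G(\mathbf H;\theta),\mathbf H_0'-\mathbf H\rangle\ge0\) and \(\langle\mathcal G(\mathbf H_0;\theta_0),\mathbf H'-\mathbf H_0\rangle\ge0\), and add them. Using strong monotonicity at \(\theta\), we obtain
\[
m\|\mathbf H-\mathbf H_0\|^2\le \langle \mathcal G(\mathbf H_0;\theta)-\mathcal G(\mathbf H_0;\theta_0),\mathbf H-\mathbf H_0\rangle + B\,c\,|C-C_0|,
\]
where \(B\) bounds \(\|\mathcal G\|\) on the compact set \(X(\overline C)\times[0,\overline\theta]\); such a bound exists by Step 1.

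\textbf{Step 3 (conclusion).} By Step 1 and compactness, \(\|\mathcal G(\mathbf H_0;\theta)-\mathcal G(\mathbf H_0;\theta_0)\|\to0\). Applying Young's inequality to the first term yields \(\|\mathbf H-\mathbf H_0\|^2\le m^{-2}\|\mathcal G(\mathbf H_0;\theta)-\mathcal G(\mathbf H_0;\theta_0)\|^2+2Bc\,m^{-1}|C-C_0|\to0\), which proves continuity on \(\pazocal Y\).

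As an alternative to Steps 2 and 3, we could argue by sequential compactness. Take any limit point of \(\mathbf H^*(C_k,\theta_k)\) in the compact set \(X(\overline C)\). Passing to the limit in the VI, using the approximations supplied by Hausdorff continuity of \(X(\cdot)\) and the continuity of \(\mathcal G\), shows that the limit solves \(\mathrm{VI}(X(C_0),\mathcal G(\cdot;\theta_0))\). Uniqueness from Theorem~\ref{thm:gne_uniqueness} then forces the limit to equal \(\mathbf H^*(C_0,\theta_0)\). This route needs only Step 1 and uniqueness, not uniformity of \(m\).
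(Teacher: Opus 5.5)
Your proposal is correct, but your main route (Steps 2--3) differs from the paper's. Your closing ``alternative'' is essentially the paper's proof: a bounded sequence, a convergent subsequence, a limit in $X(C)$, the inner approximation $\mathbf Y_k=(C_{n_k}/C)\mathbf Y$, passage to the limit in the VI via joint continuity of $\mathcal G$ (Lemma~\ref{lem:F_cont}), and then uniqueness. That argument is qualitative and uses only \emph{uniqueness} of the VE, so it would survive any other uniqueness mechanism.

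Your perturbation argument instead exploits the quantitative content of Theorem~\ref{thm:gne_uniqueness}. Since $Q$ and $G$ are global bounds and $\phi_i^t$ is convex for every $\theta$, the estimate $S^t(\mathbf h^t)\succeq mI$ gives strong monotonicity of $\mathcal G(\cdot;\theta)$ on all of $X(\overline C)$, with the same $m$ for every $\theta$. You need exactly this, because $\mathbf H$ and $\mathbf H_0$ lie in different sets. Your scaling bound holds with $c=\sqrt{|\pazocal T|}$, since $\|\mathbf H\|\le\sqrt{|\pazocal T|}\,C$ on $X(C)$.

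What this buys is an explicit modulus of continuity. With $p_\theta(h)=\theta h$, the $\theta$-perturbation of $\mathcal G$ has entries $(1+\beta_i)(\theta-\theta_0)$. Hence the reaction map is Lipschitz in $\theta$ and $\tfrac12$-H\"older in $C$, with constants of order $m^{-1}$ and $m^{-1/2}$ that degenerate as condition~\eqref{eq:ass_q_g} becomes tight. The paper's proof does not provide this information.

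Minor points:
\begin{itemize}
\item Adding the two VI inequalities produces both projection errors, so the remainder is $2Bc\,|C-C_0|$ rather than $Bc\,|C-C_0|$. The $\theta$-difference term also has the opposite sign, which is irrelevant after Cauchy--Schwarz.
\item Your Step~1 component formula counts $\beta_i\,\partial_{h_i^t}\psi_i^t$ twice. It should read $\partial_{h_i^t}\psi_i^t+\beta_i\,\partial_{h_i^t}\mathrm{CVaR}_{\alpha_i}(L_i^t)$.
\item The ``main technical obstacle'' of Step~1 dissolves by translation invariance: $\mathrm{CVaR}_{\alpha_i}(L_i^t(\mathbf h;\theta))=\mathrm{CVaR}_{\alpha_i}(L_i^t(\mathbf h;0))+p_\theta(h_i)-p_0(h_i)$. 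This is how Lemma~\ref{lem:F_cont} obtains joint continuity: the CVaR gradient reduces to the $\theta$-free term $\partial_h\mathrm{CVaR}_{\alpha_i}(-r_i^t(h))$. So no uniformity-in-$\theta$ hypothesis is needed, and your CDF-shift identity is the same observation made at the level of VaR.
\end{itemize}
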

\begin{proof}[Proof sketch]
For $n\ge 1$. Let \((C_n,\theta_n)\to(C,\theta)\) in \(\pazocal Y\), and set
\(\mathbf H_n:=\mathbf H^*(C_n,\theta_n)\). 
Since \(C_n\to C\), the sequence \(\{C_n\}\) is bounded \citep[Theorem~3.2]{rudin1976principles}. 
Because \(\mathbf H_n\in X(C_n)\), each component satisfies
\(
0\le h_{i,n}^t\le \sum_{j\in\pazocal N}h_{j,n}^t\le C_n
\);
hence \(\{\mathbf H_n\}\) is bounded. Since bounded sequences admit convergent
subsequences~\citep[Lemma 1.4]{sasane2017friendly}, there exists a subsequence $\{\mathbf{H}_{n_k}\}_{k\ge1}$ and a
vector $\overline{\mathbf{H}}\in\mathbb R^{NT}$ such that
\(\mathbf H_{n_k}\to\overline{\mathbf H} \quad \text{as} \quad k \mapsto \infty\). 
It is easy to show that \(\overline{\mathbf H}\in X(C)\).
Let $\mathbf Y\in X(C)$ be arbitrary. We show that there exists a sequence
$\mathbf Y_k\in X(C_{n_k})$ such that
\(
\mathbf Y_k\to \mathbf Y
\). If \(C=0\), then \(X(C)=\{\mathbf0\}\), so take \(\mathbf Y_k=\mathbf0\). 
If \(C>0\), take
\(
\mathbf Y_k=(C_{n_k}/C)\mathbf Y
\).
Then \(\mathbf Y_k\in X(C_{n_k})\) and \(\mathbf Y_k\to\mathbf Y\). 
By the definition of \(\mathbf H_n\), for every \(k\ge1\),
\(\mathbf H_{n_k}=\mathbf H^*(C_{n_k},\theta_{n_k})\) solves
\(\mathrm{VI}(X(C_{n_k}),
\mathcal G(\cdot;\theta_{n_k}))\). Hence,
\(
\left\langle
\mathcal G(\mathbf H_{n_k};\theta_{n_k}),
\mathbf Y_k-\mathbf H_{n_k}
\right\rangle\ge0.
\)
As in the proof of Proposition~\ref{prop:ve_gne_existence}, \((\mathbf H,\theta)\mapsto \mathcal G(\mathbf H;\theta)\) is continuous. Hence
\(
\mathcal G(\mathbf H_{n_k};\theta_{n_k})
\to
\mathcal G(\overline{\mathbf H};\theta)
\).
By continuity of the inner product \citep[Lemma~3.2.2]{kreyszig1991introductory}, taking the limit yields
\(
\left\langle
\mathcal G(\overline{\mathbf H};\theta),
\mathbf Y-\overline{\mathbf H}
\right\rangle\ge0, \forall \mathbf Y\in X(C).
\)
Thus \(\overline{\mathbf H}\) solves \(\mathrm{VI}(X(C),\mathcal G(\cdot;\theta))\). 
By Theorem~\ref{thm:gne_uniqueness}, this VI has the unique solution
\(\mathbf H^*(C,\theta)\), so \(\overline{\mathbf H}=\mathbf H^*(C,\theta)\). 
Therefore every convergent subsequence of \(\{\mathbf H_n\}\) has the same limit \(\mathbf H^*(C,\theta)\), and since \(\{\mathbf H_n\}\) is bounded, the whole sequence satisfies
\(\mathbf H_n\to\mathbf H^*(C,\theta)\). 
Because \((C_n,\theta_n)\to(C,\theta)\) was arbitrary, the map
\((C,\theta)\mapsto\mathbf H^*(C,\theta)\) is continuous on~\(\pazocal Y\). 
See Supplementary Material,~\ref{appendix:continuity} for the full proof.
\end{proof}
\begin{proposition}\label{prop:leader_exists}
Leader problem~\eqref{eq:leader_problem} admits at least one optimal solution.
\end{proposition}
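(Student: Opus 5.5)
The plan is to apply the Weierstrass extreme value theorem to the leader's objective on $\pazocal Y$. Define
\[
\Lambda(C,\theta):=\sum_{t\in\pazocal T}\sum_{i\in\pazocal N} p_\theta\!\left(h_i^{t*}(C,\theta)\right)-\mathrm{Cost}(C),
\qquad (C,\theta)\in\pazocal Y .
\]
Problem~\eqref{eq:leader_problem} is the maximization of $\Lambda$ over $\pazocal Y$. It therefore suffices to show two things: that $\pazocal Y$ is nonempty and compact, and that $\Lambda$ is continuous on $\pazocal Y$.

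\textbf{Step 1: the feasible set.} By \eqref{eq:leader_set}, $\pazocal Y=[0,\overline C]\times[0,\overline\theta]$. This is a product of two closed bounded intervals, so it is closed and bounded in $\mathbb R^2$, hence compact by Heine--Borel. It is nonempty, since it contains $(0,0)$.

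\textbf{Step 2: continuity of the access-revenue term.} Fix $i$ and $t$. The map $(C,\theta)\mapsto p_\theta\bigl(h_i^{t*}(C,\theta)\bigr)$ factors as
\[
(C,\theta)\;\mapsto\;\bigl(h_i^{t*}(C,\theta),\,\theta\bigr)\;\mapsto\;p_\theta\bigl(h_i^{t*}(C,\theta)\bigr).
\]
The first map is continuous on $\pazocal Y$. Its first coordinate is a component of $\mathbf H^*$, which is single-valued by Theorem~\ref{thm:gne_uniqueness} and continuous by Proposition~\ref{prop:continuity}. Its second coordinate is a projection. Every VE lies in $X(C)\subset\mathbb R_+^{NT}$, so $h_i^{t*}(C,\theta)\ge 0$ and the image lies in $\mathbb R_+^2$. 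On $\mathbb R_+^2$ the map $(h,\theta)\mapsto p_\theta(h)$ is jointly continuous by the standing continuity assumption stated before Assumption~\ref{ass:fct_c1}. A composition of continuous maps is continuous, and a finite sum over $i\in\pazocal N$ and $t\in\pazocal T$ of continuous functions is continuous.

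\textbf{Step 3: continuity of the cost term.} The function $\mathrm{Cost}(C)$ is continuous in $C$ by assumption in \S\ref{sec:leader_problem}. Hence $(C,\theta)\mapsto \mathrm{Cost}(C)$ is continuous on $\pazocal Y$, and so is $\Lambda$.

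\textbf{Conclusion.} A continuous real-valued function on a nonempty compact set attains its maximum, so there exists $(C^*,\theta^*)\in\pazocal Y$ solving \eqref{eq:leader_problem}.

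There is no real obstacle here: the substantive work is already done in Proposition~\ref{prop:continuity}. The one delicate point is the joint continuity used in Step 2. The price $\theta$ enters both as the argument of $p_\theta$ and through the reaction map $\mathbf H^*(C,\theta)$, so continuity of $p_\theta$ in $h$ for each fixed $\theta$ would not be enough. This is why I invoke the assumption that $(h,\theta)\mapsto p_\theta(h)$ is jointly continuous on $\mathbb R_+^2$, and why I check that the reaction map takes values in the nonnegative orthant, where that assumption applies.
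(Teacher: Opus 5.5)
Your proposal is correct and follows essentially the same route as the paper: compactness of $\pazocal Y=[0,\overline C]\times[0,\overline\theta]$, continuity of the objective via Proposition~\ref{prop:continuity} together with the joint continuity of $(h,\theta)\mapsto p_\theta(h)$ and the continuity of $\mathrm{Cost}$, and then the extreme value theorem. Your explicit check that the reaction map takes values in $\mathbb R_+^2$, which is where the joint-continuity assumption on $p_\theta(h)$ is stated, is a small point of care that the paper leaves implicit.
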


\begin{proof}
Let \(\Phi(C,\theta)\) denote the leader objective in~\eqref{eq:leader_problem}. 
By Proposition~\ref{prop:continuity}, \((C,\theta)\mapsto \mathbf H^*(C,\theta)\) is continuous on \(\pazocal Y\). 
By the assumptions in \S~\ref{sec:follower_problem} and \S~\ref{sec:leader_problem}, the mappings \((h,\theta)\mapsto p_\theta(h)\) and \(C\mapsto \mathrm{Cost}(C)\) are continuous. Hence, \(\Phi (C,\theta)\) is continuous on \(\pazocal Y\) \eqref{eq:leader_set}. 
Moreover, \(\pazocal Y=[0,\overline C]\times[0,\overline\theta]\) is compact. 
Thus, by \citep[Theorem~4.4.2]{abbott2016understanding},~\eqref{eq:leader_problem} admits a solution.
\end{proof}

We show that the Stackelberg game admits an equilibrium \((C^*,\theta^*,\mathbf{H}^*)\).

\begin{theorem}\label{theor:stackelberg_exists}
A Stackelberg equilibrium (Def.~\ref{def:stack}) exists.
\end{theorem}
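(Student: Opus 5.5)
The plan is to assemble the Stackelberg equilibrium directly from the results already established, verifying each clause of Definition~\ref{def:stack} in turn. First, I invoke Proposition~\ref{prop:leader_exists} to obtain a maximizer \((C^*,\theta^*)\in\pazocal Y\) of the leader problem~\eqref{eq:leader_problem}. That proposition relies on the continuity of the reaction map from Proposition~\ref{prop:continuity} and on the compactness of \(\pazocal Y\), so no further argument is needed at this step. This choice is exactly clause (ii) of Definition~\ref{def:stack}.

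Second, I define \(\mathbf H^*:=\mathbf H^*(C^*,\theta^*)\). The follower VE at \((C^*,\theta^*)\) exists by Proposition~\ref{prop:ve_gne_existence}. It is unique by Theorem~\ref{thm:gne_uniqueness}, which I take as standing under its hypotheses, since those same hypotheses are what make the leader problem well defined. Hence \(\mathbf H^*\) is precisely the selected VE. The second part of Proposition~\ref{prop:ve_gne_existence} says every VE is a GNE, so \(\mathbf H^*\) satisfies \eqref{eq:utility_gne}--\eqref{eq:constraints}. This gives clause (i).

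Finally, I observe that the triple \((C^*,\theta^*,\mathbf H^*)\) is consistent. The leader objective evaluated at \((C^*,\theta^*)\) is computed along the same single-valued reaction map that selects \(\mathbf H^*\). Therefore optimality of \((C^*,\theta^*)\) in \eqref{eq:leader_problem} is optimality against the followers' actual VE response, and the triple is an SE.

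The only step needing care is the second one. Clause (i) requires that the selected VE coincide with the response the leader anticipated, and this holds only because uniqueness makes \(\mathbf H^*(\cdot)\) a genuine function. If uniqueness failed, I would instead need an optimistic or pessimistic selection, and would have to re-check upper semicontinuity of the VE correspondence via the closedness argument from Proposition~\ref{prop:continuity}. Under the standing hypotheses of Theorem~\ref{thm:gne_uniqueness} that complication does not arise.
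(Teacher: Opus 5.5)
Your proposal is correct and follows essentially the same route as the paper. You take a leader optimum $(C^*,\theta^*)$ from Proposition~\ref{prop:leader_exists}, pair it with the VE at $(C^*,\theta^*)$, which exists and is a GNE by Proposition~\ref{prop:ve_gne_existence} and is unique by Theorem~\ref{thm:gne_uniqueness}, and conclude that the triple satisfies Definition~\ref{def:stack}; your explicit appeal to Proposition~\ref{prop:ve_gne_existence} for existence and the VE-to-GNE link is slightly more careful bookkeeping than the paper's one-line argument.
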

\begin{proof}
By Proposition~\ref{prop:leader_exists}, the leader problem \eqref{eq:leader_problem} has an optimal solution \((C^*,\theta^*)\in\pazocal Y\). By Theorem~\ref{thm:gne_uniqueness}, the corresponding follower game admits a unique VE \(\mathbf{H}^*(C^*,\theta^*)\). Hence,
\(
\bigl(C^*,\theta^*,\mathbf{H}^*(C^*,\theta^*)\bigr)
\)
is a SE.
\end{proof}

\subsection{Probabilistic Profit Guarantee}\label{sec:profit_guarantee}
While the SE is defined through the ex ante utility~\eqref{eq:u_follower}, we are interested in the Probability of Profit (PoP), i.e., the probability that a firm obtains positive realized profit at equilibrium. For follower \(i\), the realized profit is
\begin{equation}
\Pi_{i,\omega}^*
:=
\sum_{t\in\pazocal T} r_{i,\omega}^t(h_i^{t*})
-
\sum_{t\in\pazocal T}
\Bigl(p_{\theta^*}(h_i^{t*})+\psi_i^t(\mathbf h^{t*})\Bigr),
\label{eq:pi_star}
\end{equation}
where \(R_{i,\omega}^*:=\sum_{t\in\pazocal T} r_{i,\omega}^t(h_i^{t*})\) is the realized total revenue, with corresponding random variable \(R_i^*\), and \(K_i^*:=\sum_{t\in\pazocal T}
\Bigl(p_{\theta^*}(h_i^{t*})+\psi_i^t(\mathbf h^{t*})\Bigr)\) is the cost. 
We first exclude two trivial cases. If \(K_i^*\ge \mathbb E_\omega[R_i^*]\), then positive profit is not achieved even in expectation, so we cannot certify positive realized profit. If \(K_i^*=0\), follower~\(i\) chooses no resources and is therefore inactive. Apart from these cases, we next derive a lower bound on the PoP,  $\mathbb{P}(\Pi_{i,\omega}^*>0).$
\begin{theorem}\label{thm:global_bound_pi}
For any Stackelberg equilibrium \((C^*,\theta^*,\mathbf H^*)\), and for any follower \(i\in\pazocal N\), assume that
\(
0<K_i^*<\mathbb E_\omega[R_i^*],
\text{and } \mathrm{Var}_\omega(R_i^*)<\infty .
\) Then
\begin{equation}
\mathbb P\bigl(\Pi_{i,\omega}^*>0\bigr)\ge \nu_i \quad \text{where } \nu_i
:=
\frac{1}{1+
\dfrac{\mathrm{Var}_\omega(R_i^*)}
{\bigl(\mathbb E_\omega[R_i^*]-K_i^*\bigr)^2}}
\label{eq:bound_nu}
\end{equation}
Moreover, if the confidence level \(\alpha_i\) required by follower \(i\) is not too large, then the Probability of Profit is guaranteed to be no less than \(\alpha_i\); in fact,
\begin{equation}
\mathbb P\bigl(\Pi_{i,\omega}^*>0\bigr)
\ge \max\{\nu_i,\alpha_i\}:=\hat{\nu_i},
\qquad \text{if } \alpha_i < \bar\alpha_i,
\label{eq:profit_bound_final}
\end{equation}
where
\(
\bar\alpha_i
:=
\sup\left\{
\alpha_i\in(0,1) \text{ such that } 
\sum_{t\in\pazocal T}
\mathrm{CVaR}_{\alpha_i}
\bigl(L_i^t(\mathbf h^{t*};\theta^*)\bigr)<0
\right\}.
\)
\end{theorem}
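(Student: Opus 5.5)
The proof has two parts. The first bound $\nu_i$ comes from the one-sided Chebyshev (Cantelli) inequality applied to the total revenue. The second bound $\alpha_i$ comes from a quantile argument based on the negativity of the summed CVaR.

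\emph{First bound.} At the Stackelberg equilibrium, $\mathbf H^*$ and $\theta^*$ are deterministic. Hence $K_i^*$ is a deterministic constant, and $\Pi_{i,\omega}^* = R_{i,\omega}^* - K_i^*$. Set $\mu:=\mathbb E_\omega[R_i^*]$, $\sigma^2:=\mathrm{Var}_\omega(R_i^*)<\infty$ and $\lambda:=\mu-K_i^*>0$. Then
\[
\mathbb P(\Pi_{i,\omega}^*\le 0)=\mathbb P(R_i^*-\mu\le -\lambda).
\]
Cantelli's inequality states that $\mathbb P(X-\mu\le-\lambda)\le \sigma^2/(\sigma^2+\lambda^2)$ for $\lambda>0$. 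It follows from Markov's inequality applied to $(\mu-X+u)^2$, optimized at $u=\sigma^2/\lambda$. Taking complements gives
\[
\mathbb P(\Pi_{i,\omega}^*>0)\ge \lambda^2/(\sigma^2+\lambda^2)=\nu_i,
\]
which is \eqref{eq:bound_nu}. The condition $K_i^*>0$ is not needed for this step; it only excludes the inactive case.

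\emph{Second bound.} Fix $\alpha_i<\bar\alpha_i$. The plan is first to argue that $\sum_t \mathrm{CVaR}_{\alpha_i}(L_i^t(\mathbf h^{t*};\theta^*))<0$ holds at this $\alpha_i$, not merely at some level in the defining set. For this I would use that $\alpha\mapsto\mathrm{CVaR}_\alpha$ is nondecreasing. Then the set defining $\bar\alpha_i$ is a down-interval, so every $\alpha_i<\bar\alpha_i$ belongs to it.

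Next I would pass to the aggregate loss $L_i^*:=\sum_t L_i^t=-\Pi_{i,\omega}^*$. Subadditivity of CVaR (it is a coherent risk measure, cf.\ \citep{pflug2000some}) gives
\[
\mathrm{CVaR}_{\alpha_i}(L_i^*)\le\sum_t\mathrm{CVaR}_{\alpha_i}(L_i^t)<0.
\]
Then I would use $\mathrm{VaR}_{\alpha_i}(L_i^*)\le \mathrm{CVaR}_{\alpha_i}(L_i^*)<0$. By the definition of VaR as the minimal $z$ with $\mathbb P(L\le z)\ge\alpha_i$, this yields
\[
\mathbb P(L_i^*<0)\ge \mathbb P\bigl(L_i^*\le \mathrm{VaR}_{\alpha_i}(L_i^*)\bigr)\ge\alpha_i.
\]
Since $\{L_i^*<0\}=\{\Pi_{i,\omega}^*>0\}$, combining with the first bound gives \eqref{eq:profit_bound_final}.

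\emph{Main obstacle.} The main difficulty is the aggregation step. The paper defines CVaR as a conditional tail expectation, which coincides with the Rockafellar--Uryasev form only when the loss distribution is continuous at the VaR. For the per-slot losses this is guaranteed by Assumption~\ref{ass:var_continuity}. The sum $L_i^*$ is not covered by that assumption, so I would either invoke the Rockafellar--Uryasev representation, for which subadditivity and $\mathrm{VaR}\le\mathrm{CVaR}$ hold generally, or verify continuity of the sum directly.

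As a fallback that avoids aggregation, one can instead argue slot by slot: $\mathrm{VaR}_{\alpha_i}(L_i^t)<0$ for each $t$ whenever each slot's CVaR is negative. This route, however, requires a union or comonotonicity argument to reach the total profit, so I would rely on subadditivity as the primary route.
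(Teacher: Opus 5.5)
Your proposal is correct and follows the paper's proof. The paper derives the first bound from the variance form of the Paley--Zygmund inequality with $\vartheta_i=K_i^*/\mathbb E_\omega[R_i^*]$, which is exactly your Cantelli bound with $\lambda=(1-\vartheta_i)\mathbb E_\omega[R_i^*]$; your version also shows that the paper's appeal to $R_i^*\ge 0$ is unnecessary. The second bound uses the same chain as yours (monotonicity of $\mathrm{CVaR}_\alpha$ in $\alpha$, subadditivity, $\mathrm{VaR}_{\alpha_i}\le\mathrm{CVaR}_{\alpha_i}$, the quantile property of $\mathrm{VaR}$), and your aggregation caveat is well taken: the paper applies subadditivity to its conditional-tail-expectation definition without comment. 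Using the Rockafellar--Uryasev form for $L_i^*$ makes that step rigorous, since it coincides with the paper's definition on each slot where Assumption~\ref{ass:var_continuity} gives a continuous loss distribution.
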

\begin{proof}[Proof sketch]
Fix \(i\in \pazocal N\). By~\eqref{eq:pi_star}, $\Pi_{i,\omega}^*>0$ \(\iff\)
\(R_{i,\omega}^*>K_i^*\). Since \(R_i^*\ge0\), \(K_i^*<\mathbb E_\omega[R_i^*]\), and
\(\mathrm{Var}_\omega(R_i^*)<\infty\), we apply the Paley--Zygmund inequality
~\citep[Page~8]{kahane1985some}(see also \citep[Section 3]{ghosh2002probability}) to \(R_i^*\) with
\(\vartheta_i:=K_i^*/\mathbb E_\omega[R_i^*]\in(0,1)\). This gives
\begin{equation}
    \mathbb P(R_{i,\omega}^*>K_i^*)
\ge
\frac{(1-\vartheta_i)^2(\mathbb E_\omega[R_i^*])^2}
{(1-\vartheta_i)^2(\mathbb E_\omega[R_i^*])^2+\mathrm{Var}_\omega(R_i^*)}
\label{eq:paley}
\end{equation} 
Substituting \(\vartheta_i\) into \eqref{eq:paley}, noting that
\(
\Bigl(1-\frac{K_i^*}{\mathbb E_{\omega}[R_i^*]}\Bigr)^2
\bigl(\mathbb E_{\omega}[R_i^*]\bigr)^2
=
\bigl(\mathbb E_{\omega}[R_i^*]-K_i^*\bigr)^2,
\) and dividing the numerator and denominator by
\(
\bigl(\mathbb E_{\omega}[R_i^*]-K_i^*\bigr)^2,
\)
we obtain \(\mathbb P(R_{i,\omega}^*>K_i^*)\ge \nu_i\) \eqref{eq:bound_nu}.
Assume now that \(\alpha_i<\bar\alpha_i\). By the definition of \(\bar\alpha_i\),
\(
\sum_{t\in\pazocal T}
\mathrm{CVaR}_{\alpha_i}
\bigl(L_i^t(\mathbf h^{t*};\theta^*)\bigr)<~0.
\)
Let \(L_i^*:=-\Pi_i^* =\sum_{t\in\pazocal T}L_i^t(\mathbf h^{t*};\theta^*)\) be the total equilibrium loss. Subadditivity of CVaR
\citep[p.~15]{pflug2000some} gives
\(
\mathrm{CVaR}_{\alpha_i}(L_i^*)
\le
\sum_{t\in\pazocal T}
\mathrm{CVaR}_{\alpha_i}
\bigl(L_i^t(\mathbf h^{t*};\theta^*)\bigr)
<0 .
\)
Moreover, by the definition of VaR \citep[Eq.~(28),(29)]{li2022risk},
\(\mathbb P(L_i^*\le \mathrm{VaR}_{\alpha_i}(L_i^*))\ge\alpha_i\), and
\(\mathrm{VaR}_{\alpha_i}(L_i^*)\le\mathrm{CVaR}_{\alpha_i}(L_i^*)\) 
\citep[Prop.~4]{pflug2000some}. Hence \(\mathrm{VaR}_{\alpha_i}(L_i^*)<0\), and therefore
\(
\mathbb P(\Pi_{i,\omega}^*>0)
=
\mathbb P(L_i^*<0)
\ge
\mathbb P(L_i^*\le \mathrm{VaR}_{\alpha_i}(L_i^*))
\ge
\alpha_i .
\)
Combining this bound with \(\mathbb P(\Pi_{i,\omega}^*>0)\ge\nu_i\) proves~\eqref{eq:profit_bound_final}. See Supplementary Material,~\ref{appendix:global_bound_pi} for the full proof.
\end{proof}
The theorem shows that the PoP is protected by two complementary effects. First, the bound \(\nu_i\) in~\eqref{eq:bound_nu} measures how safe the firm's profit is against revenue uncertainty, as captured by the variance \(\mathrm{Var}_\omega(R_i^*)\). The expected profit is positive, since
\(\mathbb E_\omega[\Pi_i^*]=\mathbb E_\omega[R_i^*]-K_i^*>0\).
This positive margin acts as a buffer against revenue fluctuations: the larger the margin and the smaller the variance of \(R_i^*\), the stronger is the guarantee.
Second, the CVaR term captures how the follower's risk sensitivity affects the PoP. Since the loss is the negative of profit, a negative adverse-tail loss means that the follower remains profitable even under unfavorable revenue realizations. Thus, when \(\alpha_i<\bar\alpha_i\), the CVaR guarantees~\(
\mathbb P(\Pi_{i,\omega}^*>0)\ge \alpha_i.
\)
This shows that the follower can improve its guaranteed PoP by choosing a higher confidence level \(\alpha_i\), as long as the corresponding adverse-tail loss remains negative. 
\section{Equilibrium Characterization and Computation}\label{sec:analytical}
We now specify standard functional forms for the revenue, payment, and firm-side cost functions to characterize and compute the SE.
\subsection{Revenue, Pricing, and Firm-Side Cost Models}
The revenue of follower \(i\) at time slot \(t\) under realization \(\omega\) is modeled as~\citep[Eq. (1)]{basar2002revenue} \citep[Eq. (4)]{iiduka2018distributed}:
\begin{align}
r_{i,\omega}^t(h_i^t)
=
a_{i,\omega}^t \ln(1+h_i^t),
\qquad
\forall i\in\pazocal N,\; t\in\pazocal T,\; \omega\in\Omega 
\label{eq:revenue}
\end{align}
where \(a_{i,\omega}^t\) is a realization of the random revenue coefficient \(a_i^t\) which captures exogenous variation in the revenue level. The logarithmic form captures diminishing marginal returns: increasing \(h_i^t\) improves revenue, but each additional unit of resource generates a smaller incremental gain.

Following usage-based pricing models~\citep[Eq. (1)]{basar2002revenue} \citep[Eq. (2)]{chi2015fairness}, the payment of follower \(i\) at time  slot \(t\) is
\begin{align}
p_\theta(h_i^t)
=
\theta h_i^t,
\qquad
\forall i\in\pazocal N,\; t\in\pazocal T
\label{eq:price}
\end{align}
where \(\theta\) is the access price per unit of reserved resource $h_i^t$ in time slot $t$.

The firm-side cost incurred by follower \(i\) at time slot \(t\) is modeled as
\begin{align}
\psi_i^t(\mathbf h^t)
=
\frac{\delta}{2}(h_i^t)^2
+
\gamma\, h_i^t\sum_{j\in\pazocal N\setminus\{i\}} h_j^t, \qquad
\forall i\in\pazocal N,\; t\in\pazocal T
\label{eq:psi}
\end{align}
where \( \delta>0 \) is the operational cost parameter, and \(\gamma\ge 0\) is the congestion cost parameter. The first term captures the increasing operational cost of managing a larger own resource commitment. The second term captures congestion from shared infrastructure: it increases with follower $i$'s own commitment and with the aggregate commitment of the other followers, and vanishes if either of these two quantities is zero.

\begin{lemma}
\label{lemma:delta_gamma_uniqueness}
The sufficient uniqueness
condition \eqref{eq:ass_q_g} in Theorem~\ref{thm:gne_uniqueness} 
becomes
\begin{equation}
\delta>
\gamma\max_{i\in\pazocal N}\xi_i, \qquad\text{where \(\xi_i\) is defined in~\eqref{eq:ass_q_g}}
\label{eq:delta_gamma_uniqueness}
\end{equation}
\end{lemma}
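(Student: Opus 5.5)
The plan is to identify the decomposition required in condition~(i) of Theorem~\ref{thm:gne_uniqueness}, compute the constants \(Q\) and \(G\) from~\eqref{eq:Q-and-G} explicitly, and substitute them into~\eqref{eq:ass_q_g}. Reading off~\eqref{eq:psi}, I set
\[
q_i^t(h):=\frac{\delta}{2}h^2,
\qquad
g_i^t(\mathbf h^t):=\gamma\, h_i^t\sum_{j\in\pazocal N\setminus\{i\}} h_j^t .
\]
With these choices \(\psi_i^t=q_i^t+g_i^t\), so condition~(i) holds. Both functions are polynomials, hence \(C^2\) on \(\mathbb R_+\) and \(\mathbb R_+^N\) respectively, which gives condition~(ii).

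Next, I compute the second-order coefficients. Since \((q_i^t)''(h)=\delta\) for every \(i\), \(t\) and \(h\), the infimum in~\eqref{eq:Q-and-G} is \(Q=\delta\). For \(g_i^t\), the first derivative is \(\partial g_i^t/\partial h_i^t=\gamma\sum_{j\neq i}h_j^t\). Differentiating once more gives
\[
\frac{\partial^2 g_i^t}{\partial h_i^t\,\partial h_j^t}=\gamma \quad (j\neq i),
\qquad
\frac{\partial^2 g_i^t}{(\partial h_i^t)^2}=0 .
\]
All entries are constant, so their absolute values are \(\gamma\) or \(0\). With \(N\ge 2\), the supremum over \(i,j,t,\mathbf h^t\) is therefore \(G=\gamma\). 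If \(N=1\), then \(G=0\) and the condition reduces to \(\delta>0\), which holds by assumption; this is consistent with the stated inequality once one notes that \(\gamma\) plays no role in that case.

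Substituting these values, \eqref{eq:ass_q_g} reads \(\delta>\max_i\xi_i\,\gamma=\gamma\max_i\xi_i\), which is exactly~\eqref{eq:delta_gamma_uniqueness}, and Theorem~\ref{thm:gne_uniqueness} then applies. For completeness I would also check that the functional forms~\eqref{eq:revenue}--\eqref{eq:price} satisfy the standing assumptions the theorem relies on. The revenue \(a\ln(1+h)\) is concave and \(C^1\) provided \(a_{i,\omega}^t\ge 0\). The payment \(\theta h\) is linear. The map \(h\mapsto\psi_i^t(h,\mathbf h_{-i})\) is convex because \(\delta>0\).

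The only step needing care is the definition of \(G\): the supremum includes the diagonal entry \(j=i\), which is \(0\), so the maximum is attained by the off-diagonal entries. This is the main, though minor, obstacle, and it explains why the case \(N=1\) must be treated separately as above.
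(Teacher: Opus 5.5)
Your argument is correct and matches the paper's proof: take $q_i^t(h)=\frac{\delta}{2}h^2$ and $g_i^t(\mathbf h^t)=\gamma h_i^t\sum_{j\neq i}h_j^t$, read off $Q=\delta$ and $G=\gamma$ (the diagonal second derivative being $0$), and substitute into \eqref{eq:ass_q_g}. Your remark on $N=1$ is a small refinement the paper omits: there $G=0$ and $\xi_i=1$, so the theorem only requires $\delta>0$, and the displayed condition $\delta>\gamma$ is then sufficient but not equivalent unless one sets $\gamma=0$.
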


\begin{proof}
From~\eqref{eq:psi}, \(Q= (q_i^t)''(h_i^t)=\delta\). 
Also, for \(j\neq i\),
\(
\frac{\partial^2 g_i^t(\mathbf h^t)}{\partial h_i^t\partial h_j^t}=\gamma,
\frac{\partial^2 g_i^t(\mathbf h^t)}{\partial (h_i^t)^2}=~0.
\)
Hence \(G=\gamma\).
Substituting \(Q=\delta\) and \(G=\gamma\) into~\eqref{eq:ass_q_g} gives~\eqref{eq:delta_gamma_uniqueness}. 
See Supplementary Material,~\ref{appendix:delta_gamma_uniqueness} for the full proof.
\end{proof}
\subsection{Follower-Equilibrium Characterization and Comparative Statics}
We now state the implicit characterization of the follower VE.

\begin{proposition}
\label{prop:follower_explicit}
For every \(i\in\pazocal N\),
the VE resource commitment \(h_i^{t*}\) is
\begin{equation}
h_i^{t*}
=
\max\left\{
0,\,
\frac{
-\kappa_i^t(\delta)
+
\sqrt{(\kappa_i^t(-\delta))^2+4(1+\beta_i)\delta A_i^t}
}{
2(1+\beta_i)\delta
}
\right\},
\label{eq:explicit_hi_star}
\end{equation}
where
\(
\kappa_i^t(x)
:=
(1+ \beta_i) (\theta
+
\gamma\sum_{j\in\pazocal N\setminus\{i\}} h_j^{t*}
+ x) + 
\lambda^{t*},
\quad x\in\mathbb R
\), $A_i^t:=\mathbb E[a_i^t]-\beta_i\,\mathrm{CVaR}_{\alpha_i}(-a_i^t)$ is the mean-CVaR benefit factor of follower \(i\) at time slot \(t\)
and \(\lambda^{t*}\) is the shadow price of the capacity constraint. 
\end{proposition}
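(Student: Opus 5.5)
The plan is to compute follower $i$'s utility explicitly under the functional forms \eqref{eq:revenue}--\eqref{eq:psi}, reduce each follower's problem to a scalar concave program in each time slot, and then read off the VE from the KKT system of $\mathrm{VI}(X(C),\mathcal G(\cdot;\theta))$, where the capacity constraint of slot $t$ carries a common multiplier $\lambda^{t*}\ge 0$. Uniqueness of this VE is already guaranteed by Lemma~\ref{lemma:delta_gamma_uniqueness} and Theorem~\ref{thm:gne_uniqueness}. So it suffices to show that any VE satisfies \eqref{eq:explicit_hi_star}, with the $h_j^{t*}$ and $\lambda^{t*}$ on the right-hand side being the equilibrium values. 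This is an implicit characterization, not a closed form.

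\textbf{Simplifying the CVaR term.} Fix $\mathbf h^t$. Under \eqref{eq:revenue}--\eqref{eq:psi}, the loss is $L_i^t=-a_i^t\ln(1+h_i^t)+\theta h_i^t+\psi_i^t(\mathbf h^t)$. The last two terms are deterministic, and $\ln(1+h_i^t)\ge 0$. Translation invariance and positive homogeneity of CVaR \citep{pflug2000some} therefore give
\[
\mathrm{CVaR}_{\alpha_i}(L_i^t)=\ln(1+h_i^t)\,\mathrm{CVaR}_{\alpha_i}(-a_i^t)+\theta h_i^t+\psi_i^t(\mathbf h^t).
\]
Hence the slot-$t$ summand of \eqref{eq:u_follower} equals
\[
A_i^t\ln(1+h_i^t)-(1+\beta_i)\bigl(\theta h_i^t+\psi_i^t(\mathbf h^t)\bigr).
\]
This expression is separable across $t$. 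Since $X(C)$ is also a product of per-slot constraint sets, the VI decouples into one VI per time slot.

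\textbf{KKT conditions.} For each $t$, I write the KKT conditions of the slot-$t$ VI. A single multiplier $\lambda^{t*}$ is attached to $\sum_j h_j^t\le C$, which is the VE property, and a multiplier $\mu_i^t\ge0$ is attached to $h_i^t\ge0$. Differentiating $\psi_i^t$ gives $\delta h_i^t+\gamma\sum_{j\ne i}h_j^t$. Stationarity then reads
\[
\frac{A_i^t}{1+h_i^t}=(1+\beta_i)\Bigl(\theta+\gamma\sum_{j\ne i}h_j^{t*}+\delta h_i^t\Bigr)+\lambda^{t*}-\mu_i^t ,
\]
together with complementarity $\mu_i^t h_i^t=0$. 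The KKT conditions are necessary because the constraints are linear, and sufficient because each objective is concave in $h_i^t$ when $A_i^t\ge0$.

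\textbf{Solving the interior case.} When $h_i^t>0$, we have $\mu_i^t=0$. Multiplying by $1+h_i^t$ gives the quadratic
\[
(1+\beta_i)\delta\,h^2+\kappa_i^t(\delta)\,h+\kappa_i^t(0)-A_i^t=0 .
\]
Its discriminant is $\kappa_i^t(\delta)^2-4(1+\beta_i)\delta(\kappa_i^t(0)-A_i^t)$. Using $\kappa_i^t(\delta)=\kappa_i^t(0)+(1+\beta_i)\delta$, this rewrites as $\kappa_i^t(-\delta)^2+4(1+\beta_i)\delta A_i^t$. The leading coefficient is positive, so the larger root is the expression inside the max in \eqref{eq:explicit_hi_star}. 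That root is positive exactly when $A_i^t>\kappa_i^t(0)$.

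\textbf{Boundary case and main obstacle.} When $A_i^t\le\kappa_i^t(0)$, the marginal utility at $h=0$ is nonpositive. Because the objective is concave, $h_i^{t*}=0$, with $\mu_i^t=\kappa_i^t(0)-A_i^t\ge0$. This yields the $\max\{0,\cdot\}$. The main obstacle is to justify rigorously that these two cases match the sign of the root. I would handle it by monotonicity of the left-hand side minus the right-hand side of the stationarity equation in $h$. A secondary point is the case $A_i^t<0$. There the objective is decreasing in $h_i^t$, so $h_i^{t*}=0$ directly, which is again consistent with the formula.
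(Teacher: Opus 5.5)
Your proposal is correct and follows essentially the paper's route: the CVaR term is reduced by translation invariance and positive homogeneity, you use per-slot KKT conditions with the common multiplier $\lambda^{t*}$, you solve the quadratic $(1+\beta_i)\delta h^2+\kappa_i^t(\delta)h+\kappa_i^t(0)-A_i^t=0$ and keep its larger root, and you use the activation threshold $A_i^t>\kappa_i^t(0)$, which the paper isolates as a separate KKT-based threshold proposition playing the role of your monotonicity argument. Your separate $A_i^t<0$ case is vacuous, because Assumption~\ref{ass:convex_follower} forces $a_{i,\omega}^t\ge 0$ and hence $A_i^t\ge 0$; were that case possible, the square root in \eqref{eq:explicit_hi_star} could be undefined, so ``consistent with the formula'' would not follow automatically.
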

\begin{proof}[Proof sketch]
Fix \(t\in\pazocal T\) and write \(b_i:=1+\beta_i\). 
Substituting~\eqref{eq:revenue}, \eqref{eq:price}, and \eqref{eq:psi} into \eqref{eq:u_follower}, and using translation invariance and positive homogeneity of CVaR \citep[Prop.~2(i),(ii)]{pflug2000some}, gives 
\(
A_i^t\ln(1+h_i^t)
-b_i\theta h_i^t
-b_i\frac{\delta}{2}(h_i^t)^2
-b_i\gamma h_i^t\sum_{j\ne i}h_j^t.
\)
Since the follower objective function at GNE problem \eqref{eq:utility_gne} has a
concave objective function~\eqref{eq:u_follower} and convex constraints \eqref{eq:constraints}, the
Karush--Kuhn--Tucker conditions~\citep[\S 5.5.2]{boyd2004convex} provide necessary and sufficient optimality conditions.
For \(h_i^{t*}>0\), from KKT conditions, we obtain
\(
\frac{A_i^t}{1+h_i^{t*}}
=
b_i\theta+b_i\delta h_i^{t*}
+b_i\gamma\sum_{j\ne i}h_j^{t*}
+\lambda^{t*}.
\)
Rearranging gives the quadratic equation
\(
b_i\delta(h_i^{t*})^2
+\kappa_i^t(\delta)h_i^{t*}
+\kappa_i^t(0)-A_i^t=0 .
\)
Solving this quadratic yields the expression inside the maximum in~\eqref{eq:explicit_hi_star}. 
See Supplementary Material,~\ref{appendix:follower_explicit} for the full proof.
\end{proof}
\paragraph{\textbf{Observation}} Proposition~\ref{prop:follower_explicit} yields comparative statics for the follower VE; details are in Supplementary Material~\ref{appendix:follower_comparative_statics}. A higher \(A_i^t\) increases follower \(i\)'s willingness to commit resources, while higher effective marginal costs of resources, through \(\theta\), \(\delta\), \(\gamma\), \(\lambda^{t*}\), or other followers' commitments, make additional resources less attractive and lower \(h_i^{t*}\).

\subsection{Procedure to Approximate the Stackelberg Equilibrium}
\label{sec:proc}

We present the main steps of the procedure for computing an approximate SE.
For a fixed price \(\theta\), the procedure works as follows.

\begin{enumerate}
\item \label{step:unconstrained} For each time slot \(t\in\pazocal T\), compute the unconstrained aggregate
commitment \(\widehat{s}^t(\theta)\) implied by the follower VE characterized in
Proposition~\ref{prop:follower_explicit}, after dropping the shared-capacity
constraint~\eqref{eq:constraints}.

\item \label{step:capacity_candidates}
For the fixed price \(\theta\), the total equilibrium commitment served at time
slot \(t\) is
\(
\min\{\widehat{s}^t(\theta),C\}.
\)
Hence the leader's fixed-price value is
\begin{equation}
\phi(\theta)
=
\max_{0\le C\le \overline C}
\left\{
\theta\sum_{t\in\pazocal T}\min\{\widehat{s}^t(\theta),C\}
-\mathrm{Cost}(C)
\right\}
\label{eq:phi_theta}
\end{equation}
To solve this problem, sort the values
\(\widehat{s}^t(\theta)\), \(t\in\pazocal T\), together with \(0\) and
\(\overline C\). These breakpoints divide \([0,\overline C]\) into intervals
on which each term \(\min\{\widehat{s}^t(\theta),C\}\) is either equal to \(C\)
or to $\widehat{s}^t(\theta)$ throughout the interval. We solve \eqref{eq:phi_theta} on each interval and collect
the resulting maximizers, together with the breakpoints, in a finite candidate
set \(\mathcal C(\theta)\). The optimal capacity \(C^*(\theta)\) is the element
of \(\mathcal C(\theta)\) that gives the largest~\eqref{eq:phi_theta}.

\item \label{step:price_grid} 
Optimize $\phi(\theta)$ over a uniform price grid. Let $\theta_{\Delta}$ be the
best grid price, set $C^*=C^*(\theta_{\Delta})$, and recover the VE at
$(C^*,\theta_{\Delta})$.
\end{enumerate}

\begin{proposition}
\label{prop:computational_tractability}
For any \(\eta>0\), the procedure computes an approximate SE with optimality gap
at most \(\eta\) in time polynomial in \(|\pazocal N|\), \(|\pazocal T|\),
\(1/\eta\), \(1/(\delta-\gamma)\), and \(\Gamma_C\), with logarithmic dependence
on \(1/\varepsilon\), where \(\varepsilon\) is the VE accuracy and \(\Gamma_C\)
is the cost of solving one one-dimensional capacity subproblem in
Step~\ref{step:capacity_candidates}.
\end{proposition}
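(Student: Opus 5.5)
The proof splits into three parts: computing the unconstrained aggregate $\widehat s^t(\theta)$, solving the capacity subproblem for fixed $\theta$, and discretizing the price. The gap $\eta$ is then split between the price grid and the VE accuracy $\varepsilon$.

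\textbf{Step 1: computing $\widehat s^t(\theta)$.} For fixed $\theta$ and $t$, drop the capacity constraint, so $\lambda^{t*}=0$. By Lemma~\ref{lemma:delta_gamma_uniqueness} the slot-$t$ game is strongly monotone with modulus $m\ge\min_i(1+\beta_i)(\delta-\xi_i\gamma)>0$, and its operator is Lipschitz on a bounded box. For the box, note that $h_i^{t*}\le A_i^t/((1+\beta_i)\delta)$ by Proposition~\ref{prop:follower_explicit}. A projected-gradient (or best-response) iteration, with each best response given in closed form by \eqref{eq:explicit_hi_star}, therefore contracts linearly. Reaching accuracy $\varepsilon$ thus takes $O(\kappa\log(1/\varepsilon))$ iterations, each costing $O(|\pazocal N|)$ per slot using the aggregate $\sum_j h_j^t$. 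The condition number $\kappa$ should be polynomial in $1/(\delta-\gamma)$, and I expect the modulus can be written in terms of $\delta-\gamma$ after absorbing the $\xi_i$ constants.

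\textbf{Step 2: the capacity subproblem.} I would argue that for fixed $(C,\theta)$ the aggregate committed in slot $t$ equals $\min\{\widehat s^t(\theta),C\}$, which justifies \eqref{eq:phi_theta}. When $\widehat s^t\le C$ the constraint is slack and $\lambda^{t*}=0$. Otherwise a positive $\lambda^{t*}$ binds the constraint with equality. Since $\lambda^{t*}$ acts as a uniform additive term in $\kappa_i^t$, each $h_i^{t*}$ decreases monotonically in $\lambda^{t*}$, so a one-dimensional bisection on $\lambda^{t*}$ recovers the VE at logarithmic cost in $1/\varepsilon$. Sorting the $|\pazocal T|+2$ breakpoints costs $O(|\pazocal T|\log|\pazocal T|)$. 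On each interval the objective is $\theta(kC+\text{const})-\mathrm{Cost}(C)$, giving $|\pazocal T|+1$ subproblems at cost $\Gamma_C$ each. The exact $C^*(\theta)$ then follows by comparing the finitely many candidates.

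\textbf{Step 3: price discretization (main obstacle).} The key is a Lipschitz bound on $\phi$ in $\theta$. Each $\widehat s^t(\theta)$ is Lipschitz in $\theta$: apply the strong-monotonicity perturbation bound $\|\widehat{\mathbf h}(\theta)-\widehat{\mathbf h}(\theta')\|\le \|\partial_\theta\mathcal G\|\,|\theta-\theta'|/m$, where $\partial_\theta\mathcal G_i^t=1+\beta_i$. This gives a constant polynomial in $|\pazocal N|$ and $1/(\delta-\gamma)$. Because $\phi$ is a maximum over $C$ of functions $\theta\sum_t\min\{\widehat s^t(\theta),C\}-\mathrm{Cost}(C)$, each Lipschitz in $\theta$ with constant
\[
L_\phi\le |\pazocal T|\bigl(\overline C+\overline\theta\,\max_t \mathrm{Lip}(\widehat s^t)\bigr),
\]
the function $\phi$ is itself $L_\phi$-Lipschitz. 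A uniform grid of mesh $\eta/(2L_\phi)$ therefore loses at most $\eta/2$, and contains $O(\overline\theta L_\phi/\eta)$ points.

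\textbf{Error and runtime.} Choosing $\varepsilon$ so that the VE error perturbs the leader profit by at most $\eta/2$ requires $\varepsilon\approx\eta/(\overline\theta|\pazocal N||\pazocal T|)$, which affects the runtime only logarithmically. Multiplying the grid size by the per-price cost $O(|\pazocal T|(|\pazocal N|\kappa\log(1/\varepsilon)+\Gamma_C)+|\pazocal T|\log|\pazocal T|)$ gives the claimed polynomial bound.

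The delicate point is making the Lipschitz constant of $\widehat s^t$ explicit in terms of $1/(\delta-\gamma)$. I would first attempt this through the scalar bound obtained by summing the first-order conditions, before falling back to the general $m$.
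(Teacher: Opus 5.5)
Your overall architecture matches the paper's: computing $\widehat s^t(\theta)$, reducing the fixed-price capacity problem to $\theta\sum_t\min\{\widehat s^t(\theta),C\}-\mathrm{Cost}(C)$ over breakpoint intervals, and a Lipschitz price grid. Your explicit split of $\eta$ between the grid and the VE tolerance $\varepsilon$ is actually more careful than the paper, which treats $C^*(\theta)$ as exact.

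The gap is in the dependence on $1/(\delta-\gamma)$, which is exactly what the statement asserts. Your Step~1 iteration count and your Step~3 Lipschitz constant for $\widehat s^t$ are both controlled only through the modulus bound $m\ge\min_i(1+\beta_i)(\delta-\xi_i\gamma)$. That bound cannot be re-expressed through $\delta-\gamma$. With all $\beta_i=0$ one has $\xi_i=|\pazocal N|$, so $m=\delta-|\pazocal N|\gamma\to 0$ as $\delta\downarrow|\pazocal N|\gamma$. Meanwhile $\delta-\gamma\approx(|\pazocal N|-1)\gamma$ stays bounded away from zero. Hence neither $\kappa\propto 1/m$ nor $\|\widehat{\mathbf h}(\theta)-\widehat{\mathbf h}(\theta')\|\le\|(1+\beta_i)_i\|\,|\theta-\theta'|/m$ gives a bound polynomial in the stated parameters. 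In addition, $\kappa$ carries the operator's Lipschitz constant, which involves $A_i^t$ multiplicatively rather than inside a logarithm.

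The missing step is the scalar reduction you mention only as a first attempt. Congestion couples followers only through the aggregate, so substitute $\sum_{j\neq i}h_j^t=s-h_i^t$. The active first-order condition becomes $A_i^t/(1+h)=b_iz+b_i(\delta-\gamma)h$ with $z=\theta+\gamma s$. Its nonnegative solution $\rho_i^t(z)$ is explicit and nonincreasing, and satisfies $|d\rho_i^t/dz|=b_i/\bigl(A_i^t/(1+\rho_i^t)^2+b_i(\delta-\gamma)\bigr)\le 1/(\delta-\gamma)$. Two consequences follow:
\begin{itemize}
\item $\widehat s^t(\theta)$ is the unique root of the strictly increasing scalar map $s\mapsto s-\sum_i\rho_i^t(\theta+\gamma s)$. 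It is computable by bisection in $O(|\pazocal N|\log(1/\varepsilon))$, with no condition number.
\item Let $X^t:=\sum_i\rho_i^t$ and compare the fixed points $S_k=X^t(\theta_k+\gamma S_k)$ for $\theta_1<\theta_2$. This gives $S_2\le S_1$ and $z_2\ge z_1$, hence $S_1-S_2\le\frac{|\pazocal N|}{\delta-\gamma}(\theta_2-\theta_1)$. That yields the paper's $L_\phi=|\pazocal T|\bigl(\overline C+\overline\theta|\pazocal N|/(\delta-\gamma)\bigr)$.
\end{itemize}
The same substitution with $s=C$ is what turns your bisection on $\lambda^{t*}$ in Step~2 into a closed-form monotone scalar problem. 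Without it, $h_i^{t*}$ still depends on the others' commitments.

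If you prefer to keep the VI perturbation argument, first divide the $i$-th component of the unconstrained pseudo-gradient by $1+\beta_i$. This is harmless on the orthant, since the complementarity conditions are unchanged. The rescaled map has Jacobian $\mathrm{diag}(\cdot)+(\delta-\gamma)I+\gamma\mathbf 1\mathbf 1^\top\succeq(\delta-\gamma)I$ and $\partial_\theta=1$, which recovers the same $|\pazocal N|/(\delta-\gamma)$ constant. That is not, however, the modulus $m$ you invoked.
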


\begin{proof}[Proof sketch]
For each fixed grid price \(\theta\), Steps~\ref{step:unconstrained}--
\ref{step:capacity_candidates} compute the exact maximizer \(C^*(\theta)\) of
the fixed-price problem \(\phi(\theta)\) \eqref{eq:phi_theta}; hence the only approximation comes from
discretizing \([0,\overline\theta]\). Let \(\Theta_\Delta\) be the price grid,
containing \(0\) and \(\overline\theta\), with mesh \(\Delta_\theta\), i.e., the
maximum distance between consecutive grid points. Let
\(\theta^*\in\arg\max_{\theta\in[0,\overline\theta]}\phi(\theta)\). By the grid
construction, there exists \(\tilde\theta\in\Theta_\Delta\) such that
\(|\theta^*-\tilde\theta|\le\Delta_\theta/2\). For any fixed \(C\), changing \(\theta\)
affects \(\theta\sum_{t\in\pazocal T}\min\{\widehat s^t(\theta),C\}\) through the
multiplier \(\theta\) and through the aggregate responses
\(\widehat s^t(\theta)\). For all \(\theta_1,\theta_2\in[0,\overline\theta]\), the multiplier effect is bounded by
\(
|\theta_1-\theta_2|
\sum_{t\in\pazocal T}\min\{\widehat s^t(\theta_1),C\}
\le
|\pazocal T|\overline C|\theta_1-\theta_2|,
\)
since \(\min\{\widehat s^t(\theta),C\}\le\overline C\). The response effect is bounded
using the fact that \(x\mapsto \min\{x,C\}\) is 1-Lipschitz and
\(
|\widehat s^t(\theta_1)-\widehat s^t(\theta_2)|
\le |\pazocal N|(\delta-\gamma)^{-1}|\theta_1-\theta_2|.
\)
Thus
\(
\overline\theta
\sum_{t\in\pazocal T}
\left|
\min\{\widehat s^t(\theta_1),C\}
-
\min\{\widehat s^t(\theta_2),C\}
\right|
\le
\overline\theta|\pazocal T||\pazocal N|(\delta-\gamma)^{-1}
|\theta_1-\theta_2|
\). Thus objective in \eqref{eq:phi_theta} is uniformly
\(L_\phi\)-Lipschitz in \(\theta\), with
\(L_\phi=|\pazocal T|(\overline C+\overline\theta|\pazocal N|/(\delta-\gamma))\);
thus \(\phi\) is \(L_\phi\)-Lipschitz. Choose \(\Delta_\theta\le 2\eta/L_\phi\). Therefore
\(\phi(\theta^*)-\phi(\tilde\theta)\le
L_\phi|\theta^*-\tilde\theta|\le L_\phi\Delta_\theta/2\le\eta\). Since the
procedure selects \(\theta_\Delta\in\arg\max_{\theta\in\Theta_\Delta}\phi(\theta)\),
we have \(\phi(\theta_\Delta)\ge\phi(\tilde\theta)\), and hence
\(\phi(\theta^*)-\phi(\theta_\Delta)\le\eta\).

It remains to account for the runtime. For one fixed price \(\theta\),
Step~\ref{step:unconstrained} computes all \(\widehat s^t(\theta)\) at cost
\(O(|\pazocal T||\pazocal N|\log(1/\varepsilon))\). In
Step~\ref{step:capacity_candidates}, sorting the \(|\pazocal T|\) breakpoints
costs \(O(|\pazocal T|\log|\pazocal T|)\), and checking the
\(O(|\pazocal T|)\) capacity intervals costs \(O(|\pazocal T|\Gamma_C)\).
Thus the cost per grid price is
\(O(|\pazocal T|(|\pazocal N|\log(1/\varepsilon)+\log|\pazocal T|+\Gamma_C))\).
By Step~\ref{step:price_grid}, the grid has \(O(L_\phi\overline\theta/\eta)\)
prices. Thus, the total runtime is
\[
O\!\left(
\frac{\overline\theta}{\eta}
|\pazocal T|^2
\left(\overline C+\frac{\overline\theta|\pazocal N|}{\delta-\gamma}\right)
\left(|\pazocal N|\log(1/\varepsilon)+\log|\pazocal T|+\Gamma_C\right)
\right),
\]
which is polynomial in the stated quantities. The pseudocodes implementing Steps \ref{step:unconstrained}--\ref{step:price_grid}, together with the full
proof of Proposition~\ref{prop:computational_tractability}, are in the
Supplementary Material,~\ref{appendix:computational_tractability}.
\end{proof}
\section{Application to Mobile Edge Computing (MEC)}\label{sec:numerical}
\added{The numerical study models a realistic Mobile Edge Computing deployment in which the infrastructure provider is a network operator investing in costly distributed edge capacity and leasing resources to service providers through take-or-pay contracts under uncertain revenues. It uses calibrated cloud and edge cost data, simulated revenues with seasonal demand and persistent shocks, real mobile-traffic traces, and comparisons with representative benchmarks.} For reproducibility, the code is publicly available at \href{https://github.com/amalsakr-tsp/EJOR}{Code}. All appendices cited below are provided in the Supplementary Material.
\subsection{Settings}
The investment cost includes a fixed technical overhead, capacity-dependent deployment costs due to equipment or physical infrastructure needs, and maintenance costs over the investment horizon~\citep[Eq. (16)--(17)]{liu2025joint}. Since further expansion becomes progressively more expensive due to energy, space, coordination, and operational constraints, we include a convex quadratic term, consistent with~\citep[Eq.~(1)]{kim2025impact}. 
Accordingly,    

\begin{equation}
\mathrm{Cost}(C)
=
F_0+\left(d_{\mathrm{dep}}+\sum_{t\in\pazocal T} d_{\mathrm{maint}}(t)\Delta\right)C
+\frac{d_{\mathrm{exp}}}{2}C^2
\label{eq:cost_parameters}
\end{equation}
Here, \(F_0\) denotes the fixed technical overhead, \(d_{\mathrm{dep}}\) the upfront investment cost per unit of capacity, \(d_{\mathrm{maint}}(t)\) denotes the maintenance cost per unit capacity per unit time during time slot \(t\), and \(\Delta\) is the time slot length. The term \(\frac{d_{\mathrm{exp}}}{2}C^2\) captures increasing marginal capacity-expansion costs. For simplicity, in the experiments, \(d_{\mathrm{maint}}(t)=d_{\mathrm{maint}}\) is constant over time.


For each SP $i \in \pazocal{N}$ and time slot $t \in \pazocal{T}$, the revenue is modeled as in~\eqref{eq:revenue}. The uncertain revenue coefficient associated with SP \(i\), time slot \(t\), and realization \(\omega\) is modeled as \(a_{i,\omega}^t=\mu_i w_{i,d(t)}\left(1+\rho_i\sin\left(2\pi t/T_{\mathrm{season}}\right)\right)\epsilon_{i,\omega}\). Here, \(\mu_i>0\) is the baseline revenue coefficient, \(w_{i,d(t)}\) captures day-of-week effects, \(\rho_i\in[0,1)\) controls the seasonal amplitude, and \(T_{\mathrm{season}}\) is the seasonal-cycle length. Coefficient \(a_{i,\omega}^t\) represents the effective revenue opportunity in slot \(t\), measured in monetary units over one slot.
Shock \(\epsilon_{i,\omega}\sim\mathrm{Lognormal}(-\sigma_i^2/2,\sigma_i^2)\) is positive and satisfies \(\mathbb{E}[\epsilon_{i,\omega}]=1\). Thus, \(\sigma_i\) controls dispersion without changing the mean revenue level. 
\(\epsilon_{i,\omega}\) is specified at the realization level to represent persistent revenue uncertainty. This avoids adding short-term noise at each time slot that would mainly average out over \(I\). The day-of-week term captures systematic weekly demand variation, such as
weekday--weekend differences, while the seasonal term captures longer-term
demand variation across the investment period.



Access payments and SP-side costs are given by~\eqref{eq:price} and~\eqref{eq:psi}. In each experiment, \(\delta=1.01\,\gamma\max_{i\in\pazocal N}\xi_i\), computed using the largest \(\max_{i\in\pazocal N}\xi_i\) over the risk classes considered so that the sufficient uniqueness condition~\eqref{eq:delta_gamma_uniqueness} is satisfied. Sensitivity with respect to \(\delta\) and \(\gamma\) is examined in \S \ref{sec:sens}. 

Table~\ref{tab:model_parameters} summarizes the parameters, and \ref{appendix:equilibrium-computation} reports the runtime. 
The experiments use Monte Carlo simulations. Since the InP problem~\eqref{eq:leader_problem} may have multiple maximizers, we report the capacity--price pair \((C^*,\theta^*)\) selected by the procedure (\S \ref{sec:proc}), where \(\theta^*=\theta_\Delta\) denotes the best price on the grid. In our runs, the procedure returned a single exact computed maximizer, with only negligible-gap near-optimal alternatives.
\begin{table}[h!]
\centering
\caption{Model parameters used in the numerical experiments.}
\label{tab:model_parameters}
\resizebox{0.99\columnwidth}{!}{
\begin{tabular}{lll}
\hline
Parameter & Symbol & Value \\
\hline
Number of SPs & \(N\)~(\S~\ref{sec:overview}) & \(3\) and \(10\) \\
Investment horizon & \(I\)~(\S~\ref{sec:overview}) & \(5\) years \\
Time slot length & \(\Delta\) & \(1\) hour\\
Number of time slots & \(|\pazocal T|\) & \(5\times365\times24=43{,}800\) \\
Low risk-aversion threshold & \(\beta_L\)~(\S~\ref{sec:class}) & \(1\) \\
High risk-aversion threshold & \(\beta_H\)~(\S~\ref{sec:class}) & \(2\) \\
Extreme risk-aversion threshold & \(\beta_E\)~(\S~\ref{sec:class}) & \(10\) \\
Congestion cost parameter & \(\gamma\)~\eqref{eq:psi} & \(0.01\) \$/vCore\(^2\)-hour \\
Operational cost parameter & \(\delta\)~\eqref{eq:psi} & $0.0365$\$/vCore\(^2\)-hour for $N=3$; $0.175$ for $N=10$
\\
Fixed technical cost & \(F_0\)~\eqref{eq:cost_parameters} & \(2000\) \$ \\
Deployment cost per capacity vCore & \(d_{\mathrm{dep}}\)~\eqref{eq:cost_parameters} & \(10\) \$/vCore~\citep{azureStackEdgePricing}\\
Maintenance cost per capacity vCore per hour & $d_{\mathrm{maint}}$~\eqref{eq:cost_parameters} & \(0.0225\) \$/vCore-hour~\citep{azureStackEdgePricing} \\
Quadratic capacity cost coefficient & $d_{\mathrm{exp}}$~\eqref{eq:cost_parameters} & \(0.02\) \$/vCore\(^2\) \\
\hline
\end{tabular}
}
\end{table}
\vspace{-4mm}
\subsection{Classification of SP Risk Attitudes}\label{sec:class}
SP risk preferences are described by \((\beta_i,\alpha_i)\), and we consider the following representative risk-attitude classes:
\begin{itemize}
    \item \textbf{Risk-neutral (RN):} \(\beta_i=0\).
    \item \textbf{Moderately risk-averse (MRA):} \(0<\beta_i\leq\beta_L,\ 0.9<\alpha_i\leq0.95\).
    \item \textbf{Highly risk-averse (HRA):} \(\beta_L<\beta_i\leq\beta_H,\ 0.9<\alpha_i\leq0.95\).
    \item \textbf{Extremely risk-averse (ERA):} \(\beta_i>\beta_{E},\ \alpha_i>0.95\).
\end{itemize}
We focus on \(\alpha_i>0.9\), since smaller values make CVaR less focused on downside risk; \(\alpha_i>0.95\) indicates more severe tail-risk exposure.
\subsection*{We now evaluate the proposed framework.} 
\subsection{Impact of Homogeneous SP Risk Preferences}
We first consider three SPs with homogeneous risk preferences: all SPs belong to the same risk class in \S~\ref{sec:class}, although their \((\beta_i,\alpha_i)\) values may differ. The three SPs differ in their revenue levels, with SP~1, SP~2, and SP~3 corresponding to low-, medium-, and high-revenue SPs, respectively.
This allows us to isolate the effect of a common risk attitude while preserving revenue heterogeneity. Revenue uncertainty is measured by the coefficient of variation (CV) of \(a_{i,\omega}^t\), i.e., the ratio of its standard deviation to its mean. 

\noindent\textbf{Impact on SP resource commitments.}
Fig.~\ref{fig:average_demand_followers} reports the average total resource commitment across SPs for different risk classes and CVs. Resource commitments decrease as SPs become more risk-averse. This is because more risk-averse SPs assign greater weight to unfavorable revenue realizations through CVaR and therefore reduce resource commitments to limit downside exposure. The decrease is stronger for larger \(CV\), since higher uncertainty makes low-revenue outcomes more severe, while access payments and SP-side costs still have to be paid.

\begin{figure*}[!ht]
\centering
\begin{minipage}[t]{0.44\textwidth}
\centering
\includegraphics[width=\linewidth]{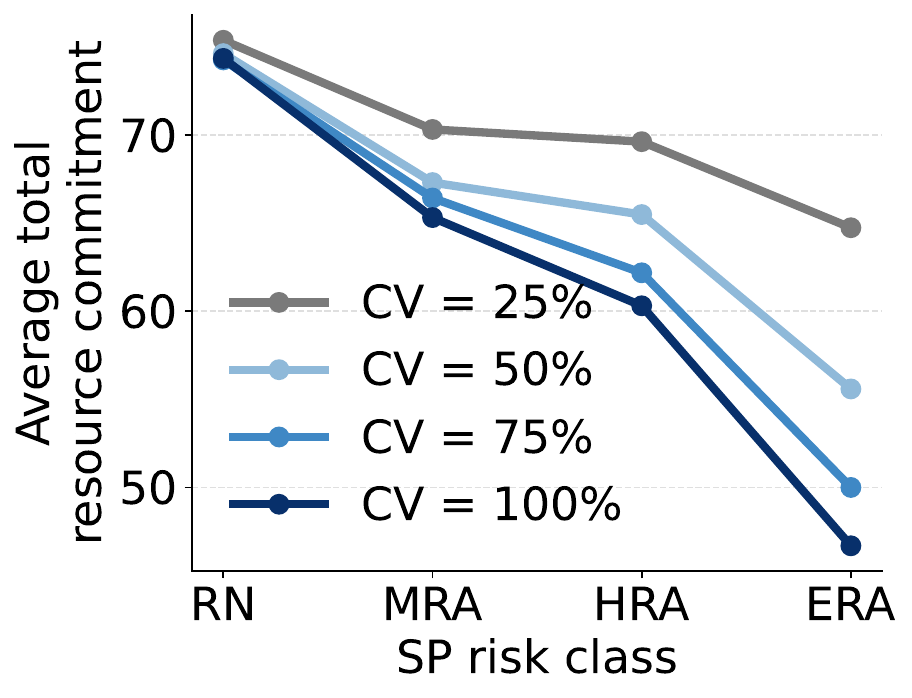}
\caption{Total resource commitment under different risk classes and~CVs.}
\label{fig:average_demand_followers}
\end{minipage}
\hfill
\begin{minipage}[t]{0.44\textwidth}
\centering
\includegraphics[width=\linewidth]{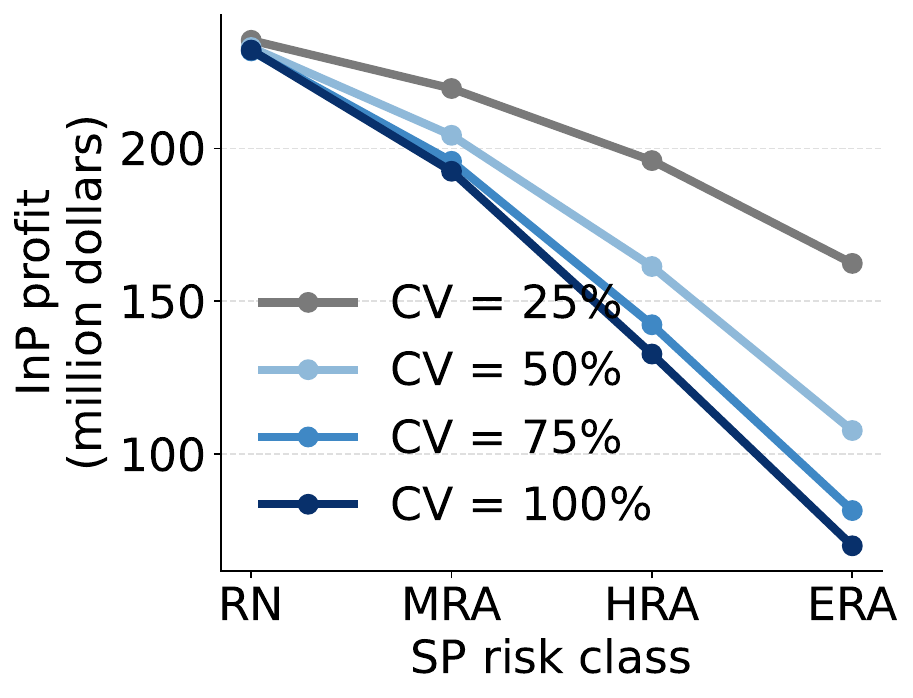}
\caption{InP profit under different risk classes and~CVs.}
\label{fig:leader_profit_all_CV}
\end{minipage}

\vspace{4mm}
\begin{minipage}[t]{0.44\textwidth}
\centering
\includegraphics[width=\linewidth]{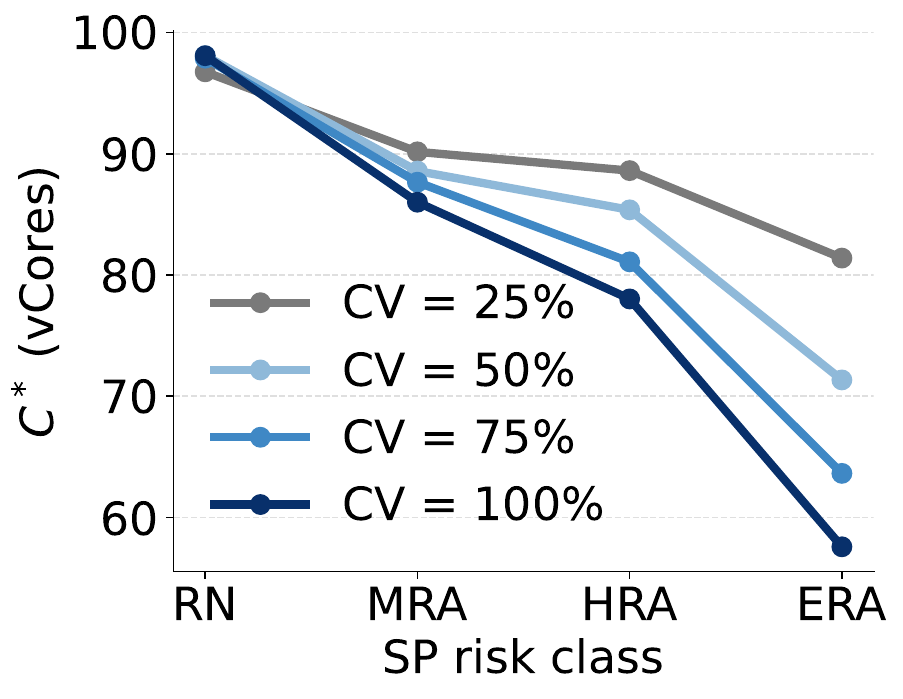}
\caption{Optimal capacity \(C^*\) under different risk classes and~CVs.}
\label{fig:capacity_all_CV}
\end{minipage}
\hfill
\begin{minipage}[t]{0.44\textwidth}
\centering
\includegraphics[width=\linewidth]{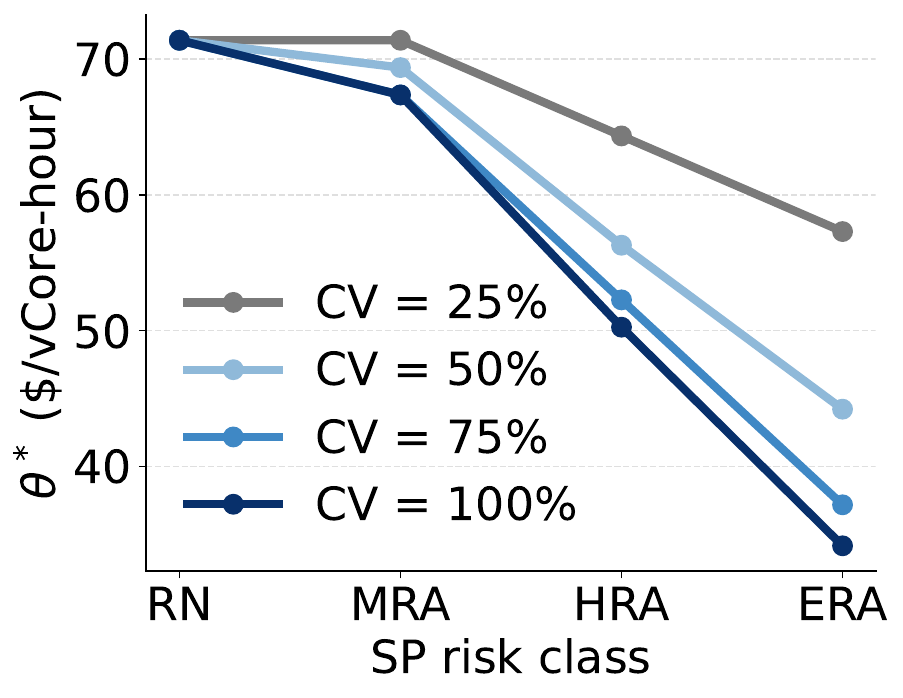}
\caption{Optimal price \(\theta^*\) under different risk classes and~CVs.}
\label{fig:price_all_CV}
\end{minipage}
\end{figure*}
\noindent
\textbf{Impact on the InP.}
Fig.~\ref{fig:leader_profit_all_CV} shows that InP profit decreases with SP risk aversion. Since the InP's revenue is proportional to allocated resources, lower SP resource commitments in Fig.~\ref{fig:average_demand_followers} directly reduce profit. This effect is stronger under higher \(CV\), as greater uncertainty leads SPs to commit to resources more conservatively.
Fig.~\ref{fig:capacity_all_CV} shows that the optimal capacity \(C^*\) decreases as SPs become more risk-averse. This follows the same trend as the total SP resource commitment in Fig.~\ref{fig:average_demand_followers}: when the InP anticipates lower resource commitment, it reduces capacity investment to avoid costly over-provisioning. The decline is more pronounced under higher \(CV\), indicating that SP revenue uncertainty propagates upstream from SP resource decisions to the InP's long-term capacity choice.
Comparing Fig.~\ref{fig:capacity_all_CV} with Fig.~\ref{fig:average_demand_followers} shows that capacity utilization remains high and slightly increases with risk aversion, from about \(76\%\) to above \(81\%\). Fig.~\ref{fig:price_all_CV} shows the optimal price \(\theta^*\). Pricing balances access-revenue extraction against SP resource commitment. As SPs become more risk-averse, their resource commitments become more price-sensitive, so raising the price would reduce resource commitments too strongly. The InP therefore adjusts price jointly with capacity to sustain participation while extracting revenue from the resources.

\begin{figure*}[t]
    \centering

    \begin{minipage}{0.46\textwidth}
        \centering
        \includegraphics[width=\linewidth]{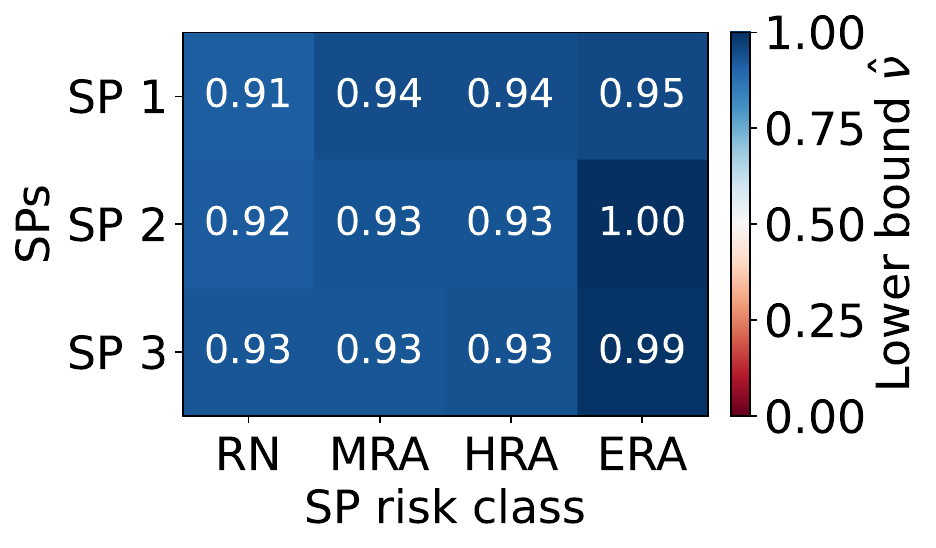}
        \subcaption{$CV=25\%$}
    \end{minipage}
    \hfill
    \begin{minipage}{0.46\textwidth}
        \centering
        \includegraphics[width=\linewidth]{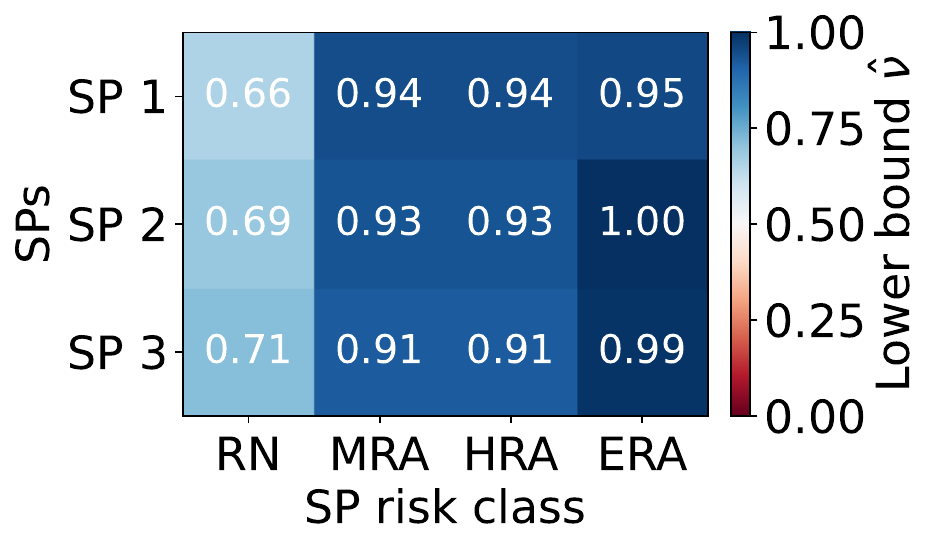}
        \subcaption{$CV=50\%$}
    \end{minipage}

    \vspace{0.4em}

    \begin{minipage}{0.46\textwidth}
        \centering
        \includegraphics[width=\linewidth]{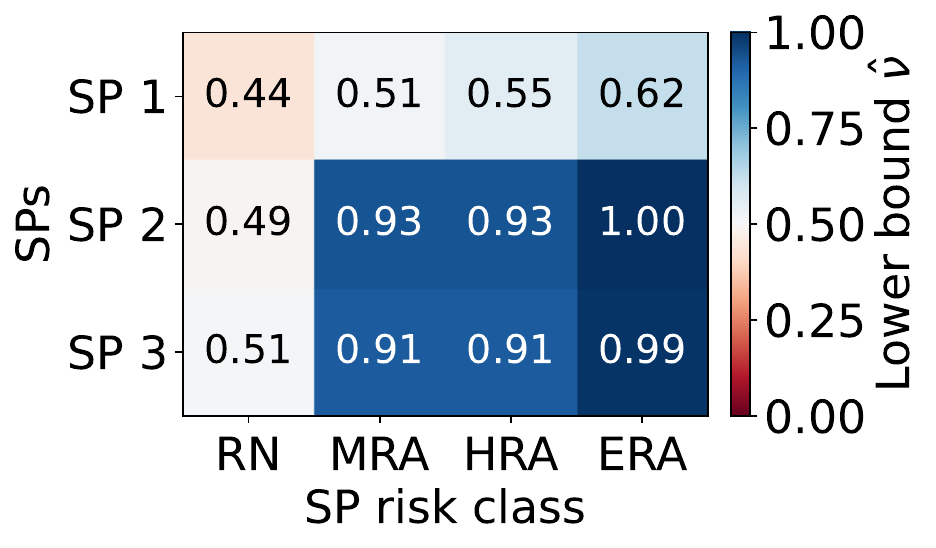}
        \subcaption{$CV=75\%$}
    \end{minipage}
    \hfill
    \begin{minipage}{0.46\textwidth}
        \centering
        \includegraphics[width=\linewidth]{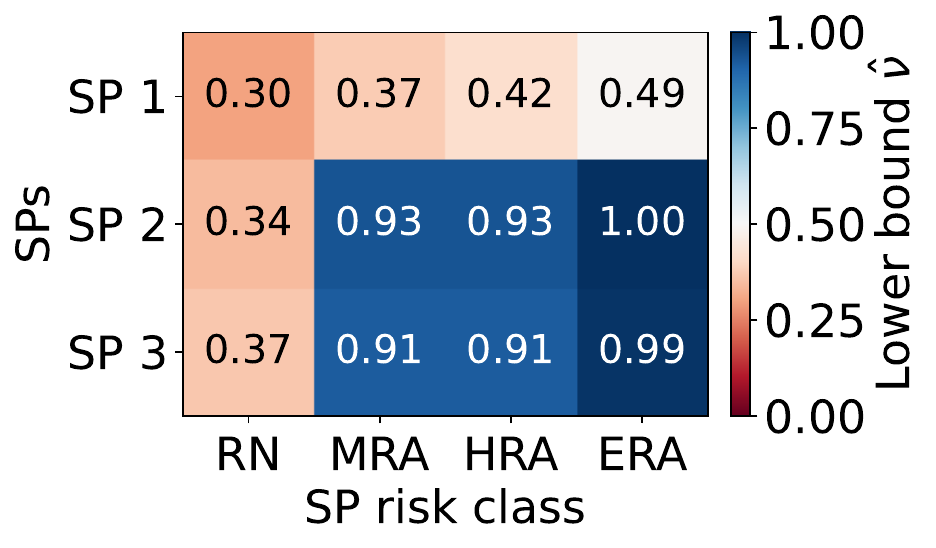}
        \subcaption{$CV=100\%$}
    \end{minipage}

    \caption{Lower bound on the PoP $\hat{\nu}_i$ under different SP risk classes and CVs.}
    \label{fig:nu_hat_cv}
\end{figure*}
\noindent\textbf{Impact on SP profit.}
Fig.~\ref{fig:nu_hat_cv} reports the theoretical lower bound~$\hat{\nu}_i$ on the PoP. Moving from risk-neutral to more risk-averse behavior increases the lower bound, because risk-averse SPs reduce their exposure to access payments and SP-side costs in adverse revenue scenarios. The effect of $CV$ is heterogeneous across SPs. At low and moderate $CV$, the small SP, SP~1, achieves a high lower bound, sometimes comparable to or slightly higher than the medium and large SPs, SP~2 and SP~3, because its smaller resource commitment limits cost exposure. However, as $CV$ becomes large, SP~1 is the most affected, since its limited revenue buffer makes severe adverse realizations harder to absorb. For SP~2 and SP~3, the lower bound is sensitive to $CV$ mainly under risk neutrality, whereas once they become risk-averse it remains almost unchanged and close to one across all $CV$ levels. Overall, we show that CVaR-based risk aversion strengthens the PoP lower bound, while high revenue uncertainty primarily hurts SP 1.
\ref{app:pop_gap} shows that the PoP lower bound $\hat{\nu}_i$ is reasonably tight. The gap between the empirical PoP and the lower bound increases under higher revenue uncertainty, especially for risk-neutral SPs, and decreases as SPs become more risk-averse.

\subsection{Impact of Individual SP Risk Preferences}
We vary the risk preference of one SP at a time while keeping the other two fixed as MRA. This isolates the effect of the small, medium, and large SPs on the SE and the profit. We fix $\delta = 0.0695$ and $CV=100\%$.

\noindent\textbf{Impact on the InP.}
Fig.~\ref{fig:leader_profit_changing_sp} shows the InP profit. All curves meet at MRA, which corresponds to the baseline where all SPs are MRA. When the changing SP becomes more risk-averse, InP profit decreases because that SP commits to fewer resources, reducing the InP's access revenue. The size of the decrease depends on the SP: changing SP~1 has the smallest effect, while changing SP~3 has the strongest effect, since SP~3 is the largest resource commitment contributor. Thus, the InP is affected not only by the level of risk aversion, but also by which SP becomes more risk-averse.

\noindent\textbf{Impact on SP profit.}
Figs.~\ref{fig:prob_follower_1_changes}--\ref{fig:prob_follower_3_changes}
show the lower bound on the PoP. The change mainly
affects the lower bound of the SP whose risk class varies, while the other SPs
are only weakly affected. This indicates that risk aversion acts primarily as a
self-protection mechanism; the SP that becomes more risk-averse reduces its own
resource exposure and improves its own profitability guarantee.  Cross-effects appear only for SP~1,
whose smaller revenue buffer makes it more sensitive to other SPs' changes. 
\begin{figure*}[!ht]
\centering
\begin{minipage}[t]{0.44\textwidth}
\centering
\includegraphics[width=\linewidth]{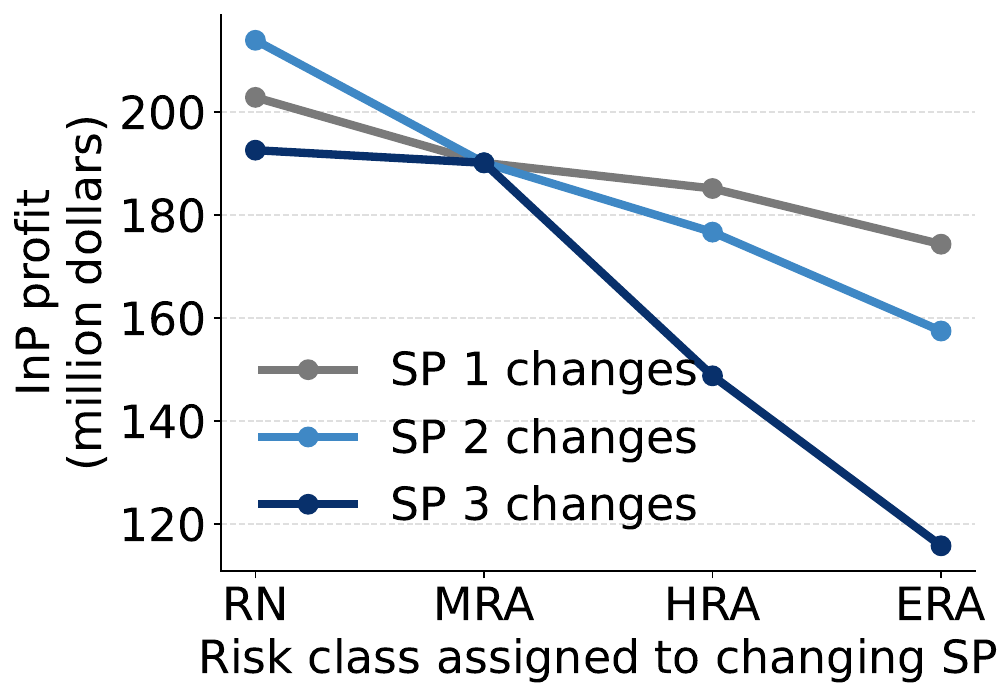}
\caption{InP profit when one SP changes risk class at \(CV=100\%\).}
\label{fig:leader_profit_changing_sp}
\end{minipage}
\hfill
\begin{minipage}[t]{0.46\textwidth}
\centering
\includegraphics[width=\linewidth]{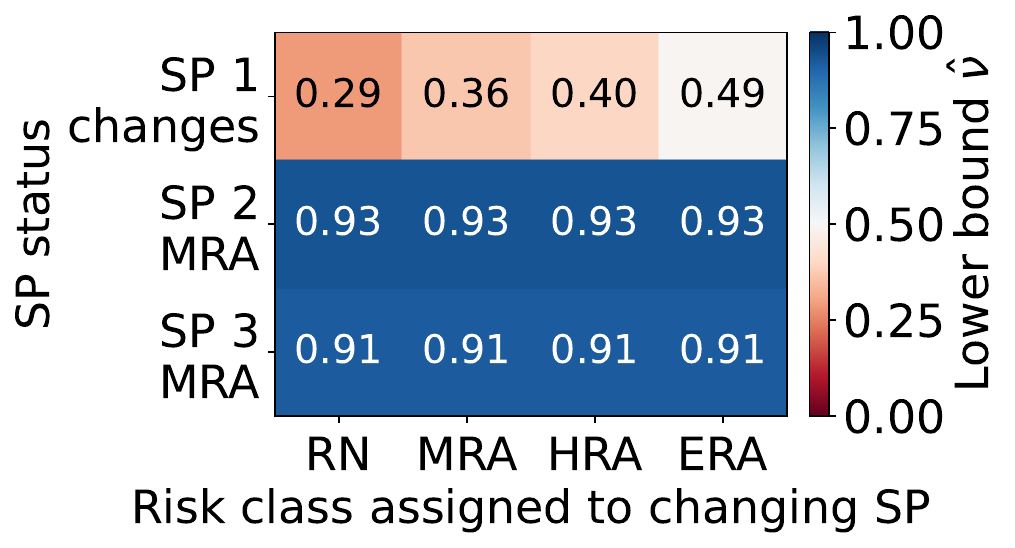}
\caption{Lower bound when SP 1 changes risk class at $CV=100\%$.}
\label{fig:prob_follower_1_changes}
\end{minipage}
\vspace{4mm}
\begin{minipage}[t]{0.46\textwidth}
\centering
\includegraphics[width=\linewidth]{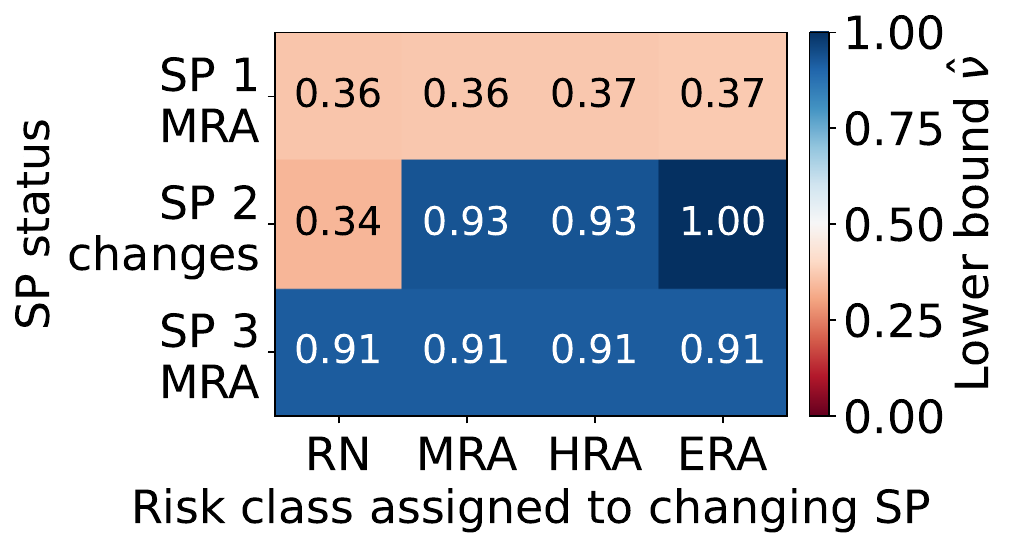}
\caption{Lower bound when SP 2 changes risk class at $CV=100\%$.}
\label{fig:prob_follower_2_changes}
\end{minipage}
\hfill
\begin{minipage}[t]{0.46\textwidth}
\centering
\includegraphics[width=\linewidth]{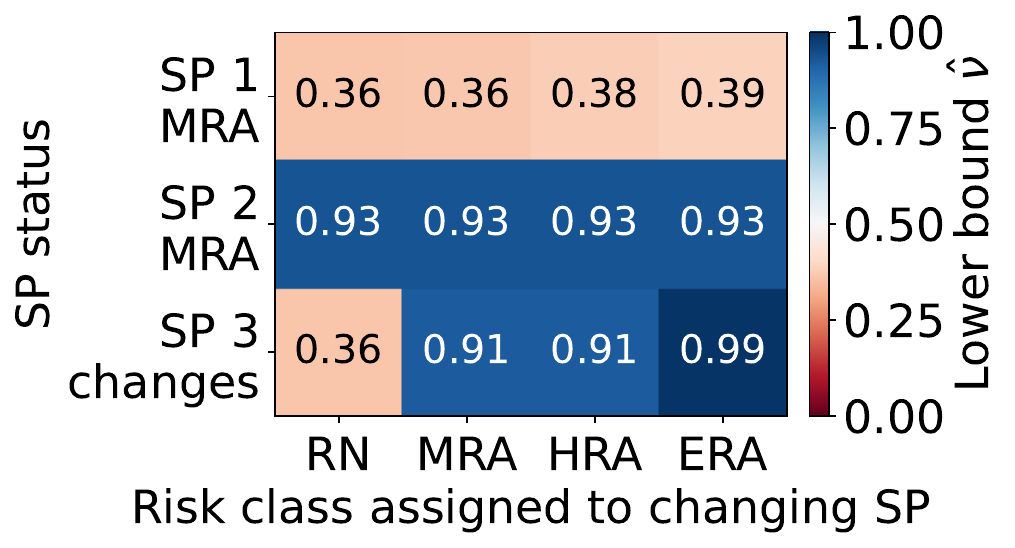}
\caption{Lower bound when SP 3 changes risk class at $CV=100\%$.}
\label{fig:prob_follower_3_changes}
\end{minipage}
\end{figure*}
\subsection{Scalability to Ten SPs}
We next consider ten SPs with different revenues and risk classes.

\noindent\textbf{Impact on the InP.}
Fig.~\ref{fig:scalability_inp} compares the InP outcomes with 3 and 10 SPs. With 10 SPs, InP profit, capacity, and price are higher because more SPs share the infrastructure and generate larger aggregate resource commitments. This makes additional capacity more valuable for the InP and allows a higher access price while still inducing SPs to commit to resources. Importantly, scaling from 3 to 10 SPs changes the magnitude of the equilibrium outcomes, but the qualitative behavior remains the same.
\begin{figure*}[!ht]
\centering
\begin{minipage}[t]{0.32\textwidth}
\centering
\includegraphics[width=\linewidth]{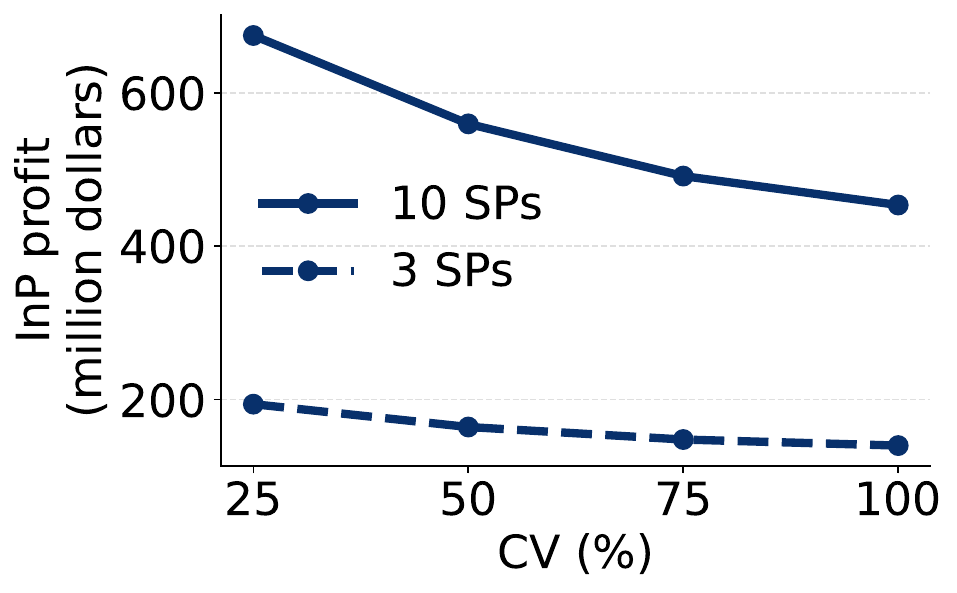}
{\small (a) InP profit}
\end{minipage}
\hfill
\begin{minipage}[t]{0.32\textwidth}
\centering
\includegraphics[width=\linewidth]{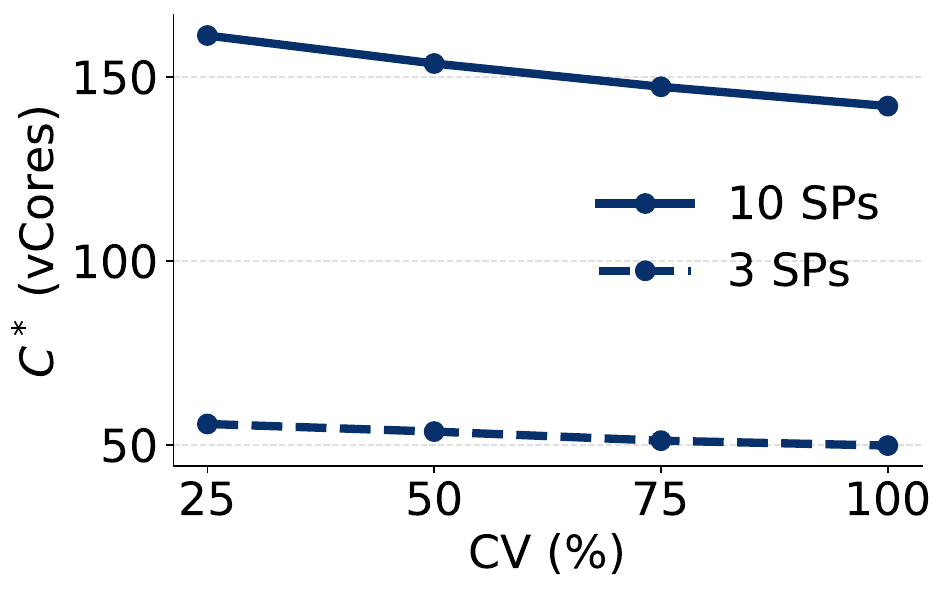}
{\small (b) Capacity \(C^*\)}
\end{minipage}
\hfill
\begin{minipage}[t]{0.32\textwidth}
\centering
\includegraphics[width=\linewidth]{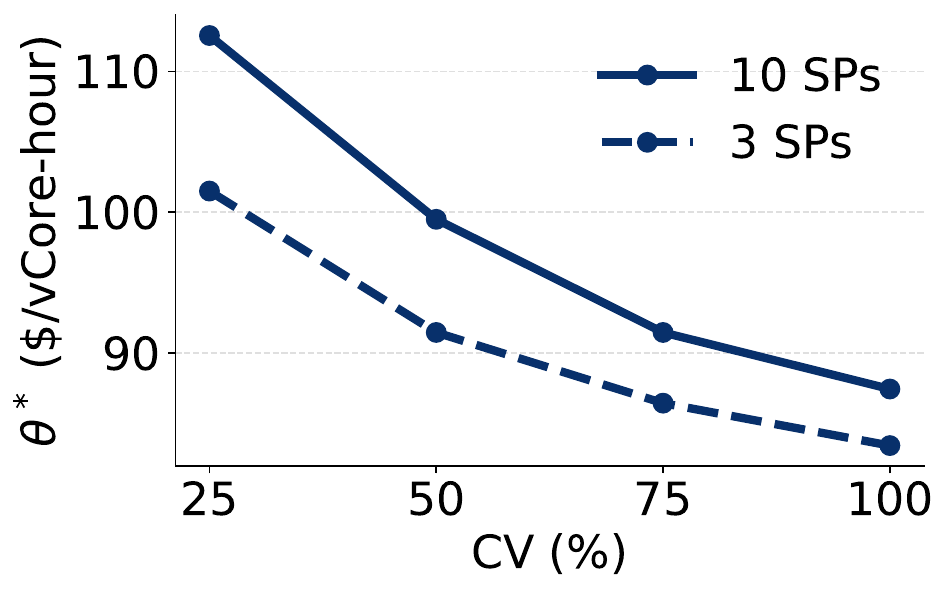}
{\small (c) Price \(\theta^*\)}
\end{minipage}
\caption{InP profit and capacity--pricing decisions for 3 and 10 SPs under varying \(CV\).}
\label{fig:scalability_inp}
\end{figure*}

\noindent\textbf{Impact on SP profit.}
Fig.~\ref{fig:nu_hat_scalability} compares the lower bound for the 3-SP and 10-SP cases. In the 3-SP case, SP~2 and SP~3 keep high and almost unchanged bounds across $CV$, while SP~1 is affected at high uncertainty because its revenue buffer is smaller. The 10-SP case shows the same behavior at larger scale; well-protected SPs, such as SP~2, SP~3, and especially the ERA SP~9, keep high bounds, whereas vulnerable SPs deteriorate more as $CV$ increases. The RN SP~5 is the most affected because it does not reduce risk exposure through CVaR. Thus, the lower bound is mainly determined by each SP's own risk class and revenue size and not by the number of SPs.
\begin{figure}[!t]
\centering

\begin{minipage}[t]{0.37\columnwidth}
\centering
\includegraphics[width=\linewidth]{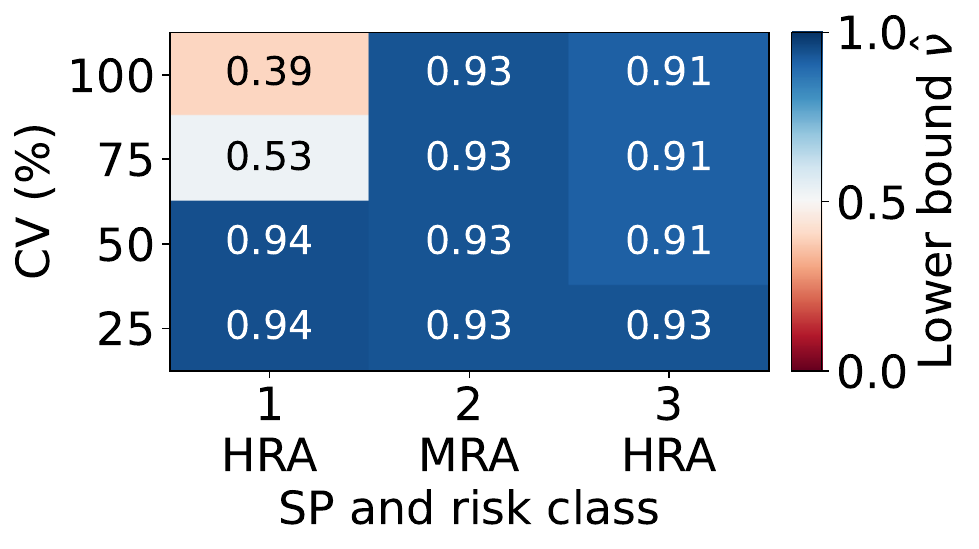}
{\small (a) 3-SP case}
\end{minipage}
\hfill
\begin{minipage}[t]{0.62\columnwidth}
\centering
\includegraphics[width=\linewidth]{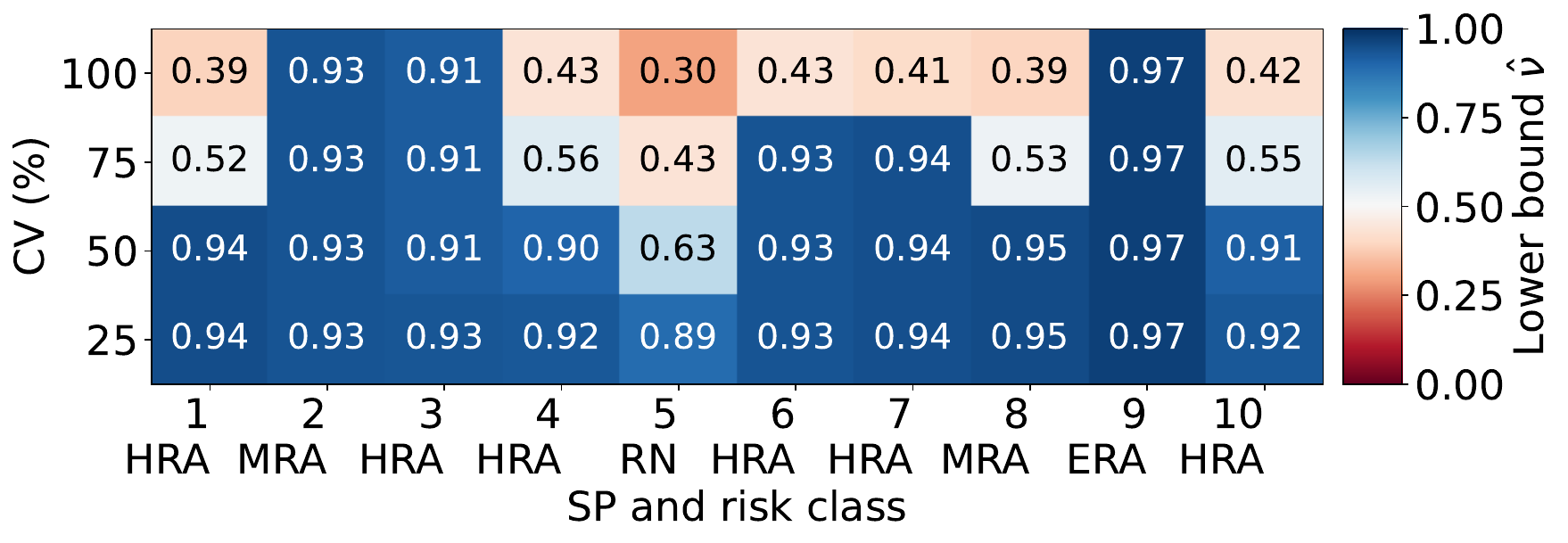}
{\small (b) 10-SP case}
\end{minipage}

\caption{Lower bound on the PoP for 3 and 10 SPs under heterogeneous risk classes and varying $CV$.}
\label{fig:nu_hat_scalability}
\end{figure}
%
%
%
%
\vspace{-8mm}
\subsection{Sensitivity to SP-side and Investment Costs}\label{sec:sens}
\ref{appendix:sensitivity} shows that higher SP-side costs lower capacity and InP profit, raise the access price, and affect SP profit heterogeneously, whereas investment-cost changes have only limited impact.

\subsection{Validation with Real-World Dataset}
\ref{app:dataset_aiwt} reports a validation experiment using real
mobile-traffic traces mapped to five SPs with heterogeneous revenues
and risk preferences. The SE is profitable for the InP
($C^*=109$ vCores, $\theta^*=27$~\$/vCore-hour, profit $=\$1.37$M),
and all SPs achieve high lower bounds on the PoP ($0.94$--$1.00$), supporting the model under realistic revenue variability.

\subsection{Benchmark Comparison}
~\ref{app:benchmark_comparison} compares the proposed model with representative state-of-the-art benchmarks. Since existing approaches capture only subsets of our model, we treat them as partial formulations of the broader setting addressed in this paper. We thus let each benchmark determine the decisions it models, while the remaining decisions are fixed according to our method. The resulting outcomes are qualitatively aligned with the full model, indicating that our formulation subsumes the main economic mechanisms of existing approaches while endogenizing infrastructure investment, pricing, SP commitments, risk exposure, and congestion.
The proposed model achieves high InP profit, high utilization, and a strong SP-PoP lower bound.

\section{Conclusion}\label{sec:conclusion}
This paper has introduced a Stackelberg game for shared-infrastructure investment and pricing under uncertainty. The model endogenizes future utilization through the resource commitments of heterogeneous risk-aware firms facing shared capacity, congestion, and downside risk. We have proved the existence of a Stackelberg equilibrium and derived lower bounds on the PoP. The MEC results show that higher firm risk aversion reduces InP profit while improving the SP-PoP lower bound. Future work may consider multiple InPs and dynamic capacity and pricing expansion.

\paragraph{\textbf{Acknowledgment}}
This work was supported by the ANR under the France 2030 program, grant NF-NAI: ANR-22-PEFT-0003.





\clearpage
\appendix

\numberwithin{theorem}{section}

\renewcommand{\thetheorem}{\Alph{section}\arabic{theorem}}
\renewcommand{\theproposition}{\Alph{section}\arabic{theorem}}
\renewcommand{\thelemma}{\Alph{section}\arabic{theorem}}
\renewcommand{\thedefinition}{\Alph{section}\arabic{theorem}}
\renewcommand{\theassumption}{\Alph{section}\arabic{theorem}}
\renewcommand{\theremark}{\Alph{section}\arabic{theorem}}
\renewcommand{\thecorollary}{\Alph{section}\arabic{theorem}}


\begin{center} {\Large\bfseries Supplementary Material}\\[0.5em] {\large\bfseries Shared Infrastructure Investment and Pricing:}\\ {\large\bfseries Stackelberg Equilibria in Risk-Aware Take-or-Pay Contracts} \end{center}

\section{Notation}
\label{appendix:notation}
Table~\ref{tab:main_notation} summarizes the main notation.
\begin{table}[!t]
\centering
\caption{Notation}
\label{tab:main_notation}
\resizebox{0.95\columnwidth}{!}{
\begin{tabular}{p{0.24\columnwidth} p{0.74\columnwidth}}
\toprule
\textbf{Notation} & \textbf{Description} \\
\midrule
$\pazocal N=\{1,\dots,N\}$ & Set of firms (\S\ref{sec:overview}). \\
$\pazocal T=\{1,\dots,T\}$ & Set of time slots (\S\ref{sec:overview}). \\
$C$ & Infrastructure capacity chosen by the InP (\S\ref{sec:overview}). \\
$\theta$ & Access price chosen by the InP (\S\ref{sec:overview}). \\
$h_i^t$ & Resource commitment of firm $i$ at time slot $t$ (\S\ref{sec:overview}). \\
$\mathbf h_i$ & Resource commitment vector of firm $i$ over all time slots \eqref{eq:u_follower}. \\
$\mathbf h_{-i}$ & Resource commitment vectors of all firms except \(i\) (\S\ref{sec:stackelberg_game}). \\
$\mathbf h^t$ & Resource commitment vector of all firms at time slot $t$ (\S\ref{sec:stackelberg_game}). \\
$\Pi_i^t(\mathbf h^t;\theta)$ & Profit of firm $i$ at time slot $t$ under access price \(\theta\) \eqref{eq:profit_sp_realization}. \\
$L_i^t(\mathbf h^t;\theta)$ & Loss of firm $i$ at time slot $t$ under access price \(\theta\) \eqref{eq:loss_i_t}. \\
$\beta_i$ & Risk-aversion coefficient of firm $i$ \eqref{eq:u_follower}. \\
$\alpha_i$ & CVaR confidence level of firm $i$ \eqref{eq:u_follower}. \\
$\mathrm{CVaR}_{\alpha_i}$ & Conditional Value-at-Risk at level $\alpha_i$ \eqref{eq:u_follower}. \\
$U_i(\mathbf h_i,\mathbf h_{-i};\theta)$ & Mean-CVaR utility of follower $i$ \eqref{eq:u_follower}. \\
$(C^*,\theta^*,\mathbf{H}^*)$ & Stackelberg equilibrium (Definition~\ref{def:stack}). \\
$\mathbb R_+$ & Set of nonnegative real numbers (\S\ref{sec:stackelberg_game}). \\
$\omega$ & Revenue-uncertainty realization, with $\omega\in\Omega$ (\S\ref{sec:follower_problem}). \\
$r_{i,\omega}^t(h_i^t)$ & Realized revenue of firm $i$ in slot $t$ under realization $\omega$ \eqref{eq:profit_sp_realization}; specified in \eqref{eq:revenue}. \\
$p_\theta(h_i^t)$ & Access payment paid by firm $i$ \eqref{eq:profit_sp_realization}; specified in \eqref{eq:price}. \\
$\psi_i^t(\mathbf h^t)$ & firm-side cost of firm $i$ in slot $t$ \eqref{eq:profit_sp_realization}; specified in \eqref{eq:psi}. \\
$\pazocal Y$ & Feasible set of the InP's capacity--price decisions \eqref{eq:leader_set}. \\
$\mathrm{Cost}(C)$ & Investment cost of capacity $C$ \eqref{eq:leader_problem}; specified in \eqref{eq:cost_parameters}. \\
$\mathbf H^*(C,\theta)$ & Follower equilibrium response induced by leader decision $(C,\theta)$ \eqref{eq:leader_problem}. \\
$\Pi_{i,\omega}^*, R_{i,\omega}^*, K_i^*$ & Equilibrium realized profit, realized total revenue, and deterministic total cost of firm $i$ \eqref{eq:pi_star}. \\
$\hat{\nu}_i$ & Lower bound on the Probability of Profit \eqref{eq:profit_bound_final}. \\
$\overline C,\overline\theta$ & Maximum admissible capacity and access price \eqref{eq:leader_set}. \\
$a_{i,\omega}^t$ & Revenue coefficient of firm $i$ in time slot $t$ under realization $\omega$ \eqref{eq:revenue}. \\
$\delta$ & Operational cost parameter \eqref{eq:psi}. \\
$\gamma$ & Congestion cost parameter \eqref{eq:psi}. \\
$A_i^t$ & Mean-CVaR benefit factor of firm $i$ at $t$ (Prop.~\ref{prop:follower_explicit}). \\
$\lambda^{t*}$ & Shadow price of the shared capacity constraint at $t$ (Prop.~\ref{prop:follower_explicit}). \\
\bottomrule
\end{tabular}}
\end{table}

\section{Step-by-Step Proof of Proposition~\ref{prop:ve_gne_existence}}\label{appendix:ve_gne_existence}
For fixed $(C,\theta)$, define the joint feasible set
\begin{equation}
X(C)
:=
\left\{
\mathbf{H}\in\mathbb R_+^{NT}
\;\middle|\;
\sum_{j\in\pazocal N} h_j^t \le C,
\quad \forall t\in\pazocal T
\right\}
\label{eq:x_joint_feasible}
\end{equation}
where \(\mathbf{H}:=(h_i^t)_{i\in\pazocal N,\ t\in\pazocal T}\) denotes the resource vectors of all followers. 
For the VE, it is convenient to work with the disutility. For each \(i\in\pazocal N\), define
\begin{align}
P_i(\mathbf h_i, \mathbf h_{-i};\theta):&\overset{\eqref{eq:u_follower}}{=}\sum_{t\in\pazocal T}
\left(
-\mathbb E_\omega[\Pi_i^t(\mathbf h^t;\theta)]
+
\beta_i\,\mathrm{CVaR}_{\alpha_i}(L_i^t(\mathbf h^t;\theta))
\right)
\label{eq:cost_follower_j}
\end{align}
where \(\mathbf h_i=(h_i^t)_{t\in\pazocal T}\in\mathbb R_+^T\) is the resource commitment vector of follower \(i\).
 
For compactness, write
\(
P_i(\mathbf H;\theta)
:=
P_i(\mathbf h_i,\mathbf h_{-i};\theta)
\),
where \(\mathbf H=(\mathbf h_1,\dots,\mathbf h_N)\).
We first define the notion of the variational equilibrium of the follower game.
\begin{definition}
For fixed $(C,\theta)$, a vector $\mathbf{H}^*\in X(C)$~\eqref{eq:x_joint_feasible} is called a variational equilibrium (VE) of the follower game if it solves the variational inequality
$\mathrm{VI}(X(C),\mathcal G(\cdot;\theta))$\citep{facchinei2007generalized}, namely,
\begin{equation}
\left\langle
\mathcal G(\mathbf{H}^*;\theta),\,
\mathbf{H}-\mathbf{H}^*
\right\rangle
\ge 0,
\qquad
\forall \mathbf{H}\in X(C)
\label{eq:VI}
\end{equation}
where 
\begin{align}
\mathcal G(\mathbf{H};\theta)
&:=
\begin{pmatrix}
\nabla_{\mathbf h_1}P_1(\mathbf{H};\theta)\\
\vdots\\
\nabla_{\mathbf h_N}P_N(\mathbf{H};\theta)
\end{pmatrix}=
\begin{pmatrix}
\left(\dfrac{\partial P_1}{\partial h_1^t}(\mathbf{H};\theta)\right)_{t\in\pazocal T}\\
\vdots\\
\left(\dfrac{\partial P_N}{\partial h_N^t}(\mathbf{H};\theta)\right)_{t\in\pazocal T}
\end{pmatrix},
\label{eq:pseudo_grad}
\end{align}
is the pseudo-gradient mapping of the follower game.
\end{definition}
The variational inequality~\eqref{eq:VI} states that, at equilibrium \(\mathbf{H}^*\), no follower can improve its mean-CVaR, i.e., \(U_i(\mathbf h_i,\mathbf h_{-i};\theta)\)~\eqref{eq:u_follower}, by changing its resource commitment, given the resource commitments of the other followers and the shared capacity constraint.

We decompose the proof into the following three lemmas.

\begin{lemma}
\label{lem:F_cont}
For every fixed \(C\in[0,\overline C]\), pseudo-gradient mapping
\(
(\mathbf H,\theta)\mapsto \mathcal G(\mathbf H;\theta)
\)
defined in~\eqref{eq:pseudo_grad} is continuous on
\(X(C)\times[0,\overline \theta]\). 
\end{lemma}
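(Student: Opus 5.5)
The plan is to write each component of $\mathcal G$ explicitly and check joint continuity in $(\mathbf H,\theta)$ term by term. By \eqref{eq:cost_follower_j}, for each $(i,t)$,
\[
\frac{\partial P_i}{\partial h_i^t}(\mathbf H;\theta)
= -\frac{\partial}{\partial h_i^t}\mathbb E_\omega\bigl[\Pi_i^t(\mathbf h^t;\theta)\bigr]
+\beta_i\frac{\partial}{\partial h_i^t}\mathrm{CVaR}_{\alpha_i}\bigl(L_i^t(\mathbf h^t;\theta)\bigr).
\]
Only slot $t$ contributes, because the other summands do not depend on $h_i^t$. It therefore suffices to show that each of the two terms is jointly continuous on $X(C)\times[0,\overline\theta]$, and to use that $\mathbf H\mapsto\mathbf h^t$ is a linear coordinate projection, with $\mathbf h^t\in\mathcal K$ whenever $\mathbf H\in X(C)$ and $C\le\overline C$.

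\textbf{Expectation term.} I would differentiate under the integral sign, using Assumption~\ref{ass:cvar}(2): the integrable envelope $Z_i^t$ dominates $\nabla_{\mathbf h}L_{i,\omega}^t$ on $\pazocal O$, and Assumption~\ref{ass:cvar}(1) gives integrability. This yields
\[
\frac{\partial}{\partial h_i^t}\mathbb E_\omega[\Pi_i^t]
=\mathbb E_\omega\bigl[(r_{i,\omega}^t)'(h_i^t)\bigr]-\frac{\partial p_\theta(h_i^t)}{\partial h}-\frac{\partial\psi_i^t(\mathbf h^t)}{\partial h_i^t}.
\]
The last two terms are jointly continuous by Assumption~\ref{ass:fct_c1}. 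The first term is continuous by dominated convergence, since the same envelope bounds $(r^t_{i,\omega})'$. Because the envelope is stated for fixed $\theta$, I would bound $|(r_{i,\omega}^t)'|$ by $Z_{i,\omega}^t$ plus a deterministic bound on $|\partial_h p_\theta|+|\nabla\psi_i^t|$, which is finite on the compact set $\mathcal K$ by continuity. That bound does not depend on $\theta$.

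\textbf{CVaR term.} Lemma~\ref{lemma:cvar} gives differentiability in $\mathbf h$ for each fixed $\theta$. Joint continuity in $\theta$ requires the explicit gradient formula underlying that lemma:
\[
\nabla_{\mathbf h}\mathrm{CVaR}_{\alpha_i}(L)=\frac{1}{1-\alpha_i}\,\mathbb E\bigl[\nabla_{\mathbf h}L\;\mathbf 1\{L\ge\mathrm{VaR}_{\alpha_i}(L)\}\bigr].
\]
I would take a sequence $(\mathbf h_n,\theta_n)\to(\mathbf h,\theta)$ and carry out three steps. First, the VaR converges: the argument of Lemma~\ref{lemma:cvar}, based on Assumption~\ref{ass:var_continuity} and on strict positivity of the density at the quantile, should extend to joint $(\mathbf h,\theta)$ because the loss depends continuously on $\theta$ through $p_\theta$. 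Second, the indicators converge almost surely, except on the event $\{L=\mathrm{VaR}\}$, which has probability zero because a density exists. Third, dominated convergence applies with the envelope described above.

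\textbf{Main obstacle.} The hardest step is this extension to joint continuity in $\theta$. The assumptions are stated for each fixed $\theta$, and they do not by themselves give continuity of the VaR in $\theta$. If needed, I would use the specific structure instead. Since $\theta$ enters only through the deterministic shift $p_\theta(h_i^t)$, translation invariance gives
\[
\mathrm{CVaR}(L+c)=\mathrm{CVaR}(L)+c.
\]
The $\theta$-dependence is therefore a continuous additive shift, so continuity in $\theta$ follows from the joint continuity of $\partial_h p_\theta$ in Assumption~\ref{ass:fct_c1}. Combining the two terms then proves the lemma.
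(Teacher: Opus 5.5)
Your proposal is correct and is essentially the paper's proof, provided you make the translation-invariance argument your main line rather than a fallback. The explicit CVaR-gradient and VaR-continuity route in $\theta$ is unnecessary, and the fixed-$\theta$ assumptions do not support it. The paper likewise peels off the deterministic terms via $\mathrm{CVaR}_{\alpha_i}(L_i^t)=p_\theta(h_i^t)+\psi_i^t(\mathbf h^t)+\mathrm{CVaR}_{\alpha_i}(-r_i^t(h_i^t))$. It then gets $C^1$ regularity of the $\theta$-free random part from Lemma~\ref{lemma:cvar} at a fixed $\theta_0$, and handles the expected-revenue gradient by dominated convergence, using the envelope from Assumption~\ref{ass:cvar}(2) plus bounds on the deterministic derivatives. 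The only cosmetic difference is that you subtract just the shift $p_\theta(h_i^t)-p_{\theta_0}(h_i^t)$, whereas the paper isolates $\mathrm{CVaR}_{\alpha_i}(-r_i^t(h_i^t))$ by applying Lemma~\ref{lemma:cvar} at $(h_i^t,\mathbf 0)$.
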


\begin{proof}
Fix \(C\in[0,\overline C]\). We prove the continuity of
\(
(\mathbf H,\theta)\mapsto \mathcal G(\mathbf H;\theta)
\)
on \(X(C)\times[0,\overline \theta]\). The capacity \(C\) only determines the
feasible set \(X(C)\); it does not enter the expression of the follower
disutilities directly~\eqref{eq:cost_follower_j}. Hence, \(C\) does not appear in the
partial derivatives that define the pseudo-gradient mapping
\(\mathcal G\) in~\eqref{eq:pseudo_grad}.

For each follower \(i\in \pazocal N\), the function
\(P_i(\mathbf h_i,\mathbf h_{-i};\theta)\)~\eqref{eq:cost_follower_j}
is defined as a finite sum over \(t\in\pazocal T\) of terms involving the
expected profit and the CVaR penalty. For every \(\mathbf H\in X(C)\) and
every \(t\in\pazocal T\), the associated time-slot vector \(\mathbf h^t\)
belongs to \(K\subset\pazocal O\). Hence Lemma~\ref{lemma:cvar} applies on
the domain considered here.

For each \(i\in\pazocal N\) and \(t\in\pazocal T\), the loss realization is \eqref{eq:loss_i_t}
\[
L^t_{i,\omega}(\mathbf h^t;\theta)
:=
-r^t_{i,\omega}(h_i^t)
+
p_\theta(h_i^t)
+
\psi_i^t(\mathbf h^t).
\]
For fixed \(\mathbf h^t\) and \(\theta\), the terms
\(p_\theta(h_i^t)\) and \(\psi_i^t(\mathbf h^t)\) are deterministic. Thus, by translation invariance of CVaR \citep[Prop.~2(i)]{pflug2000some}, 
\begin{equation}
    \mathrm{CVaR}_{\alpha_i}
\!\left(L_i^t(\mathbf h^t;\theta)\right)
=
p_\theta(h_i^t)
+
\psi_i^t(\mathbf h^t)
+
\mathrm{CVaR}_{\alpha_i}
\!\left(-r_i^t(h_i^t)\right)
\label{eq:cvar_decomposition_r}
\end{equation}

Using this identity in~\eqref{eq:cost_follower_j}, the disutility of
follower \(i\) can be written as
\[
P_i(\mathbf h_i,\mathbf h_{-i};\theta)
=
\sum_{t\in\pazocal T}
\left[
-\mathbb E_\omega r_i^t(h_i^t)
+
(1+\beta_i)
\left(
p_\theta(h_i^t)+\psi_i^t(\mathbf h^t)
\right)
+
\beta_i
\mathrm{CVaR}_{\alpha_i}
\!\left(-r_i^t(h_i^t)\right)
\right]
\]
Therefore, for every \(i\in\pazocal N\) and \(t\in\pazocal T\),
\begin{equation}
\begin{aligned}
\frac{\partial P_i(\mathbf h_i,\mathbf h_{-i};\theta)}
{\partial h_i^t}
=&
-
\frac{\partial}{\partial h_i^t}
\mathbb E_\omega r_i^t(h_i^t)
\\
&+
(1+\beta_i)
\left[
\frac{\partial p_\theta(h_i^t)}{\partial h_i^t}
+
\frac{\partial \psi_i^t(\mathbf h^t)}{\partial h_i^t}
\right]
\\
&+
\beta_i
\frac{\partial}{\partial h_i^t}
\mathrm{CVaR}_{\alpha_i}
\!\left(-r_i^t(h_i^t)\right).
\end{aligned}
\label{eq:partial_derivative_p}
\end{equation}

We now check the continuity of each term. Under
Assumption~\ref{ass:fct_c1},
\(
(h,\theta)\mapsto \frac{\partial p_\theta(h)}{\partial h}
\)
is continuous on \(\mathbb R_+\times[0,\overline \theta]\). Hence the
payment-gradient term $\frac{\partial p_\theta(h_i^t)}{\partial h_i^t}$ is continuous in \((h_i^t,\theta)\). By Assumption~\ref{ass:fct_c1}, the mapping
\(\mathbf h\mapsto \psi_i^t(\mathbf h)\) is continuously differentiable. Hence, the firm-side cost-gradient term
\(
\frac{\partial \psi_i^t(\mathbf h^t)}{\partial h_i^t}
\)
is continuous in \(\mathbf h^t\).

Moreover by Assumption~\ref{ass:fct_c1}, the mapping
\(h\mapsto r_{i,\omega}^t(h)\) is continuously differentiable. The continuity of the expected-revenue gradient follows from the dominated
convergence theorem and the standard differentiation-under-the-integral
result; see~\cite[Theorem~2.27]{folland1999real}. By
Assumption~\ref{ass:cvar} (2), the derivative of the random loss with
respect to the resource vector is dominated by an integrable random
variable. Since the access payment and firm-side cost derivatives are deterministic
and continuous on the compact feasible set, they are bounded there. Hence
\(\partial r_{i,\omega}^t(h_i^t)/\partial h_i^t\) is also dominated by an
integrable random variable. Therefore, applying \cite[Theorem~2.27]{folland1999real} componentwise,
\begin{equation}
    \frac{\partial}{\partial h_i^t}
\mathbb E_\omega [r_i^t(h_i^t)]
=
\mathbb E_\omega
\left[
\frac{\partial r_{i,\omega}^t(h_i^t)}{\partial h_i^t}
\right]
\label{eq:this_map}
\end{equation}
Moreover, since
\(\partial r_{i,\omega}^t(h_i^t)/\partial h_i^t\) is continuous in
\(h_i^t\) for almost every \(\omega\) and is dominated by the same
integrable random variable, dominated convergence implies that this
mapping \eqref{eq:this_map} is continuous in \(h_i^t\).

Finally, consider the CVaR term. Although Lemma~\ref{lemma:cvar} is stated
for the loss \(L_i^t\), it also implies the required regularity of the random
part. Indeed, fix any \(\theta_0\in[0,\overline \theta]\) and set the other
followers' resource commitments to zero. Then
\[
L^t_{i,\omega}(h_i^t,\mathbf 0;\theta_0)
=
-r^t_{i,\omega}(h_i^t)
+
p_{\theta_0}(h_i^t)
+
\psi_i^t(h_i^t,\mathbf 0).
\]
By \eqref{eq:cvar_decomposition_r},
\[
\mathrm{CVaR}_{\alpha_i}\!\left(-r_i^t(h_i^t)\right)
=
\mathrm{CVaR}_{\alpha_i}
\!\left(L_i^t(h_i^t,\mathbf 0;\theta_0)\right)
-
p_{\theta_0}(h_i^t)
-
\psi_i^t(h_i^t,\mathbf 0).
\]
The first term is continuously differentiable by Lemma~\ref{lemma:cvar},
and the last two terms are continuously differentiable by
Assumption~\ref{ass:fct_c1}. Hence
\(\mathrm{CVaR}_{\alpha_i}(-r_i^t(h_i^t))\) is continuously
differentiable in $h_i^t$, and its derivative is continuous.

Thus each component
\(
\frac{\partial P_i(\mathbf h_i,\mathbf h_{-i};\theta)}
{\partial h_i^t}
\)
is continuous in \((\mathbf H,\theta)\) on
\(X(C)\times[0,\overline \theta]\). Hence each block component
\(
\nabla_{\mathbf h_i}P_i(\mathbf h_i,\mathbf h_{-i};\theta)
\)
is continuous in \((\mathbf H,\theta)\). Since the pseudo-gradient mapping
\(\mathcal G(\mathbf H;\theta)\)~\eqref{eq:pseudo_grad} is obtained by
stacking finitely many continuous components, it is continuous in
\((\mathbf H,\theta)\) on \(X(C)\times[0,\overline \theta]\). This proves the
Lemma.
\end{proof}



We now ensure that the follower game admits at least one resource commitment equilibrium for any leader decision \((C,\theta)\). Without such a guarantee, the InP's anticipated follower response may not be well defined. 
\begin{lemma}\label{lem:ve_exists} 
Fix \((C,\theta)\in\pazocal Y\). Then the follower game admits at least one variational equilibrium. 
\end{lemma}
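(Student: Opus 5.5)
The plan is to apply the standard existence theorem for variational inequalities on compact convex sets, \citep[Proposition~2.2]{facchinei2007generalized}. That result says $\mathrm{VI}(K,F)$ has a solution whenever $K$ is nonempty, convex and compact and $F$ is continuous on $K$. A VE is, by definition, a solution of $\mathrm{VI}(X(C),\mathcal G(\cdot;\theta))$ in \eqref{eq:VI}, so it suffices to check these hypotheses for the set $X(C)$ in \eqref{eq:x_joint_feasible} and the pseudo-gradient $\mathcal G(\cdot;\theta)$ in \eqref{eq:pseudo_grad}.

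First I verify the set properties of $X(C)$ for fixed $C\in[0,\overline C]$.
\begin{itemize}
\item \emph{Nonempty:} $\mathbf 0\in X(C)$, because $C\ge 0$.
\item \emph{Closed and convex:} $X(C)$ is the intersection of the closed half-spaces $\{h_i^t\ge 0\}$ and $\{\sum_{j}h_j^t\le C\}$, for $i\in\pazocal N$ and $t\in\pazocal T$. It is therefore a polyhedron.
\item \emph{Bounded:} every $\mathbf H\in X(C)$ satisfies $0\le h_i^t\le\sum_j h_j^t\le C$, so $X(C)\subseteq[0,C]^{NT}$.
\end{itemize}
By Heine--Borel, $X(C)$ is compact. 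For each $t$, the slice $\mathbf h^t$ lies in $\{\mathbf h\in\mathbb R_+^N:\sum_j h_j\le C\}\subseteq\mathcal K\subset\pazocal O$. This containment is what lets Lemma~\ref{lemma:cvar} and Assumptions~\ref{ass:var_continuity}--\ref{ass:cvar} apply on the whole feasible set.

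Second, continuity of $\mathbf H\mapsto\mathcal G(\mathbf H;\theta)$ on $X(C)$ for fixed $\theta$ follows directly from Lemma~\ref{lem:F_cont}, which gives joint continuity in $(\mathbf H,\theta)$ on $X(C)\times[0,\overline\theta]$. I simply restrict that statement to the slice at the given $\theta$. With both ingredients in place, \citep[Proposition~2.2]{facchinei2007generalized} yields some $\mathbf H^*\in X(C)$ satisfying \eqref{eq:VI}, and this $\mathbf H^*$ is a VE.

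The only delicate point is that $\mathcal G$ must be continuous on all of $X(C)$, including its boundary where some $h_i^t=0$. That is where the derivatives of the CVaR term could fail to exist. This is handled because Lemma~\ref{lemma:cvar} gives continuous differentiability on the open set $\pazocal O\supset\mathcal K$, and Lemma~\ref{lem:F_cont} has already packaged that fact. Nothing beyond citing these two results should be needed.
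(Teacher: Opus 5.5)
Your proof is correct and follows essentially the same route as the paper. You show that $X(C)$ is nonempty (via $\mathbf 0$), closed and convex (linear constraints), and bounded, use Lemma~\ref{lem:F_cont} for continuity of $\mathcal G(\cdot;\theta)$ on $X(C)$, and conclude with \citep[Proposition~2.2]{facchinei2007generalized}; your remark that each slice $\mathbf h^t$ lies in $\mathcal K\subset\pazocal O$ is the same observation the paper makes inside the proof of Lemma~\ref{lem:F_cont}.
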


\begin{proof}
Fix \((C,\theta)\in\pazocal Y\). The joint feasible set~\eqref{eq:x_joint_feasible}
is nonempty, compact, and convex. Indeed, \(X(C)\) is nonempty since
\(\mathbf 0\in X(C)\); it is convex and closed because it is defined by linear
constraints; and it is bounded since \(0\le h_i^t\le C\) for every
\(i\in\pazocal N\) and \(t\in\pazocal T\).
Moreover, by Lemma~\ref{lem:F_cont}, the pseudo-gradient mapping
\(\mathcal G(\mathbf{H}; \theta)\) is continuous on \(X(C)\). Hence, by
\citep[Proposition~2.2]{facchinei2007generalized}, the variational
inequality problem
\(
\mathrm{VI}(X(C),\mathcal G(\cdot;\theta))
\)
admits at least one solution. Therefore, the follower game
admits at least one variational equilibrium.
\end{proof}
We now relate the VE to the GNE. 

\begin{lemma}\label{lem:ve_gne} Fix \((C,\theta)\in\pazocal Y\). Then any solution of the variational inequality \( \mathrm{VI}(X(C),\mathcal G(\cdot;\theta)) \)~\eqref{eq:VI} is a generalized Nash equilibrium of the follower game associated with \((C,\theta)\). Consequently, \( \mathrm{VE}(C,\theta)\subseteq \mathrm{GNE}(C,\theta)\). Therefore, the follower game
admits at least one GNE.
\end{lemma}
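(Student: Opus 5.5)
The plan is to verify the hypotheses of \citep[Theorem~5]{facchinei2007generalizedproblems}. That theorem states that every solution of $\mathrm{VI}(X,\mathcal G)$ is a GNE under three conditions: the joint set $X$ is closed and convex; each player's feasible set is the section $X_i(\mathbf h_{-i})=\{\mathbf h_i:(\mathbf h_i,\mathbf h_{-i})\in X\}$; and each $P_i(\cdot,\mathbf h_{-i};\theta)$ is continuously differentiable and convex in $\mathbf h_i$.

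\textbf{Step 1: the joint set and its sections.} In the proof of Lemma~\ref{lem:ve_exists} we showed that $X(C)$ is nonempty, closed, and convex, since it is defined by finitely many linear inequalities. Next, I check that, for fixed $\mathbf h_{-i}$, the follower feasible set defined by \eqref{eq:constraints} coincides with $\{\mathbf h_i\in\mathbb R_+^T: h_i^t+\sum_{j\ne i}h_j^t\le C,\ \forall t\}$, which is exactly the section of $X(C)$.

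\textbf{Step 2: differentiability.} For differentiability, I reuse the decomposition \eqref{eq:cvar_decomposition_r} together with the derivative formula \eqref{eq:partial_derivative_p} from Lemma~\ref{lem:F_cont}. The expected-revenue term is $C^1$ by dominated differentiation under Assumption~\ref{ass:cvar}. The payment and firm-side cost terms are $C^1$ by Assumption~\ref{ass:fct_c1}. The CVaR term is $C^1$ by Lemma~\ref{lemma:cvar}. It follows that $\mathbf h_i\mapsto P_i$ is continuously differentiable on a neighborhood of the section.

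\textbf{Step 3: convexity.} Convexity is the step that needs care. I would argue it slot by slot: $P_i$ is a finite sum over $t$, and each summand depends on $\mathbf h_i$ only through $h_i^t$, so it suffices to show that $h\mapsto -\mathbb E_\omega[\Pi_i^t]+\beta_i\mathrm{CVaR}_{\alpha_i}(L_i^t)$ is convex.
\begin{itemize}
\item For each $\omega$, the map $h\mapsto L_{i,\omega}^t(h,\mathbf h_{-i}^t;\theta)=-r_{i,\omega}^t(h)+p_\theta(h)+\psi_i^t(h,\mathbf h_{-i}^t)$ is convex by Assumption~\ref{ass:convex_follower}. Expectation preserves this, so $-\mathbb E_\omega[\Pi_i^t]=\mathbb E_\omega[L_i^t]$ is convex \citep[\S~3.2.1]{boyd2004convex}.
\item For the risk term, take $h_\lambda=\lambda h+(1-\lambda)h'$. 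Pointwise convexity gives $L(h_\lambda)\le\lambda L(h)+(1-\lambda)L(h')$ almost surely.
\item Monotonicity of CVaR and then its convexity \citep[Prop.~2(iv),(v)]{pflug2000some} yield $\mathrm{CVaR}(L(h_\lambda))\le\lambda\mathrm{CVaR}(L(h))+(1-\lambda)\mathrm{CVaR}(L(h'))$.
\item Since $\beta_i\ge0$, the weighted sum of the two terms is convex.
\end{itemize}

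The main obstacle is this CVaR composition step. It is exactly where both monotonicity and convexity of the risk measure are needed, and where one must make sure that the almost-sure inequality is enough for monotonicity. Integrability from Assumption~\ref{ass:cvar}(1) keeps all quantities finite.

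\textbf{Conclusion.} With Steps 1–3, \citep[Theorem~5]{facchinei2007generalizedproblems} gives $\mathrm{VE}(C,\theta)\subseteq\mathrm{GNE}(C,\theta)$. Combined with the existence from Lemma~\ref{lem:ve_exists}, the GNE set is nonempty.
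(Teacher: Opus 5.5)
Your proposal is correct and follows essentially the same route as the paper's proof. You identify the follower feasible sets as sections of the closed convex set $X(C)$, obtain continuous differentiability from Lemma~\ref{lem:F_cont}, and prove slot-wise convexity via pointwise convexity of the loss together with monotonicity and then convexity of CVaR; then, like the paper, you invoke Theorem~5 of Facchinei--Kanzow and combine it with Lemma~\ref{lem:ve_exists} to get nonemptiness of the GNE set.
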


\begin{proof}
Fix \((C,\theta)\in\pazocal Y\). For each follower~\(i\in\pazocal N\), let
\(
\mathbf h_i := (h_i^t)_{t\in\pazocal T}
\)
denote the resource commitment vector of follower \(i\) over all time slots, and let
\(
\mathbf h_{-i}:=(\mathbf h_j)_{j\in\pazocal N,\;j\neq i}
\)
denote the collection of resource commitment vectors of all followers except \(i\).

For each follower~\(i\in\pazocal N\), the
feasible set is
\[
X_i(\mathbf h_{-i};C)
=
\left\{
\mathbf h_i\in\mathbb R_+^T
\;\middle|\;
h_i^t+\sum_{j\neq i} h_j^t \le C,\ \forall t\in\pazocal T
\right\}.
\]
Equivalently,
\[
X_i(\mathbf h_{-i};C)
=
\left\{
\mathbf h_i\in\mathbb R^T
\;\middle|\;
(\mathbf h_i,\mathbf h_{-i})\in X(C)
\right\},
\]
where \(X(C)\) is defined in~\eqref{eq:x_joint_feasible}.
Hence, the follower game is a generalized Nash equilibrium problem with a
jointly convex constraint set in the sense of
\citep{facchinei2007generalizedproblems}. Moreover, \(X(C)\) is nonempty,
closed, and convex. Since \(0\le h_i^t\le C\) for all
\(i\in\pazocal N\) and \(t\in\pazocal T\), the set \(X(C)\) is also compact.
By 
Lemma~\ref{lem:F_cont}, for every \(i\in\pazocal N\),
\(P_i(\mathbf h_i,\mathbf h_{-i};\theta)\) is continuously differentiable
with respect to \(\mathbf h_i\).

Fix $\theta$. For each
\(t\in\pazocal T\) and \(\omega\in\Omega\), the loss realization is
\[
L_{i,\omega}^t(\mathbf h^t;\theta)
= L_{i,\omega}^t(h_i^t,\mathbf h_{-i}^t;\theta)
=-r_{i,\omega}^t(h_i^t)
+
p_\theta(h_i^t)
+
\psi_i^t(h_i^t, \mathbf h_{-i}^t)
\]
Fix \(i\in\pazocal N\) and \(t\in\pazocal T\). Let
\(h\in\mathbb R_+\) denote a scalar value of follower \(i\)'s resource at time \(t\), and fix
\(
\mathbf h_{-i}^t := (h_j^t)_{j\ne i}\in\mathbb R_+^{N-1}.
\)
By Assumption~\ref{ass:convex_follower}, for each
\(t\in\pazocal T\), \(\omega\in\Omega\), and fixed
\(\mathbf h_{-i}^t\in\mathbb R_+^{N-1}\), the mapping
\[
h\mapsto
L_{i,\omega}^t(h,\mathbf h_{-i}^t;\theta)
=
-r_{i,\omega}^t(h)
+
p_\theta(h)
+
\psi_i^t(h,\mathbf h_{-i}^t)
\]
is convex on \(\mathbb R_+\), since
\(-r_{i,\omega}^t(h)\) is convex in \(h\),
\(p_\theta(h)\) is convex in \(h\), and
\(\psi_i^t(h, \mathbf h_{-i}^t)\) is convex in \(h\).

Hence, for every fixed \(\mathbf h_{-i}^t\), any
\(h,\tilde h\in\mathbb R_+\), and any \(\lambda\in[0,1]\), we have
\begin{equation}
L_{i,\omega}^t\!\left(
\lambda h+(1-\lambda)\tilde h,\mathbf h_{-i}^t;\theta
\right)
\le
\lambda L_{i,\omega}^t(h,\mathbf h_{-i}^t;\theta)
+
(1-\lambda)L_{i,\omega}^t(\tilde h,\mathbf h_{-i}^t;\theta)
\label{eq:cvar_composition_loss_convex}
\end{equation}

By monotonicity of \(\mathrm{CVaR}_{\alpha_i}\)
\citep[Prop.~2(v)]{pflug2000some},
\eqref{eq:cvar_composition_loss_convex} implies
\begin{equation}
\begin{aligned}
&\mathrm{CVaR}_{\alpha_i}\!\left(
L_i^t\!\left(
\lambda h+(1-\lambda)\tilde h,
\mathbf h_{-i}^t;\theta
\right)
\right) \\&\le
\mathrm{CVaR}_{\alpha_i}\!\left(
\lambda L_i^t(h,\mathbf h_{-i}^t;\theta)
+
(1-\lambda)L_i^t(\tilde h,\mathbf h_{-i}^t;\theta)
\right)
\label{eq:cvar_composition_monotone}
\end{aligned}
\end{equation}

Then, by convexity of \(\mathrm{CVaR}_{\alpha_i}\) with respect to its loss
argument~\citep[Prop.~2(iv)]{pflug2000some},
\begin{align}
&
\mathrm{CVaR}_{\alpha_i}\!\left(
\lambda L_i^t(h,\mathbf h_{-i}^t;\theta)
+
(1-\lambda)L_i^t(\tilde h,\mathbf h_{-i}^t;\theta)
\right)
\nonumber\\
&\le
\lambda
\mathrm{CVaR}_{\alpha_i}\!\left(
L_i^t(h,\mathbf h_{-i}^t;\theta)
\right)
+
(1-\lambda)
\mathrm{CVaR}_{\alpha_i}\!\left(
L_i^t(\tilde h,\mathbf h_{-i}^t;\theta)
\right)
\label{eq:cvar_composition_convex}
\end{align}

Combining \eqref{eq:cvar_composition_monotone} and
\eqref{eq:cvar_composition_convex}, we obtain
\begin{align}
&
\mathrm{CVaR}_{\alpha_i}\!\left(
L_i^t\!\left(
\lambda h+(1-\lambda)\tilde h,
\mathbf h_{-i}^t;\theta
\right)
\right)
\nonumber\\
&\le
\lambda
\mathrm{CVaR}_{\alpha_i}\!\left(
L_i^t(h,\mathbf h_{-i}^t;\theta)
\right)
+
(1-\lambda)
\mathrm{CVaR}_{\alpha_i}\!\left(
L_i^t(\tilde h,\mathbf h_{-i}^t;\theta)
\right)
\label{eq:cvar_loss_convex}
\end{align}

Hence, for every fixed \(\mathbf h_{-i}^t\),
\(
h
\mapsto
\mathrm{CVaR}_{\alpha_i}(L_i^t(h, \mathbf h_{-i}^t;\theta))
\)
is convex.

Define
\[
z_i^t(h_i^t,\mathbf h_{-i}^t;\theta)
:=
\mathbb E_\omega\!\left[
L_{i,\omega}^t(h_i^t,\mathbf h_{-i}^t)
\right]
+
\beta_i
\mathrm{CVaR}_{\alpha_i}
\!\left(
L_i^t(h_i^t,\mathbf h_{-i}^t;\theta)
\right)
\]
Since \(L_{i,\omega}^t(\cdot,\mathbf h_{-i}^t;\theta)\) is convex for every realization
\(\omega\), and since convexity is preserved under nonnegative weighted
sums~\citep[\S 3.2.1]{boyd2004convex}, the expectation term
\[
h \mapsto
\mathbb E_\omega\!\left[
L_{i,\omega}^t(h,\mathbf h_{-i}^t;\theta)
\right]
\]
is convex. Indeed, expectation is a nonnegative weighted average over the
realizations \(\omega\).

Also, as shown above,
\(
h\mapsto
\mathrm{CVaR}_{\alpha_i}
\!\left(L_i^t(h,\mathbf h_{-i}^t;\theta)\right)
\)
is convex~\eqref{eq:cvar_loss_convex}. Since \(\beta_i\ge0\), it follows that
\(h \mapsto z_i^t(h,\mathbf h_{-i}^t);\theta\) is convex.

Furthermore, since
\(
P_i(\mathbf h_i, \mathbf h_{-i};\theta)
=
\sum_{t\in\pazocal T}
z_i^t(h_i^t,\mathbf h_{-i}^t;\theta),
\)
where
\(
\mathbf h_i=(h_i^t)_{t\in\pazocal T},
\)
the function \(P_i(\mathbf h_i,\mathbf h_{-i};\theta)\) is a finite separable
sum of convex functions of the components of \(\mathbf h_i\). 


Therefore, for
every fixed \(\mathbf h_{-i}=( \mathbf h_j)_{j\ne i}\),
\(
\mathbf h_i
\mapsto
P_i(\mathbf h_i,\mathbf h_{-i};\theta)
\)
is convex since  convexity is
preserved under nonnegative weighted sums~\citep[\S 3.2.1]{boyd2004convex}.

Therefore, the Convexity Assumption in \citep{facchinei2007generalizedproblems}
is satisfied: for every follower \(i\), \(P_i(\cdot,\mathbf h_{-i};\theta)\)
is convex and \(X_i(\mathbf h_{-i};C)\) is
closed and convex.

Moreover, since
\(
X_i(\mathbf h_{-i};C)
=
\{\mathbf h_i\in\mathbb R^T:(\mathbf h_i,\mathbf h_{-i})\in X(C)\},
\)
with \(X(C)\) closed and convex, the follower game is a jointly convex GNE problem in
the sense of Definition~2 of \citep{facchinei2007generalizedproblems}.

Since
Lemma~\ref{lem:F_cont} shows that the pseudo-gradient mapping
\(\mathcal G(\mathbf{H};\theta)\), whose \(i\)-th block is
\(\nabla_{\mathbf h_i}P_i(\mathbf h_i, \mathbf h_{-i};\theta)\), is continuous, it follows that
\(P_i\) is continuously differentiable with respect to \(\mathbf h_i\).

Thus all assumptions of
\citep[Theorem~5]{facchinei2007generalizedproblems}
are satisfied, namely, first, \(P_i\) is \(C^1\) in \(\mathbf h_i\).
second, \(P_i\) is convex in \(\mathbf h_i\).
third, \(X_i(\mathbf h_{-i};C)\) is a section of the common closed convex set \(X(C)\).
Therefore, the follower game is a jointly convex GNE problem, and Theorem~5 applies. Therefore, every solution of
\(
\mathrm{VI}(X(C),\mathcal G(\cdot;\theta))
\)
is a GNE of the follower game associated with
\((C,\theta)\). Hence,
\begin{equation}
\mathrm{VE}(C,\theta)\subseteq \mathrm{GNE}(C,\theta)
\label{eq:ve_is_gne}
\end{equation}
By Lemma~\ref{lem:ve_exists}, the follower game admits at least one
VE. The existence of at least one GNE then follows from \eqref{eq:ve_is_gne}.
\end{proof}

\section{Step-by-Step Proof of Theorem~\ref{thm:gne_uniqueness}}
\label{appendix:gne_uniqueness}
We decompose the proof into the following four lemmas:

\begin{lemma}
For each \(i\in\pazocal N\), the disutility \(P_i(\mathbf h_i,\mathbf h_{-i};\theta)\) \eqref{eq:cost_follower_j} can be written as
\begin{equation}
P_i(\mathbf h_i,\mathbf h_{-i};\theta)
=
\sum_{t\in\pazocal T}
\Bigl(
\phi_i^t(h_i^t)
+
\Psi_i^t(\mathbf h^t)
\Bigr),
\label{eq:Ji_final_decomp}
\end{equation}
where
\begin{equation}
\phi_i^t(h_i^t)
:=
-\mathbb E_\omega[r_{i,\omega}^t(h_i^t)]
+
p_\theta(h_i^t)
+
\beta_i\,\mathrm{CVaR}_{\alpha_i}(f_i^t(h_i^t))
\label{eq:phi_def}
\end{equation}
and
\begin{equation}
\Psi_i^t(\mathbf h^t):=(1+\beta_i)\psi_i^t(\mathbf h^t).
\label{eq:coupled_term}
\end{equation}
\end{lemma}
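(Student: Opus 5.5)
The plan is to start from the definition of the disutility \eqref{eq:cost_follower_j} and expand both of its terms slot by slot. Here $f_i^t(h_i^t)$ is read as the purely random part of the loss, $f_i^t(h_i^t):=-r_i^t(h_i^t)$, as in \eqref{eq:cvar_decomposition_r}; with any other reading the identity would not hold. Fix $i$, $t$, $\theta$ and $\mathbf h^t\in\mathcal K$.

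First, the expectation term. By \eqref{eq:profit_sp_realization} and linearity of expectation, using that $p_\theta(h_i^t)$ and $\psi_i^t(\mathbf h^t)$ are deterministic,
\[
-\mathbb E_\omega[\Pi_i^t(\mathbf h^t;\theta)]=-\mathbb E_\omega[r_{i,\omega}^t(h_i^t)]+p_\theta(h_i^t)+\psi_i^t(\mathbf h^t).
\]
The integrability needed here is supplied by Assumption~\ref{ass:cvar}(1): $L_i^t$ is integrable and the other two terms are finite constants, so $r_i^t$ is integrable.

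Second, the CVaR term. The loss is $L_i^t(\mathbf h^t;\theta)=f_i^t(h_i^t)+c$, with the constant $c:=p_\theta(h_i^t)+\psi_i^t(\mathbf h^t)$. Translation invariance of CVaR \citep[Prop.~2(i)]{pflug2000some}, already recorded as \eqref{eq:cvar_decomposition_r}, gives
\[
\mathrm{CVaR}_{\alpha_i}(L_i^t)=c+\mathrm{CVaR}_{\alpha_i}(f_i^t(h_i^t)).
\]
Multiplying by $\beta_i$ and adding the expectation identity, the slot-$t$ summand of $P_i$ becomes
\[
-\mathbb E_\omega[r_{i,\omega}^t(h_i^t)]+(1+\beta_i)p_\theta(h_i^t)+\beta_i\mathrm{CVaR}_{\alpha_i}(f_i^t(h_i^t))+(1+\beta_i)\psi_i^t(\mathbf h^t).
\]
Summing over $t\in\pazocal T$ then gives \eqref{eq:Ji_final_decomp}.

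The main obstacle is the coefficient on the payment term. The derivation produces $(1+\beta_i)p_\theta(h_i^t)$, whereas \eqref{eq:phi_def} as stated carries only $p_\theta(h_i^t)$. There are two ways I would try to reconcile this:
\begin{itemize}
\item \textbf{Absorb the payment into $f_i^t$.} Read $f_i^t(h):=-r_i^t(h)+p_\theta(h)$, which matches the expression $\mathrm{CVaR}_{\alpha_i}(-r_i^t(h)+p_\theta(h))$ used in the proof sketch of Theorem~\ref{thm:gne_uniqueness}. Then $\beta_i\mathrm{CVaR}_{\alpha_i}(f_i^t)$ already contains $\beta_i p_\theta$, and the decomposition holds exactly as written, with only $\psi_i^t$ pulled out by translation invariance.
\item \textbf{Keep $f_i^t=-r_i^t$.} State the identity with $(1+\beta_i)p_\theta$ inside $\phi_i^t$.
\end{itemize}
I would adopt the first reading, since it is the one consistent with the earlier sketch. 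Under that reading, one applies translation invariance only to the constant $\psi_i^t(\mathbf h^t)$ and checks that the $\psi_i^t$ contributions from the expectation and CVaR terms combine into $\Psi_i^t$ in \eqref{eq:coupled_term}.

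Finally, I would note that $\phi_i^t$ depends on $h_i^t$ alone, because neither $r_i^t$ nor $p_\theta$ involves other followers' commitments. All coupling therefore sits in $\Psi_i^t$, which is exactly what the subsequent monotonicity argument uses.
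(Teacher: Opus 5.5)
Your proof is correct. With the reading you finally adopt, $f_{i,\omega}^t(h):=-r_{i,\omega}^t(h)+p_\theta(h)$, it is exactly the paper's argument: write $L_i^t=f_i^t(h_i^t)+\psi_i^t(\mathbf h^t)$, pull out the deterministic $\psi_i^t$ by translation invariance, and collect the $\psi_i^t$ terms from the expectation and the CVaR into $(1+\beta_i)\psi_i^t$. Delete your opening claim that only $f_i^t=-r_i^t$ makes the identity hold, since that is the reading under which it fails as stated; in any case your two options give the same $\phi_i^t$, because $\mathrm{CVaR}_{\alpha_i}(-r_i^t+p_\theta)=\mathrm{CVaR}_{\alpha_i}(-r_i^t)+p_\theta$.
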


\begin{proof}
For each \(i\in\pazocal N\), \(t\in\pazocal T\), and \(\omega \in \Omega\), rewrite~\eqref{eq:loss_i_t} as
\begin{equation}
L_{i,\omega}^t(\mathbf h^t;\theta)
=
\underbrace{-r_{i,\omega}^t(h_i^t)+p_\theta(h_i^t)}_{f_{i,\omega}^t(h_i^t)}
+
\psi_i^t(\mathbf h^t).
\label{eq:loss_decomp}
\end{equation}
Since \(\psi_i^t(\mathbf h^t)\) is deterministic, CVaR translation invariance~\citep[Prop.~2(i)]{pflug2000some} gives
\begin{equation}
\mathrm{CVaR}_{\alpha_i}\!\left(L_i^t(\mathbf h^t;\theta)\right)
\overset{\eqref{eq:loss_decomp}}{=}
\mathrm{CVaR}_{\alpha_i}\!\left(f_i^t(h_i^t)\right)
+
\psi_i^t(\mathbf h^t).
\label{eq:cvar_translation_model}
\end{equation}
Substituting \eqref{eq:loss_decomp}, \eqref{eq:cvar_translation_model}, and \eqref{eq:profit_sp_realization} into~\eqref{eq:cost_follower_j} gives
\begin{align}
&P_i(\mathbf h_i,\mathbf h_{-i};\theta) \notag\\
&\overset{\eqref{eq:cost_follower_j}}{=}
\sum_{t\in\pazocal T}
\Bigl(
-\mathbb E_\omega[\Pi_i^t(\mathbf h^t;\theta)]
+
\beta_i\,\mathrm{CVaR}_{\alpha_i}(L_i^t(\mathbf h^t;\theta))
\Bigr) \notag\\
&\overset{\eqref{eq:profit_sp_realization}}{=}
\sum_{t\in\pazocal T}
\Biggl(
-\mathbb E_\omega
\Bigl[
r_{i,\omega}^t(h_i^t)
-
p_\theta(h_i^t)
-
\psi_i^t(\mathbf h^t)
\Bigr] \notag\\
&\hspace{3.5cm}
+
\beta_i\,\mathrm{CVaR}_{\alpha_i}(L_i^t(\mathbf h^t;\theta))
\Biggr) \notag\\
&\overset{\eqref{eq:cvar_translation_model}}{=}
\sum_{t\in\pazocal T}
\Biggl(
-\mathbb E_\omega[r_{i,\omega}^t(h_i^t)]
+
p_\theta(h_i^t)
+
\beta_i\,\mathrm{CVaR}_{\alpha_i}(f_i^t(h_i^t)) \notag\\
&\hspace{3.5cm}
+
(1+\beta_i)\psi_i^t(\mathbf h^t)
\Biggr) \notag\\
&\overset{\eqref{eq:phi_def},\,\eqref{eq:coupled_term}}{=}
\sum_{t\in\pazocal T}
\Bigl(
\phi_i^t(h_i^t)
+
\Psi_i^t(\mathbf h^t)
\Bigr).
\end{align}
which completes the proof.
\end{proof}
The uniqueness analysis relies on the decomposition of $P_i(\mathbf h_i,\mathbf h_{-i};\theta)$. 

We first state a key property of the individual term~\eqref{eq:phi_def}.

\begin{lemma}\label{lemma:phi_convex}
For every \(i\in\pazocal N\) and \(t\in\pazocal T\), 
\(h_i^t\mapsto \phi_i^t(h_i^t)\) is convex.
\end{lemma}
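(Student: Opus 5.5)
I will show that \(\phi_i^t\) in~\eqref{eq:phi_def} is a nonnegative combination of three convex functions of the scalar \(h:=h_i^t\in\mathbb R_+\). These are the expected negative revenue \(h\mapsto -\mathbb E_\omega[r_{i,\omega}^t(h)]\), the payment \(h\mapsto p_\theta(h)\), and the risk term \(h\mapsto \beta_i\,\mathrm{CVaR}_{\alpha_i}(f_i^t(h))\). The convexity then follows because convexity is preserved under nonnegative weighted sums~\citep[\S 3.2.1]{boyd2004convex}.

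\textbf{First two terms.} By Assumption~\ref{ass:convex_follower}, \(h\mapsto -r_{i,\omega}^t(h)\) is convex for every \(\omega\). Taking expectations preserves the convexity inequality pointwise in \(\omega\); this uses the integrability of the revenue implied by Assumption~\ref{ass:cvar}(1). Hence \(h\mapsto-\mathbb E_\omega[r_{i,\omega}^t(h)]\) is convex. Convexity of \(h\mapsto p_\theta(h)\) for each fixed \(\theta\in[0,\overline\theta]\) is assumed directly in Assumption~\ref{ass:convex_follower}.

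\textbf{CVaR term.} This is the step that needs care. For each \(\omega\), the realization \(f_{i,\omega}^t(h)=-r_{i,\omega}^t(h)+p_\theta(h)\) from~\eqref{eq:loss_decomp} is convex in \(h\). So for \(h,\tilde h\in\mathbb R_+\) and \(\lambda\in[0,1]\) we have, pointwise in \(\omega\),
\(f_{i,\omega}^t(\lambda h+(1-\lambda)\tilde h)\le \lambda f_{i,\omega}^t(h)+(1-\lambda)f_{i,\omega}^t(\tilde h)\).
I will repeat the argument used in the proof of Lemma~\ref{lem:ve_gne}:
\begin{enumerate}
\item Monotonicity of CVaR~\citep[Prop.~2(v)]{pflug2000some} bounds \(\mathrm{CVaR}_{\alpha_i}(f_i^t(\lambda h+(1-\lambda)\tilde h))\) by \(\mathrm{CVaR}_{\alpha_i}(\lambda f_i^t(h)+(1-\lambda)f_i^t(\tilde h))\).
\item Convexity of CVaR in its loss argument~\citep[Prop.~2(iv)]{pflug2000some} bounds the latter by \(\lambda\,\mathrm{CVaR}_{\alpha_i}(f_i^t(h))+(1-\lambda)\,\mathrm{CVaR}_{\alpha_i}(f_i^t(\tilde h))\).
\end{enumerate}
Alternatively, I could use translation invariance to write \(\mathrm{CVaR}_{\alpha_i}(f_i^t(h))=\mathrm{CVaR}_{\alpha_i}(L_i^t(h,\mathbf h_{-i}^t;\theta))-\psi_i^t(h,\mathbf h_{-i}^t)\). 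That route does not directly give convexity, so I prefer the direct argument above.

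The only subtlety is to check that both CVaRs are well defined, i.e., that \(f_i^t(h)\) is integrable. This follows from Assumption~\ref{ass:cvar}(1) together with the deterministic, finite nature of \(\psi_i^t\). Since \(\beta_i\ge0\), multiplying by \(\beta_i\) preserves convexity, and summing the three convex terms completes the argument.
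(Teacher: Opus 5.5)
Your proposal is correct and matches the paper's proof essentially step for step: pointwise convexity of $f_{i,\omega}^t$, the same monotonicity-plus-convexity argument for the CVaR term reused from Lemma~\ref{lem:ve_gne}, convexity of $-\mathbb E_\omega[r_{i,\omega}^t(h)]$ and of $p_\theta$, and closure under nonnegative sums using $\beta_i\ge 0$. Your remark that $f_i^t(h)$ is integrable by Assumption~\ref{ass:cvar}(1), so both CVaRs are well defined, is a small check the paper leaves implicit.
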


\begin{proof}
By Assumption~\ref{ass:convex_follower}, for every \(\omega\in\Omega\), the
mapping \(h \mapsto r_{i,\omega}^t(h)\) is concave on \(\mathbb R_+\).
Hence, the mapping \(h \mapsto -r_{i,\omega}^t(h)\) is convex on
\(\mathbb R_+\). Since \(h \mapsto p_\theta(h)\) is also convex by
Assumption~\ref{ass:convex_follower}, it follows that
\(h \mapsto f_{i,\omega}^t(h)\) is convex on \(\mathbb R_+\), where
\(
f_{i,\omega}^t(h_i^t)
:=
-r_{i,\omega}^t(h_i^t)+p_\theta(h_i^t).
\)
Let
\(
f_i^t(h_i^t)
\)
denote the corresponding random variable.
We now show that the mapping
\(
h \mapsto
\mathrm{CVaR}_{\alpha_i}\!\bigl(f_i^t(h)\bigr)
\)
is convex on \(\mathbb R_+\).

Since \(f_{i,\omega}^t(\cdot)\) is convex for every \(\omega\in\Omega\),
the same monotonicity--convexity argument used in
\eqref{eq:cvar_composition_loss_convex}--\eqref{eq:cvar_loss_convex}
implies that
\(
h \mapsto
\mathrm{CVaR}_{\alpha_i}\!\bigl(f_i^t(h)\bigr)
\)
is convex on \(\mathbb R_+\).
Since convexity is preserved under nonnegative weighted sums~\citep[\S 3.2.1]{boyd2004convex}, and expectation is a nonnegative weighted average, mapping
\(h \mapsto -\mathbb E_\omega[r_{i,\omega}^t(h)]\)
is convex.
Since \(\beta_i\ge 0\),
mapping
\(h \mapsto
\beta_i\,\mathrm{CVaR}_{\alpha_i}\!\bigl(f_i^t(h)\bigr)\)
is also convex.

Therefore, \(\phi_i^t\), defined by
\(
\phi_i^t(h_i^t)
=
-\mathbb E_\omega[r_{i,\omega}^t(h_i^t)]
+
p_\theta(h_i^t)
+
\beta_i\,\mathrm{CVaR}_{\alpha_i}\!\bigl(f_i^t(h_i^t)\bigr),
\)
is a sum of convex mappings, and is thus convex in $h_i^t$ on \(\mathbb R_+\).
\end{proof}
We now show the strong monotonicity of~\eqref{eq:coupled_term}.
\begin{lemma}\label{lemma:congestion_coupled_term}
The firm-side cost-gradient mapping, whose \((i,t)\)-component is
\((1+\beta_i)\,\frac{\partial \psi_i^t(\mathbf h^t)}{\partial h_i^t}\), is strongly monotone on \(X(C)\)~\eqref{eq:x_joint_feasible}.
\end{lemma}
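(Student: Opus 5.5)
The argument works time slot by time slot, using the Jacobian criterion for strong monotonicity. For each \(t\in\pazocal T\), set \(M_i^t(\mathbf h^t):=(1+\beta_i)\bigl((q_i^t)'(h_i^t)+\partial g_i^t(\mathbf h^t)/\partial h_i^t\bigr)\), using the decomposition (i) of Theorem~\ref{thm:gne_uniqueness}. Let \(M^t:=(M_i^t)_{i\in\pazocal N}\). By hypothesis (ii), \(M^t\) is continuously differentiable on an open neighbourhood of the convex set \(\mathcal K\supseteq\{\mathbf h^t:\mathbf H\in X(C)\}\). Its Jacobian \(J^t\) has entries
\[
J_{ii}^t=(1+\beta_i)\Bigl((q_i^t)''(h_i^t)+\frac{\partial^2 g_i^t}{\partial (h_i^t)^2}\Bigr),\qquad
J_{ij}^t=(1+\beta_i)\frac{\partial^2 g_i^t}{\partial h_i^t\partial h_j^t}\quad(j\neq i).
\]
By the definitions of \(Q\) and \(G\) in \eqref{eq:Q-and-G}, every point satisfies \(J_{ii}^t\ge(1+\beta_i)(Q-G)\) and \(|J_{ij}^t|\le(1+\beta_i)G\).

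Next, pass to the symmetric part \(S^t:=\tfrac12(J^t+J^{t\top})\). Its diagonal is unchanged. Its off-diagonal entries satisfy \(|S_{ij}^t|\le\tfrac12\bigl((1+\beta_i)+(1+\beta_j)\bigr)G\). The row-wise Gershgorin margin is
\[
S_{ii}^t-\sum_{j\ne i}|S_{ij}^t|\ \ge\ (1+\beta_i)(Q-G)-\tfrac{G}{2}\sum_{j\neq i}(2+\beta_i+\beta_j).
\]
The sum is at most \(\tfrac{G}{2}\bigl(2|\pazocal N|+|\pazocal N|\beta_i+\sum_{j}\beta_j\bigr)\), obtained by extending it harmlessly to \(j=i\) and absorbing the extra \(G\) term. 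This bookkeeping is exactly what produces \(\xi_i\). It gives a margin of at least \((1+\beta_i)(Q-\xi_iG)\), which is \(\ge m:=\min_i(1+\beta_i)(Q-\xi_iG)>0\) by \eqref{eq:ass_q_g}. Gershgorin's theorem then yields \(S^t(\mathbf h^t)\succeq mI\) uniformly in \(\mathbf h^t\) and \(t\).

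Strong monotonicity follows from the mean-value form. Fix \(\mathbf H,\mathbf H'\in X(C)\) and write \(d^t=\mathbf h^t-\mathbf h^{t\prime}\). The segment between \(\mathbf h^t\) and \(\mathbf h^{t\prime}\) stays in \(\mathcal K\) by convexity, so
\[
(M^t(\mathbf h^t)-M^t(\mathbf h^{t\prime}))^\top d^t=\int_0^1 (d^t)^\top S^t(\mathbf h^{t\prime}+s d^t)\,d^t\,ds\ \ge m\|d^t\|^2 .
\]
The integrand may be replaced by the symmetric part because a quadratic form only sees that part. Summing over \(t\in\pazocal T\) gives \(\ge m\|\mathbf H-\mathbf H'\|^2\), which is strong monotonicity on \(X(C)\).

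The main obstacle is the constant bookkeeping that must reproduce \(\xi_i\) exactly. The precise counting of the diagonal term of \(g_i^t\) against the \(N-1\) off-diagonal terms has to match the stated \(\xi_i\). If it does not, I would note that the bound obtained is at least as strong as \eqref{eq:ass_q_g}, since \(\xi_i\) is an overestimate, so the conclusion still follows. A second point to handle is that \(Q\) and \(G\) are only an infimum and a supremum. The uniform bounds therefore hold pointwise everywhere, which suffices for the argument.
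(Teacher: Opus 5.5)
Your proposal is correct and follows the paper's own route: bound the diagonal and off-diagonal entries of the symmetric part \(S^t\) of the Jacobian of \(M^t\) via \(Q\) and \(G\), use Gershgorin to get \(S^t\succeq mI\) uniformly, then obtain per-slot strong monotonicity and sum over \(t\). The differences are cosmetic. You prove the Jacobian criterion inline via the integral mean-value identity, where the paper cites it. Your ``at most'' bookkeeping is in fact an exact equality, giving precisely \((1+\beta_i)(Q-\xi_iG)\), so your closing hedge is unnecessary.
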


\begin{proof}
The firm-side cost is modeled as
\label{appendix:congestion_coupled_term}
\begin{equation}
\psi_i^t(\mathbf h^t)
=
q_i^t(h_i^t)
+
g_i^t(\mathbf h^t).
\label{eq:congestion_decomp}
\end{equation}
Define 
\begin{equation}
    Q:=\inf_{\substack{i\in\pazocal N,\; t\in\pazocal T,\\ h_i^t\in\mathbb R_+}}
(q_i^t)''(h_i^t)
,\qquad
G:=\sup_{\substack{i,j\in\pazocal N,\; t\in\pazocal T,\\ \mathbf h^t\in\mathbb R_+^{|\pazocal N|}}}
\left|
\frac{\partial^2 g_i^t(\mathbf h^t)}{\partial h_i^t\,\partial h_j^t}
\right|
\label{eq:cond_q_g}
\end{equation}
For each \(i\in\pazocal N\), differentiating \eqref{eq:congestion_decomp} with respect to the own decision \(h_i^t\), we obtain
\begin{equation}
\frac{\partial \psi_i^t(\mathbf h^t)}{\partial h_i^t}
=
(q_i^t)'(h_i^t)
+
\frac{\partial g_i^t(\mathbf h^t)}{\partial h_i^t}
\label{eq:first_partial_general_with_t}
\end{equation}
Multiplying by \(1+\beta_i\), it follows that
\begin{equation}
M_i^t(\mathbf h^t)
:=
(1+\beta_i)
\left(
(q_i^t)'(h_i^t)
+
\frac{\partial g_i^t(\mathbf h^t)}{\partial h_i^t}
\right),
\qquad \forall i\in\pazocal N
\label{eq:F_i_general_with_t}
\end{equation}

Equivalently, for each \(t\in\pazocal T\), define the vector mapping
\begin{align}
&M^t(\mathbf h^t)
:=
\bigl(M_i^t(\mathbf h^t)\bigr)_{i\in\pazocal N}
\nonumber\\&=
\left(
(1+\beta_i)
\left[
(q_i^t)'(h_i^t)
+
\frac{\partial g_i^t(\mathbf h^t)}{\partial h_i^t}
\right]
\right)_{i\in\pazocal N}
\label{eq:m_t}
\end{align}
We now compute the Jacobian matrix of the vector mapping \(M^t\)~\eqref{eq:m_t}.
Let
\begin{align}
J^t(\mathbf h^t)
&=
\nabla_{\mathbf h^t} M^t(\mathbf h^t)
=
\left[
\frac{\partial M_i^t(\mathbf h^t)}{\partial h_j^t}
\right]_{i,j\in\pazocal N}
\nonumber\\
&=
\begin{pmatrix}
\partial_{h_1^t}M_1^t(\mathbf h^t) & \cdots & \partial_{h_N^t}M_1^t(\mathbf h^t) \\
\vdots & \ddots & \vdots \\
\partial_{h_1^t}M_N^t(\mathbf h^t) & \cdots & \partial_{h_N^t}M_N^t(\mathbf h^t)
\end{pmatrix}
\label{eq:jacobian_matrix_entries_M}
\end{align}
Using \eqref{eq:F_i_general_with_t}, we obtain, for each \(i,j\in\pazocal N\),
\begin{align}
\frac{\partial M_i^t(\mathbf h^t)}{\partial h_j^t}
&=
(1+\beta_i)
\frac{\partial}{\partial h_j^t}
\left(
(q_i^t)'(h_i^t)
+
\frac{\partial g_i^t(\mathbf h^t)}{\partial h_i^t}
\right)
\nonumber\\
&=
(1+\beta_i)
\left(
(q_i^t)''(h_i^t)\delta_{ij}
+
\frac{\partial^2 g_i^t(\mathbf h^t)}{\partial h_i^t\partial h_j^t}
\right),
\label{eq:jacobian_entry_general_with_t}
\end{align}
where \(\delta_{ij}\) denotes the Kronecker delta.\footnote{\(\delta_{ij}=1\) if \(i=j\), and \(0\) otherwise.} Therefore,
\begin{equation}
J^t(\mathbf h^t)
=
\left[
(1+\beta_i)
\left(
(q_i^t)''(h_i^t)\delta_{ij}
+
\frac{\partial^2 g_i^t(\mathbf h^t)}{\partial h_i^t\partial h_j^t}
\right)
\right]_{i,j\in\pazocal N}
\label{eq:jacobian_matrix_general_with_t}
\end{equation}

We study the symmetric matrix given by the symmetric part of the Jacobian defined as:
\begin{align}
S^t(\mathbf h^t)
&:=
\frac{J^t(\mathbf h^t)+\big(J^t(\mathbf h^t)\big)^\top}{2}
\nonumber\\
&=
\left[
\frac{1}{2}
\left(
\frac{\partial M_i^t(\mathbf h^t)}{\partial h_j^t}
+
\frac{\partial M_j^t(\mathbf h^t)}{\partial h_i^t}
\right)
\right]_{i,j\in\pazocal N}
\label{eq:jacobian}
\end{align}
Its \((i,j)\)-entry is
\begin{align}
S^t_{ij}(\mathbf h^t)
&=
\frac{1}{2}
\left(
\frac{\partial M_i^t(\mathbf h^t)}{\partial h_j^t}
+
\frac{\partial M_j^t(\mathbf h^t)}{\partial h_i^t}
\right)
\nonumber\\
&=
\frac{1}{2}
\Bigg[
(1+\beta_i)
\left(
(q_i^t)''(h_i^t)\delta_{ij}
+
\frac{\partial^2 g_i^t(\mathbf h^t)}{\partial h_i^t\partial h_j^t}
\right)
\nonumber\\
&\hspace{0.5cm}
+
(1+\beta_j)
\left(
(q_j^t)''(h_j^t)\delta_{ji}
+
\frac{\partial^2 g_j^t(\mathbf h^t)}{\partial h_j^t\partial h_i^t}
\right)
\Bigg]
\label{eq:sym_entry_general_with_t}
\end{align}

We distinguish between the diagonal and off-diagonal entries.

\medskip
\paragraph{\textbf{Diagonal entries}}

Let \(i\in\pazocal N\). Setting \(j=i\) in \eqref{eq:sym_entry_general_with_t}, and using \(\delta_{ii}=1\), we obtain
\begin{align}
S^t_{ii}(\mathbf h^t)
&=
(1+\beta_i)
\left(
(q_i^t)''(h_i^t)
+
\frac{\partial^2 g_i^t(\mathbf h^t)}{\partial (h_i^t)^2}
\right)
\label{eq:diag_entry_general_with_t}
\end{align}
By \eqref{eq:cond_q_g}, we have
\(
(q_i^t)''(h_i^t)\ge Q
\) and
\(
\left|
\frac{\partial^2 g_i^t(\mathbf h^t)}{\partial (h_i^t)^2}
\right|
\le G,
\) respectively.
The latter implies
\(
\frac{\partial^2 g_i^t(\mathbf h^t)}{\partial (h_i^t)^2}\ge -G
\).
Substituting these two bounds into \eqref{eq:diag_entry_general_with_t}, we obtain
\begin{align}
S^t_{ii}(\mathbf h^t)
&\ge
(1+\beta_i)(Q-G)
\label{eq:diag_lower_bound_general_with_t}
\end{align}

\medskip

\paragraph{\textbf{Off-diagonal entries}}

Let \(i,j\in\pazocal N\) with \(i\neq j\). Then \(\delta_{ij}=\delta_{ji}=0\), and \eqref{eq:sym_entry_general_with_t} becomes
\begin{align}
S^t_{ij}(\mathbf h^t)
&=
\frac{1}{2}
\left[
(1+\beta_i)
\frac{\partial^2 g_i^t(\mathbf h^t)}{\partial h_i^t\partial h_j^t}
+
(1+\beta_j)
\frac{\partial^2 g_j^t(\mathbf h^t)}{\partial h_j^t\partial h_i^t}
\right]
\label{eq:offdiag_entry_general_with_t}
\end{align}
Taking absolute values and using the triangle inequality, we obtain
\begin{align}
|S^t_{ij}(\mathbf h^t)|
&\le
\frac{1}{2}
\left[
(1+\beta_i)
\left|
\frac{\partial^2 g_i^t(\mathbf h^t)}{\partial h_i^t\partial h_j^t}
\right|
+
(1+\beta_j)
\left|
\frac{\partial^2 g_j^t(\mathbf h^t)}{\partial h_j^t\partial h_i^t}
\right|
\right]
\label{eq:offdiag_abs_step_with_t}
\end{align}
Applying \eqref{eq:cond_q_g} to each term yields
\begin{align}
|S^t_{ij}(\mathbf h^t)|
&\le
\frac{1}{2}
\Big(
(1+\beta_i)G+(1+\beta_j)G
\Big)
\label{eq:offdiag_upper_bound_general_with_t}
\end{align}

\medskip

\paragraph{\textbf{Positive definiteness of \(S^t(\mathbf h^t)\)}}

Summing \eqref{eq:offdiag_upper_bound_general_with_t} over all \(j\neq i\), we obtain
\begin{align}
\sum_{j\neq i}|S^t_{ij}(\mathbf h^t)|
&\le
\frac{1}{2}
\sum_{j\neq i}
\Big(
(1+\beta_i)G+(1+\beta_j)G
\Big)
\label{eq:row_offdiag_bound_general_with_t}
\end{align}
Subtracting \eqref{eq:row_offdiag_bound_general_with_t} from \eqref{eq:diag_lower_bound_general_with_t}, we get
\begin{align}
&S^t_{ii}(\mathbf h^t)
-
\sum_{j\neq i}|S^t_{ij}(\mathbf h^t)|
\ge
(1+\beta_i)(Q
-G)
-
\frac{1}{2}
\sum_{j\neq i}
\Big(
(1+\beta_i)G+(1+\beta_j)G
\Big)
\label{eq:diag_minus_offdiag_general_with_t}
\end{align}
For any $i \in \pazocal N$, we simplify the right-hand side of \eqref{eq:diag_minus_offdiag_general_with_t} as follows:
\begin{align}
&(1+\beta_i)(Q-G)
-
\frac{1}{2}
\sum_{j\neq i}
\Big(
(1+\beta_i)G+(1+\beta_j)G
\Big)
\nonumber\\
&=
(1+\beta_i)Q
-(1+\beta_i)G
\nonumber\\&-\frac{N-1}{2}(1+\beta_i)G
-\frac12\sum_{j\neq i}(1+\beta_j)G
\nonumber\\
&=
(1+\beta_i)Q
-\left(
1+\frac{N-1}{2}
\right)(1+\beta_i)G
\nonumber\\&-\frac12\sum_{j\neq i}(1+\beta_j)G
\nonumber\\
&=
(1+\beta_i)Q
-\frac{N+1}{2}(1+\beta_i)G
-\frac12\sum_{j\neq i}(1+\beta_j)G
\nonumber\\
&=
(1+\beta_i)Q
-\frac{N+1}{2}(1+\beta_i)G
-\frac12\sum_{j\neq i}(1+\beta_j)G
\nonumber\\&+\frac12(1+\beta_i)G
-\frac12(1+\beta_i)G
\nonumber\\
&=
(1+\beta_i)Q
-\frac{N}{2}(1+\beta_i)G
-\frac12\sum_{j\in\pazocal N}(1+\beta_j)G
\nonumber\\
&=
(1+\beta_i)Q
-\frac{N}{2}(1+\beta_i)G
-\frac12\left(N+\sum_{j\in\pazocal N}\beta_j\right)G
\nonumber\\
&=
(1+\beta_i)Q
-
\frac{
2N+N\beta_i+\sum_{j\in\pazocal N}\beta_j
}{2}\,G
\nonumber\\
&=
(1+\beta_i)\bigl(Q-\xi_i G\bigr),
\label{eq:xi_algebra_general_with_t}
\end{align}
where
\(
\xi_i
=
\frac{
2N+N\beta_i+\sum_{j\in\pazocal N}\beta_j
}{
2(1+\beta_i)
}
\).
By \eqref{eq:xi_algebra_general_with_t}, the right-hand side of
\eqref{eq:diag_minus_offdiag_general_with_t} is equal to
\((1+\beta_i)(Q-\xi_iG)\). Hence, \eqref{eq:diag_minus_offdiag_general_with_t} becomes
\begin{equation}
S^t_{ii}(\mathbf h^t)
-
\sum_{j\neq i}|S^t_{ij}(\mathbf h^t)|
\ge
(1+\beta_i)(Q-\xi_iG)
\label{eq:sii_sij_q_g}
\end{equation}
Since \(Q-\xi_iG>0\) by \eqref{eq:ass_q_g} and \(1+\beta_i>0\), we obtain
\begin{equation}
S^t_{ii}(\mathbf h^t)
-
\sum_{j\neq i}|S^t_{ij}(\mathbf h^t)|
>0
\end{equation}
Therefore,
\begin{equation}
S^t_{ii}(\mathbf h^t)
>
\sum_{j\neq i}|S^t_{ij}(\mathbf h^t)|
\label{eq:pos}
\end{equation}
Since the off-diagonal absolute-value sum is nonnegative, \eqref{eq:pos} also implies
\(S^t_{ii}(\mathbf h^t)>0\). Hence
\(
|S^t_{ii}(\mathbf h^t)|=S^t_{ii}(\mathbf h^t),
\)
and, by \eqref{eq:pos},
\(
|S^t_{ii}(\mathbf h^t)|
>
\sum_{j\neq i}|S^t_{ij}(\mathbf h^t)|
\)
Therefore, for each \(t\in\pazocal T\) and each \(\mathbf h\in X(C)\), the symmetric matrix
\(S^t(\mathbf h^t)\) is strictly row diagonally dominant in the sense of
\citep[Theorem~8.6(1)]{gallier2022applications}.

Since \(S^t(\mathbf h^t)\) is strictly row diagonally dominant and has
\(S^t_{ii}(\mathbf h^t)>0\) for all \(i\), \citep[Theorem~8.6(2)]{gallier2022applications}
implies that all eigenvalues of the symmetric matrix \(S^t(\mathbf h^t)\) are strictly positive.
Consequently, \(S^t(\mathbf h^t)\) is positive definite for every \(t\in\pazocal T\) and every \(\mathbf h\in X(C)\)~\citep[Theorem~2.4]{agosti2005theoretical}, i.e., $S^t(\mathbf h^t)\succ0$.

\paragraph{\textbf{Uniform lower bound of eigenvalues}}

Define
\begin{align}
m
:=
\min_{i\in\pazocal N}
\left\{
(1+\beta_i)\bigl(Q-\xi_i G\bigr)
\right\}
\label{eq:m_general_vector_with_t}
\end{align}
By \eqref{eq:ass_q_g}, we have \(m>0\).

We now show that every eigenvalue of \(S^t(\mathbf h^t)\) is bounded below by \(m\). Let \(\lambda\) be any eigenvalue of \(S^t(\mathbf h^t)\). Indeed, \citep[Lemma~2]{luo2022directional} implies that every eigenvalue \(\lambda\) of the symmetric matrix \(S^t(\mathbf h^t)\) belongs to at least one interval of the form
\begin{align*}
\Bigg[
S^t_{ii}(\mathbf h^t)-\sum_{j\neq i}|S^t_{ij}(\mathbf h^t)|,\,
S^t_{ii}(\mathbf h^t)+\sum_{j\neq i}|S^t_{ij}(\mathbf h^t)|
\Bigg]
\end{align*}
for some \(i\in\pazocal N\).

By \eqref{eq:sii_sij_q_g} and the definition of \(m\) in \eqref{eq:m_general_vector_with_t}, the left endpoint of each such interval is at least \(m\). Therefore, every eigenvalue of \(S^t(\mathbf h^t)\) satisfies
\begin{equation}
    \lambda \ge m
    \label{eq:lambda_m}
\end{equation}
Let
\(v\in\mathbb R^{|\pazocal N|}\setminus\{0\}\)
be an associated eigenvector of \(\lambda\). Then
\(
S^t(\mathbf h^t)v=\lambda v
\)
Consequently,
\(
\bigl(S^t(\mathbf h^t)-mI\bigr)v
=
(\lambda-m)v
\)
Therefore, the eigenvalues of
\(
S^t(\mathbf h^t)-mI
\)
are \(\lambda-m\).
By~\eqref{eq:lambda_m}, every eigenvalue of
\(
S^t(\mathbf h^t)-mI
\)
is of the form \(\lambda-m\ge 0\). Hence,
\(
S^t(\mathbf h^t)-mI \succeq 0,
\)
that is,
\begin{equation}
S^t(\mathbf h^t)\succeq mI,
\qquad
\forall t\in\pazocal T,\ \forall \mathbf h\in X(C).
\label{eq:uniform_psd_bound_general_with_t}
\end{equation}

\paragraph{\textbf{Strong monotonicity of the full coupling term}}
For each \(t\in\pazocal T\), 
since each \(M^t(\cdot)\)~\eqref{eq:m_t} is \(C^1\) on \(\mathbb{R}_+^{|\pazocal N|}\) and satisfies
\(S^t(\mathbf h^t)\succeq mI\) for all \(t\in\pazocal T\) and
\(\mathbf h\in X(C)\) by \eqref{eq:uniform_psd_bound_general_with_t},
the Jacobian criterion for strong monotonicity
\citep[Prop.~3(a)]{parise2019variational} and
\citep[Prop.~2.3.2(c)]{facchinei2003finitebook} yields 
\begin{equation}
\Big(
M^t(\mathbf h^t)-M^t(\mathbf h^{t\prime})
\Big)^\top
(\mathbf h^t-\mathbf h^{t\prime})
\ge
m\|\mathbf h^t-\mathbf h^{t\prime}\|^2,
\qquad \forall t\in\pazocal T
\label{eq:strong_monotonicity_each_t}
\end{equation}
Condition~\eqref{eq:strong_monotonicity_each_t} is the standard definition of
strong monotonicity, following \citep[Def.~2.3.1 (e)]{facchinei2003finitebook}.

Summing \eqref{eq:strong_monotonicity_each_t} over all
\(t\in\pazocal T\), we obtain
\begin{align}
&\sum_{t\in\pazocal T}
\Big(
M^t(\mathbf h^t)-M^t(\mathbf h^{t\prime})
\Big)^\top
(\mathbf h^t-\mathbf h^{t\prime})
\nonumber\\
&\qquad \ge
m\sum_{t\in\pazocal T}\|\mathbf h^t-\mathbf h^{t\prime}\|^2
=
m\|\mathbf{H}-\mathbf{H}'\|^2,
\qquad \forall \mathbf{H},\mathbf{H}'\in X(C)
\label{eq:strong_monotonicity_vector_form_with_t}
\end{align}

Expanding this inequality componentwise, we obtain
\begin{align}
&\sum_{i\in\pazocal N}\sum_{t\in\pazocal T}
(1+\beta_i)
\left(
\frac{\partial \psi_i^t(\mathbf h^t)}{\partial h_i^t}
-
\frac{\partial \psi_i^t(\mathbf h^{t\prime})}{\partial h_i^t}
\right)
(h_i^t-h_i^{t\prime})
\nonumber\\
&\qquad \ge
m\|\mathbf{H}-\mathbf{H}'\|^2,
\qquad \forall \mathbf{H},\mathbf{H}'\in X(C).
\label{eq:strong_monotonicity_general_vector_final_with_t}
\end{align}

Therefore, the coupling term~\eqref{eq:coupled_term} is strongly monotone on \(X(C)\).
\end{proof}
We now show uniqueness.

\begin{lemma}
\label{lem:ve_unique}
    VE is unique.
\end{lemma}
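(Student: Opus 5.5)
The plan is to show that the pseudo-gradient $\mathcal G(\cdot;\theta)$ in \eqref{eq:pseudo_grad} is strongly monotone on $X(C)$, and then invoke \citep[Theorem~2.3.3(b)]{facchinei2003finitebook}. Existence of a solution of $\mathrm{VI}(X(C),\mathcal G(\cdot;\theta))$ is already given by Lemma~\ref{lem:ve_exists}, so only "at most one solution" is needed.

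\textbf{Splitting the pseudo-gradient.} By the decomposition \eqref{eq:Ji_final_decomp}, the $(i,t)$-component of $\mathcal G$ is $\partial P_i/\partial h_i^t=(\phi_i^t)'(h_i^t)+M_i^t(\mathbf h^t)$. Here $M_i^t=(1+\beta_i)\,\partial\psi_i^t/\partial h_i^t$ as in \eqref{eq:F_i_general_with_t}. Differentiability of $\phi_i^t$ follows from Assumption~\ref{ass:fct_c1}, Lemma~\ref{lemma:cvar} and the argument in Lemma~\ref{lem:F_cont}. For $\mathbf H,\mathbf H'\in X(C)$ I then write
\[
\langle \mathcal G(\mathbf H;\theta)-\mathcal G(\mathbf H';\theta),\mathbf H-\mathbf H'\rangle
=\sum_{i,t}\bigl((\phi_i^t)'(h_i^t)-(\phi_i^t)'(h_i^{t\prime})\bigr)(h_i^t-h_i^{t\prime})
+\sum_t \bigl(M^t(\mathbf h^t)-M^t(\mathbf h^{t\prime})\bigr)^\top(\mathbf h^t-\mathbf h^{t\prime}).
\]

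\textbf{Bounding the two sums.} By Lemma~\ref{lemma:phi_convex}, each $\phi_i^t$ is convex and differentiable, so its derivative is monotone \citep[Prop.~3.17]{penot2013calculus}. Hence every summand of the first sum is nonnegative. By Lemma~\ref{lemma:congestion_coupled_term}, specifically \eqref{eq:strong_monotonicity_vector_form_with_t}, the second sum is at least $m\|\mathbf H-\mathbf H'\|^2$ with $m>0$. Together these give
\[
\langle \mathcal G(\mathbf H;\theta)-\mathcal G(\mathbf H';\theta),\mathbf H-\mathbf H'\rangle\ge m\|\mathbf H-\mathbf H'\|^2,
\]
which is strong monotonicity of $\mathcal G(\cdot;\theta)$.

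\textbf{Concluding uniqueness.} I will also give the short direct argument instead of relying only on the citation. Let $\mathbf H^*$ and $\mathbf H^{**}$ both solve \eqref{eq:VI}. Test the inequality for $\mathbf H^*$ at $\mathbf H=\mathbf H^{**}$, and the inequality for $\mathbf H^{**}$ at $\mathbf H=\mathbf H^*$. Adding the two gives $\langle\mathcal G(\mathbf H^*)-\mathcal G(\mathbf H^{**}),\mathbf H^*-\mathbf H^{**}\rangle\le 0$. Strong monotonicity then forces $m\|\mathbf H^*-\mathbf H^{**}\|^2\le0$, so $\mathbf H^*=\mathbf H^{**}$.

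\textbf{Expected obstacle.} The delicate step is the monotonicity of $(\phi_i^t)'$. It needs $\phi_i^t$ to be differentiable on the relevant domain, including the boundary $h_i^t=0$, where a one-sided derivative has to be used. I would handle this through the $\mathcal C^1$ property on the open set $\pazocal O$ from Lemma~\ref{lemma:cvar}, which covers the boundary points. The rest is bookkeeping.
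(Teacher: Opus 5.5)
Your proposal is correct and matches the paper's proof essentially step for step. You use the same split of $\langle \mathcal G(\mathbf H;\theta)-\mathcal G(\mathbf H';\theta),\mathbf H-\mathbf H'\rangle$ into a $(\phi_i^t)'$ part, which is nonnegative by Lemma~\ref{lemma:phi_convex}, and an $M^t$ part, which is bounded below by $m\|\mathbf H-\mathbf H'\|^2$ via Lemma~\ref{lemma:congestion_coupled_term}, and you conclude with \citep[Theorem~2.3.3(b)]{facchinei2003finitebook}; your added argument of summing the two variational inequalities just unpacks that cited theorem (and in fact needs only strict monotonicity).
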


\begin{proof}
By Assumption~\ref{ass:fct_c1} and Lemma~\ref{lemma:cvar}, for every
\(i\in\pazocal N\) and \(t\in\pazocal T\), the mappings
\(h_i^t\mapsto \phi_i^t(h_i^t)\) \eqref{eq:phi_def} and
\(\mathbf h^t\mapsto \psi_i^t(\mathbf h^t)\) are continuously differentiable.
From \eqref{eq:Ji_final_decomp}, for each player \(i\), we have
\[
P_i(\mathbf h_i, \mathbf h_{-i};\theta)
=
\sum_{t\in\pazocal T}
\left(
\phi_i^t(h_i^t)+(1+\beta_i)\psi_i^t(\mathbf h^t)
\right)
\]
Differentiating with respect to
\(\mathbf h_i=(h_i^t)_{t\in\pazocal T}\) gives
\begin{align}
\nabla_{\mathbf h_i}P_i(\mathbf h_i, \mathbf h_{-i};\theta)
&=
\left(
\frac{\partial P_i(\mathbf h_i, \mathbf h_{-i};\theta)}{\partial h_i^t}
\right)_{t\in\pazocal T}
\nonumber\\
&=
\left(
(\phi_i^t)'(h_i^t)
+
(1+\beta_i)\frac{\partial \psi_i^t(\mathbf h^t)}{\partial h_i^t}
\right)_{t\in\pazocal T}
\label{eq:block_gradient}
\end{align}

Let $\mathbf{H},\mathbf{H}'\in X(C)$ \eqref{eq:x_joint_feasible}. Using the definition of the pseudo-gradient $\mathcal{G}(\cdot; \theta)$ \eqref{eq:pseudo_grad}
we obtain
\begin{align}
&\left\langle
\mathcal G(\mathbf{H};\theta)-\mathcal G(\mathbf{H}';\theta),
\mathbf{H}-\mathbf{H}'
\right\rangle
\nonumber\\
&=
\sum_{i\in\pazocal N}
\left\langle
\nabla_{\mathbf h_i}P_i(\mathbf h_i, \mathbf h_{-i};\theta)-\nabla_{\mathbf h_i}P_i(\mathbf{H}';\theta),
\mathbf h_i-\mathbf h_i'
\right\rangle
\nonumber\\
&\overset{\eqref{eq:block_gradient}}{=} \sum_{i\in\pazocal N}\sum_{t\in\pazocal T}
\left[
(\phi_i^t)'(h_i^t)
+
(1+\beta_i)\frac{\partial \psi_i^t(\mathbf h^t)}{\partial h_i^t}
\right. \nonumber \\
&\quad \left.
-
(\phi_i^t)'(h_i^{t\prime})
-
(1+\beta_i)\frac{\partial \psi_i^t(\mathbf h^{t\prime})}{\partial h_i^t}
\right]
(h_i^t-h_i^{t\prime})
\nonumber\\
&=
\sum_{i\in\pazocal N}\sum_{t\in\pazocal T}
\Big(
(\phi_i^t)'(h_i^t)-(\phi_i^t)'(h_i^{t\prime})
\Big)
(h_i^t-h_i^{t\prime})
\nonumber\\
&\qquad
+
\sum_{i\in\pazocal N}\sum_{t\in\pazocal T}
(1+\beta_i)
\left(
\frac{\partial \psi_i^t(\mathbf h^t)}{\partial h_i^t}
-
\frac{\partial \psi_i^t(\mathbf h^{t\prime})}{\partial h_i^t}
\right)
(h_i^t-h_i^{t\prime})
\label{eq:split_main}
\end{align}

We now analyze the two sums separately.

\medskip
\noindent
\textit{First term.}
Since \(\phi_i^t\) defined in~\eqref{eq:phi_def} is convex by
Lemma~\ref{lemma:phi_convex} and differentiable by Assumption~\ref{ass:fct_c1} and
Lemma~\ref{lemma:cvar}, its derivative is nondecreasing on the feasible interval \([0,C]\)~\citep[Prop. 3.17]{penot2013calculus}.
Therefore, for every \(i\in\pazocal N\), \(t\in\pazocal T\), and
\(h_i^t,h_i^{t\prime}\) in \([0,C]\),
\begin{equation}
\Big(
(\phi_i^t)'(h_i^t)-(\phi_i^t)'(h_i^{t\prime})
\Big)
(h_i^t-h_i^{t\prime})
\ge 0
\label{eq:phi_monotone}
\end{equation}
Summing \eqref{eq:phi_monotone} over all $i$ and $t$ yields
\begin{equation}
\sum_{i\in\pazocal N}\sum_{t\in\pazocal T}
\Big(
(\phi_i^t)'(h_i^t)-(\phi_i^t)'(h_i^{t\prime})
\Big)
(h_i^t-h_i^{t\prime})
\ge 0
\label{eq:first_term_nonneg}
\end{equation}

\medskip
\noindent
\textit{Second term.}
By Lemma~\ref{lemma:congestion_coupled_term}, the coupling term is strongly monotone, i.e., 
\begin{align}
&\sum_{i\in\pazocal N}\sum_{t\in\pazocal T}
(1+\beta_i)
\left(
\frac{\partial \psi_i^t(\mathbf h^t)}{\partial h_i^t}
-
\frac{\partial \psi_i^t(\mathbf h'^t)}{\partial h_i^t}
\right)
(h_i^t-h_i^{t\prime})
\nonumber \\&\ge
m\|\mathbf{H}-\mathbf{H}'\|^2
\label{eq:second_term_bound}
\end{align}
Finally, substituting \eqref{eq:first_term_nonneg} and
\eqref{eq:second_term_bound} into \eqref{eq:split_main}, we obtain
\begin{align}
\left\langle
\mathcal G(\mathbf{H};\theta)-\mathcal G(\mathbf{H}';\theta),
\mathbf{H}-\mathbf{H}'
\right\rangle
&\ge
0
+
m\|\mathbf{H}-\mathbf{H}'\|^2
\nonumber\\
&=
m\|\mathbf{H}-\mathbf{H}'\|^2
\label{eq:G_str_monotone}
\end{align}
Hence, \(\mathcal G(\mathbf{H};\theta)\) is strongly monotone on \(X(C)\) (also called \(2\) monotone). \eqref{eq:G_str_monotone} is the standard definition of
strong monotonicity, following \citep[Def.~2.3.1 (e)]{facchinei2003finitebook}.
Therefore, by \citep[Theorem~2.3.3 (b)]{facchinei2003finitebook}, the variational
inequality problem \(\mathrm{VI}(X(C),\mathcal G(\cdot;\theta))\)~\eqref{eq:VI} admits a unique
solution. Thus, the follower game admits a unique VE.
\end{proof}

\section{Discussion on the uniqueness condition}\label{appendix:condition_unique}
Condition \eqref{eq:ass_q_g} is used to ensure that, for each leader decision \((C,\theta)\), the selected follower response is unique. This condition is sufficient and is not required for the existence of a follower GNE. Indeed, even when Condition \eqref{eq:ass_q_g} is not satisfied, the follower game may still admit a GNE by Proposition~\ref{prop:ve_gne_existence}. However, without uniqueness, the follower response to a given \((C,\theta)\) may no longer be single-valued. In other words, the same capacity–price pair may lead to several possible follower equilibria. In that case, the InP cannot predict a unique vector of firm resource commitments from \((C,\theta)\) alone. The leader problem then becomes an equilibrium-selection problem, because the InP profit may differ across the possible follower equilibria. This does not invalidate the follower problem~\S~\ref{sec:follower_problem}, but it changes the interpretation of the upper-level problem. One possible formulation is an optimistic leader problem, in which the InP assumes that, among the possible follower equilibria, the one giving the highest InP profit is selected. Another possible formulation is a conservative leader problem, in which the InP evaluates its decision under the follower equilibrium giving the lowest InP profit (see \citep{bacsar1998dynamic}). 

\section{Step-by-Step Proof of Proposition~\ref{prop:continuity}}
\label{appendix:continuity}
This proof explores special structure of the feasible set $X(C)$~\eqref{eq:x_joint_feasible}. We decompose the proof into two lemmas.
\begin{lemma}\label{lem:feasible_set_structure}
Let $\{(C_n,\theta_n)\}_{n\ge1}\subset\pazocal Y$~\eqref{eq:leader_set} be any sequence such that
\(
(C_n,\theta_n)\to (C,\theta)\in\pazocal Y
\). Define
\(
\mathbf{H}_n := \mathbf{H}^*(C_n,\theta_n), n\ge1.
\)
Then the sequence $\{\mathbf H_n\}$ is bounded. Moreover, for any convergent subsequence $\mathbf H_{n_k}\to\overline{\mathbf H}$, its limit satisfies $\overline{\mathbf H}\in X(C)$, and for every $\mathbf Y\in X(C)$ there exists a sequence $\mathbf Y_k\in X(C_{n_k})$ such that $\mathbf Y_k\to \mathbf Y$.
\end{lemma}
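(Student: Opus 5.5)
The proof has three parts, each using only the structure of $X(C)$ in \eqref{eq:x_joint_feasible}: boundedness, closedness of the limit, and inner approximation.

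\emph{Boundedness.} Since $(C_n,\theta_n)\in\pazocal Y$, we have $0\le C_n\le\overline C$ for all $n$; alternatively, $C_n\to C$ makes $\{C_n\}$ bounded by \citep[Theorem~3.2]{rudin1976principles}. By definition, $\mathbf H_n=\mathbf H^*(C_n,\theta_n)$ solves $\mathrm{VI}(X(C_n),\mathcal G(\cdot;\theta_n))$, so $\mathbf H_n\in X(C_n)$. Nonnegativity then gives, for every $i,t$,
\[
0\le h_{i,n}^t\le\sum_{j\in\pazocal N}h_{j,n}^t\le C_n\le\overline C .
\]
Hence $\|\mathbf H_n\|_\infty\le\overline C$, and $\{\mathbf H_n\}$ lies in a fixed compact box of $\mathbb R^{NT}$.

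\emph{Limit lies in $X(C)$.} Let $\mathbf H_{n_k}\to\overline{\mathbf H}$. Each $\mathbf H_{n_k}$ satisfies the finitely many inequalities
\[
h_{i,n_k}^t\ge0 \quad\text{and}\quad \sum_{j\in\pazocal N}h_{j,n_k}^t\le C_{n_k}.
\]
These are continuous (indeed linear) in $(\mathbf H,C)$, and weak inequalities are preserved under limits. Passing to the limit with $C_{n_k}\to C$ gives $\bar h_i^t\ge0$ and $\sum_j\bar h_j^t\le C$ for all $i,t$, so $\overline{\mathbf H}\in X(C)$.

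\emph{Inner approximation.} This step needs a small case split. Fix $\mathbf Y\in X(C)$.

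\textbf{Case $C=0$.} Then $X(C)=\{\mathbf 0\}$, so $\mathbf Y=\mathbf 0$. Take $\mathbf Y_k=\mathbf 0$, which lies in $X(C_{n_k})$ because $C_{n_k}\ge0$.

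\textbf{Case $C>0$.} Take $\mathbf Y_k:=(C_{n_k}/C)\mathbf Y$. It is nonnegative because $C_{n_k}\ge0$, and its time-slot sums are $(C_{n_k}/C)\sum_j y_j^t\le C_{n_k}$, so $\mathbf Y_k\in X(C_{n_k})$. Moreover,
\[
\|\mathbf Y_k-\mathbf Y\|=\frac{|C_{n_k}-C|}{C}\,\|\mathbf Y\|\to0 .
\]

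The only delicate point is handling $C=0$ separately to avoid dividing by zero. Everything else follows from linearity of the constraints, so I expect no real obstacle. Together, the three parts establish the lemma, namely that $X(\cdot)$ is a continuous set-valued map in the Painlevé–Kuratowski sense along the sequence.
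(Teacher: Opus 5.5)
Your proposal is correct and follows the paper's proof essentially step for step. It bounds $\{\mathbf H_n\}$ via $0\le h_{i,n}^t\le C_n\le\overline C$, passes to the limit in the linear constraints to get $\overline{\mathbf H}\in X(C)$, and builds the inner approximation $\mathbf Y_k=(C_{n_k}/C)\mathbf Y$ with $C=0$ handled separately, exactly as the paper does. The differences are cosmetic: you also note that the bound on $C_n$ follows directly from membership in $\pazocal Y$, and you do not extract a convergent subsequence via Bolzano--Weierstrass (the paper does). Both are fine, since the statement only concerns an arbitrary given convergent subsequence.
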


\begin{proof}

Since $(C_n,\theta_n)\to(C,\theta)$, the sequence $\{C_n\}$ is bounded~\citep[Theorem 3.2]{rudin1976principles}.
Hence there exists a constant $\overline C>0$ such that
\[
0\le C_n\le \overline C,
\qquad \forall n
\]
Because $\mathbf{H}_n\in X(C_n)$, we have for every $i\in\pazocal N$ and
$t\in\pazocal T$,
\[
0\le h_{i,n}^t \le \sum_{j\in\pazocal N} h_{j,n}^t \le C_n \le \overline C
\]
Therefore,
\(
\mathbf{H}_n\in [0,\overline C]^{NT}, \forall n,
\)
and thus the sequence $\{\mathbf{H}_n\}$ is bounded in~$\mathbb R^{NT}$.

Since bounded sequences in finite-dimensional spaces admit convergent
subsequences~\citep[Lemma 1.4]{sasane2017friendly}, there exists a subsequence $\{\mathbf{H}_{n_k}\}_{k\ge1}$ and a
vector $\overline{\mathbf{H}}\in\mathbb R^{NT}$ such that
\begin{equation}
\mathbf{H}_{n_k}\to \overline{\mathbf{H}}
\qquad \text{as } k\to\infty
\label{eq:subseq_conv}
\end{equation}
Since $\mathbf{H}_{n_k}\in X(C_{n_k})$, we have for every $t\in\pazocal T$,
\[
\sum_{j\in\pazocal N} h_{j,n_k}^t \le C_{n_k},
\qquad
h_{i,n_k}^t\ge 0,\ \forall i\in\pazocal N
\]
Passing to the limit as $k\to\infty$, using
\eqref{eq:subseq_conv} and the fact that $C_{n_k}\to C$, we obtain
\[
\sum_{j\in\pazocal N} \overline h_j^t \le C,
\qquad
\overline h_i^t\ge 0,\ \forall i\in\pazocal N.
\]
Thus
\begin{equation}
\overline{\mathbf{H}}\in X(C)
\label{eq:hbar_feasible}
\end{equation}
Let $\mathbf Y\in X(C)$ be arbitrary. We show that there exists a sequence
$\mathbf Y_k\in X(C_{n_k})$ such that
\(
\mathbf Y_k\to \mathbf Y
\).

If $C=0$, then necessarily $X(C)=X(0)=\{\mathbf 0\}$, hence
$\mathbf Y=\mathbf 0$, and we may simply take $\mathbf Y_k=\mathbf 0$ for all
$k$. Then clearly $\mathbf Y_k\in X(C_{n_k})$ and
$\mathbf Y_k\to \mathbf Y$.

Assume now that $C>0$. Define
\begin{equation}
\mathbf Y_k := \frac{C_{n_k}}{C}\,\mathbf Y
\label{eq:y_k_def}
\end{equation}
Since $\mathbf Y\in X(C)$, we have $\mathbf Y\ge 0$ and
\(
\sum_{j\in\pazocal N} y_j^t \le C,
\quad \forall t\in\pazocal T
\).
Therefore, for every $t\in\pazocal T$,
\[
\sum_{j\in\pazocal N} y_{j,k}^t
=
\frac{C_{n_k}}{C}\sum_{j\in\pazocal N} y_j^t
\le
\frac{C_{n_k}}{C}\,C
=
C_{n_k},
\]
and clearly $y_{i,k}^t\ge 0$ for all $i,t$. Hence
\begin{equation}
\mathbf Y_k\in X(C_{n_k}),
\qquad \forall k
\label{eq:y_k_feasible}
\end{equation}
Moreover, since $C_{n_k}\to C$, from \eqref{eq:y_k_def} we obtain
\begin{equation}
\mathbf Y_k\to \mathbf Y
\label{eq:y_k_conv}
\end{equation}
This completes the proof of Lemma~\ref{lem:feasible_set_structure}.
\end{proof}
\begin{lemma}\label{lem:limit_vi}
Under the setting of Lemma~\ref{lem:feasible_set_structure}, let
$\{\mathbf H_{n_k}\}_{k\ge1}$ be any convergent subsequence such that
\(
\mathbf H_{n_k}\to \overline{\mathbf H}.
\)
Then
\(
\overline{\mathbf H}=\mathbf H^*(C,\theta).
\)
Consequently,
\(
\mathbf H_n\to \mathbf H^*(C,\theta),
\)
and the map $(C,\theta)\mapsto \mathbf H^*(C,\theta)$ is continuous on
$\pazocal Y$.
\end{lemma}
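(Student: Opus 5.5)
The plan is to pass to the limit in the variational inequalities satisfied along the subsequence, then use uniqueness from Theorem~\ref{thm:gne_uniqueness} to identify the limit. A standard subsequence argument then upgrades this to convergence of the whole sequence.

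\emph{Limit VI.} Since \(\mathbf H_{n_k}=\mathbf H^*(C_{n_k},\theta_{n_k})\) solves \(\mathrm{VI}(X(C_{n_k}),\mathcal G(\cdot;\theta_{n_k}))\), we have
\(\langle \mathcal G(\mathbf H_{n_k};\theta_{n_k}),\mathbf Z-\mathbf H_{n_k}\rangle\ge 0\) for all \(\mathbf Z\in X(C_{n_k})\). Fix an arbitrary \(\mathbf Y\in X(C)\), and take the approximating sequence \(\mathbf Y_k\in X(C_{n_k})\) with \(\mathbf Y_k\to\mathbf Y\) provided by Lemma~\ref{lem:feasible_set_structure}. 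Plugging in \(\mathbf Z=\mathbf Y_k\) gives \(\langle \mathcal G(\mathbf H_{n_k};\theta_{n_k}),\mathbf Y_k-\mathbf H_{n_k}\rangle\ge0\).

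\emph{Continuity of \(\mathcal G\).} To pass to the limit, I need \(\mathcal G(\mathbf H_{n_k};\theta_{n_k})\to\mathcal G(\overline{\mathbf H};\theta)\). Lemma~\ref{lem:F_cont} gives joint continuity on \(X(C')\times[0,\overline\theta]\) for each fixed \(C'\). All iterates and the limit lie in \(X(\overline C)\), because \(C_n\le\overline C\) and \(\overline{\mathbf H}\in X(C)\subseteq X(\overline C)\). I therefore apply Lemma~\ref{lem:F_cont} with \(C'=\overline C\). This step is the main point requiring care: the approximating points must all sit in a single set on which the lemma guarantees continuity, and using the enclosing set \(X(\overline C)\subseteq\mathcal K\) resolves this.

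\emph{Identification of the limit.} Continuity of the inner product then yields \(\langle\mathcal G(\overline{\mathbf H};\theta),\mathbf Y-\overline{\mathbf H}\rangle\ge0\) for every \(\mathbf Y\in X(C)\). Together with \(\overline{\mathbf H}\in X(C)\), this shows \(\overline{\mathbf H}\) solves \(\mathrm{VI}(X(C),\mathcal G(\cdot;\theta))\). By Theorem~\ref{thm:gne_uniqueness} (Lemma~\ref{lem:ve_unique}) this VI has exactly one solution, so \(\overline{\mathbf H}=\mathbf H^*(C,\theta)\).

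\emph{Convergence of the whole sequence.} Suppose \(\mathbf H_n\not\to\mathbf H^*(C,\theta)\). Then there are \(\epsilon>0\) and a subsequence staying at distance at least \(\epsilon\) from \(\mathbf H^*(C,\theta)\). That subsequence is bounded by Lemma~\ref{lem:feasible_set_structure}, so it has a further convergent subsequence. By the identification step its limit must be \(\mathbf H^*(C,\theta)\), which is a contradiction. Hence \(\mathbf H_n\to\mathbf H^*(C,\theta)\).

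Since the sequence \((C_n,\theta_n)\to(C,\theta)\) in \(\pazocal Y\) was arbitrary, sequential continuity holds. This is equivalent to continuity on the metric space \(\pazocal Y\).
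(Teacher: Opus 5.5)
Your proposal is correct and follows the paper's proof essentially step for step: you pass to the limit in the VI using the approximating sequence $\mathbf Y_k\in X(C_{n_k})$, invoke continuity of $\mathcal G$ on the common set $X(\overline C)\times[0,\overline\theta]$, identify the limit through uniqueness of the VE, and conclude with the subsequence argument. Your explicit contradiction argument for convergence of the whole sequence spells out what the paper states in one line. One notational slip: $X(\overline C)\subseteq\mathcal K$ should read that each slot vector $\mathbf h^t$ of an element of $X(\overline C)$ lies in $\mathcal K$.
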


\begin{proof}
By Lemma~\ref{lem:feasible_set_structure}, we have
\[
\overline{\mathbf H}\in X(C).
\]

Fix an arbitrary $\mathbf Y\in X(C)$. By Lemma~\ref{lem:feasible_set_structure}, there exists a sequence
$\mathbf Y_k\in X(C_{n_k})$ such that
\(
\mathbf Y_k\to \mathbf Y
\).

Since $\mathbf Y_k\in X(C_{n_k})$, the variational inequality
\eqref{eq:VI} applied at index $n_k$ gives
\begin{equation}
\left\langle
\mathcal G(\mathbf{H}_{n_k};\theta_{n_k}),\,
\mathbf Y_k-\mathbf{H}_{n_k}
\right\rangle
\ge 0,
\qquad \forall k.
\label{eq:vi_subseq}
\end{equation}

By Lemma~\ref{lem:F_cont} applied on the common set
\(X(\overline C)\times[0,\overline \theta]\), we obtain
\[
\mathcal G(\mathbf H_{n_k};\theta_{n_k})
=
\mathcal G(\mathbf H_{n_k};\theta_{n_k})
\to
\mathcal G(\overline{\mathbf H};\theta)
=
\mathcal G(\overline{\mathbf H};\theta).
\]

Therefore, taking the limit in \eqref{eq:vi_subseq}, we obtain by \citep[Lemma 3.2.2 (Continuity of inner product)]{kreyszig1991introductory}
\[
\left\langle
\mathcal G(\overline{\mathbf{H}};\theta),\,
\mathbf Y-\overline{\mathbf{H}}
\right\rangle
\ge 0,
\qquad \forall \mathbf Y\in X(C).
\]
Hence $\overline{\mathbf{H}}$ solves the variational inequality
\(
\mathrm{VI}(X(C),\mathcal G(\cdot;\theta)).
\)

By Lemma~\ref{lem:ve_unique}, the variational inequality
$\mathrm{VI}(X(C),\mathcal G(\cdot;\theta))$ admits a unique solution,
namely $\mathbf{H}^*(C,\theta)$. Since $\overline{\mathbf{H}}$ is also a
solution, it follows that
\begin{equation}
\overline{\mathbf{H}}=\mathbf{H}^*(C,\theta).
\label{eq:hbar_equals_hstar}
\end{equation}

Thus every convergent subsequence of $\{\mathbf{H}_n\}$ converges to the same
limit $\mathbf{H}^*(C,\theta)$. Since the original sequence $\{\mathbf{H}_n\}$
is bounded, this implies that the whole sequence converges to
$\mathbf{H}^*(C,\theta)$, i.e.,
\[
\mathbf{H}_n \to \mathbf{H}^*(C,\theta).
\]

Because the sequence $\{(C_n,\theta_n)\}$ was arbitrary, the map
$(C,\theta)\mapsto \mathbf{H}^*(C,\theta)$ is continuous on $\pazocal Y$. This completes the proof of Lemma~\ref{lem:limit_vi}.
\end{proof}

\section{Step-by-Step Proof of Theorem~\ref{thm:global_bound_pi}}
\label{appendix:global_bound_pi}
We first derive the lower bound \(\nu_i\). 

For player \(i\), the realized profit~\eqref{eq:pi_star} is positive if and only if
\begin{equation}
\Pi_{i,\omega}^* > 0
\quad\Longleftrightarrow\quad
R_{i,\omega}^* > K_i^*
\label{eq:equivalence_pi_r_direct}
\end{equation}

We are interested in the Probability of Profit
\(
\mathbb P\bigl(\Pi_{i,\omega}^* >0\bigr) = \mathbb P\bigl(R_{i,\omega}^* >K_i^*\bigr).
\)
Since \(R_{i}^*\) is a nonnegative random variable with finite variance, we can apply the Paley--Zygmund inequality~\citep[Page~8]{kahane1985some}(see also \citep[Section 3]{ghosh2002probability}). For any \(\vartheta_i\in(0,1)\), it yields
\begin{align}
\mathbb P\bigl(
R_{i,\omega}^*>\vartheta_i\,\mathbb E_{\omega}[R_i^*]
\bigr)
\ge
\frac{
(1-\vartheta_i)^2\bigl(\mathbb E_{\omega}[R_i^*]\bigr)^2
}{
(1-\vartheta_i)^2\bigl(\mathbb E_{\omega}[R_i^*]\bigr)^2
+
\mathrm{Var}_{\omega}(R_i^*)
}
\label{eq:PZ_direct_pi}
\end{align}

Define
\(
\vartheta_i
:=
\frac{K_i^*}{\mathbb E_{\omega}[R_i^*]}.
\)
By assumptions, we have \(K_i^*<\mathbb E_{\omega}[R_i^*]\), thus \(\vartheta_i\in(0,1)\).
Substituting this value of \(\vartheta_i\) into \eqref{eq:PZ_direct_pi}, we get
\begin{align}
\mathbb P\bigl(
R_{i,\omega}^*>K_i^*
\bigr) \ge
\frac{
\Bigl(1-\frac{K_i^*}{\mathbb E_{\omega}[R_i^*]}\Bigr)^2
\bigl(\mathbb E_{\omega}[R_i^*]\bigr)^2
}{
\Bigl(1-\frac{K_i^*}{\mathbb E_{\omega}[R_i^*]}\Bigr)^2
\bigl(\mathbb E_{\omega}[R_i^*]\bigr)^2
+
\mathrm{Var}_{\omega}(R_i^*)
}
\label{eq:first_ineq_direct_pi}
\end{align}

Noting that
\(
\Bigl(1-\frac{K_i^*}{\mathbb E_{\omega}[R_i^*]}\Bigr)^2
\bigl(\mathbb E_{\omega}[R_i^*]\bigr)^2
=
\bigl(\mathbb E_{\omega}[R_i^*]-K_i^*\bigr)^2,
\)
we can rewrite \eqref{eq:first_ineq_direct_pi} as
\[
\mathbb P\bigl(
R_{i,\omega}^*>K_i^*
\bigr)
\ge
\frac{
\bigl(\mathbb E_{\omega}[R_i^*]-K_i^*\bigr)^2
}{
\bigl(\mathbb E_{\omega}[R_i^*]-K_i^*\bigr)^2
+
\mathrm{Var}_{\omega}(R_i^*)
}
\]

Dividing the numerator and denominator by
\(
\bigl(\mathbb E_{\omega}[R_i^*]-K_i^*\bigr)^2,
\)
we obtain
\[
\mathbb P\bigl(
R_{i,\omega}^*>K_i^*
\bigr)
\ge
\frac{1}{
1+
\dfrac{
\mathrm{Var}_{\omega}(R_i^*)
}{
\bigl(\mathbb E_{\omega}[R_i^*]-K_i^*\bigr)^2
}
} :=\nu_i
\]

Using again \eqref{eq:equivalence_pi_r_direct}, we conclude that
\begin{equation}
    \mathbb P\bigl(\Pi_{i,\omega}^*>0\bigr)\ge \nu_i
\label{eq:case_1}
\end{equation}

We now assume that \(\alpha_i<\bar\alpha_i\). By the definition of
\(\bar\alpha_i\), we have\footnote{Note that CvaR$_\alpha$ is nondecreasing in $\alpha$.}
\[
\Gamma_i^*:=\sum_{t\in\pazocal T}
\mathrm{CVaR}_{\alpha_i}
\bigl(L_i^t(\mathbf h^{t*};\theta^*)\bigr)<0.
\]
Define the total realized loss
\(
L_{i,\omega}^*
:=
-\Pi_{i,\omega}^*
=
K_i^*-R_{i,\omega}^*.
\)
The total loss is the sum of the losses over the investment period:
\(
L_{i,\omega}^*
=
\sum_{t\in\pazocal T}L_{i,\omega}^t(\mathbf h^{t*},\theta^*),
\)
and the corresponding random loss variable is
\(
L_i^*
\).

By subadditivity of CVaR (see \citep[P.15]{pflug2000some}),
\begin{equation}
\mathrm{CVaR}_{\alpha_i}\!\bigl(L_i^*\bigr)
=
\mathrm{CVaR}_{\alpha_i}\!\Bigl(\sum_{t\in\pazocal T}L_i^t(\mathbf h^{t*};\theta^*)\Bigr)
\le
\sum_{t\in\pazocal T}\mathrm{CVaR}_{\alpha_i}\!\bigl(L_i^t(\mathbf h^{t*};\theta^*)\bigr)
<0
\end{equation}
Recall that the Value-at-Risk at level \(\alpha_i\) is defined as \citep[Eq.~(28),(29)]{li2022risk},
\(
\mathrm{VaR}_{\alpha_i}(L_i^*)
=
\min \left\{x\in\mathbb R:\mathbb P(L_i^*\le x)\ge \alpha_i\right\}
\).
By this definition, the quantity $\mathrm{VaR}_{\alpha_i}(L_i^*)$ is a threshold such that the loss does not exceed it with probability at least $\alpha_i$ (see~\citep{rockafellar2000optimization}), i.e.,
\begin{equation}
    \mathbb P\bigl(L_{i,\omega}^*\le \mathrm{VaR}_{\alpha_i}(L_i^*)\bigr)\ge \alpha_i.
\label{eq:p_alpha}
\end{equation}
Moreover, it is well known that
\(
\mathrm{VaR}_{\alpha_i}(L_i^*)\le \mathrm{CVaR}_{\alpha_i}(L_i^*)
\) (see \citep[Prop.~4]{pflug2000some}).
Hence,
\[
\bigl\{\omega \in \Omega:L_{i,\omega}^*\le \mathrm{VaR}_{\alpha_i}(L_i^*)\bigr\}
\subseteq
\bigl\{\omega \in \Omega:L_{i,\omega}^*\le \mathrm{CVaR}_{\alpha_i}(L_i^*)\bigr\},
\]
which implies
\[
\mathbb P\bigl(L_{i,\omega}^*\le \mathrm{CVaR}_{\alpha_i}(L_i^*)\bigr)\ge\mathbb P\bigl(L_{i,\omega}^*\le \mathrm{VaR}_{\alpha_i}(L_i^*)\bigr) \overset{\eqref{eq:p_alpha}}{\ge} \alpha_i
\]
Since \(\mathrm{CVaR}_{\alpha_i}(L_i^*)<0\), it follows that
\[
\bigl\{\omega \in \Omega:L_{i,\omega}^*\le \mathrm{CVaR}_{\alpha_i}(L_i^*)\bigr\}
\subseteq
\bigl\{\omega \in \Omega:L_{i,\omega}^*< 0\bigr\}
\]
Therefore,
\(
\mathbb P\bigl(L_{i,\omega}^*< 0\bigr)\ge \alpha_i.
\)

Finally, since
\(
L_{i,\omega}^*< 0
\quad\Longleftrightarrow\quad
\Pi_{i,\omega}^*> 0,
\) we conclude that
\begin{equation}
    \mathbb P\bigl(\Pi_{i,\omega}^*> 0\bigr)\ge \alpha_i
\label{eq:case_two}
\end{equation}
Combining \eqref{eq:case_two} with~\eqref{eq:case_1}, we obtain
\begin{equation}
\mathbb P\bigl(\Pi_{i,\omega}^*>0\bigr)
\ge \max\{\nu_i,\alpha_i\},
\qquad \text{if } \alpha_i<\bar\alpha_i
\label{eq:profit_bound_final_proof}
\end{equation}
This completes the proof.

\section{Step-by-Step Proof of Lemma \ref{lemma:delta_gamma_uniqueness}}\label{appendix:delta_gamma_uniqueness}
Under the firm-cost function~\eqref{eq:psi}, we can write
\(
\psi_i^t(\mathbf h^t)
=
q_i^t(h_i^t)
+
g_i^t(\mathbf h^t),
\)
where
\(
q_i^t(h_i^t)=\frac{\delta}{2}(h_i^t)^2,
\quad
g_i^t(\mathbf h^t)
=
\gamma h_i^t\sum_{j\in\pazocal N\setminus\{i\}}h_j^t .
\)
Therefore,
\[
(q_i^t)''(h_i^t)=\delta,
\quad
\forall i\in\pazocal N,\; t\in\pazocal T .
\]
By the definition of \(Q\) in~\eqref{eq:cond_q_g}, this gives
\(
Q=\delta .
\)

We now identify \(G\) \eqref{eq:cond_q_g}. For the interaction term \(g_i^t\), we have
\[
\frac{\partial^2 g_i^t(\mathbf h^t)}
{\partial h_i^t\partial h_j^t}
=
\gamma,
\quad j\neq i,
\qquad\text{and}\qquad
\frac{\partial^2 g_i^t(\mathbf h^t)}
{\partial (h_i^t)^2}
=
0 .
\]
Hence
\[
G=\gamma =\sup_{\substack{i,j\in\pazocal N,\; t\in\pazocal T,\\ \mathbf h^t\in\mathbb R_+^{|\pazocal N|}}}
\left|
\frac{\partial^2 g_i^t(\mathbf h^t)}{\partial h_i^t\,\partial h_j^t}
\right|
\]

Substituting \(Q=\delta\) and \(G=\gamma\) into the sufficient uniqueness
condition~\eqref{eq:ass_q_g} gives
\(
\delta>
\gamma\max_{i\in\pazocal N}\xi_i,
\)
which is exactly~\eqref{eq:delta_gamma_uniqueness}. Therefore, whenever
\eqref{eq:delta_gamma_uniqueness} holds, the sufficient condition in
Theorem~\ref{thm:gne_uniqueness} is satisfied. The follower variational
equilibrium is then unique by Theorem~\ref{thm:gne_uniqueness}.

\section{Step-by-Step Proof of Proposition~\ref{prop:follower_explicit}}
\label{appendix:follower_explicit}
We first characterize when a follower commits to a positive resource.

\begin{proposition}
\label{prop:follower_threshold}
Given \((C,\theta)\), follower \(i\) has a positive resource commitment iff
\[
    h_i^{t*}>0
\Longleftrightarrow
A_i^t:=\mathbb E[a_i^t]-\beta_i\,\mathrm{CVaR}_{\alpha_i}(-a_i^t)
>
(1+\beta_i) (\theta
+
\gamma\sum_{j\in\pazocal N\setminus\{i\}} h_j^{t*})
+
\lambda^{t*}
\]
where $A_i^t$ is the mean-CVaR benefit factor of follower \(i\) at time slot \(t\)
and \(\lambda^{t*}\) is the shadow price of the capacity constraint.
\end{proposition}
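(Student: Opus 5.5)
The plan is to fix $t\in\pazocal T$ and derive the threshold from the KKT conditions of follower $i$'s best-response problem at the VE, with $b_i:=1+\beta_i$. Substituting \eqref{eq:revenue}, \eqref{eq:price} and \eqref{eq:psi} into \eqref{eq:u_follower} gives the per-slot utility
\[
A_i^t\ln(1+h_i^t)-b_i\theta h_i^t-b_i\tfrac{\delta}{2}(h_i^t)^2-b_i\gamma h_i^t\textstyle\sum_{j\ne i}h_j^t ,
\]
as in the proof sketch of Proposition~\ref{prop:follower_explicit}. This uses translation invariance and positive homogeneity of CVaR. Since $\ln(1+h)\ge0$, one has $\mathrm{CVaR}_{\alpha_i}(-a_i^t\ln(1+h))=\ln(1+h)\,\mathrm{CVaR}_{\alpha_i}(-a_i^t)$. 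The objective is separable in $t$, and in $h_i^t$ it is strictly concave, because the quadratic term has coefficient $b_i\delta>0$. The constraints are linear.

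Next I would record the VE KKT system, invoking the fact that a VE uses a common multiplier $\lambda^{t*}\ge0$ for the shared constraint $\sum_j h_j^t\le C$ for all followers. Let $\mu_i^t\ge0$ be the multiplier of $h_i^t\ge0$. Stationarity at $\mathbf h^{t*}$ reads
\[
\frac{A_i^t}{1+h_i^{t*}}-b_i\theta-b_i\delta h_i^{t*}-b_i\gamma\textstyle\sum_{j\ne i}h_j^{t*}-\lambda^{t*}+\mu_i^t=0,
\]
together with $\mu_i^t h_i^{t*}=0$. By concavity and linear constraints, KKT is necessary and sufficient, as in the proof of Proposition~\ref{prop:follower_explicit}. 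Define
\[
D_i(h):=\frac{A_i^t}{1+h}-b_i\delta h-c_i,\qquad c_i:=b_i\bigl(\theta+\gamma\textstyle\sum_{j\ne i}h_j^{t*}\bigr)+\lambda^{t*}.
\]
This function is strictly decreasing in $h\ge0$; strict decrease needs only $\delta>0$ if $A_i^t\ge0$, and if $A_i^t<0$ both terms are still decreasing. Note that $D_i(0)=A_i^t-c_i$.

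The two directions then follow.

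($\Rightarrow$) If $h_i^{t*}>0$, then $\mu_i^t=0$, so $D_i(h_i^{t*})=0$. Strict monotonicity of $D_i$ then gives $D_i(0)>0$, which is exactly $A_i^t>c_i$.

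($\Leftarrow$) If $A_i^t>c_i$ but $h_i^{t*}=0$, stationarity gives $\mu_i^t=-D_i(0)<0$. This contradicts $\mu_i^t\ge0$, so $h_i^{t*}>0$.

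The main obstacle is the justification that $\lambda^{t*}$ is the same for every $i$, and that the KKT characterization of the VE holds. To handle it, I would cite the VI formulation \eqref{eq:VI} on the polyhedral set $X(C)$: polyhedrality gives KKT without further constraint qualification, and the KKT system of $\mathrm{VI}(X(C),\mathcal G)$ has exactly one multiplier per shared constraint. A secondary point is the sign of $\ln(1+h)$, which is needed for the homogeneity step. Edge cases such as $\lambda^{t*}$ being nonunique when $C=0$ are harmless, since the statement is taken with the $\lambda^{t*}$ associated with the VE.
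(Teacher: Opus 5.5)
Your proposal is correct and follows essentially the same route as the paper. Both use the same reduced per-slot utility via translation invariance and positive homogeneity of CVaR, the same KKT system with a common multiplier $\lambda^{t*}$ (your VI-on-polyhedral-$X(C)$ justification is more careful than the paper's, which simply writes $\lambda^t$ into each follower's Lagrangian), and the same contradiction with $\mu_i^{t}\ge 0$ for ($\Leftarrow$). For ($\Rightarrow$) the paper multiplies stationarity by $1+h_i^{t*}$ to get $A_i^t=(1+h_i^{t*})\bigl(b_i\delta h_i^{t*}+c_i\bigr)>c_i$ rather than invoking monotonicity of $D_i$.

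One correction: your parenthetical is false. For $A_i^t<0$ the term $A_i^t/(1+h)$ is increasing in $h$, and $A_i^t\ln(1+h)$ is then convex, which also undermines your strict-concavity claim. This is harmless, because concavity of $r_{i,\omega}^t$ (Assumption~\ref{ass:convex_follower}) forces $a_{i,\omega}^t\ge 0$, hence $\mathrm{CVaR}_{\alpha_i}(-a_i^t)\le 0$ and $A_i^t\ge\mathbb E[a_i^t]\ge 0$. In any case, $h_i^{t*}>0$ together with $D_i(h_i^{t*})=0$ already forces $A_i^t>0$.
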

\begin{proof}
Substitute \eqref{eq:revenue}, \eqref{eq:price}, and \eqref{eq:psi} into \eqref{eq:profit_sp_realization} for a fixed time slot $t\in\pazocal T$, the realized profit of follower $i$ becomes:
\begin{align}
\Pi_{i,\omega}^t(\mathbf h^t;\theta)
&=
a_{i,\omega}^t\ln(1+h_i^t)
-\theta h_i^t
-\frac{\delta}{2}(h_i^t)^2
-\gamma\, h_i^t\sum_{j\in\pazocal N\setminus\{i\}}h_j^t,
\label{eq:profit_after_subs}
\end{align}
and the corresponding loss~\eqref{eq:loss_i_t} becomes
\begin{align}
L_{i,\omega}^t(\mathbf h^t;\theta)
&=
-a_{i,\omega}^t\ln(1+h_i^t)
+\theta h_i^t
+\frac{\delta}{2}(h_i^t)^2
+\gamma\, h_i^t\sum_{j\in\pazocal N\setminus\{i\}}h_j^t
\label{eq:loss_terms}
\end{align}
For fixed \(\mathbf h^t\), all terms in \eqref{eq:loss_terms} except \(-a_{i, \omega}^t\ln(1+h_i^t)\) are
deterministic. Hence, by translation invariance of CVaR
\citep[Prop.~2(i)]{pflug2000some}, and since \(h_i^t\ge0\), we have
\(\ln(1+h_i^t)\ge0\); therefore, by positive homogeneity of CVaR
\citep[Prop.~2(ii)]{pflug2000some}, we obtain
\begin{align}
\mathrm{CVaR}_{\alpha_i}\!\left(L_i^t(\mathbf h^t;\theta)\right)
&=
\mathrm{CVaR}_{\alpha_i}\!\left(
-a_i^t\ln(1+h_i^t)
+\theta h_i^t
+\frac{\delta}{2}(h_i^t)^2
+\gamma h_i^t\sum_{j\in\pazocal N\setminus\{i\}}h_j^t
\right)
\nonumber\\
&=
\theta h_i^t
+\frac{\delta}{2}(h_i^t)^2
+\gamma h_i^t\sum_{j\in\pazocal N\setminus\{i\}}h_j^t
+\ln(1+h_i^t)\,
\mathrm{CVaR}_{\alpha_i}(-a_i^t)
\label{eq:cvar_loss}
\end{align}
Moreover, \eqref{eq:cvar_loss} is continuously differentiable
with respect to \(h_i^t\) on \(\mathbb R_+\). Indeed, for fixed
\(\mathbf h_{-i}^t\), each term
\[
\theta h_i^t,\qquad
\frac{\delta}{2}(h_i^t)^2,\qquad
\gamma h_i^t\sum_{j\in\pazocal N\setminus\{i\}}h_j^t,\qquad
\ln(1+h_i^t)\mathrm{CVaR}_{\alpha_i}(-a_i^t)
\]
is continuously differentiable for \(h_i^t\ge 0\).

The utility~\eqref{eq:u_follower} of follower $i$ can be written
as
\begin{align}
U(\mathbf h_i, \mathbf h_{-i}; \theta) &:= \mathbb E_\omega[\Pi_i^t(\mathbf h^t;\theta)]
-
\beta_i\,\mathrm{CVaR}_{\alpha_i}\!\left(L_i^t(\mathbf h^t;\theta)\right)
\nonumber\\
&\overset{\eqref{eq:profit_after_subs},\,\eqref{eq:cvar_loss}}{=}
\left(
\mathbb E[a_i^t]
-\beta_i\,\mathrm{CVaR}_{\alpha_i}(-a_i^t)
\right)\ln(1+h_i^t)
\nonumber\\
&\quad
-b_i\theta h_i^t
-b_i\frac{\delta}{2}(h_i^t)^2
-b_i\gamma\, h_i^t\sum_{j\in\pazocal N\setminus\{i\}}h_j^t,
\label{eq:follower_reduced_form_setup}
\end{align}
where
\(
b_i:=1+\beta_i.
\)

Since the follower objective function at generalized Nash equilibrium problem \eqref{eq:utility_gne} has a
concave objective function~\eqref{eq:u_follower} and convex constraints \eqref{eq:constraints}, the
Karush--Kuhn--Tucker conditions~\citep[\S 5.5.2]{boyd2004convex} provide necessary and sufficient optimality
conditions.

For a fixed time slot $t$, the Lagrangian of follower $i$ is
\begin{align}
\pazocal L_i^t&=
\left(
A_i^t
\right)\ln(1+h_i^t)
-b_i\theta h_i^t
-b_i\frac{\delta}{2}(h_i^t)^2
-b_i\gamma\, h_i^t\sum_{j\in\pazocal N\setminus\{i\}}h_j^t
\nonumber\\
&\quad
+\lambda^t\left(C-\sum_{j\in\pazocal N}h_j^t\right)
+\mu_i^t h_i^t
\label{eq:lagrangian}
\end{align}
where \(\lambda^t\ge 0\) is the Lagrange multiplier associated with the
shared capacity constraint
\(\sum_{j\in\pazocal N}h_j^t\le C\), and \(\mu_i^t\ge 0\) is the
Lagrange multiplier associated with the nonnegativity constraint
\(h_i^t\ge 0\).

Differentiating~\eqref{eq:lagrangian} with respect to $h_i^t$ gives
\begin{align}
\frac{\partial \pazocal L_i^t}{\partial h_i^t}
&=
\frac{A_i^t}{1+h_i^t}
-b_i\theta
-b_i\delta h_i^t
-b_i\gamma\sum_{j\in\pazocal N\setminus\{i\}}h_j^t
-\lambda^t
+\mu_i^t
\label{eq:derivative_lagrangian}
\end{align}
where $A_i^t:=\mathbb E[a_i^t]-\beta_i\,\mathrm{CVaR}_{\alpha_i}(-a_i^t)$.

Hence the KKT conditions for the VE at time slot $t$ are:
\paragraph{Stationarity}
For every $i\in\pazocal N$,
\begin{align}
\eqref{eq:derivative_lagrangian}
=0
\label{eq:stationarity_special_case}
\end{align}

\paragraph{Primal feasibility}
\begin{align}
h_i^t
&\ge 0
\qquad \forall i\in\pazocal N,\nonumber\\
\sum_{j\in\pazocal N} h_j^t
&\le C
\label{eq:primal}
\end{align}

\paragraph{Dual feasibility}
\begin{align}
\lambda^t
&\ge 0,\nonumber\\
\mu_i^t
&\ge 0,
\qquad \forall i\in\pazocal N
\label{eq:mu_non_neg}
\end{align}

\paragraph{Complementary slackness}
\begin{align}
\lambda^t\left(C-\sum_{j\in\pazocal N} h_j^t\right)
&=0,\qquad
\mu_i^t h_i^t=0,
\qquad \forall i\in\pazocal N
\label{eq:complementary}
\end{align}
Let $(\mathbf h^{t*},\mu^{t*},\lambda^{t*})$ be an equilibrium.

First suppose that $h_i^{t*}>0$. By~\eqref{eq:complementary},
$\mu_i^{t*}=0$. Substituting $\mu_i^{t*}=0$ into
\eqref{eq:stationarity_special_case} gives
\begin{equation}
\frac{A_i^t}{1+h_i^{t*}}
=
b_i\theta
+
b_i\delta h_i^{t*}
+
b_i\gamma\sum_{j\in\pazocal N\setminus\{i\}} h_j^{t*}
+
\lambda^{t*}
\label{eq:a_over_h}
\end{equation}
Since $h_i^{t*}>0$ and $\delta>0$ (by definition, see \eqref{eq:psi}), the right-hand side of \eqref{eq:a_over_h} is strictly larger than
\(
b_i\theta
+
b_i\gamma\sum_{j\in\pazocal N\setminus\{i\}} h_j^{t*}
+
\lambda^{t*}.
\)
Multiplying~\eqref{eq:a_over_h} by $1+h_i^{t*}>1$ yields
\begin{align}
A_i^t
&=
(1+h_i^{t*})
\left(
b_i\theta
+
b_i\delta h_i^{t*}
+
b_i\gamma\sum_{j\in\pazocal N\setminus\{i\}} h_j^{t*}
+
\lambda^{t*}
\right)
\nonumber\\&>
b_i\theta
+
b_i\gamma\sum_{j\in\pazocal N\setminus\{i\}} h_j^{t*}
+
\lambda^{t*}
\end{align}
Therefore,
\(
h_i^{t*}>0
\quad\Longrightarrow\quad
A_i^t
>
b_i\theta
+
b_i\gamma\sum_{j\in\pazocal N\setminus\{i\}} h_j^{t*}
+
\lambda^{t*}
\).
Conversely, assume that
\(
A_i^t
>
b_i\theta
+
b_i\gamma\sum_{j\in\pazocal N\setminus\{i\}} h_j^{t*}
+
\lambda^{t*}
\)
If, by contradiction, $h_i^{t*}=0$, then
\eqref{eq:stationarity_special_case} becomes
\(
A_i^t
-
b_i\theta
-
b_i\gamma\sum_{j\in\pazocal N\setminus\{i\}} h_j^{t*}
-
\lambda^{t*}
+
\mu_i^{t*}
=0,
\)
that is,
\(
\mu_i^{t*}
=
b_i\theta
+
b_i\gamma\sum_{j\in\pazocal N\setminus\{i\}} h_j^{t*}
+
\lambda^{t*}
-
A_i^t
<0,
\)
which contradicts dual feasibility $\mu_i^{t*}\ge 0$~\eqref{eq:mu_non_neg}. Hence $h_i^{t*}\neq 0$,
and since $h_i^{t*}\ge 0$, it follows that $h_i^{t*}>0$.
Thus,
\[
h_i^{t*}>0
\quad\Longleftrightarrow\quad
A_i^t
>
b_i\theta
+
b_i\gamma\sum_{j\in\pazocal N\setminus\{i\}} h_j^{t*}
+
\lambda^{t*}
\]
Taking the negation gives
\(
h_i^{t*}=0
\quad\Longleftrightarrow\quad
A_i^t
\le
b_i\theta
+
b_i\gamma\sum_{j\in\pazocal N\setminus\{i\}} h_j^{t*}
+
\lambda^{t*}
\)
This proves Proposition~\ref{prop:follower_threshold}.
\end{proof}
Fix \(i\in\pazocal N\). By the characterization established in Proposition~\ref{prop:follower_threshold}, we have
\begin{equation}
h_i^{t*}>0
\quad\Longleftrightarrow\quad
A_i^t
>
(1+ \beta_i) (\theta
+ \gamma\sum_{j\in\pazocal N\setminus\{i\}} h_j^{t*})
+
\lambda^{t*}
\label{eq:positivity_threshold_hi}
\end{equation}
First define 
\begin{equation}
\kappa_i^t(x)
:=
(1+ \beta_i) (\theta
+
\gamma\sum_{j\in\pazocal N\setminus\{i\}} h_j^{t*}
+ x) + 
\lambda^{t*},
\qquad x\in\mathbb R
\label{eq:def_kappa}
\end{equation}
We distinguish between two cases.

\paragraph{ \textbf{Case 1: \(\boldsymbol{h_i^{t*}=0}\)}}
In this case, \eqref{eq:explicit_hi_star} holds if and only if the expression
inside the maximum is nonpositive. Recall that $\beta_i \ge 0, \quad \forall i \in \pazocal N$ and let
\begin{equation}
    b_i=1+\beta_i >0
\label{eq:b_i_is_pos}
\end{equation} and
\begin{equation}
B_i^t
:=
b_i\theta
+
b_i\gamma\sum_{j\in\pazocal N\setminus\{i\}} h_j^{t*}
+
\lambda^{t*}
\label{eq:def_Bit}
\end{equation}
Note that \(B_i^t\ge 0\). Indeed, \(b_i>0\) by~\eqref{eq:b_i_is_pos},
\(\theta\ge 0\) by~\eqref{eq:leader_set}, \(\gamma\ge 0\) by~\eqref{eq:psi},
\(h_j^{t*}\ge 0\) for all \(j\in\pazocal N\setminus\{i\}\) by primal
feasibility~\eqref{eq:primal}, and \(\lambda^{t*}\ge 0\) by dual
feasibility~\eqref{eq:mu_non_neg}. Hence every term in~\eqref{eq:def_Bit}
is nonnegative.

\eqref{eq:def_kappa} becomes
\begin{equation}
\kappa_i^t(\delta)=B_i^t+b_i\delta,
\qquad
\kappa_i^t(-\delta)=B_i^t-b_i\delta
\label{eq:kappa_pm_delta}
\end{equation}
The numerator in \eqref{eq:explicit_hi_star} is nonpositive if and only if
\begin{equation}
\sqrt{(\kappa_i^t(-\delta))^2+4b_i\delta A_i^t}
\le \kappa_i^t(\delta)
\label{eq:sqrt_condition}
\end{equation}
Since $B_i^t \ge 0$ \eqref{eq:def_Bit}  \(b_i>0\) (see \eqref{eq:b_i_is_pos} and \(\delta>0\) (by definition, see \eqref{eq:psi}), it follows that \(\kappa_i^t(\delta)=B_i^t+b_i\delta>0\), squaring both sides of
\eqref{eq:sqrt_condition} yields
\begin{equation}
(\kappa_i^t(-\delta))^2+4b_i\delta A_i^t
\le
(\kappa_i^t(\delta))^2
\label{eq:squared_condition}
\end{equation}
Using
\begin{equation}
(\kappa_i^t(\delta))^2-(\kappa_i^t(-\delta))^2
=
(B_i^t+b_i\delta)^2-(B_i^t-b_i\delta)^2
=
4b_i\delta B_i^t,
\label{eq:difference_of_squares}
\end{equation}
we obtain from \eqref{eq:squared_condition} and
\eqref{eq:difference_of_squares} that
\begin{equation}
4b_i\delta A_i^t\le 4b_i\delta B_i^t
\label{eq:before_division}
\end{equation}
Again, since \(b_i>0\) and \(\delta>0\), dividing both sides of
\eqref{eq:before_division} by \(4b_i\delta\) gives
\begin{equation}
A_i^t\le B_i^t,
\label{eq:A_leq_B}
\end{equation}
that is, by \eqref{eq:def_Bit},
\begin{equation}
A_i^t
\le
b_i\theta
+
b_i\gamma\sum_{j\in\pazocal N\setminus\{i\}} h_j^{t*}
+
\lambda^{t*}.
\label{eq:A_leq_threshold}
\end{equation}
Therefore, $h_i^{t*}=0$ holds if \eqref{eq:A_leq_threshold} holds.

Therefore, in this case \eqref{eq:explicit_hi_star} holds.

\paragraph{\textbf{Case 2: }\(\boldsymbol{h_i^{t*}>0}\)}
By~\eqref{eq:complementary} \(\mu_i^{t*}=0\). Then the stationarity condition
\eqref{eq:stationarity_special_case} becomes
\begin{equation}
\frac{A_i^t}{1+h_i^{t*}}
=
b_i\theta
+
b_i\delta h_i^{t*}
+
b_i\gamma\sum_{j\in\pazocal N\setminus\{i\}} h_j^{t*}
+
\lambda^{t*}
\label{eq:stationarity_positive_hi}
\end{equation}
Using~\eqref{eq:def_Bit}
we rewrite \eqref{eq:stationarity_positive_hi} as
\(
\frac{A_i^t}{1+h_i^{t*}}
=
B_i^t+b_i\delta h_i^{t*}
\)
Multiplying by \(1+h_i^{t*}\) gives
\[
A_i^t
=
(1+h_i^{t*})(B_i^t+b_i\delta h_i^{t*}),
\]
\[\iff\]
\begin{equation}
    A_i^t
=
B_i^t+(B_i^t+b_i\delta)h_i^{t*}+b_i\delta (h_i^{t*})^2
\label{eq:a_it}
\end{equation}
Rearranging~\eqref{eq:a_it}, we obtain the quadratic equation
\begin{equation}
b_i\delta (h_i^{t*})^2
+
(B_i^t+b_i\delta)h_i^{t*}
+
(B_i^t-A_i^t)
=
0
\label{eq:quadratic_hi_star}
\end{equation}
Since \(b_i>0\) and \(\delta>0\), $b_i \delta \neq 0$. Thus, the quadratic formula~\eqref{eq:quadratic_hi_star} yields
\begin{equation}
h_i^{t*}
=
\frac{
-(B_i^t+b_i\delta)
\pm
\sqrt{(B_i^t+b_i\delta)^2-4b_i\delta(B_i^t-A_i^t)}
}{
2b_i\delta
}
\label{eq:quadratic_formula_hi}
\end{equation}
The discriminant simplifies as
\begin{align}
&(B_i^t+b_i\delta)^2-4b_i\delta(B_i^t-A_i^t)=
(B_i^t-b_i\delta)^2+4b_i\delta A_i^t
\overset{\eqref{eq:kappa_pm_delta}}{=}
(\kappa_i^t(-\delta))^2+4b_i\delta A_i^t
\label{eq:discriminant_simplified_hi}
\end{align}
Moreover by~\eqref{eq:kappa_pm_delta}
\begin{equation}
    B_i^t+b_i\delta=\kappa_i^t(\delta)
\label{eq:kappa_delta}
\end{equation}
Substituting~\eqref{eq:discriminant_simplified_hi} and \eqref{eq:kappa_delta} into \eqref{eq:quadratic_formula_hi}, we obtain
\begin{equation}
    h_i^{t*}
=
\frac{
-\kappa_i^t(\delta)
\pm
\sqrt{(\kappa_i^t(-\delta))^2+4b_i\delta A_i^t}
}{
2b_i\delta
}
\label{eq:both_root}
\end{equation}
Since \(b_i\delta>0\), the denominator in~\eqref{eq:both_root} is positive. 
Moreover, the root with the minus sign has a negative numerator and is therefore infeasible for an active solution \(h_i^{t*}>0\). 
Thus, the admissible active solution is given by the larger root. 
Hence
\begin{equation}
    h_i^{t*}
=
\frac{
-\kappa_i^t(\delta)
+
\sqrt{(\kappa_i^t(-\delta))^2+4b_i\delta A_i^t}
}{
2b_i\delta
}
\label{eq:pos_root}
\end{equation}
To see that this larger root \eqref{eq:pos_root} is positive, note that the stationarity condition \eqref{eq:positivity_threshold_hi} gives
\(
A_i^t>b_i\theta+b_i\gamma\sum_{j\neq i}h_j^{t*}+\lambda^{t*}.
\)
Hence,
\[
\begin{aligned}
(\kappa_i^t(-\delta))^2+4b_i\delta A_i^t &= \Bigl(b_i\theta+b_i\gamma\sum_{j\neq i}h_j^{t*}+\lambda^{t*}-b_i\delta\Bigr)^2
+4b_i\delta A_i^t \\
&\quad>
\Bigl(b_i\theta+b_i\gamma\sum_{j\neq i}h_j^{t*}+\lambda^{t*}-b_i\delta\Bigr)^2
+4b_i\delta
\Bigl(b_i\theta+b_i\gamma\sum_{j\neq i}h_j^{t*}+\lambda^{t*}\Bigr)\\
&\quad=
\Bigl(b_i\theta+b_i\gamma\sum_{j\neq i}h_j^{t*}+\lambda^{t*}+b_i\delta\Bigr)^2 =  (\kappa_i^t(\delta))^2
\end{aligned}
\]
Thus, the numerator of the larger root \eqref{eq:pos_root} is strictly positive. Since its denominator is positive, the larger root is strictly positive.
Combining~\eqref{eq:pos_root} with the case \(h_i^{t*}=0\) yields
\eqref{eq:explicit_hi_star} and completes the proof.

\paragraph{\textbf{Shadow price explanation}}\label{appendix:eta_characterization}
\(\lambda^{t*}\) is the shadow price of the shared capacity constraint in time slot \(t\) indicating how valuable an additional unit of capacity would be at the optimum. It is equal to zero when total equilibrium resource commitment is strictly below available capacity, and becomes positive when total equilibrium resource commitment exactly reaches the capacity limit, that is, (see \eqref{eq:primal}, \eqref{eq:mu_non_neg}, \eqref{eq:complementary})
\begin{equation}
\lambda^{t*}\ge 0,\qquad
\sum_{j\in\pazocal N} h_j^{t*}\le C,
\qquad
\lambda^{t*}\left(C-\sum_{j\in\pazocal N} h_j^{t*}\right)=0
\label{eq:eta_compl}
\end{equation}

Note that
\(
\lambda^{t*}\left(C-\sum_{j\in\pazocal N} h_j^{t*}\right)=0.
\)
If \(\sum_{j\in\pazocal N} h_j^{t*}<C\), then
\[
C-\sum_{j\in\pazocal N} h_j^{t*}>0
\]
and therefore necessarily \(\lambda^{t*}=~0\). Conversely, if \(\lambda^{t*}>0\), then
\[
C-\sum_{j\in\pazocal N} h_j^{t*}=~0
\]
that is,
\(
\sum_{j\in\pazocal N} h_j^{t*}=C.
\)
This means the shadow price $\lambda^{t*}$ is equal to zero when total resources are strictly below available capacity and becomes positive when total resources exactly reach the capacity limit.

\section{Comparative Statics of Proposition \ref{prop:follower_explicit}}
\label{appendix:follower_comparative_statics}

\begin{proposition}
\label{prop:follower_comparative_statics}
Fix \(t\in\pazocal T\) and \(i\in\pazocal N\) with \(h_i^{t*}>0\), and let \(s_{-i}^{t*}:=\sum_{j\neq i}h_j^{t*}.\) Conditional on \(s_{-i}^{t*}\), the resource commitment \(h_i^{t*}\) is increasing in \(A_i^t\), decreasing in \(\theta\), \(\delta\), and \(\lambda^{t*}\), and nonincreasing in \(\gamma\), with strict decrease in \(\gamma\) whenever \(s_{-i}^{t*}>0\). Moreover, holding all primitive parameters fixed, \(h_i^{t*}\) is nonincreasing in \(s_{-i}^{t*}\), and strictly decreasing whenever \(\gamma>0\).\end{proposition}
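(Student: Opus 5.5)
Since \(h_i^{t*}>0\), Proposition~\ref{prop:follower_threshold} and Case~2 of the proof of Proposition~\ref{prop:follower_explicit} let us work directly with the interior stationarity condition~\eqref{eq:stationarity_positive_hi}. The plan is to avoid the closed form~\eqref{eq:pos_root} and apply implicit differentiation to this scalar equation. Write \(b_i=1+\beta_i\) and \(B_i^t=b_i\theta+b_i\gamma s_{-i}^{t*}+\lambda^{t*}\), and define
\[
F(h;A_i^t,\theta,\delta,\gamma,\lambda^{t*},s_{-i}^{t*}) := \frac{A_i^t}{1+h}-b_i\theta-b_i\delta h-b_i\gamma s_{-i}^{t*}-\lambda^{t*}.
\]
Then \(h_i^{t*}\) is a root of \(F=0\) with \(h>0\).

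First, check that this root is unique on \(h>0\) and depends smoothly on the parameters. On the active region \(A_i^t>B_i^t\ge 0\), so \(A_i^t>0\), and hence
\[
\partial F/\partial h=-A_i^t/(1+h)^2-b_i\delta<0 .
\]
Thus \(F\) is strictly decreasing in \(h\), the root is unique, and the implicit function theorem gives a \(C^1\) dependence \(h_i^{t*}(\cdot)\) on the parameters. The active region is open by the strict inequality in Proposition~\ref{prop:follower_threshold}, so local perturbations keep \(h_i^{t*}>0\). Everything here is conditional on \(s_{-i}^{t*}\) and \(\lambda^{t*}\), which are treated as fixed parameters as the statement specifies.

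Second, compute the sign of each derivative from \(\partial h/\partial x=-(\partial F/\partial x)/(\partial F/\partial h)\). Since the denominator is negative, \(\partial h/\partial x\) has the same sign as \(\partial F/\partial x\). The partials are:
\begin{itemize}
\item \(\partial F/\partial A_i^t=1/(1+h)>0\), so \(h_i^{t*}\) is increasing in \(A_i^t\);
\item \(\partial F/\partial\theta=-b_i<0\), so it is decreasing in \(\theta\);
\item \(\partial F/\partial\delta=-b_ih<0\), using \(h>0\), so it is decreasing in \(\delta\);
\item \(\partial F/\partial\lambda^{t*}=-1<0\), so it is decreasing in \(\lambda^{t*}\);
\item \(\partial F/\partial\gamma=-b_is_{-i}^{t*}\le 0\), with strict inequality iff \(s_{-i}^{t*}>0\), so it is nonincreasing in \(\gamma\) and strictly decreasing when \(s_{-i}^{t*}>0\);
\item \(\partial F/\partial s_{-i}^{t*}=-b_i\gamma\le0\), with strict inequality iff \(\gamma>0\), which gives the last claim.
\end{itemize}
Strictness of a derivative's sign on an interval yields strict monotonicity by the mean value theorem.

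An alternative that avoids calculus is a comparison argument. If a parameter change weakly lowers \(F(\cdot)\) pointwise, then since \(F\) is strictly decreasing in \(h\), the new root cannot exceed the old one. This handles monotonicity globally along the active region, not just locally.

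The main obstacle is interpretive rather than computational. In equilibrium, \(\lambda^{t*}\) and \(s_{-i}^{t*}\) themselves move with \(\theta,\delta,\gamma\), so the claims must be read as partial (ceteris paribus) statements. I would state this explicitly, as the proposition's phrase ``conditional on \(s_{-i}^{t*}\)'' suggests.

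A second subtlety is that the monotone comparisons must stay within the active region. Decreasing \(\theta\) or \(\delta\), or increasing \(A_i^t\), only enlarges \(A_i^t-B_i^t\), so positivity persists. In the opposite direction, \(h_i^{t*}\) hits \(0\) via the \(\max\) in~\eqref{eq:explicit_hi_star}, which preserves weak monotonicity. Strictness then holds on the active region as claimed.
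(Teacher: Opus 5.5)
Your proposal is correct and follows the paper's own route: you take the active stationarity residual $F_i(h;\cdot)$, note that $\partial F_i/\partial h=-A_i^t/(1+h)^2-b_i\delta<0$ because $h_i^{t*}>0$ forces $A_i^t>0$, and read off each sign from $\partial h_i^{t*}/\partial z=-(\partial F_i/\partial z)/(\partial F_i/\partial h)$. Your added remarks on the ceteris-paribus reading (with $s_{-i}^{t*}$ and $\lambda^{t*}$ held fixed) and on staying inside the open active region make explicit what the paper leaves implicit; one small correction is that $\delta$ does not enter $B_i^t$, so lowering $\delta$ leaves $A_i^t-B_i^t$ unchanged rather than enlarging it, which still keeps $h_i^{t*}>0$.
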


\begin{proof}
We derive the comparative statics from the stationarity condition~\eqref{eq:cs_stationarity_active}.
Fix \(t\in\pazocal T\) and \(i\in\pazocal N\). Let
\begin{equation}
    s_{-i}^{t*}:=\sum_{j\in\pazocal N\setminus\{i\}} h_j^{t*}
\label{eq:big_h}
\end{equation}
denote the total equilibrium resource commitment of the other firms in time slot~\(t\).

Since \(h_i^{t*}>0\), the complementary-slackness condition
\eqref{eq:complementary} implies that \(\mu_i^{t*}=0\). Therefore, the
stationarity condition~\eqref{eq:stationarity_special_case} reduces to
\begin{equation}
\frac{A_i^t}{1+h_i^{t*}}
-b_i\theta
-b_i\delta h_i^{t*}
-b_i\gamma s_{-i}^{t*}
-\lambda^{t*}
=0
\label{eq:cs_stationarity_active}
\end{equation}
To study how \(h_i^{t*}\) varies with the model parameters, define
\begin{equation}
F_i(h;A_i^t,\theta,\delta,\gamma,s_{-i}^{t*},\lambda^{t*})
:=
\frac{A_i^t}{1+h}
-b_i\theta
-b_i\delta h
-b_i\gamma s_{-i}^{t*}
-\lambda^{t*}
\label{eq:def_Fi_cs}
\end{equation}
Then \eqref{eq:cs_stationarity_active} can be written as
\begin{equation}
F_i(h_i^{t*};A_i^t,\theta,\delta,\gamma,s_{-i}^{t*},\lambda^{t*})=0
\label{eq:F_zero_cs}
\end{equation}
For simplicity we use $F_i:=F_i(h;A_i^t,\theta,\delta,\gamma,s_{-i}^{t*},\lambda^{t*})$.
Differentiating \eqref{eq:def_Fi_cs} with respect to
\(h\), we obtain
\begin{equation}
\frac{\partial F_i}{\partial h}
=
-\frac{A_i^t}{(1+h)^2}-b_i\delta
\label{eq:dFdh_cs}
\end{equation}
Evaluating at \(h=h_i^{t*}\) gives
\begin{equation}
    \frac{\partial F_i}{\partial h_i^t}
=
-\frac{A_i^t}{(1+h_i^{t*})^2}-b_i\delta<0
\label{eq:fh}
\end{equation}
Indeed, since \(h_i^{t*}>0\), the stationarity condition
\eqref{eq:cs_stationarity_active} implies \(A_i^t>0\), and therefore both terms
on the right-hand side of~\eqref{eq:fh} are nonpositive, with the second one being strictly
negative because \(b_i>0\) and \(\delta>0\). Hence
\(\frac{\partial F_i}{\partial h}\neq 0\).

For convenience, define
\begin{equation}
M_i^t
:=
\frac{A_i^t}{(1+h_i^{t*})^2}+b_i\delta>0
\label{eq:def_Mi_cs}
\end{equation}
Then \(\frac{\partial F_i}{\partial h}(h_i^{t*})=-M_i^t\). Therefore 
\(
\frac{\partial h_i^{t*}}{\partial z}
=
-\frac{\partial F_i/\partial z}{\partial F_i/\partial h}
\qquad
\text{for each parameter } z
\)

We now compute the derivatives one by one.

\paragraph{\textbf{Dependence on \(A_i^t\)}}
From \eqref{eq:def_Fi_cs},
\begin{equation}
    \frac{\partial F_i}{\partial A_i^t}
=
\frac{1}{1+h_i^{t*}}
\label{eq:def_f_a}
\end{equation}
Therefore, by~\eqref{eq:fh}, \eqref{eq:def_Mi_cs} and~\eqref{eq:def_f_a}
\begin{equation}
\frac{\partial h_i^{t*}}{\partial A_i^t}
=
-\frac{\partial F_i/\partial A_i^t}{\partial F_i/\partial h}
=
\frac{1}{(1+h_i^{t*})M_i^t}
>0.
\label{eq:dhdA_cs}
\end{equation}
Thus, follower \(i\)'s equilibrium resource commitment is increasing in its own
mean-CVaR benefit~\(A_i^t\).

\paragraph{\textbf{Dependence on \(\theta\)}}
From \eqref{eq:def_Fi_cs}
\begin{equation}
    \frac{\partial F_i}{\partial \theta}=-b_i
    \label{eq:f_theta}
\end{equation}
By~\eqref{eq:f_theta} and \eqref{eq:fh}, we obtain
\begin{equation}
\frac{\partial h_i^{t*}}{\partial \theta}
=
-\frac{\partial F_i/\partial \theta}{\partial F_i/\partial h}
=
-\frac{b_i}{M_i^t}
<0
\label{eq:dhdtheta_cs}
\end{equation}
Hence \(h_i^{t*}\) is decreasing in the access price \(\theta\).

\paragraph{\textbf{Dependence on \(\delta\)}}
From \eqref{eq:def_Fi_cs},
\begin{equation}
    \frac{\partial F_i}{\partial \delta}
=
-b_i h_i^{t*}
\label{eq:f_delta}
\end{equation}
Therefore by~\eqref{eq:f_delta} and \eqref{eq:fh},
\begin{equation}
\frac{\partial h_i^{t*}}{\partial \delta}
=
-\frac{\partial F_i/\partial \delta}{\partial F_i/\partial h}
=
-\frac{b_i h_i^{t*}}{M_i^t}
<0
\label{eq:dhddelta_cs}
\end{equation}
Thus the equilibrium resource commitment is decreasing in the operational cost
parameter \(\delta\).

\paragraph{\textbf{Dependence on \(\gamma\)}}
From~\eqref{eq:def_Fi_cs}
\begin{equation}
    \frac{\partial F_i}{\partial \gamma}
=
-b_i s_{-i}^{t*}
\label{eq:f_gamma}
\end{equation}
By \eqref{eq:f_gamma} and \eqref{eq:fh}, we obtain
\begin{equation}
\frac{\partial h_i^{t*}}{\partial \gamma}
=
-\frac{\partial F_i/\partial \gamma}{\partial F_i/\partial h}
=
-\frac{b_i s_{-i}^{t*}}{M_i^t}
\le 0
\label{eq:dhdgamma_cs}
\end{equation}
Hence \(h_i^{t*}\) is nonincreasing in the congestion cost parameter
\(\gamma\), and it is strictly decreasing whenever \(s_{-i}^{t*}>0\).

\paragraph{\textbf{Dependence on \(\lambda^{t*}\)}}
From~\eqref{eq:def_Fi_cs}
\begin{equation}
    \frac{\partial F_i}{\partial \lambda^{t*}}=-1,
\label{eq:f_lambda}
\end{equation}
By~\eqref{eq:f_lambda} and \eqref{eq:fh}, we obtain
\begin{equation}
\frac{\partial h_i^{t*}}{\partial \lambda^{t*}}
=
-\frac{\partial F_i/\partial \lambda^{t*}}{\partial F_i/\partial h}
=
-\frac{1}{M_i^t}
<0
\label{eq:dhdeta_cs}
\end{equation}
Thus the equilibrium resource commitment is decreasing in the shadow price
\(\lambda^{t*}\).

\paragraph{\textbf{Dependence on \(s_{-i}^{t*}\)}}
Finally, from~\eqref{eq:def_Fi_cs}
\begin{equation}
    \frac{\partial F_i}{\partial s_{-i}^{t*}}
=
-b_i\gamma,
\label{eq:f_big_h}
\end{equation}
we get from \eqref{eq:f_big_h} and \eqref{eq:fh}
\begin{equation}
\frac{\partial h_i^{t*}}{\partial s_{-i}^{t*}}
=
-\frac{\partial F_i/\partial s_{-i}^{t*}}{\partial F_i/\partial h}
=
-\frac{b_i\gamma}{M_i^t}
\le 0
\label{eq:dhdS_cs}
\end{equation}
Therefore, holding the primitive parameters fixed, follower \(i\)'s
equilibrium resource commitment is nonincreasing in the total resource commitment of the other
firms, and it is strictly decreasing whenever \(\gamma>0\).

Collecting \eqref{eq:dhdA_cs}--\eqref{eq:dhdS_cs}, we complete the proof.
\end{proof}

\section{Step-by-Step Proof of Proposition~\ref{prop:computational_tractability}}
\label{appendix:computational_tractability}
We first establish a Lemma needed in the proof.
\begin{lemma}
\label{lemma:delta_greater_gamma}
If~\eqref{eq:delta_gamma_uniqueness} holds, then
\(\delta>\gamma\).
\end{lemma}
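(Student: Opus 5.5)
The plan is to show that $\xi_i\ge 1$ for every $i\in\pazocal N$. Once this holds, condition~\eqref{eq:delta_gamma_uniqueness} gives
\[
\delta>\gamma\max_{i\in\pazocal N}\xi_i\ge\gamma ,
\]
since $\gamma\ge 0$. No equilibrium property is used; the argument is purely algebraic and uses only the definition of $\xi_i$ in~\eqref{eq:ass_q_g}, the sign conventions $\beta_j\ge 0$ from~\eqref{eq:u_follower}, and $\gamma\ge0$ from~\eqref{eq:psi}.

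\textbf{Step 1: lower bound on $\xi_i$.} Fix $i$ and write $N=|\pazocal N|\ge 1$. Because every $\beta_j\ge 0$, we have $\sum_{j\in\pazocal N}\beta_j\ge\beta_i$. Substituting this into the definition of $\xi_i$ gives
\[
\xi_i\ge\frac{2N+N\beta_i+\beta_i}{2(1+\beta_i)} .
\]
Next, use $N\ge 1$ in the two terms $2N\ge 2$ and $N\beta_i\ge\beta_i$. This yields
\[
\xi_i\ge\frac{2+2\beta_i}{2(1+\beta_i)}=1 .
\]
In fact the same steps give the sharper bound $\xi_i\ge N(2+\beta_i)/(2(1+\beta_i))\ge N/2$, though only $\xi_i\ge1$ is needed.

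\textbf{Step 2: conclusion, with two cases on $\gamma$.}
\begin{itemize}
\item If $\gamma>0$, then $\gamma\max_i\xi_i\ge\gamma$. Combined with~\eqref{eq:delta_gamma_uniqueness}, this gives $\delta>\gamma$.
\item If $\gamma=0$, then $\delta>0=\gamma$ directly, since $\delta>0$ by~\eqref{eq:psi}. This also follows from~\eqref{eq:delta_gamma_uniqueness}, which reads $\delta>0$ in this case.
\end{itemize}

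There is no real obstacle here. The only point needing care is that the bound $\xi_i\ge 1$ relies on the nonnegativity of all $\beta_j$ and on $N\ge1$, so both should be cited explicitly. The resulting strict gap $\delta-\gamma>0$ is what makes the factor $1/(\delta-\gamma)$ in Proposition~\ref{prop:computational_tractability} well defined.
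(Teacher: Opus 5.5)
Your proof is correct and follows the paper's argument. The paper also establishes $\xi_i\ge 1$, by writing $\xi_i-1$ as a fraction with nonnegative numerator $2(|\pazocal N|-1)+(|\pazocal N|-1)\beta_i+\sum_{j\neq i}\beta_j$, and then multiplies by $\gamma\ge 0$, so your case split on $\gamma$ is harmless but unnecessary. One small quibble with your aside: the bound $\xi_i\ge N/2$ (and likewise $N(2+\beta_i)/(2(1+\beta_i))$) is sharper than $\xi_i\ge 1$ only when $N\ge 2$.
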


\begin{proof}
By~\eqref{eq:delta_gamma_uniqueness},
\(
\delta>\gamma\max_{i\in\pazocal N}\xi_i .
\)
It remains to show that \(\max_{i\in\pazocal N}\xi_i\ge 1\). From
\eqref{eq:ass_q_g}, for every \(i\in\pazocal N\),
\[
\xi_i
=
\frac{
2|\pazocal N|+|\pazocal N|\beta_i+\sum_{j\in\pazocal N}\beta_j
}{
2(1+\beta_i)
}.
\]
Hence,
\begin{align}
\xi_i-1
&=
\frac{
2(|\pazocal N|-1)
+
(|\pazocal N|-2)\beta_i
+
\sum_{j\in\pazocal N}\beta_j
}{
2(1+\beta_i)
}
\nonumber \\
&=
\frac{
2(|\pazocal N|-1)
+
(|\pazocal N|-1)\beta_i
+
\sum_{j\in\pazocal N\setminus\{i\}}\beta_j
}{
2(1+\beta_i)
}
\label{eq:xi_minus_one}
\end{align}
Since \(|\pazocal N|\ge 1\) and \(\beta_j\ge 0\) for all
\(j\in\pazocal N\), the numerator and the denominator of \eqref{eq:xi_minus_one} are nonnegative. Therefore,
\(
\xi_i\ge 1,
\quad \forall i\in\pazocal N,
\)
and consequently
\(
\max_{i\in\pazocal N}\xi_i\ge 1.
\)
Using \(\gamma\ge 0\), we obtain
\(
\gamma\max_{i\in\pazocal N}\xi_i\ge \gamma.
\)
Combining this inequality with~\eqref{eq:delta_gamma_uniqueness} gives
\(
\delta>\gamma.
\)
\end{proof}

We now decompose the proof of Proposition \ref{prop:computational_tractability} into three parts:

\subsection{Follower computation}
\label{subsec:follower_computation}

Fix \(t\in\pazocal T\) and a leader decision
\((C,\theta)\in\pazocal Y\). We first introduce the notation used in the proof.
Let
\begin{equation}
    s^t:=\sum_{j\in\pazocal N} h_j^{t*}
    \label{eq:s_t}
\end{equation}
denote the total resource commitment in time slot \(t\). Then, for each
\(i\in\pazocal N\),
\begin{equation}
    \sum_{j\in\pazocal N\setminus\{i\}} h_j^{t*}=s^t-h_i^{t*}.
\label{eq:st_h}
\end{equation}
Substituting \eqref{eq:st_h} into \eqref{eq:explicit_hi_star} with \begin{equation}
B_i^t(s,\lambda):=b_i\theta+b_i\gamma s+\lambda .
\label{eq:b_s_lambda}
\end{equation}
We obtain, for any fixed pair \((s,\lambda)\),  
\begin{align}
\varphi_i^t(s,\lambda)
&=
\max\Biggl\{
0,\,
\frac{
-\bigl(B_i^t(s,\lambda)+b_i(\delta-\gamma)\bigr)
+\Delta_i^t(s,\lambda)
}{
2b_i(\delta-\gamma)
}
\Biggr\},
\label{eq:phi_s_lambda}
\\[1mm]
\Delta_i^t(s,\lambda)
&:=
\sqrt{
\bigl(B_i^t(s,\lambda)-b_i(\delta-\gamma)\bigr)^2
+4b_i(\delta-\gamma)A_i^t
}
\label{eq:delta_s_lambda}
\end{align}
Thus, once \(s^t\) and \(\lambda^{t*}\) are known, the equilibrium resource
commitments are recovered from
\begin{equation}
h_i^{t*}=\varphi_i^t(s^t,\lambda^{t*}),
\qquad i\in\pazocal N .
\label{eq:h_phi}
\end{equation}

Define the scalar residuals
\begin{align}
R_t(s;\lambda)
&:=
s-\sum_{i\in\pazocal N}\varphi_i^t(s,\lambda),
\label{eq:R_residual}
\\
Q_t^C(\lambda)
&:=
C-\sum_{i\in\pazocal N}\varphi_i^t(C,\lambda).
\label{eq:Q_residual}
\end{align}

The computation of the follower VE is summarized in
Algorithm~\ref{alg:follower_computation}.

\begin{algorithm}[htbp]
\caption{Computation of the VE at each time slot $t$.}
\label{alg:follower_computation}
\begin{algorithmic}[1]
\REQUIRE Leader decision \((C,\theta)\in\pazocal Y\), time slot \(t\in\pazocal T\)
\ENSURE \(s^t\), \(\lambda^{t*}\), and \(\mathbf h^{t*}(C,\theta)\)

\STATE Compute \(\widehat s^t\) as the solution of
\(
    R_t(s;0)=0
\) \eqref{eq:R_residual}.

\IF{\(\widehat s^t>C\)}
    \STATE Compute \(\widehat\lambda^t\) as the solution of
    \(
        Q_t^C(\lambda)=0
    \) \eqref{eq:Q_residual}.
\ENDIF

\STATE Set
\begin{align}
    s^t=
\begin{cases}
\widehat s^t, & \text{if } \widehat s^t\le C,\\
C, & \text{if } \widehat s^t>C,
\end{cases}
\qquad
\lambda^{t*}=
\begin{cases}
0, & \text{if } \widehat s^t\le C,\\
\widehat\lambda^t, & \text{if } \widehat s^t>C.
\end{cases}
\label{eq:ca}
\end{align}

\STATE Recover
\(  h_i^{t*}=\varphi_i^t(s^t,\lambda^{t*}),\quad i\in\pazocal N
\) \eqref{eq:phi_s_lambda}.
\RETURN \(s^t\), \(\lambda^{t*}\), and \(\mathbf h^{t*}(C,\theta)\).
\end{algorithmic}
\end{algorithm}

We next justify each step of Algorithm~\ref{alg:follower_computation}. The
first Lemma establishes the monotonicity of
\(\varphi_i^t\) \eqref{eq:phi_s_lambda}.
\begin{lemma}
\label{lemma:phi_monotone}
Fix \(t\in\pazocal T\) and \(i\in\pazocal N\). Assume that
\(\delta>\gamma\). Then the function
\(\varphi_i^t(s,\lambda)\) defined in \eqref{eq:phi_s_lambda} is
nonincreasing in \(s\) and nonincreasing in \(\lambda\).
\end{lemma}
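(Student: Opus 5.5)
The plan is to reduce the claim to monotonicity of a scalar expression in the single variable $B:=B_i^t(s,\lambda)=b_i\theta+b_i\gamma s+\lambda$. Since $b_i>0$ and $\gamma\ge0$, $B$ is nondecreasing in $s$ and strictly increasing in $\lambda$. It therefore suffices to show that
\[
\chi(B):=\frac{-\bigl(B+c\bigr)+\sqrt{(B-c)^2+4cA_i^t}}{2c},\qquad c:=b_i(\delta-\gamma)>0,
\]
is nonincreasing in $B$ on $\mathbb R$. The positivity of $c$ is exactly where the hypothesis $\delta>\gamma$ enters. Once this is done, composing the nonincreasing $\chi$ with the nondecreasing maps $s\mapsto B$ and $\lambda\mapsto B$ gives a nonincreasing function, and the outer operation $x\mapsto\max\{0,x\}$ is nondecreasing. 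Hence $\varphi_i^t=\max\{0,\chi\circ B\}$ is nonincreasing in each argument.

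First I would handle the degenerate case. If the discriminant $(B-c)^2+4cA_i^t$ can vanish or be negative (when $A_i^t<0$), I would note that on the region where $\chi\le 0$ the max is $0$ anyway. It then suffices to argue on the set where the square root is real and positive. In fact, if $A_i^t\le 0$, then $\sqrt{(B-c)^2+4cA_i^t}\le |B-c|\le B+c$ whenever $B\ge0$, so $\varphi_i^t\equiv 0$ on the relevant domain $B\ge 0$, and the claim holds trivially. So assume $A_i^t>0$, in which case the discriminant is strictly positive and $\chi$ is differentiable.

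Then I compute
\[
\chi'(B)=\frac{1}{2c}\left(-1+\frac{B-c}{\sqrt{(B-c)^2+4cA_i^t}}\right).
\]
Since $A_i^t>0$ and $c>0$, we have $|B-c|<\sqrt{(B-c)^2+4cA_i^t}$, so the fraction is strictly less than $1$ and $\chi'(B)<0$. Thus $\chi$ is strictly decreasing.

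The only mildly delicate point is bookkeeping of the substitution \eqref{eq:st_h}: writing $\sum_{j\ne i}h_j=s-h_i$ moves $b_i\gamma h_i$ into the quadratic coefficient, giving $\delta-\gamma$. I would simply take \eqref{eq:phi_s_lambda}–\eqref{eq:delta_s_lambda} as the definition, since the lemma is stated for that function. I would also confirm that the domain of $s,\lambda$ is nonnegative so that $B\ge0$ in the trivial case. I do not expect a real obstacle beyond the degenerate-discriminant case.
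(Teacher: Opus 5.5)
Your proposal is correct and follows the paper's proof essentially step for step: the same reduction to the scalar map $\chi_i^t(B)$ with $D_i=b_i(\delta-\gamma)>0$, the same derivative formula, and the same bound $|B-D_i|<\sqrt{(B-D_i)^2+4D_iA_i^t}$. The only difference is in handling the positive part: the paper locates the activation threshold $B=A_i^t$ and argues piecewise, whereas you combine the fact that $x\mapsto\max\{0,x\}$ is nondecreasing with the monotonicity of $\chi_i^t$ on all of $\mathbb{R}$ when $A_i^t>0$, and treat $A_i^t\le 0$ separately; this is slightly cleaner.
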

\begin{proof}[Proof of Lemma~\ref{lemma:phi_monotone}]
Let
\begin{equation}
D_i:=b_i(\delta-\gamma)>0 .
\label{eq:d_pos}
\end{equation}
By \eqref{eq:b_s_lambda},
\begin{equation}
B_i^t(s,\lambda)=b_i\theta+b_i\gamma s+\lambda .
\label{eq:b_s_lambda_recall}
\end{equation}
Hence
\begin{equation}
\frac{\partial B_i^t(s,\lambda)}{\partial s}=b_i\gamma\ge0,
\qquad
\frac{\partial B_i^t(s,\lambda)}{\partial \lambda}=1>0 .
\label{eq:b_monotone}
\end{equation}

For \(B\ge0\), define
\begin{equation}
\chi_i^t(B)
:=
\frac{
-(B+D_i)
+
\sqrt{(B-D_i)^2+4D_iA_i^t}
}{
2D_i
}.
\label{eq:chi_def}
\end{equation}
Then \eqref{eq:phi_s_lambda} can be written as
\begin{equation}
\varphi_i^t(s,\lambda)
=
\left[\chi_i^t\!\left(B_i^t(s,\lambda)\right)\right]^+,
\qquad [x]^+:=\max\{x,0\}.
\label{eq:phi_chi_composition}
\end{equation}

We first identify when the positive part in
\eqref{eq:phi_chi_composition} is active. Since \(2D_i>0\) and
\(B+D_i>0\), we have
\begin{align}
\chi_i^t(B)>0
&\Longleftrightarrow
\sqrt{(B-D_i)^2+4D_iA_i^t}>B+D_i
\nonumber\\
&\Longleftrightarrow
(B-D_i)^2+4D_iA_i^t>(B+D_i)^2
\nonumber\\
&\Longleftrightarrow
A_i^t>B .
\label{eq:positive_part_active}
\end{align}
Therefore, the positive part is active exactly when \(B<A_i^t\). If
\(B\ge A_i^t\), then \([\chi_i^t(B)]^+=0\).

It remains to show that \([\chi_i^t(B)]^+\) is nonincreasing in \(B\).
On the region \(B<A_i^t\), since \(B\ge0\), we have \(A_i^t>0\). Therefore,
\(4D_iA_i^t>0\). Differentiating \eqref{eq:chi_def} with respect to \(B\)
gives
\begin{equation}
\frac{d\chi_i^t(B)}{dB}
=
\frac{
-1+
\dfrac{B-D_i}{
\sqrt{(B-D_i)^2+4D_iA_i^t}
}
}{
2D_i
}.
\label{eq:chi_derivative}
\end{equation}
Since \(4D_iA_i^t>0\), we have
\(
(B-D_i)^2+4D_iA_i^t>(B-D_i)^2.
\)
Taking square roots gives
\(
\sqrt{(B-D_i)^2+4D_iA_i^t}>|B-D_i|.
\)
Consequently,
\[
\frac{B-D_i}{
\sqrt{(B-D_i)^2+4D_iA_i^t}
}
<1.
\]
Indeed, if \(B-D_i\ge0\), then the denominator is strictly larger than the
numerator. If \(B-D_i<0\), then the fraction is negative and is therefore
strictly smaller than \(1\). Hence,
\[
-1+
\frac{B-D_i}{
\sqrt{(B-D_i)^2+4D_iA_i^t}
}
<0.
\]
Since \(2D_i>0\), we obtain, by \eqref{eq:chi_derivative},
\(
\frac{d\chi_i^t(B)}{dB}<0
\quad \text{for } B<A_i^t .
\)
Thus, \([\chi_i^t(B)]^+\) is strictly decreasing on the region
\(B<A_i^t\).

At the threshold \(B=A_i^t\), the left branch equals zero because
\[
(A_i^t-D_i)^2+4D_iA_i^t=(A_i^t+D_i)^2.
\]
Hence
\[
\chi_i^t(A_i^t)
=
\frac{
-(A_i^t+D_i)
+
\sqrt{(A_i^t+D_i)^2}
}{
2D_i
}
=0.
\]
For \(B\ge A_i^t\), we have \([\chi_i^t(B)]^+=0\). Therefore,
\([\chi_i^t(B)]^+\) is strictly decreasing before the threshold, reaches
zero at the threshold, and remains equal to zero after the threshold. Hence
\([\chi_i^t(B)]^+\) is nonincreasing in \(B\).

Finally, \(B_i^t(s,\lambda)\) is nondecreasing in \(s\) and increasing in
\(\lambda\) by \eqref{eq:b_monotone}. Since
\(
\varphi_i^t(s,\lambda)
=
\left[\chi_i^t\!\left(B_i^t(s,\lambda)\right)\right]^+,
\)
and \([\chi_i^t(B)]^+\) is nonincreasing in \(B\), it follows that
\(\varphi_i^t(s,\lambda)\) is nonincreasing in \(s\) and nonincreasing in
\(\lambda\). This proves the lemma.
\end{proof}

The next Lemma establishes the monotonicity of the residuals.

\begin{lemma}
\label{lemma:R_Q_monotone}
For fixed \(\lambda\), the residual \(R_t(s;\lambda)\)  \eqref{eq:R_residual} is strictly increasing
in~\(s\). Moreover, \(Q_t^C(\lambda)\) \eqref{eq:Q_residual} is nondecreasing in \(\lambda\).
\end{lemma}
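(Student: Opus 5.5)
The plan is to reduce both claims to Lemma~\ref{lemma:phi_monotone}, which states that each $\varphi_i^t(s,\lambda)$ is nonincreasing in $s$ and in $\lambda$ whenever $\delta>\gamma$. The hypothesis $\delta>\gamma$ is available throughout: the uniqueness condition \eqref{eq:delta_gamma_uniqueness} is in force, and Lemma~\ref{lemma:delta_greater_gamma} then gives $\delta>\gamma$. Both claims follow by finite sums of monotone functions.

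\emph{Monotonicity of $R_t$.} Fix $\lambda$ and take $s_1<s_2$. By Lemma~\ref{lemma:phi_monotone}, $\varphi_i^t(s_2,\lambda)\le\varphi_i^t(s_1,\lambda)$ for every $i\in\pazocal N$. Summing over $i$ gives
\[
\sum_{i\in\pazocal N}\varphi_i^t(s_2,\lambda)\le\sum_{i\in\pazocal N}\varphi_i^t(s_1,\lambda).
\]
Therefore
\[
R_t(s_2;\lambda)-R_t(s_1;\lambda)=(s_2-s_1)+\sum_{i\in\pazocal N}\bigl(\varphi_i^t(s_1,\lambda)-\varphi_i^t(s_2,\lambda)\bigr)\ge s_2-s_1>0 .
\]
Strictness comes entirely from the identity term $s$; no strict decrease of $\varphi_i^t$ is needed. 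In particular, the inactive branch, where $\varphi_i^t$ is flat at zero, causes no difficulty.

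\emph{Monotonicity of $Q_t^C$.} Here the first argument is fixed at $C$, so $Q_t^C(\lambda)=C-\sum_{i\in\pazocal N}\varphi_i^t(C,\lambda)$. Each $\varphi_i^t(C,\cdot)$ is nonincreasing in $\lambda$ by the same lemma. Hence the sum is nonincreasing in $\lambda$, and its negative plus the constant $C$ is nondecreasing.

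\emph{Main point of care.} The only delicate step is checking that Lemma~\ref{lemma:phi_monotone} applies on the whole relevant domain. That lemma's argument uses $B\ge 0$, which holds whenever $s\ge 0$ and $\lambda\ge 0$, since $\theta,\gamma\ge0$. I would therefore state both monotonicity claims for $s\ge0$ and $\lambda\ge0$, which are the only values used in Algorithm~\ref{alg:follower_computation}. I would also remark that $Q_t^C$ is in fact strictly increasing on $\{\lambda: \exists i,\ \varphi_i^t(C,\lambda)>0\}$, because there $\chi_i^t$ is strictly decreasing and $\partial B/\partial\lambda=1$. This strengthening is not needed for the statement, but it would support uniqueness of $\widehat\lambda^t$ later.
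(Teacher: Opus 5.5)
Your proposal is correct and follows the paper's proof: both claims come from summing the componentwise monotonicity of Lemma~\ref{lemma:phi_monotone}, and the strict increase of $R_t$ comes from the identity term $s$. Your explicit restriction to $s,\lambda\ge 0$ (so that $B\ge 0$) and your appeal to Lemma~\ref{lemma:delta_greater_gamma} for $\delta>\gamma$ state hypotheses that the paper's short proof leaves implicit.
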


\begin{proof}
By Lemma~\ref{lemma:phi_monotone}, for fixed \(\lambda\), the map
\(
    s\longmapsto \sum_{i\in\pazocal N}\varphi_i^t(s,\lambda)
\)
is nonincreasing. Therefore,
\(
    R_t(s;\lambda)
    =
    s-\sum_{i\in\pazocal N}\varphi_i^t(s,\lambda)
\)
is strictly increasing in~\(s\).

Similarly, by Lemma~\ref{lemma:phi_monotone}, the map
\(
    \lambda\longmapsto \sum_{i\in\pazocal N}\varphi_i^t(C,\lambda)
\)
is nonincreasing. Hence
\(
    Q_t^C(\lambda)
    =
    C-\sum_{i\in\pazocal N}\varphi_i^t(C,\lambda)
\)
is nondecreasing in \(\lambda\).
\end{proof}


\begin{lemma}
\label{lemma:follower_bisection}
The scalar equation \(R_t(s;0)=0\) can be solved by bisection on any interval
\([0,\overline s^t]\) containing \(\widehat s^t\). If \(\widehat s^t>C\), then
the equation \(Q_t^C(\lambda)=0\) can be solved by bisection on any interval
\([0,\overline\lambda_t]\) containing $\lambda^{t*}$.
\end{lemma}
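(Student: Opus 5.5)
The plan is to reduce both claims to the intermediate value theorem plus monotonicity, using Lemma~\ref{lemma:R_Q_monotone}. Bisection on a bracket $[a,b]$ works for a scalar equation $F(x)=0$ when three conditions hold: $F$ is continuous, $F(a)\le 0\le F(b)$, and $F$ is monotone. Monotonicity guarantees that the sign test at the midpoint always keeps a root inside the retained half-interval. I would therefore establish, for each residual, (i) continuity, (ii) the sign change at the endpoints, and (iii) monotonicity; the last is already given by Lemma~\ref{lemma:R_Q_monotone}.

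\textbf{Continuity.} By Lemma~\ref{lemma:delta_greater_gamma}, $D_i=b_i(\delta-\gamma)>0$. Hence $\chi_i^t$ in \eqref{eq:chi_def} is a continuous function of $B$: it is a composition of affine maps and a square root whose argument is nonnegative. The map $(s,\lambda)\mapsto B_i^t(s,\lambda)$ is affine, and $x\mapsto[x]^+$ is continuous. So each $\varphi_i^t$ is jointly continuous, and therefore so are $R_t(\cdot;0)$ and $Q_t^C(\cdot)$.

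\textbf{Equation $R_t(s;0)=0$.} At $s=0$ we have $R_t(0;0)=-\sum_i\varphi_i^t(0,0)\le 0$. To get a valid upper endpoint, I would take $\overline s^t:=\sum_i\varphi_i^t(0,0)$. By Lemma~\ref{lemma:phi_monotone}, $\varphi_i^t(s,0)\le\varphi_i^t(0,0)$ for all $s\ge 0$, so $R_t(\overline s^t;0)\ge \overline s^t-\overline s^t=0$. Since $R_t(\cdot;0)$ is strictly increasing (Lemma~\ref{lemma:R_Q_monotone}), the root $\widehat s^t$ is unique and lies in $[0,\overline s^t]$. Strict monotonicity also shows that on any bracket containing $\widehat s^t$ the residual is negative to the left of $\widehat s^t$ and positive to the right. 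Bisection therefore converges to $\widehat s^t$, reaching accuracy $\varepsilon$ after $O(\log(\overline s^t/\varepsilon))$ iterations.

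\textbf{Equation $Q_t^C(\lambda)=0$ when $\widehat s^t>C$.} At $\lambda=0$ I need $Q_t^C(0)\le 0$, i.e.\ $\sum_i\varphi_i^t(C,0)\ge C$. This follows from strict monotonicity of $R_t(\cdot;0)$: since $C<\widehat s^t$, we get $R_t(C;0)<R_t(\widehat s^t;0)=0$, and $R_t(C;0)=C-\sum_i\varphi_i^t(C,0)=Q_t^C(0)$. For the upper endpoint I would choose $\overline\lambda_t:=\max_i A_i^t$ (or $0$ if this is negative). Then $B_i^t(C,\overline\lambda_t)\ge\overline\lambda_t\ge A_i^t$, so by \eqref{eq:positive_part_active} every $\varphi_i^t(C,\overline\lambda_t)=0$ and $Q_t^C(\overline\lambda_t)=C\ge 0$. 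Continuity and the intermediate value theorem then give a root in $[0,\overline\lambda_t]$.

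\textbf{Main obstacle.} The hardest step is that $Q_t^C$ is only nondecreasing, so its zero set may be an interval $[\lambda_-,\lambda_+]$ rather than a single point. I would handle this by noting that bisection with the rule ``keep the half where $Q_t^C(\text{left})\le 0\le Q_t^C(\text{right})$'' still converges to some point of the zero set, since monotonicity makes that set a closed interval. Moreover, every zero yields the same commitments $h_i^{t*}=\varphi_i^t(C,\lambda)$ with $\sum_i h_i^{t*}=C$. To see why, suppose some $\varphi_i^t$ decreased strictly between two zeros; then, since all $\varphi_j^t$ are nonincreasing, the sum would decrease strictly, contradicting that both points are zeros. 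Hence each $\varphi_i^t$ is constant on $[\lambda_-,\lambda_+]$. This matches the uniqueness of the VE (Theorem~\ref{thm:gne_uniqueness}). Any bracket containing $\lambda^{t*}$ on which the endpoint signs hold therefore suffices.
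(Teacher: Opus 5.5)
Your proposal is correct and follows the same route as the paper: continuity of the residuals, together with the monotonicity from Lemma~\ref{lemma:R_Q_monotone}, gives a sign change on the bracket, and the classical bisection theorem then applies. You go somewhat further than the paper's proof, which simply assumes a bracket containing \(\widehat s^t\) (and hence that this root exists) and a bracket with \(Q_t^C(0)\le 0\le Q_t^C(\overline\lambda_t)\); your explicit brackets \(\overline s^t=\sum_i\varphi_i^t(0,0)\) and \(\overline\lambda_t=\max_i A_i^t\), the identity \(Q_t^C(0)=R_t(C;0)<0\) when \(\widehat s^t>C\), and your remark that a nonunique multiplier root leaves every \(\varphi_i^t(C,\lambda)\) unchanged supply the existence and well-posedness that the paper leaves implicit.
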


\begin{proof}
By Lemma~\ref{lemma:R_Q_monotone}, \(R_t(\cdot;0)\) is strictly increasing.
Moreover, \(R_t(\cdot;0)\) is continuous because each
\(\varphi_i^t(\cdot,0)\) is continuous.

Let \(\widehat s^t\) be a solution of
\(
    R_t(s;0)=0.
\)
Assume that \(\widehat s^t\in[0,\overline s^t]\). Then
\(
    0\le \widehat s^t\le \overline s^t.
\)
Since \(R_t(\cdot;0)\) is strictly increasing, we obtain
\[
    R_t(0;0)
    \le
    R_t(\widehat s^t;0)
    \le
    R_t(\overline s^t;0).
\]
Using \(R_t(\widehat s^t;0)=0\), this gives
\(
    R_t(0;0)\le 0\le R_t(\overline s^t;0).
\)
If one endpoint value is zero, then the solution has already been found.
Otherwise,
\(
    R_t(0;0)<0<R_t(\overline s^t;0),
\)
and hence
\(
    R_t(0;0)R_t(\overline s^t;0)<0.
\)
Therefore, by the classical bisection theorem for continuous functions with
opposite signs at the endpoints  \citep[Theorem 2.1]{burden2015numerical} and \citep{sikorski1982bisection}, the equation \(R_t(s;0)=0\) can be solved by
bisection on \([0,\overline s^t]\). Since \(R_t(\cdot;0)\) is strictly
increasing, this solution is unique.


By Lemma~\ref{lemma:R_Q_monotone}, \(Q_t^C(\lambda)\) is nondecreasing in
\(\lambda\). Moreover, \(Q_t^C\) is continuous because each
\(\varphi_i^t(C,\cdot)\) is continuous. Let
\([0,\overline\lambda_t]\) be an interval satisfying
\[
    Q_t^C(0)\le 0\le Q_t^C(\overline\lambda_t).
\]
If one endpoint value is zero, then the solution has already been found.
Otherwise,
\(
    Q_t^C(0)<0<Q_t^C(\overline\lambda_t),
\)
and hence
\(
    Q_t^C(0)Q_t^C(\overline\lambda_t)<0.
\)
Therefore, by the classical bisection theorem \citep[Theorem 2.1]{burden2015numerical}, the equation
\(
    Q_t^C(\lambda)=0
\)
can be solved by bisection on \([0,\overline\lambda_t]\).
\end{proof}


\begin{lemma}
\label{lemma:follower_correctness}
Algorithm~\ref{alg:follower_computation} recovers the unique follower VE
\(\mathbf h^{t*}(C,\theta)\) and its associated multiplier
\(\lambda^{t*}(C,\theta)\).
\end{lemma}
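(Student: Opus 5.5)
The plan is to check that the output of Algorithm~\ref{alg:follower_computation} satisfies the KKT system \eqref{eq:stationarity_special_case}--\eqref{eq:complementary}, with a single multiplier $\lambda^{t}$ shared by all followers. This system characterizes solutions of $\mathrm{VI}(X(C),\mathcal G(\cdot;\theta))$ restricted to slot $t$. Uniqueness then comes from Theorem~\ref{thm:gne_uniqueness}, via Lemma~\ref{lemma:delta_gamma_uniqueness}.

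\emph{Step 1: reduction to one slot.} The disutilities $P_i$ are separable over $t$, and $X(C)$ is a product over $t$ of the slot sets $\{\mathbf h^t\ge 0:\sum_j h_j^t\le C\}$. Hence the VI splits into $|\pazocal T|$ independent slot VIs. Each slot VI has linear constraints and a monotone, continuous operator. Its solutions are therefore exactly the points $\mathbf h^{t}$ for which there exist $\lambda^t\ge 0$ and $\mu_i^t\ge0$ satisfying \eqref{eq:stationarity_special_case}--\eqref{eq:complementary}. The one common $\lambda^t$ is precisely what distinguishes a VE from an arbitrary GNE.

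\emph{Step 2: the per-firm map $\varphi_i^t$ encodes individual KKT.} Substitute $\sum_{j\ne i}h_j^t=s-h_i^t$ into the stationarity condition. It becomes
\[
\frac{A_i^t}{1+h_i^t}=b_i\theta+b_i\gamma s+\lambda+b_i(\delta-\gamma)h_i^t .
\]
This uses $\delta>\gamma$ from Lemma~\ref{lemma:delta_greater_gamma}. Repeat the argument of Proposition~\ref{prop:follower_explicit}, with $\delta$ replaced by $\delta-\gamma$ and $B_i^t(s,\lambda)$ as in \eqref{eq:b_s_lambda}. It shows that $h_i=\varphi_i^t(s,\lambda)$ is the unique $h_i\ge0$ for which some $\mu_i\ge0$ makes stationarity and $\mu_ih_i=0$ hold. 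The multiplier is $\mu_i=[B_i^t-A_i^t]^+$ when $h_i=0$, which by \eqref{eq:positive_part_active} happens exactly when $B_i^t\ge A_i^t$. The inactive case is consistent because $s-h_i=s$ when $h_i=0$. It follows that any pair $(s,\lambda)$ with $s=\sum_i\varphi_i^t(s,\lambda)$ produces a profile $\mathbf h^t$ that satisfies every follower's individual KKT conditions.

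\emph{Step 3: the two branches.} First I establish existence of $\widehat s^t$. We have $R_t(0;0)\le 0$, and $R_t(s;0)\to\infty$ as $s\to\infty$ because each $\varphi_i^t$ is nonincreasing (Lemma~\ref{lemma:phi_monotone}). Together with continuity and strict monotonicity (Lemma~\ref{lemma:R_Q_monotone}), this gives a unique root.

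If $\widehat s^t\le C$, set $\lambda=0$. Primal feasibility holds and complementary slackness is trivial.

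If $\widehat s^t>C$, strict monotonicity gives $Q_t^C(0)=R_t(C;0)<0$. For $\lambda\ge\max_iA_i^t$ every $\varphi_i^t(C,\lambda)=0$, so $Q_t^C(\lambda)=C>0$. By continuity and Lemma~\ref{lemma:follower_bisection}, there is a root $\widehat\lambda^t>0$. At this root $\sum_i\varphi_i^t(C,\widehat\lambda^t)=C=s$, so the fixed-point identity holds, the capacity constraint is tight, and $\lambda^{t*}=\widehat\lambda^t>0$ satisfies complementary slackness.

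In both branches the full KKT system holds. By Step 1 the output is therefore a VE, and by uniqueness it is \emph{the} VE $\mathbf h^{t*}(C,\theta)$.

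\emph{Step 4: uniqueness of $\lambda^{t*}$ (the main obstacle).} Uniqueness of $\mathbf h^{t*}$ does not automatically pin down the multiplier, and a different $\lambda$ satisfying the KKT system might in principle exist. I would argue as follows.

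If $\sum_ih_i^{t*}<C$, complementary slackness forces $\lambda^{t*}=0$.

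If $\sum_ih_i^{t*}=C>0$, some firm $i$ is active. Its stationarity equation expresses $\lambda^{t*}$ explicitly in terms of $\mathbf h^{t*}$, so $\lambda^{t*}$ is determined.

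If $C=0$, the multiplier need not be unique. I would state the result for the multiplier selected by the algorithm, namely the smallest $\lambda$ with $Q_t^C(\lambda)=0$ found by bisection. This degenerate case is the step where I expect care to be needed.
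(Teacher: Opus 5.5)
Your proposal is correct and follows the same route as the paper: split on $\widehat s^t\le C$ versus $\widehat s^t>C$, set $\lambda^{t*}=0$ or solve $Q_t^C(\lambda)=0$, check that $h_i^{t*}=\varphi_i^t(s^t,\lambda^{t*})$ satisfies the KKT and complementarity conditions with a common multiplier, and conclude by uniqueness of the VE. Your version also fills in details the paper leaves implicit: existence of the two roots, the fact that $\varphi_i^t$ encodes each firm's KKT conditions (which uses $\delta>\gamma$), and why the constraint binds when $\widehat s^t>C$.

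The only loose end is your degenerate case $C=0$. There, $Q_t^0(\lambda)$ equals $0$ rather than $C>0$ for large $\lambda$, and a bisection that stops at an exact zero need not return the smallest root. It suffices to note that the zero set of $Q_t^0$ on $\lambda\ge 0$ is exactly the set $[\max\{0,\max_i(A_i^t-b_i\theta)\},\infty)$ of valid multipliers, so any root the algorithm returns is an associated multiplier.
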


\begin{proof}
First compute \(\widehat s^t\) from
\(
    R_t(s;0)=0.
\)
Then
\(
    \widehat s^t
    =
    \sum_{i\in\pazocal N}\varphi_i^t(\widehat s^t,0).
\)
If \(\widehat s^t\le C\), the unconstrained candidate is feasible for the
shared-capacity constraint. Therefore the capacity constraint is not binding,
and the multiplier is
\(
    \lambda^{t*}=0.
\)
Thus
\(
    s^t=\widehat s^t,
    \quad
    h_i^{t*}=\varphi_i^t(s^t,0),
    \quad i\in\pazocal N .
\)

If \(\widehat s^t>C\), the unconstrained candidate violates the capacity
constraint. Therefore the shared-capacity constraint binds, and
\(
    s^t=C.
\)
The multiplier is then chosen to satisfy
\(
    Q_t^C(\lambda^{t*})=0,
\)
that is,
\(
    C=\sum_{i\in\pazocal N}\varphi_i^t(C,\lambda^{t*}).
\)
Then
\(
    h_i^{t*}=\varphi_i^t(C,\lambda^{t*}),
    \quad i\in\pazocal N .
\)

In both cases, \(h_i^{t*}\) is recovered from
\(
    h_i^{t*}=\varphi_i^t(s^t,\lambda^{t*}),
    \quad i\in\pazocal N,
\)
and the scalar pair \((s^t,\lambda^{t*})\) satisfies the corresponding
feasibility and complementarity conditions \eqref{eq:mu_non_neg} \eqref{eq:complementary}. Since the follower VE is unique, the vector returned by the algorithm
is the unique VE.
\end{proof}

The next Lemma gives the computational cost.

\begin{lemma}
\label{lemma:follower_complexity}
Assume that \(\overline s^t\) and \(\overline\lambda_t\) are fixed
problem-dependent bounds independent of the accuracy tolerance
\(\varepsilon\). Then Algorithm~\ref{alg:follower_computation} has worst-case
complexity
\(
    O\!\left(
    |\pazocal N|\log\frac{1}{\varepsilon}
    \right).
\)
\end{lemma}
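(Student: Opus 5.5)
The plan is to count the cost of Algorithm~\ref{alg:follower_computation} step by step. Each call to $\varphi_i^t(s,\lambda)$ in \eqref{eq:phi_s_lambda}--\eqref{eq:delta_s_lambda} takes a constant number of arithmetic operations: form $B_i^t(s,\lambda)$, one square root, one division, and one $\max$. Here $A_i^t$ and $b_i$ are precomputed constants that do not depend on $(s,\lambda)$. Hence one evaluation of either residual $R_t(s;0)$ in \eqref{eq:R_residual} or $Q_t^C(\lambda)$ in \eqref{eq:Q_residual} costs $O(|\pazocal N|)$, since it sums $|\pazocal N|$ such terms.

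Next I bound the number of iterations. By Lemma~\ref{lemma:follower_bisection}, $R_t(\cdot;0)$ is continuous and strictly increasing with a sign change on $[0,\overline s^t]$. Bisection halves the bracketing interval at each step, so after $k$ iterations the midpoint is within $\overline s^t 2^{-k}$ of $\widehat s^t$. Reaching accuracy $\varepsilon$ therefore needs $k=\lceil\log_2(\overline s^t/\varepsilon)\rceil=O(\log(1/\varepsilon))$ iterations, because $\overline s^t$ is assumed independent of $\varepsilon$ \citep[Theorem 2.1]{burden2015numerical}.

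The same argument applies to $Q_t^C(\cdot)$ on $[0,\overline\lambda_t]$ when $\widehat s^t>C$. It is continuous and monotone (Lemma~\ref{lemma:R_Q_monotone}) and brackets a root, so it takes $O(\log(\overline\lambda_t/\varepsilon))=O(\log(1/\varepsilon))$ iterations. This step is where care is needed. $Q_t^C$ is only nondecreasing, so its root may not be unique. Bisection still converges to some point of the closed root set, which suffices for the iteration count. Lemma~\ref{lemma:follower_correctness} shows that the recovered $\mathbf h^{t*}$ is the unique VE whatever root is selected.

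The remaining steps each cost at most $O(|\pazocal N|)$. The comparison $\widehat s^t\le C$ and the assignment \eqref{eq:ca} are $O(1)$, and the final recovery of $h_i^{t*}=\varphi_i^t(s^t,\lambda^{t*})$ for all $i$ is $O(|\pazocal N|)$. Summing gives at most two bisections, each costing $O(|\pazocal N|\log(1/\varepsilon))$, plus $O(|\pazocal N|)$ overhead. The total is $O(|\pazocal N|\log(1/\varepsilon))$, as claimed.
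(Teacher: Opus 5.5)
Your proposal is correct and follows essentially the same route as the paper. Both arguments multiply $O(|\pazocal N|)$ per residual evaluation by $O(\log(1/\varepsilon))$ bisection iterations, for each of the at most two scalar equations $R_t(s;0)=0$ and $Q_t^C(\lambda)=0$, using that $\overline s^t$ and $\overline\lambda_t$ do not depend on $\varepsilon$; your caveat about a possibly non-unique root of $Q_t^C$ is harmless, and in fact for $C>0$ that root is unique because each $\varphi_i^t(C,\cdot)$ is strictly decreasing wherever it is positive.
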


\begin{proof}
Each evaluation of \(R_t(\cdot;\lambda)\) or \(Q_t^C(\cdot)\) requires
computing \(\varphi_i^t(\cdot,\cdot)\) for every follower
\(i\in\pazocal N\). Since each \(\varphi_i^t\) is evaluated in constant time,
one residual evaluation costs
\(
    O(|\pazocal N|).
\)

By the bisection error bound recalled in \citep[Theorem 2.1]{burden2015numerical}, after
\(n\) bisection evaluations on an interval \([a,b]\), the error is of order
\(
    \frac{b-a}{2^{n+1}}.
\)
Hence, to achieve accuracy \(\varepsilon\), it is sufficient to take
\(
    n
    =
    O\!\left(
    \log\frac{b-a}{\varepsilon}
    \right).
\)

For the equation \(R_t(s;0)=0\), the bisection interval is
\([0,\overline s^t]\), whose length is \(\overline s^t\). Since
\(\overline s^t\) is fixed with respect to \(\varepsilon\),
\(
    \log\frac{\overline s^t}{\varepsilon}
    =
    \log(\overline s^t)+\log\frac{1}{\varepsilon}
    =
    O\!\left(\log\frac{1}{\varepsilon}\right).
\)
Thus solving \(R_t(s;0)=0\) requires
\(
    O\!\left(\log\frac{1}{\varepsilon}\right)
\)
bisection iterations, and therefore costs
\(
    O\!\left(
    |\pazocal N|\log\frac{1}{\varepsilon}
    \right).
\)

In the binding case, the equation \(Q_t^C(\lambda)=0\) is solved on
\([0,\overline\lambda_t]\). Since \(\overline\lambda_t\) is also fixed with
respect to \(\varepsilon\),
\(
    \log\frac{\overline\lambda_t}{\varepsilon}
    =
    O\!\left(\log\frac{1}{\varepsilon}\right).
\)
Therefore the multiplier step also costs
\(
    O\!\left(
    |\pazocal N|\log\frac{1}{\varepsilon}
    \right).
\)

Thus, the
worst-case complexity of the algorithm is
\(
    O\!\left(
    |\pazocal N|\log\frac{1}{\varepsilon}
    \right).
\)
\end{proof}

Combining Lemma~\ref{lemma:follower_bisection},
Lemma~\ref{lemma:follower_correctness}, and
Lemma~\ref{lemma:follower_complexity}, we obtain that the unique follower VE
can be computed by bisection on at most two scalar monotone equations: first
\(R_t(s;0)=0\), and, only if the shared-capacity constraint is binding,
\(Q_t^C(\lambda)=0\). Then the overall computational cost is
\(
    O\!\left(
    |\pazocal N|\log\frac{1}{\varepsilon}
    \right).
\)
\subsection{Leader capacity computation}
\label{subsec:leader_capacity_computation}
Fix a price
\(\theta\in[0,\overline \theta]\). The leader problem~\eqref{eq:leader_problem}
is
\begin{equation}
\max_{0\le C\le \overline C}\ \Phi(C,\theta),
\qquad
\Phi(C,\theta)
:=
\sum_{t\in\pazocal T}\sum_{i\in\pazocal N}
\theta\,h_i^{t*}(C,\theta)-Cost(C),
\label{eq:leader_fixed_theta_problem}
\end{equation}
where \(h_i^{t*}(C,\theta)\) denotes the unique VE of follower \(i\) at
time~\(t\).

For each \(t\in\pazocal T\), let
\(
\widehat{\mathbf h}^{\,t}(\theta)
=
\bigl(
\widehat h_1^t(\theta),\dots,
\widehat h_{|\pazocal N|}^t(\theta)
\bigr)
\)
denote the unconstrained follower resource commitment obtained from the follower-side
computation with multiplier equal to zero ($\lambda^t=0$) (see \eqref{eq:explicit_hi_star}). Define the corresponding
unconstrained total resource by
\begin{equation}
\widehat s^t(\theta)
:=
\sum_{i\in\pazocal N}\widehat h_i^{\,t}(\theta).
\label{eq:def_Dhat}
\end{equation}
When the dependence on the fixed price must be emphasized, we write
$\hat{s}^t(\theta)$ for the quantity denoted by $\hat{s}^t$ in Algorithm~\ref{alg:follower_computation}.

The leader capacity is computed by Algorithm~\ref{alg:leader_capacity_computation}.

\begin{algorithm}[htbp]
\caption{Computation of the optimal leader capacity for fixed \(\theta\)}
\label{alg:leader_capacity_computation}
\begin{algorithmic}[1]
\REQUIRE Fixed price \(\theta\in[0,\overline\theta]\), capacity bound \(\overline C\)
\ENSURE Optimal capacity \(C^*(\theta)\)

\STATE For every \(t\in\pazocal T\), compute \(\widehat s^t(\theta)\) \eqref{eq:def_Dhat}.

\STATE Form the ordered list
\(
0=c_0<c_1<\cdots<c_M=\overline C
\).

\STATE For every \(\ell=1,\dots,M\), compute
\(
k_\ell
=
\left|
\left\{
t\in\pazocal T:
\widehat s^t(\theta)\ge c_\ell
\right\}
\right|.
\)

\STATE For each interval \(J_\ell=[c_{\ell-1},c_\ell]\), compute
\(
C_\ell^*
\in
\arg\max_{C\in J_\ell}
\left\{
\theta k_\ell C-Cost(C)
\right\}.
\)

\STATE Define the finite candidate set
\[
\mathcal C(\theta)
=
\{C_\ell^*:\ell=1,\dots,M\}
\cup
\left(
\{\widehat s^t(\theta):t\in\pazocal T\}
\cap[0,\overline C]
\right)
\cup
\{0,\overline C\}.
\]

\STATE Compute
\(
C^*(\theta)
\in
\arg\max_{C\in\mathcal C(\theta)}
\left\{
\theta\sum_{t\in\pazocal T}
\min\{\widehat s^t(\theta),C\}
-
Cost(C)
\right\}.
\)

\RETURN \(C^*(\theta)\).
\end{algorithmic}
\end{algorithm}

We now justify the steps of Algorithm~\ref{alg:leader_capacity_computation}.
The first Lemma reduces the total follower resource commitment in each slot to a
scalar minimum.

\begin{lemma}
\label{lemma:leader_total_min}
For every \(t\in\pazocal T\) and every \(C\in[0,\overline C]\), the total
VE resource commitment satisfies
\begin{equation}
\sum_{i\in\pazocal N}h_i^{t*}(C,\theta)
=
\min\{\widehat s^t(\theta),C\}.
\label{eq:total_min_formula}
\end{equation}
\end{lemma}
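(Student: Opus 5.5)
We prove \eqref{eq:total_min_formula} one time slot at a time. This is legitimate because, under \eqref{eq:revenue}, \eqref{eq:price} and \eqref{eq:psi}, the pseudo-gradient $\mathcal G$ and the set $X(C)$ separate across $t$. So the unique VE of Theorem~\ref{thm:gne_uniqueness}, restricted to slot $t$, is characterized by the KKT system \eqref{eq:stationarity_special_case}--\eqref{eq:complementary}, with one common multiplier $\lambda^{t*}$. By Proposition~\ref{prop:follower_explicit}, rewritten through \eqref{eq:st_h}, every KKT point satisfies $h_i^{t*}=\varphi_i^t(s^t,\lambda^{t*})$ with $s^t=\sum_i h_i^{t*}$. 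Hence $s^t$ is a root of $R_t(\cdot;\lambda^{t*})$. We use Lemma~\ref{lemma:delta_greater_gamma} to guarantee $\delta>\gamma$, so that Lemmas~\ref{lemma:phi_monotone} and~\ref{lemma:R_Q_monotone} apply.

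\emph{Step 1: $\widehat s^t(\theta)$ is well defined.} The map $R_t(\cdot;0)$ is continuous and, by Lemma~\ref{lemma:R_Q_monotone}, strictly increasing. It satisfies $R_t(0;0)=-\sum_i\varphi_i^t(0,0)\le 0$. Since $\varphi_i^t(s,0)$ is nonincreasing in $s$, it is bounded by $\varphi_i^t(0,0)$, so $R_t(s;0)\to\infty$ as $s\to\infty$. Therefore $R_t(\cdot;0)$ has exactly one root, and this root is $\widehat s^t(\theta)$.

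\emph{Step 2: the case $\widehat s^t\le C$.} Take $\lambda=0$, $s=\widehat s^t$ and $h_i=\varphi_i^t(\widehat s^t,0)$. These satisfy primal feasibility and complementary slackness, and stationarity holds by construction of $\varphi_i^t$ (Proposition~\ref{prop:follower_explicit}). So this is a KKT point, hence a VE. By uniqueness it is the VE, and the sum of commitments equals $\widehat s^t=\min\{\widehat s^t,C\}$.

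\emph{Step 3: the case $\widehat s^t>C$, which we expect to be the main step.} First, suppose the capacity constraint were slack, $s^t<C$. Complementary slackness then forces $\lambda^{t*}=0$, so $s^t$ would be a root of $R_t(\cdot;0)$. By the uniqueness in Step~1, $s^t=\widehat s^t>C$, which contradicts feasibility. Hence $s^t=C$, which is what \eqref{eq:total_min_formula} requires.

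For completeness (this is also what Algorithm~\ref{alg:follower_computation} uses), we also exhibit the multiplier. Note that $Q_t^C(0)=R_t(C;0)<R_t(\widehat s^t;0)=0$ by strict monotonicity. For $\lambda\ge\max_i A_i^t$, we have $B_i^t(C,\lambda)\ge A_i^t$, so every $\varphi_i^t(C,\lambda)=0$ by \eqref{eq:positive_part_active}, and therefore $Q_t^C(\lambda)=C\ge 0$. Continuity of $Q_t^C$ then gives a root $\lambda^{t*}\ge 0$. The resulting KKT point is the unique VE, and its total commitment is $C=\min\{\widehat s^t,C\}$.

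The degenerate case $C=0$ is immediate, since $X(0)=\{\mathbf 0\}$ and both sides of \eqref{eq:total_min_formula} equal $0$.

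The delicate point is the step from ``slack constraint'' to ``$\lambda^{t*}=0$'' and then to ``$s^t=\widehat s^t$''. It relies on two facts: the root of $R_t(\cdot;0)$ is unique, and every VE admits the $\varphi_i^t$ representation with a single shared multiplier, which holds by the variational (common shadow price) structure of the VE.
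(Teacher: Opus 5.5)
Your proof is correct and follows the same two-case split as the paper's proof: either the unconstrained total $\widehat s^t(\theta)\le C$, so the multiplier is zero, or $\widehat s^t(\theta)>C$, so the constraint binds, with both cases resting on the follower computation of Algorithm~\ref{alg:follower_computation} and the uniqueness of the VE. Your version is more careful than the paper's, since you show that $\widehat s^t(\theta)$ is the unique root of $R_t(\cdot;0)$ and turn the paper's ``the unconstrained candidate is infeasible, hence the constraint binds'' into an explicit complementary-slackness contradiction.
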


\begin{proof}
Fix \(t\in\pazocal T\) and \(C\in[0,\overline C]\). By Algorithm \ref{alg:follower_computation} and \eqref{eq:ca},
if
\(
\widehat s^t(\theta)\le C,
\)
where \(\widehat s^t(\theta)\) is defined in \eqref{eq:def_Dhat}. Then the unconstrained follower candidate $\widehat{\mathbf h}^{\,t}(\theta)$ is feasible for the shared-capacity
constraint. Hence the capacity multiplier $\lambda^{t*}$ is zero, and
\begin{equation}
\sum_{i\in\pazocal N}h_i^{t*}(C,\theta)
=
\widehat s^t(\theta).
\label{eq:leader_nonbinding_total}
\end{equation}
If instead
\(
\widehat s^t(\theta)>C,
\)
then the unconstrained follower candidate $\widehat{\mathbf h}^{\,t}(\theta)$ violates the shared-capacity
constraint. Hence the capacity constraint binds, and
\begin{equation}
\sum_{i\in\pazocal N}h_i^{t*}(C,\theta)
=
C.
\label{eq:leader_binding_total}
\end{equation}
Combining \eqref{eq:leader_nonbinding_total} and
\eqref{eq:leader_binding_total} gives \eqref{eq:total_min_formula}.
\end{proof}

For fixed \(\theta\in[0,\overline\theta]\), applying \eqref{eq:total_min_formula} to each \(t\in\pazocal T\) gives

\begin{equation}
\Phi(C,\theta)
=
\theta\sum_{t\in\pazocal T}
\min\{\widehat s^t(\theta),C\}
-
Cost(C),
\qquad
0\le C\le \overline C .
\label{eq:leader_obj_min_form}
\end{equation}

We now introduce the breakpoints of \eqref{eq:leader_obj_min_form}. Let
\begin{equation}
0=c_0<c_1<\cdots<c_M=\overline C
\label{eq:leader_ordered_breakpoints}
\end{equation}
be the ordered list obtained from the boundary points \(0,\overline C\) and
the distinct breakpoints \(\widehat s^t(\theta)\) that belong to
\([0,\overline C]\). Since there are at most \(|\pazocal T|\) such breakpoints,
\begin{equation}
M\le |\pazocal T|+1.
\label{eq:leader_number_intervals}
\end{equation}
The points in \eqref{eq:leader_ordered_breakpoints} define the regime
intervals
\begin{equation}
J_\ell:=[c_{\ell-1},c_\ell],
\qquad
\ell=1,\dots,M.
\label{eq:regime_intervals}
\end{equation}
For each interval \(J_\ell\), define
\begin{equation}
\pazocal T_{>}^\ell(\theta)
:=
\left\{
t\in\pazocal T:
\widehat s^t(\theta)\ge c_\ell
\right\},
\label{eq:T_greater_ell}
\end{equation}
and
\begin{equation}
k_\ell
:=
|\pazocal T_{>}^\ell(\theta)|.
\label{eq:k_ell_def}
\end{equation}

\begin{lemma}
\label{lemma:leader_branch_formula}
For every \(\ell\in\{1,\dots,M\}\), every \(C\in J_\ell\), and every
\(t\in\pazocal T\),
\begin{equation}
\min\{\widehat s^t(\theta),C\}
=
\begin{cases}
C, & t\in \pazocal T_{>}^\ell(\theta),\\[1mm]
\widehat s^t(\theta), & t\notin \pazocal T_{>}^\ell(\theta).
\end{cases}
\label{eq:min_branch_on_interval}
\end{equation}
\end{lemma}

\begin{proof}
Fix \(\ell\in\{1,\dots,M\}\) and \(C\in J_\ell\).

If \(t\in\pazocal T_{>}^\ell(\theta)\), then by
\eqref{eq:T_greater_ell},
\(
\widehat s^t(\theta)\ge c_\ell.
\)
Since \(C\in J_\ell=[c_{\ell-1},c_\ell]\), we have \(C\le c_\ell\). Hence
\(
C\le c_\ell\le \widehat s^t(\theta),
\)
and therefore
\(
\min\{\widehat s^t(\theta),C\}=C.
\)

If \(t\notin\pazocal T_{>}^\ell(\theta)\), then by
\eqref{eq:T_greater_ell},
\(
\widehat s^t(\theta)<c_\ell.
\)
Because the ordered list in \eqref{eq:leader_ordered_breakpoints} contains all
breakpoints, no breakpoint lies in the interior of \(J_\ell\). Therefore,
\(
\widehat s^t(\theta)\le c_{\ell-1}.
\)
Since \(C\in J_\ell\), we have \(c_{\ell-1}\le C\). Hence
\(
\widehat s^t(\theta)\le c_{\ell-1}\le C,
\)
and therefore
\(
\min\{\widehat s^t(\theta),C\}
=
\widehat s^t(\theta).
\)
This proves \eqref{eq:min_branch_on_interval}.
\end{proof}

The next Lemma gives the explicit expression of the leader objective on each
regime interval.

\begin{lemma}
\label{lemma:leader_piecewise_objective}
For every \(\ell\in\{1,\dots,M\}\) and every \(C\in J_\ell\),
\begin{equation}
\Phi(C,\theta)
=
\theta k_\ell C
+
\theta
\sum_{t\in\pazocal T\setminus \pazocal T_{>}^\ell(\theta)}
\widehat s^t(\theta)
-
Cost(C).
\label{eq:leader_piecewise_interval}
\end{equation}
Consequently, maximizing \(\Phi(C,\theta)\) over \(J_\ell\) is equivalent to
maximizing
\begin{equation}
C\longmapsto
\theta k_\ell C-Cost(C)
\label{eq:reduced_interval_function}
\end{equation}
over the same interval \(J_\ell\).
\end{lemma}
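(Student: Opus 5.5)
The plan is to start from the reduced form of the leader objective in \eqref{eq:leader_obj_min_form}, which rests on Lemma~\ref{lemma:leader_total_min}. For fixed \(\theta\) and every \(C\in[0,\overline C]\) it gives
\[
\Phi(C,\theta)=\theta\sum_{t\in\pazocal T}\min\{\widehat s^t(\theta),C\}-Cost(C).
\]
The idea is to restrict attention to a single regime interval \(J_\ell=[c_{\ell-1},c_\ell]\). There, each summand \(\min\{\widehat s^t(\theta),C\}\) is replaced by its branch value from Lemma~\ref{lemma:leader_branch_formula}.

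Concretely, I will partition \(\pazocal T\) into \(\pazocal T_{>}^\ell(\theta)\) and its complement \(\pazocal T\setminus\pazocal T_{>}^\ell(\theta)\), and split the sum accordingly. The substitutions are then as follows.
\begin{itemize}
\item For \(t\in\pazocal T_{>}^\ell(\theta)\), Lemma~\ref{lemma:leader_branch_formula} gives \(\min\{\widehat s^t(\theta),C\}=C\). These terms contribute \(\theta C\,|\pazocal T_{>}^\ell(\theta)|=\theta k_\ell C\), using the definition \eqref{eq:k_ell_def}.
\item For the remaining slots, the same lemma gives \(\min\{\widehat s^t(\theta),C\}=\widehat s^t(\theta)\). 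These terms contribute \(\theta\sum_{t\notin\pazocal T_{>}^\ell(\theta)}\widehat s^t(\theta)\).
\end{itemize}
Subtracting \(Cost(C)\) yields \eqref{eq:leader_piecewise_interval} directly.

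For the equivalence statement, I will observe that the middle term \(\theta\sum_{t\notin\pazocal T_{>}^\ell(\theta)}\widehat s^t(\theta)\) depends only on \(\theta\) and \(\ell\), not on \(C\). So on \(J_\ell\), \(\Phi(\cdot,\theta)\) differs from \(C\mapsto\theta k_\ell C-Cost(C)\) by an additive constant. Adding a constant changes neither the argmax set nor the ordering of objective values over \(J_\ell\), which gives the equivalence. Existence of a maximizer on \(J_\ell\) follows from the continuity of \(Cost\) and the compactness of \(J_\ell\), so the argmax in Step~4 of Algorithm~\ref{alg:leader_capacity_computation} is nonempty.

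There is no real obstacle here. The only delicate point is the breakpoints shared by two intervals, \(C=c_\ell\) or \(C=c_{\ell-1}\). I will handle them by noting that Lemma~\ref{lemma:leader_branch_formula} is already stated for the closed interval, and that at a breakpoint \(C=\widehat s^t(\theta)\) both branches coincide. Hence the formula holds on all of \(J_\ell\) without case distinctions.
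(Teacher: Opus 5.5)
Your proposal is correct and follows essentially the same route as the paper: you split the sum in \eqref{eq:leader_obj_min_form} over $\pazocal T_{>}^\ell(\theta)$ and its complement, apply Lemma~\ref{lemma:leader_branch_formula} and \eqref{eq:k_ell_def}, and note that the leftover term is constant in $C$ on $J_\ell$. Your remarks on the existence of a maximizer on the compact interval $J_\ell$ and on the shared breakpoints are correct but go beyond what the paper's proof needs.
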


\begin{proof}
By \eqref{eq:min_branch_on_interval}, for every \(C\in J_\ell\),
\begin{align}
\sum_{t\in\pazocal T}
\min\{\widehat s^t(\theta),C\}
&=
|\pazocal T_{>}^\ell(\theta)|\,C
+
\sum_{t\in\pazocal T\setminus \pazocal T_{>}^\ell(\theta)}
\widehat s^t(\theta)  \notag\\
&=
k_\ell C
+
\sum_{t\in\pazocal T\setminus \pazocal T_{>}^\ell(\theta)}
\widehat s^t(\theta),
\label{eq:sum_min_interval}
\end{align}
where the second equality uses \eqref{eq:k_ell_def}. Substituting
\eqref{eq:sum_min_interval} into \eqref{eq:leader_obj_min_form} gives
\eqref{eq:leader_piecewise_interval}.
The second term in \eqref{eq:leader_piecewise_interval} is constant with
respect to \(C\) on \(J_\ell\). Hence maximizing \(\Phi(C,\theta)\) over
\(J_\ell\) is equivalent to maximizing the reduced function
\eqref{eq:reduced_interval_function} over \(J_\ell\).
\end{proof}

The next Lemma constructs a finite candidate set that contains a global
maximizer.

\begin{lemma}
\label{lemma:leader_candidate_set}
Assume that for every integer \(k\in\{0,\dots,|\pazocal T|\}\) and every
interval \(J\subseteq[0,\overline C]\), a maximizer of
\(
C\longmapsto \theta kC-Cost(C)
\)
over \(J\) can be computed. For each interval \(J_\ell\), let
\begin{equation}
C_\ell^*
\in
\arg\max_{C\in J_\ell}
\left\{
\theta k_\ell C-Cost(C)
\right\}.
\label{eq:C_ell_star}
\end{equation}
Define the finite candidate set
\begin{equation}
\mathcal C(\theta)
:=
\underbrace{\{C_\ell^*:\ell=1,\dots,M\}}_{\text{interval maximizers}}
\;\cup\;
\underbrace{
\left(
\{\widehat s^t(\theta):t\in\pazocal T\}
\cap[0,\overline C]
\right)
}_{\text{breakpoints}}
\;\cup\;
\underbrace{\{0,\overline C\}}_{\text{boundary points}} .
\label{eq:candidate_set_C}
\end{equation}
Then the leader problem admits an optimal solution in \(\mathcal C(\theta)\).
Equivalently,
\begin{equation}
C^*(\theta)
\in
\arg\max_{C\in\mathcal C(\theta)}
\Phi(C,\theta)
\label{eq:C_star_candidate_problem}
\end{equation}
solves the fixed-price leader problem.
\end{lemma}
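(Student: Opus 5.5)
The argument is a finite-decomposition one. It combines the interval reduction of Lemma~\ref{lemma:leader_piecewise_objective} with a compactness argument on each regime interval.

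First, we establish existence of a global maximizer of $\Phi(\cdot,\theta)$ on $[0,\overline C]$. By \eqref{eq:leader_obj_min_form}, $\Phi(C,\theta)=\theta\sum_{t}\min\{\widehat s^t(\theta),C\}-\mathrm{Cost}(C)$. Each map $C\mapsto\min\{\widehat s^t(\theta),C\}$ is continuous, and $\mathrm{Cost}$ is continuous by assumption (\S~\ref{sec:leader_problem}). Since $[0,\overline C]$ is compact, a maximizer $\widetilde C$ exists by the same extreme-value theorem used in Proposition~\ref{prop:leader_exists}. Alternatively, this follows directly from Proposition~\ref{prop:leader_exists} restricted to fixed $\theta$.

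Second, we locate $\widetilde C$ in a regime interval. The intervals $J_\ell=[c_{\ell-1},c_\ell]$, $\ell=1,\dots,M$, defined in \eqref{eq:regime_intervals}, cover $[0,\overline C]$ because $c_0=0$ and $c_M=\overline C$. Hence $\widetilde C\in J_{\ell_0}$ for some $\ell_0$.

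Third, we compare $\widetilde C$ with the interval maximizer. On $J_{\ell_0}$, Lemma~\ref{lemma:leader_piecewise_objective} gives
\[
\Phi(C,\theta)=\theta k_{\ell_0}C-\mathrm{Cost}(C)+\text{const}_{\ell_0},
\]
where the constant does not depend on $C$. Therefore the point $C_{\ell_0}^*$ from \eqref{eq:C_ell_star}, which maximizes the reduced function over $J_{\ell_0}$, also maximizes $\Phi(\cdot,\theta)$ over $J_{\ell_0}$. In particular, $\Phi(C_{\ell_0}^*,\theta)\ge\Phi(\widetilde C,\theta)$. Because $\widetilde C$ is a global maximizer, $C_{\ell_0}^*$ is one as well. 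The existence of $C_{\ell_0}^*$ is guaranteed either by the lemma's computability hypothesis or by continuity on the compact $J_{\ell_0}$.

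Fourth, we conclude. Since $C_{\ell_0}^*\in\mathcal C(\theta)$ by \eqref{eq:candidate_set_C}, the set $\mathcal C(\theta)$ contains a global maximizer. Moreover, $\mathcal C(\theta)\subseteq[0,\overline C]$ is finite, with at most $M+|\pazocal T|+2$ elements by \eqref{eq:leader_number_intervals}. It follows that
\[
\max_{C\in\mathcal C(\theta)}\Phi(C,\theta)=\max_{C\in[0,\overline C]}\Phi(C,\theta),
\]
so any $C^*(\theta)$ chosen as in \eqref{eq:C_star_candidate_problem} solves the fixed-price leader problem.

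The main delicate point is the degenerate interval case. If some breakpoints coincide, or equal $0$ or $\overline C$, the ordered list $c_\ell$ should contain only distinct points. Each $J_\ell$ is then nondegenerate, and Lemma~\ref{lemma:leader_branch_formula} still applies, including at the shared endpoints, where both branch formulas agree. Including the breakpoints and boundary points in $\mathcal C(\theta)$ is harmless: it only enlarges the finite set and does not affect the argument.
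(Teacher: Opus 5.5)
Your proof is correct and follows the paper's argument. You take a global maximizer $\widetilde C$, place it in some regime interval $J_\ell$, use Lemma~\ref{lemma:leader_piecewise_objective} to get $\Phi(C_\ell^*,\theta)\ge\Phi(\widetilde C,\theta)$, and conclude from $C_\ell^*\in\mathcal C(\theta)$. Your compactness argument for the existence of $\widetilde C$ and your remark on coinciding breakpoints fill in details that the paper leaves implicit.
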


\begin{proof}
Let \(\widetilde C\) be a global maximizer of
\(\Phi(\cdot,\theta)\) over \([0,\overline C]\). Since the intervals in
\eqref{eq:regime_intervals} cover \([0,\overline C]\), there exists
\(\ell\in\{1,\dots,M\}\) such that
\begin{equation}
\widetilde C\in J_\ell.
\label{eq:Ctilde_in_interval}
\end{equation}
By Lemma~\ref{lemma:leader_piecewise_objective}, maximizing
\(\Phi(C,\theta)\) over \(J_\ell\) is equivalent to maximizing
\eqref{eq:reduced_interval_function} over \(J_\ell\). Hence the interval
maximizer \(C_\ell^*\) in \eqref{eq:C_ell_star} satisfies
\begin{equation}
\Phi(C_\ell^*,\theta)
\ge
\Phi(\widetilde C,\theta).
\label{eq:Cell_dominates_Ctilde}
\end{equation}
Since \(\widetilde C\) is globally optimal, \eqref{eq:Cell_dominates_Ctilde}
implies that \(C_\ell^*\) is also globally optimal. Moreover, by
\eqref{eq:candidate_set_C},
\(
C_\ell^*\in\mathcal C(\theta).
\)
Therefore, \(\mathcal C(\theta)\) contains at least one global maximizer of
the original leader problem. Hence \eqref{eq:C_star_candidate_problem} solves
the fixed-price leader problem.
\end{proof}

The next Lemma gives the arithmetic complexity of the leader capacity
computation.

\begin{lemma}
\label{lemma:leader_complexity}
Assume that each interval maximizer in \eqref{eq:C_ell_star} can be computed
in time \(\Gamma_C\). Then Algorithm~\ref{alg:leader_capacity_computation}
has arithmetic complexity
\begin{equation}
O\!\left(
|\pazocal T|
\left(
|\pazocal N|\log\frac{1}{\varepsilon}
+
\log |\pazocal T|
+
\Gamma_C
\right)
\right).
\label{eq:leader_complexity_bound}
\end{equation}
\end{lemma}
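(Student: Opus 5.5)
The plan is to follow Algorithm~\ref{alg:leader_capacity_computation} line by line, bound the cost of each line, and add the bounds. The bound in \eqref{eq:leader_complexity_bound} has three pieces: $|\pazocal T|\,|\pazocal N|\log(1/\varepsilon)$, $|\pazocal T|\log|\pazocal T|$, and $|\pazocal T|\Gamma_C$. I will show that each step of the algorithm contributes at most one of these.

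\emph{Step 1.} For each $t\in\pazocal T$, computing $\widehat s^t(\theta)$ amounts to solving $R_t(s;0)=0$ by bisection. By Lemma~\ref{lemma:follower_bisection} and Lemma~\ref{lemma:follower_complexity}, this costs $O(|\pazocal N|\log(1/\varepsilon))$. The binding step, which solves $Q_t^C$ for the multiplier, is not needed here. Summing over $t$ gives $O(|\pazocal T||\pazocal N|\log(1/\varepsilon))$.

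\emph{Step 2.} To form the ordered breakpoint list, I sort at most $|\pazocal T|+2$ numbers, namely the values $\widehat s^t(\theta)$ lying in $[0,\overline C]$ together with $0$ and $\overline C$. I then remove duplicates in one linear pass. This costs $O(|\pazocal T|\log|\pazocal T|)$, and by \eqref{eq:leader_number_intervals} it leaves $M=O(|\pazocal T|)$ intervals.

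\emph{Step 3.} This step computes the counts $k_\ell$, and done naively it is the bottleneck. Evaluating each $k_\ell$ from scratch costs $O(|\pazocal T|)$, so all of them together would cost $O(|\pazocal T|^2)$, which exceeds the stated bound. Instead I reuse the sorted order from Step~2. The quantity $k_\ell=|\{t:\widehat s^t(\theta)\ge c_\ell\}|$ is nonincreasing in $\ell$. Values $\widehat s^t(\theta)>\overline C$ satisfy the inequality for every $\ell$ and contribute a constant count, found in one $O(|\pazocal T|)$ pass. All other values sit among the sorted breakpoints with their multiplicities. A single backward sweep from $\ell=M$ down to $1$ then gives every $k_\ell$ in $O(|\pazocal T|)$ total. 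An equivalent alternative is to binary-search $c_\ell$ in the sorted list of the $\widehat s^t(\theta)$, at $O(\log|\pazocal T|)$ per interval. Either way the step costs $O(|\pazocal T|\log|\pazocal T|)$.

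\emph{Step 4.} On each of the $M$ intervals I solve one one-dimensional problem $\max_{C\in J_\ell}\{\theta k_\ell C-\mathrm{Cost}(C)\}$. By assumption each costs $\Gamma_C$, so this step costs $O(|\pazocal T|\Gamma_C)$.

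\emph{Steps 5 and 6.} The candidate set $\mathcal C(\theta)$ has size $O(|\pazocal T|)$. The main care point is evaluating $\Phi(\cdot,\theta)$ at every candidate cheaply, because a direct evaluation of $\sum_t\min\{\widehat s^t(\theta),C\}$ costs $O(|\pazocal T|)$ and would again give a quadratic total. To avoid this I use Lemma~\ref{lemma:leader_piecewise_objective}. On $J_\ell$, the objective equals $\theta k_\ell C$ plus $\theta$ times the sum of the $\widehat s^t(\theta)$ with $t\notin\pazocal T_{>}^\ell(\theta)$, minus $\mathrm{Cost}(C)$. The sums over $t\notin\pazocal T_{>}^\ell(\theta)$ are prefix sums over the sorted list, so I precompute them once in $O(|\pazocal T|)$ during the sweep of Step~3. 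I also locate the interval containing each candidate: for the $C_\ell^*$ and the breakpoints it is known from construction, and otherwise a binary search costs $O(\log|\pazocal T|)$. Each evaluation then takes $O(1)$ arithmetic plus one evaluation of $\mathrm{Cost}$. Since a cost evaluation is absorbed into $\Gamma_C$, Steps~5 and~6 cost $O(|\pazocal T|(\log|\pazocal T|+\Gamma_C))$, and taking the maximum adds only $O(|\pazocal T|)$.

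Summing the costs of all steps gives \eqref{eq:leader_complexity_bound}. The only real obstacle is preventing the $O(|\pazocal T|^2)$ blow-up in Steps~3 and~6, and the sorted sweep with prefix sums handles both.
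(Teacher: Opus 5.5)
Your proposal is correct and follows the same step-by-step cost accounting as the paper's proof: one bisection per slot for $\widehat s^t(\theta)$ via Lemma~\ref{lemma:follower_complexity}, an $O(|\pazocal T|\log|\pazocal T|)$ sort, at most $|\pazocal T|+1$ interval maximizations at cost $\Gamma_C$ each, and a final evaluation over $O(|\pazocal T|)$ candidates. Your version is more careful than the paper's on two points: the paper never accounts for computing the counts $k_\ell$, and it simply asserts that evaluating $\theta\sum_{t}\min\{\widehat s^t(\theta),C\}-\mathrm{Cost}(C)$ at all candidates is linear in their number, whereas evaluating each candidate naively gives $O(|\pazocal T|^2)$ in total; your sorted sweep and prefix sums based on Lemma~\ref{lemma:leader_piecewise_objective} are exactly what justifies both steps within the stated bound.
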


\begin{proof}
First, by Lemma~\ref{lemma:follower_complexity}, each
\(\widehat s^t(\theta)\) is obtained by solving one scalar monotone equation,
with cost
\(
O\!\left(
|\pazocal N|\log(1/\varepsilon)
\right).
\)
Therefore, computing all values \(\widehat s^t(\theta)\),
\(t\in\pazocal T\), costs
\begin{equation}
O\!\left(
\sum_{t\in\pazocal T}
|\pazocal N|\,\log(1/\varepsilon)
\right) = O\!\left(|\pazocal T|
|\pazocal N|\,\log(1/\varepsilon)
\right) 
\label{eq:complexity_Dhat}
\end{equation}

Second, constructing the ordered list in
\eqref{eq:leader_ordered_breakpoints} requires sorting the breakpoints
\(\widehat s^t(\theta)\), \(t\in\pazocal T\). This costs
\begin{equation}
O(|\pazocal T|\log|\pazocal T|).
\label{eq:complexity_sort}
\end{equation}

Third, by \eqref{eq:leader_number_intervals}, there are at most
\(|\pazocal T|+1\) regime intervals. Since each interval maximizer
\(C_\ell^*\) in \eqref{eq:C_ell_star} can be computed in time \(\Gamma_C\),
computing all interval maximizers costs
\begin{equation}
O(|\pazocal T|\,\Gamma_C).
\label{eq:complexity_interval_max}
\end{equation}

Finally, the candidate set \(\mathcal C(\theta)\) in
\eqref{eq:candidate_set_C} has cardinality \(O(|\pazocal T|)\). Once the
values \(\widehat s^t(\theta)\) are available, evaluating
\(
\theta\sum_{t\in\pazocal T}
\min\{\widehat s^t(\theta),C\}
-
Cost(C)
\)
at all candidates and selecting the best one is linear in the number of
candidates. This cost is dominated by
\eqref{eq:complexity_Dhat}, \eqref{eq:complexity_sort}, and
\eqref{eq:complexity_interval_max}. Combining these bounds gives
\eqref{eq:leader_complexity_bound}.
\end{proof}

\subsection{Leader price computation}
\label{subsec:leader_price_computation}

Define the leader objective in \eqref{eq:leader_problem} by
\begin{equation}
\phi(\theta)
:=
\max_{0\le C\le \overline C}
\left\{
\theta\sum_{t\in\pazocal T}\min\{\widehat s^t(\theta),C\}
-
Cost(C)
\right\},
\qquad
\theta\in[0,\overline\theta],
\label{eq:reduced_price_value}
\end{equation}
where \(\widehat s^t(\theta)\) is defined in \eqref{eq:def_Dhat}.

The price computation is summarized in Algorithm~\ref{alg:leader_price_grid}.

\begin{algorithm}[htbp]
\caption{Price-grid computation}
\label{alg:leader_price_grid}
\begin{algorithmic}[1]
\REQUIRE Price bound \(\overline\theta\), optimality gap \(\eta>0\), Lipschitz constant \(L_\phi\)
\ENSURE Grid price \(\theta_\Delta\) and capacity \(C^*(\theta_\Delta)\)

\STATE Choose a uniform price grid \(\Theta_\Delta\subset[0,\overline\theta]\)
with mesh size
\(
\Delta_\theta\le \frac{2\eta}{L_\phi}.
\)

\FOR{each \(\theta\in\Theta_\Delta\)}
    \STATE Compute \(C^*(\theta)\) using Algorithm~\ref{alg:leader_capacity_computation}.
    \STATE Evaluate
    \(
    \phi(\theta)
    =
    \theta\sum_{t\in\pazocal T}
    \min\{\widehat s^t(\theta),C^*(\theta)\}
    -
    Cost(C^*(\theta)).
    \)
\ENDFOR

\STATE Select
\(\theta_\Delta
\in
\arg\max_{\theta\in\Theta_\Delta}\phi(\theta).
\)

\RETURN \(\theta_\Delta\) and \(C^*(\theta_\Delta)\).
\end{algorithmic}
\end{algorithm}

We first prove that
\(\widehat s^t(\theta)\) is Lipschitz in \(\theta\).

\begin{lemma}
\label{lemma:Dhat_lipschitz_price}
For every \(t\in\pazocal T\),
\(\widehat s^t(\theta)\) is nonincreasing and Lipschitz continuous on
\([0,\overline\theta]\). In particular, for all
\(\theta_1,\theta_2\in[0,\overline\theta]\),
\begin{equation}
\left|
\widehat s^t(\theta_1)-\widehat s^t(\theta_2)
\right|
\le
\frac{|\pazocal N|}{\delta-\gamma}
|\theta_1-\theta_2|.
\label{eq:Dhat_lipschitz}
\end{equation}
\end{lemma}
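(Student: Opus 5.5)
The plan is a direct, derivative-free comparison argument built on the fixed-point characterization of $\widehat s^t(\theta)$ from Algorithm~\ref{alg:follower_computation}. By Lemma~\ref{lemma:follower_bisection}, $\widehat s^t(\theta)$ is the unique solution of $s=\sum_{i\in\pazocal N}\varphi_i^t(s,0;\theta)$. Here I make the $\theta$-dependence of \eqref{eq:phi_s_lambda} explicit: $\varphi_i^t(s,0;\theta)=[\chi_i^t(B_i^t(s,0))]^+$ with $B_i^t(s,0)=b_i\theta+b_i\gamma s$. I also use $D_i=b_i(\delta-\gamma)>0$, which is positive by Lemma~\ref{lemma:delta_greater_gamma}.

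The first step is a Lipschitz bound for the scalar map $B\mapsto[\chi_i^t(B)]^+$, which I claim is nonincreasing and $1/D_i$-Lipschitz on $B\ge 0$. Monotonicity is already shown in the proof of Lemma~\ref{lemma:phi_monotone}. For the Lipschitz constant, I reuse the derivative \eqref{eq:chi_derivative} on the active region $B<A_i^t$. Since $\sqrt{(B-D_i)^2+4D_iA_i^t}>|B-D_i|$, the ratio $(B-D_i)/\sqrt{\cdot}$ lies in $(-1,1)$. Hence $d\chi_i^t/dB\in(-1/D_i,0)$. Off the active region the map is identically $0$, and the two pieces meet continuously at $B=A_i^t$. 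Taking the positive part therefore preserves the bound, so
\[
\bigl|[\chi_i^t(B)]^+-[\chi_i^t(B')]^+\bigr|\le |B-B'|/D_i .
\]

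Second, I prove monotonicity in $\theta$ by contradiction. Fix $\theta_1<\theta_2$ and set $s_k:=\widehat s^t(\theta_k)$ for $k=1,2$. Suppose $s_2>s_1$. Then $B_i^t$ evaluated at $(s_2,\theta_2)$ is at least $B_i^t$ at $(s_1,\theta_1)$ for every $i$. Since $[\chi_i^t]^+$ is nonincreasing, the fixed-point equations give
\[
s_2=\sum_i\varphi_i^t(s_2,0;\theta_2)\le\sum_i\varphi_i^t(s_1,0;\theta_1)=s_1,
\]
which is a contradiction. Hence $s_2\le s_1$, so $\widehat s^t$ is nonincreasing.

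Third, I bound the decrease. Starting from $s_1-s_2=\sum_i[\varphi_i^t(s_1,0;\theta_1)-\varphi_i^t(s_2,0;\theta_2)]$, I insert the intermediate term $\varphi_i^t(s_2,0;\theta_1)$. Because $s_2\le s_1$ and $\varphi_i^t$ is nonincreasing in $s$ (Lemma~\ref{lemma:phi_monotone}), we have $\varphi_i^t(s_1,0;\theta_1)\le\varphi_i^t(s_2,0;\theta_1)$. This leaves only a pure price difference at the common aggregate $s_2$, where the two $B$ values differ by exactly $b_i(\theta_2-\theta_1)$. The Lipschitz bound then gives
\[
0\le s_1-s_2\le\sum_i \frac{b_i(\theta_2-\theta_1)}{D_i}=\frac{|\pazocal N|}{\delta-\gamma}(\theta_2-\theta_1),
\]
which is \eqref{eq:Dhat_lipschitz}.

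The main point needing care is the first step: checking that the kink at $B=A_i^t$ and the case $A_i^t\le0$ do not break the uniform $1/D_i$ constant. This holds because the positive part of a continuous, piecewise-$C^1$ function with slopes in $[-1/D_i,0]$ keeps the same Lipschitz constant. The sandwich trick in the third step avoids any need for implicit differentiation of $\widehat s^t$ across regime changes.
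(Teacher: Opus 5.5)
Your proof is correct and follows essentially the same route as the paper: both show each follower's unconstrained response is nonincreasing and $1/(\delta-\gamma)$-Lipschitz in the effective price $\theta+\gamma s$ (your $B=b_i(\theta+\gamma s)$ is the paper's $b_i z$), deduce $\widehat s^t(\theta_2)\le\widehat s^t(\theta_1)$ from the fixed-point equation, and then compare. The differences are cosmetic. You get the slope bound from the explicit derivative \eqref{eq:chi_derivative} rather than by implicit differentiation of the first-order condition. You also insert the intermediate point $(s_2,\theta_1)$ instead of using $z_2-z_1=(\theta_2-\theta_1)-\gamma(S_1-S_2)\le\theta_2-\theta_1$, and in doing so you spell out the monotonicity step that the paper only asserts.
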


\begin{proof}
Fix \(t\in\pazocal T\). By definition,
\(
\widehat s^t(\theta)
=
\sum_{i\in\pazocal N}\widehat h_i^t(\theta)
\)
is the unconstrained total resource commitment defined in \eqref{eq:def_Dhat}.
Since the unconstrained problem has no capacity multiplier, the stationarity
condition \eqref{eq:stationarity_positive_hi} for an active follower is
\(
\frac{A_i^t}{1+\widehat h_i^t}
=
b_i\theta
+
b_i\gamma\sum_{j\neq i}\widehat h_j^t
+
b_i\delta \widehat h_i^t .
\)
Using
\(
\sum_{j\neq i}\widehat h_j^t
=
\widehat s^t(\theta)-\widehat h_i^t,
\)
we obtain
\begin{equation}
\frac{A_i^t}{1+\widehat h_i^t}
=
b_i\theta
+
b_i\gamma\widehat s^t(\theta)
+
b_i(\delta-\gamma)\widehat h_i^t .
\label{eq:a_to_z}
\end{equation}
Defining
\begin{equation}
z:=\theta+\gamma\widehat s^t(\theta),
\label{eq:z_price_load}
\end{equation}
\eqref{eq:a_to_z} becomes
\begin{equation}
\frac{A_i^t}{1+\widehat h_i^t}
=
b_i z+b_i(\delta-\gamma)\widehat h_i^t .
\label{eq:unconstrained_z_foc}
\end{equation}

For each fixed \(z\ge0\), let \(\rho_i^t(z)\) denote the unique nonnegative
solution of \eqref{eq:unconstrained_z_foc}. Equivalently, from
\eqref{eq:explicit_hi_star},
\begin{equation}
\rho_i^t(z)
=
\max\left\{
0,\,
\frac{
-\left(b_i z+b_i(\delta-\gamma)\right)
+
\sqrt{
\left(b_i z-b_i(\delta-\gamma)\right)^2
+
4b_i(\delta-\gamma)A_i^t
}
}{
2b_i(\delta-\gamma)
}
\right\}.
\label{eq:rho_closed_form}
\end{equation}
Then
\begin{equation}
\widehat h_i^t(\theta)=\rho_i^t(z),
\qquad
\widehat s^t(\theta)=\sum_{i\in\pazocal N}\rho_i^t(z).
\label{eq:hhat_s_rho}
\end{equation}

For any active follower, implicit differentiation of
\eqref{eq:unconstrained_z_foc} with respect to \(z\) gives
\[
-\frac{A_i^t}{(1+\rho_i^t(z))^2}\frac{d\rho_i^t(z)}{dz}
=
b_i+b_i(\delta-\gamma)\frac{d\rho_i^t(z)}{dz}.
\]
Hence,
\begin{equation}
\frac{d\rho_i^t(z)}{dz}
=
-
\frac{b_i}
{
\frac{A_i^t}{(1+\rho_i^t(z))^2}
+
b_i(\delta-\gamma)
}.
\label{eq:rho_derivative}
\end{equation}
Since \(A_i^t\ge0\), \(\rho_i^t(z)\ge0\), and
\(\delta>\gamma\) by Lemma~\ref{lemma:delta_greater_gamma}, the denominator
in \eqref{eq:rho_derivative} is bounded below by \(b_i(\delta-\gamma)\). Hence,
\begin{equation}
\left|
\frac{d\rho_i^t(z)}{dz}
\right|
=
\frac{b_i}
{
\frac{A_i^t}{(1+\rho_i^t(z))^2}
+
b_i(\delta-\gamma)
}
\le
\frac{b_i}{b_i(\delta-\gamma)}
=
\frac{1}{\delta-\gamma}.
\label{eq:rho_lipschitz}
\end{equation}
If follower \(i\) is inactive, then \(\rho_i^t(z)=0\) on the corresponding
inactive region. Hence \(\rho_i^t\) is constant there and
\(
\frac{d\rho_i^t(z)}{dz}=0.
\)
Therefore,
\(
\left|
\frac{d\rho_i^t(z)}{dz}
\right|
=0
\le
\frac{1}{\delta-\gamma}.
\)
At the activation threshold, \(\rho_i^t\) may fail to be differentiable because
it switches from the zero branch to the positive branch. However, this does not
affect Lipschitz continuity. On the zero branch the slope is \(0\), and on the
positive branch, the derivative is bounded in absolute value by
\(1/(\delta-\gamma)\). Since the two branches agree at the threshold,
\(\rho_i^t\) is continuous there. Therefore, applying \citep[ 3.Mean Value Theorem ]{pugh2002real}
on each branch and using continuity at the switching point, \(\rho_i^t\) is
globally Lipschitz with constant \(1/(\delta-\gamma)\); see
\citep[5.Corollary]{pugh2002real}.

Define
\(
X^t(z):=\sum_{i\in\pazocal N}\rho_i^t(z)\).
Then \(X^t\) is nonincreasing and Lipschitz continuous with constant
\(
L_X:=\frac{|\pazocal N|}{\delta-\gamma}.
\)
Moreover, \(\widehat s^t(\theta)\) satisfies
\begin{equation}
\widehat s^t(\theta)
=
X^t\!\left(\theta+\gamma\widehat s^t(\theta)\right).
\label{eq:Dhat_fixed_point}
\end{equation}

Let \(0\le\theta_1<\theta_2\le\overline\theta\), and write
\(
S_\ell:=\widehat s^t(\theta_\ell),
\quad
z_\ell:=\theta_\ell+\gamma S_\ell,
\quad \ell=1,2.
\label{eq:D_z_price_def}
\)
Since \(X^t\) is nonincreasing, \eqref{eq:Dhat_fixed_point} implies
\begin{equation}
S_2\le S_1.
\label{eq:D_monotone_price}
\end{equation}
Also, \(z_2\ge z_1\); otherwise \(z_2<z_1\) would imply
\[
S_2=X^t(z_2)\ge X^t(z_1)=S_1,
\]
contradicting \eqref{eq:D_monotone_price} unless \(S_1=S_2\), in which case
the desired bound is immediate. Therefore,
\(
S_1-S_2
=
X^t(z_1)-X^t(z_2)
\le
L_X(z_2-z_1).
\)
Using
\(
z_2-z_1=(\theta_2-\theta_1)-\gamma(S_1-S_2),
\)
we obtain
\(
S_1-S_2
\le
L_X\bigl((\theta_2-\theta_1)-\gamma(S_1-S_2)\bigr)
\le
L_X(\theta_2-\theta_1).
\)
Thus, for all \(\theta_1,\theta_2\in[0,\overline\theta]\),
\[
\left|
\widehat s^t(\theta_1)-\widehat s^t(\theta_2)
\right|
\le
\frac{|\pazocal N|}{\delta-\gamma}
|\theta_1-\theta_2|.
\]
This proves \eqref{eq:Dhat_lipschitz}.
\end{proof}

We now prove that \(\phi\) in \eqref{eq:reduced_price_value} is Lipschitz.

\begin{lemma}
\label{lemma:phi_lipschitz}
\(\phi\) in \eqref{eq:reduced_price_value} is
Lipschitz continuous on \([0,\overline\theta]\). In particular, one may take
\begin{equation}
L_\phi
:=
|\pazocal T|
\left(
\overline C
+
\frac{\overline\theta |\pazocal N|}{\delta-\gamma}
\right)
\label{eq:L_phi}
\end{equation}
as a Lipschitz constant. That is, for all
\(\theta_1,\theta_2\in[0,\overline\theta]\),
\begin{equation}
\left|
\phi(\theta_1)-\phi(\theta_2)
\right|
\le
L_\phi
\left|
\theta_1-\theta_2
\right|.
\label{eq:phi_lipschitz_property}
\end{equation}
\end{lemma}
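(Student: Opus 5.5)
The plan is to show a uniform Lipschitz bound in $\theta$ for the inner objective, for every fixed $C$, and then pass it through the maximum over $C$. For $C\in[0,\overline C]$ and $\theta\in[0,\overline\theta]$ define
\[
\Phi_C(\theta):=\theta\sum_{t\in\pazocal T}\min\{\widehat s^t(\theta),C\}-\mathrm{Cost}(C),
\]
so that $\phi(\theta)=\max_{0\le C\le\overline C}\Phi_C(\theta)$ by \eqref{eq:reduced_price_value}. The maximum is attained because $\Phi_C(\theta)$ is continuous in $C$: $\mathrm{Cost}$ is continuous and $C\mapsto\min\{x,C\}$ is continuous, and $[0,\overline C]$ is compact.

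\textbf{Step 1 (bound for fixed $C$).} Fix $C$ and $\theta_1,\theta_2\in[0,\overline\theta]$, and abbreviate $m_t(\theta):=\min\{\widehat s^t(\theta),C\}$. The cost terms cancel, and adding and subtracting $\theta_2 m_t(\theta_1)$ gives
\[
\Phi_C(\theta_1)-\Phi_C(\theta_2)=(\theta_1-\theta_2)\sum_t m_t(\theta_1)+\theta_2\sum_t\bigl(m_t(\theta_1)-m_t(\theta_2)\bigr).
\]
\begin{itemize}
\item \emph{First term.} Since $0\le m_t(\theta_1)\le C\le\overline C$ (using $\widehat s^t\ge0$), it is bounded in absolute value by $|\pazocal T|\,\overline C\,|\theta_1-\theta_2|$.
\item \emph{Second term.} Use $\theta_2\le\overline\theta$ and the fact that $x\mapsto\min\{x,C\}$ is $1$-Lipschitz. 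Then Lemma~\ref{lemma:Dhat_lipschitz_price} bounds each summand by $\frac{|\pazocal N|}{\delta-\gamma}|\theta_1-\theta_2|$, so the second term is at most $\overline\theta\,|\pazocal T|\,\frac{|\pazocal N|}{\delta-\gamma}|\theta_1-\theta_2|$.
\end{itemize}
Here $\delta>\gamma$ holds by Lemma~\ref{lemma:delta_greater_gamma}, so the constant is finite. Adding the two bounds gives $|\Phi_C(\theta_1)-\Phi_C(\theta_2)|\le L_\phi|\theta_1-\theta_2|$, with $L_\phi$ as in \eqref{eq:L_phi}, uniformly in $C$.

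\textbf{Step 2 (pass to the maximum).} Let $C_1$ be a maximizer for $\theta_1$. Then
\[
\phi(\theta_1)-\phi(\theta_2)\le\Phi_{C_1}(\theta_1)-\Phi_{C_1}(\theta_2)\le L_\phi|\theta_1-\theta_2|.
\]
Swapping the roles of $\theta_1$ and $\theta_2$ gives the reverse inequality, which proves \eqref{eq:phi_lipschitz_property}.

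The only delicate point is Step 1's use of the $\theta$-Lipschitz property of $\widehat s^t$, which is taken directly from Lemma~\ref{lemma:Dhat_lipschitz_price}. Everything else is elementary. In particular, no differentiability of $\phi$ is needed, since the max-of-uniformly-Lipschitz argument avoids any envelope theorem.
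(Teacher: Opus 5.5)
Your proof is correct and follows essentially the same route as the paper: you split $\Phi(C,\theta_1)-\Phi(C,\theta_2)$ by adding and subtracting $\theta_2\sum_t\min\{\widehat s^t(\theta_1),C\}$, bound the first piece using $\min\{\widehat s^t,C\}\le\overline C$, and bound the second using the $1$-Lipschitz property of $x\mapsto\min\{x,C\}$ together with Lemma~\ref{lemma:Dhat_lipschitz_price}; you then transfer this $C$-uniform bound to $\phi$ through the maximum over $C$. Your use of an explicit maximizer $C_1$, with attainment justified by continuity in $C$ and compactness of $[0,\overline C]$, is a slightly more careful version of the paper's step $\phi(\theta_1)-\phi(\theta_2)\le\max_{C}\{\Phi(C,\theta_1)-\Phi(C,\theta_2)\}$.
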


\begin{proof}
For any fixed \(C\in[0,\overline C]\), define
\begin{equation}
\Phi(C,\theta)
:=
\theta\sum_{t\in\pazocal T}\min\{\widehat s^t(\theta),C\}
-
Cost(C),
\label{eq:Phi_fixed_C_price}
\end{equation}
as in \eqref{eq:leader_obj_min_form}. Since \(0\le C\le\overline C\), we have
\begin{equation}
\sum_{t\in\pazocal T}\min\{\widehat s^t(\theta),C\}
\le
|\pazocal T|\overline C .
\label{eq:bound_t_c_bar}
\end{equation}
Since \(\theta_2\in[0,\overline\theta]\), we have
\(
\theta_2\le\overline\theta.
\)
Moreover, taking the minimum with the fixed capacity \(C\) cannot increase
differences between two values. Indeed, for any \(x,y\ge0\),
\(
\left|\min\{x,C\}-\min\{y,C\}\right|\le |x-y|\)
Therefore, for each \(t\in\pazocal T\),
\[
\left|
\min\{\widehat s^t(\theta_1),C\}
-
\min\{\widehat s^t(\theta_2),C\}
\right|
\le
\left|
\widehat s^t(\theta_1)-\widehat s^t(\theta_2)
\right|.
\]
Hence,
\begin{equation}
\theta_2
\sum_{t\in\pazocal T}
\left|
\min\{\widehat s^t(\theta_1),C\}
-
\min\{\widehat s^t(\theta_2),C\}
\right|
\le
\overline\theta
\sum_{t\in\pazocal T}
\left|
\widehat s^t(\theta_1)-\widehat s^t(\theta_2)
\right|.
\label{eq:theta_two_dhat}
\end{equation}

Therefore, using \eqref{eq:Dhat_lipschitz}, for any
\(\theta_1,\theta_2\in[0,\overline\theta]\),
\begin{align}
|\Phi(C,\theta_1)-\Phi(C,\theta_2)|
&\le
|\theta_1-\theta_2|
\sum_{t\in\pazocal T}\min\{\widehat s^t(\theta_1),C\}
\nonumber\\
&\quad
+
\theta_2
\sum_{t\in\pazocal T}
\left|
\min\{\widehat s^t(\theta_1),C\}
-
\min\{\widehat s^t(\theta_2),C\}
\right|
\nonumber\\
&\overset{\eqref{eq:bound_t_c_bar},\,\eqref{eq:theta_two_dhat}}{\le}
|\pazocal T|\overline C|\theta_1-\theta_2|
+
\overline\theta
\sum_{t\in\pazocal T}
\left|
\widehat s^t(\theta_1)-\widehat s^t(\theta_2)
\right|
\nonumber\\
&\overset{\eqref{eq:Dhat_lipschitz}}{\le}
|\pazocal T|
\left(
\overline C
+
\frac{\overline\theta|\pazocal N|}{\delta-\gamma}
\right)
|\theta_1-\theta_2|.
\label{eq:Phi_lipschitz}
\end{align}
The bound in \eqref{eq:Phi_lipschitz} is uniform in
\(C\in[0,\overline C]\).

Therefore, taking the maximum of the functions
\(C\mapsto \Phi(C,\theta)\) over the compact set \([0,\overline C]\), it
follows that \(\phi\) is Lipschitz with the same constant \(L_\phi\) in
\eqref{eq:L_phi}. Indeed, for any \(\theta_1,\theta_2\),
\[
\phi(\theta_1)-\phi(\theta_2)
\le
\max_{0\le C\le\overline C}
\left\{\Phi(C,\theta_1)-\Phi(C,\theta_2)\right\}
\le
L_\phi|\theta_1-\theta_2|.
\]
Interchanging \(\theta_1\) and \(\theta_2\) gives
\[
|\phi(\theta_1)-\phi(\theta_2)|
\le
L_\phi|\theta_1-\theta_2|.
\]
This proves \eqref{eq:phi_lipschitz_property}.
\end{proof}

It remains to prove the grid approximation guarantee.

\begin{lemma}
\label{lemma:price_grid_eta}
Let \(\Theta_\Delta\subset[0,\overline\theta]\) be a uniform price grid with
mesh size \(\Delta_\theta\), and let
\(\theta_\Delta\in\arg\max_{\theta\in\Theta_\Delta}\phi(\theta)\). If
\begin{equation}
\Delta_\theta\le \frac{2\eta}{L_\phi},
\label{eq:price_grid_spacing}
\end{equation}
then
\begin{equation}
\max_{\theta\in[0,\overline\theta]}\phi(\theta)-\phi(\theta_\Delta)
\le \eta.
\label{eq:eta_opt_price}
\end{equation}
\end{lemma}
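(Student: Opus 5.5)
The plan is to combine the Lipschitz bound of Lemma~\ref{lemma:phi_lipschitz} with a covering property of the grid. First I will check that the maximum $\max_{\theta\in[0,\overline\theta]}\phi(\theta)$ is attained. This follows because $\phi$ is Lipschitz, hence continuous, by Lemma~\ref{lemma:phi_lipschitz}, and $[0,\overline\theta]$ is compact, so \citep[Theorem~4.4.2]{abbott2016understanding} applies. Fix a maximizer $\theta^*$.

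Next I will show that the grid covers $[0,\overline\theta]$ to within $\Delta_\theta/2$. As stated in the proof sketch of Proposition~\ref{prop:computational_tractability}, the uniform grid $\Theta_\Delta$ contains $0$ and $\overline\theta$, and consecutive points are at distance at most $\Delta_\theta$. So $\theta^*$ lies in some interval $[\theta_k,\theta_{k+1}]$ between consecutive grid points, with $\theta_{k+1}-\theta_k\le\Delta_\theta$. Choosing the nearer endpoint $\tilde\theta$ gives $|\theta^*-\tilde\theta|\le\Delta_\theta/2$.

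Then I apply the estimate \eqref{eq:phi_lipschitz_property} together with \eqref{eq:price_grid_spacing}:
\[
\phi(\theta^*)-\phi(\tilde\theta)\le L_\phi|\theta^*-\tilde\theta|\le L_\phi\Delta_\theta/2\le\eta .
\]
Finally, since $\theta_\Delta$ maximizes $\phi$ over $\Theta_\Delta$ and $\tilde\theta\in\Theta_\Delta$, we have $\phi(\theta_\Delta)\ge\phi(\tilde\theta)$. Chaining the two inequalities yields \eqref{eq:eta_opt_price}.

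No step is a real obstacle. The only point needing care is the half-mesh covering claim, which requires the grid endpoints $0$ and $\overline\theta$ to belong to $\Theta_\Delta$; I will state this explicitly as part of the grid construction.
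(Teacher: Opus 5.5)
Your proposal is correct and follows the same route as the paper's proof: take a maximizer $\theta^*$, pick a grid point within $\Delta_\theta/2$ of it, apply the Lipschitz bound of Lemma~\ref{lemma:phi_lipschitz}, and conclude using the optimality of $\theta_\Delta$ over the grid. Your two extra checks fill in details the paper takes for granted: that the maximum is attained (by continuity on the compact interval), and that the grid contains $0$ and $\overline\theta$ (so the half-mesh covering holds).
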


\begin{proof}
Let
\(
\theta^*\in\arg\max_{\theta\in[0,\overline\theta]}\phi(\theta).
\)
Choose a grid point \(\widehat\theta\in\Theta_\Delta\) such that
\(
|\theta^*-\widehat\theta|\le \frac{\Delta_\theta}{2}.
\)
Since \(\theta_\Delta\) maximizes \(\phi\) over \(\Theta_\Delta\),
\(
\phi(\theta_\Delta)\ge \phi(\widehat\theta).
\)
Therefore, by the Lipschitz continuity of \(\phi\) (Lemma \ref{lemma:phi_lipschitz}),
\[
\phi(\theta^*)-\phi(\theta_\Delta)
\le
\phi(\theta^*)-\phi(\widehat\theta)
\le
L_\phi|\theta^*-\widehat\theta|
\le
\frac{L_\phi\Delta_\theta}{2}
\]
Using \eqref{eq:price_grid_spacing}, we obtain
\(
\phi(\theta^*)-\phi(\theta_\Delta)\le \eta,
\)
which proves \eqref{eq:eta_opt_price}. Hence the price-grid solution is an approximate price solution with a bounded optimality gap \(\eta\).
\end{proof}

Combining Lemma \ref{lemma:price_grid_eta} with
Lemma \ref{lemma:leader_candidate_set} which computes \(C^*(\theta)\) for each
fixed grid price, gives an approximate SE with a bounded optimality gap $\eta$.

\subsection{Stackelberg Equilibrium computation}
\label{subsec:stackelberg_computation}

By Lemma~\ref{lemma:leader_complexity}, for each fixed price \(\theta\), the
capacity \(C^*(\theta)\) can be computed with arithmetic complexity
\begin{equation}
O\!\left(
|\pazocal T|(|\pazocal N|\log\!\left(\frac{1}{\varepsilon}\right)+\log |\pazocal T|+\Gamma_C)
\right).
\label{eq:fixed_price_oracle_complexity}
\end{equation}

By Lemma \ref{lemma:price_grid_eta}, an approximate price with bounded optimality gap $\eta$
can be obtained using a uniform grid with mesh size
\(
\Delta_\theta\le \frac{2\eta}{L_\phi}.
\)
Hence the number of grid points can be chosen as
\begin{equation}
n_\theta
=
O\!\left(
\frac{L_\phi\overline\theta}{\eta}
\right).
\label{eq:number_price_grid_points}
\end{equation}
Combining \eqref{eq:fixed_price_oracle_complexity} and
\eqref{eq:number_price_grid_points}, the overall arithmetic complexity for
computing an approximate SE with bounded optimality gap $\eta$
\begin{equation}
O\!\left(
\frac{L_\phi\overline\theta}{\eta}
\,
|\pazocal T|(|\pazocal N|L+\log |\pazocal T|+\Gamma_C)
\right).
\label{eq:overall_eta_complexity}
\end{equation}
Using the Lipschitz constant in \eqref{eq:L_phi}, this becomes
\begin{equation}
O\!\left(
\frac{\overline\theta}{\eta}
\,
|\pazocal T|^2
\left(
\overline C+
\frac{\overline\theta |\pazocal N|}{\delta-\gamma}
\right)
(|\pazocal N|L+\log |\pazocal T|+\Gamma_C)
\right).
\label{eq:overall_eta_complexity_expanded}
\end{equation}
In particular, for the quadratic investment cost in \eqref{eq:cost_parameters}, the maximizer of
\(
C\mapsto \theta k C-Cost(C)
\)
over any interval can be computed in closed form by projecting the solution
of
\(
Cost'(C)=\theta k
\)
onto that interval. Hence \(\Gamma_C=O(1)\), and
\eqref{eq:overall_eta_complexity_expanded} reduces to
\begin{equation}
O\!\left(
\frac{\overline\theta}{\eta}
\,
|\pazocal T|^2
\left(
\overline C+
\frac{\overline\theta |\pazocal N|}{\delta-\gamma}
\right)
(|\pazocal N| L+\log |\pazocal T|)
\right).
\label{eq:overall_eta_complexity_quadratic}
\end{equation}
Therefore, the proposed algorithm computes an approximate Stackelberg
equilibrium in polynomial time in \(|\pazocal N|\), \(|\pazocal T|\),
\(1/\eta\), and \(1/(\delta-\gamma)\) with only logarithmic dependence on the
scalar bisection tolerance $\epsilon$ through
\(
L=O(\log(1/\varepsilon)).
\)
\section{Computational runtime}
\label{appendix:equilibrium-computation}

All experiments are implemented in Python on a machine equipped with a
13th Gen Intel Core i9-13950HX CPU at 2.20 GHz.
In the numerical study, we set $\overline{C}=5000$, $\overline{\theta}=2000$, \(\varepsilon=10^{-8}\) and
at most \(1000\) iterations for the scalar searches (see Proposition~\ref{prop:computational_tractability}). 
Table~\ref{tab:runtime} reports the average computational time for different
problem sizes. Runtime increases with both the investment horizon and the
number of SPs. Although individual runs vary across risk and uncertainty
configurations, the average runtime remains below one hour even for the largest
instance with ten SPs and a five-year horizon.

\begin{table}[H]
\centering
\caption{Runtime per experiment.}
\label{tab:runtime}
\begin{tabular}{ccc}
\hline
\(N\) (SPs) & \(I\) (years) & Runtime (min) \\
\hline
3  & 1 & 14 \\
3  & 5 & 30 \\
10 & 5 & 49 \\
\hline
\end{tabular}
\end{table}

\section{Tightness of the Lower Bound on the PoP}
\label{app:pop_gap}
Fig.~\ref{fig:max_guarantee_gap} shows the maximum gap, across SPs, between the empirical PoP estimated over 1000 revenue realizations and the theoretical lower bound~$\hat{\nu}_i$. The gap measures the conservativeness of the guarantee. It is small at low $CV$, where revenue outcomes are stable, and increases with $CV$ because the bound must protect against more severe adverse realizations. The gap is largest under risk neutrality, where SPs keep higher resource exposure. Risk aversion reduces the gap by raising the lower bound through more conservative resource choices. The high-$CV$ gap is mainly driven by SP~1, whose bound is most sensitive to severe uncertainty.
\begin{figure}[!ht]
\centering
\includegraphics[width=0.45\columnwidth]{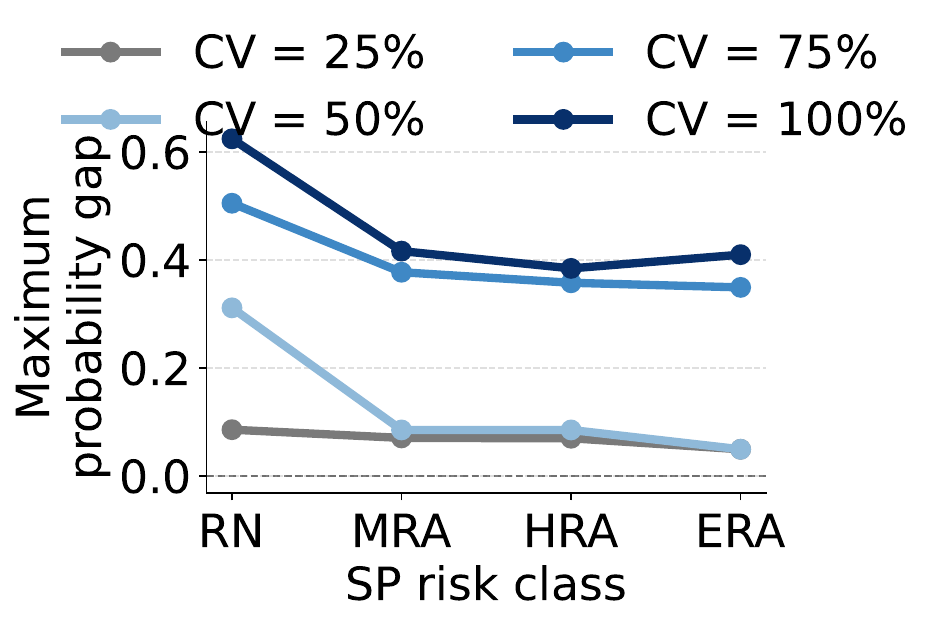}
\caption{Maximum gap across SPs between the empirical PoP and the theoretical lower bound on the PoP.}
\label{fig:max_guarantee_gap}
\end{figure}

\section{Sensitivity to SP-Side and Investment Costs}\label{appendix:sensitivity}

\subsection{Sensitivity to Operational and Congestion Parameters}
\label{subsec:sensitivity_congestion}

We next examine the sensitivity of the equilibrium to the operational and congestion parameters
$\delta$ and $\gamma$ \eqref{eq:psi}. The baseline case corresponds to the congestion setting
used in the main experiments (see Table \ref{tab:model_parameters}), where $\gamma$ is kept at its nominal value and
$\delta$ is chosen just above the uniqueness threshold, i.e.,
$\delta = 1.01\,\gamma \max_i \xi_i$. We compare this baseline with cases in
which either the operational cost parameter $\delta$ or the
congestion cost parameter $\gamma$ is increased.
\begin{figure}[h]
    \centering
    \includegraphics[width=0.65\linewidth]{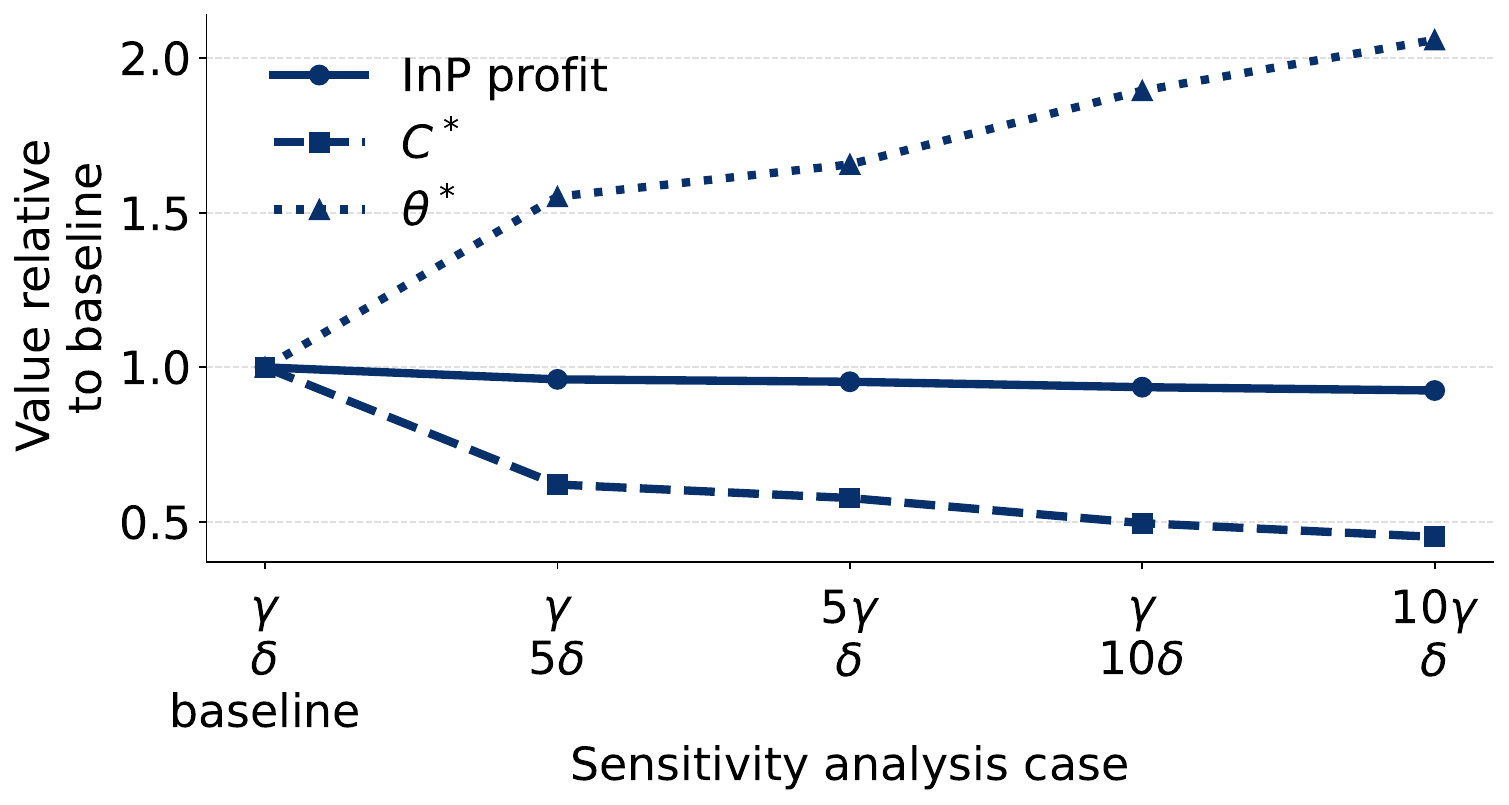}
    \caption{Sensitivity of InP profit, optimal capacity $C^*$, and optimal
    price $\theta^*$ to SP-side cost parameter changes. Values are normalized by
    their baseline values.}
    \label{fig:sensitivity_leader_curves}
\end{figure}
Figure~\ref{fig:sensitivity_leader_curves} reports the InP outcomes
normalized by their corresponding baseline values. Hence, the curves show
relative changes in InP profit, optimal capacity $C^*$, and optimal price
$\theta^*$. The figure shows that stronger
SP-side cost parameters mainly affect the InP through capacity and price. The optimal
capacity decreases as operational and congestion cost parameters become stronger, because SPs reduce their
resource commitments when SP-side costs increase. The InP, therefore, scales
down capacity investment to avoid over-provisioning. At the same time, the
optimal price increases, indicating that the InP compensates for the smaller
resource base by charging a higher access price. The InP's profit decreases
only mildly, since the joint adjustment of capacity and price partly offsets
the negative effect of operational and congestion costs.

\begin{figure}[t]
    \centering
    \includegraphics[width=0.8\linewidth]{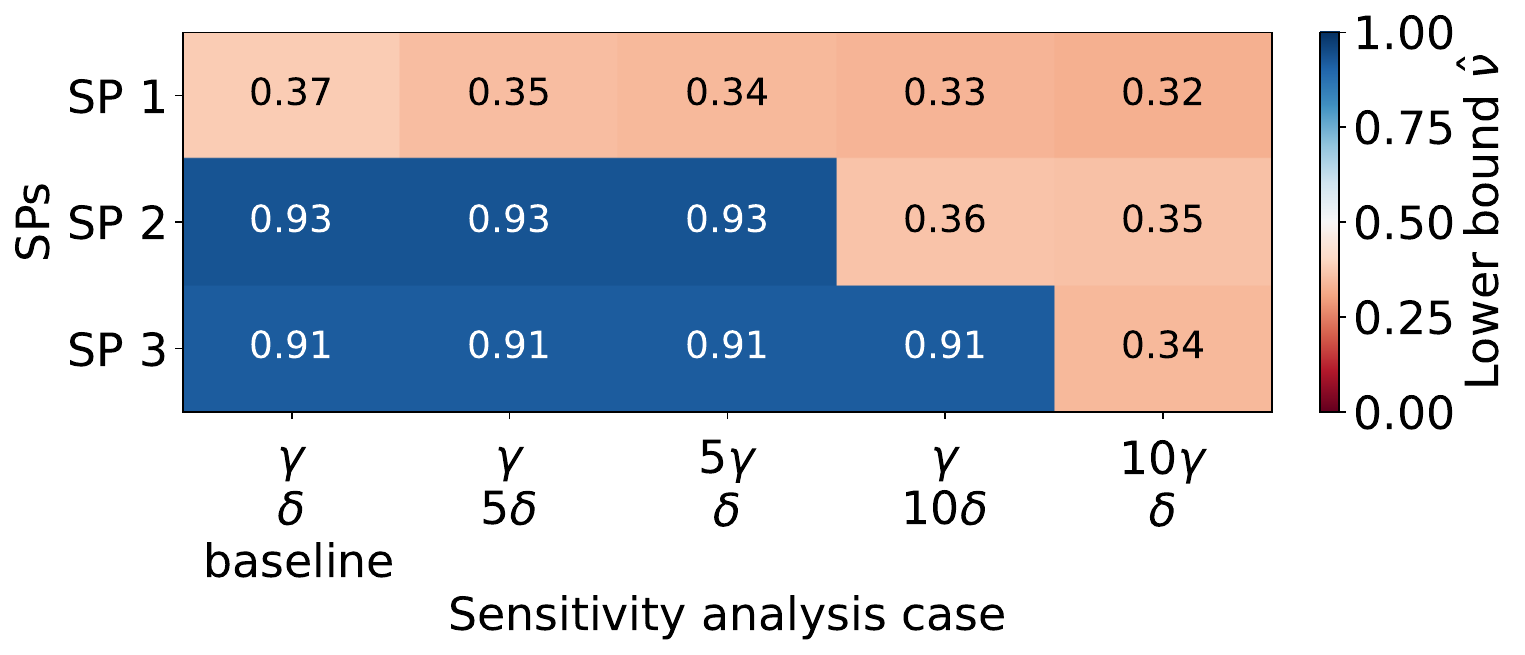}
    \caption{Sensitivity of the lower bound $\hat{\nu}_i$ on the Probability of Profit to SP-side cost parameters changes.}
    \label{fig:nu_hat_by_case_heatmap}
\end{figure}

Figure~\ref{fig:nu_hat_by_case_heatmap} reports the lower bound on SPs’ PoP and shows that the effect on this lower bound is heterogeneous. The smallest SP (SP 1) is affected across all cases, reflecting its limited revenue buffer. The medium SP (SP 2) remains
protected under moderate change in $\delta$ and $\gamma$ but becomes vulnerable under
stronger values of $\delta$ and $\gamma$. The largest SP (SP 3) is the most robust and is mainly affected
only when shared-infrastructure congestion becomes severe. Overall, increasing $\gamma$
has the strongest impact because it amplifies the coupling among SPs sharing
the same infrastructure.

\subsection{Sensitivity to Investment Costs}
\label{subsec:sensitivity_investment_costs}

We also examine the sensitivity of the equilibrium to the investment-cost
parameters. The baseline case corresponds to the cost setting used in the main
experiments (see Table \ref{tab:model_parameters}). We then multiply all investment-cost parameters, $F_0$, $d_{\text{dep}}$, $d_{\text{maint}}$,
and $d_{\text{exp}}$, by a factor of ten and compare the resulting equilibrium with
the baseline case.

Table~\ref{tab:sensitivity_investment_costs} shows that multiplying all
investment-cost parameters by ten has only a limited effect on the equilibrium.
The InP slightly reduces capacity and increases the access price, while
InP profit decreases slightly. The lower bounds on the PoP $\hat{\nu}$ remain unchanged for all SPs because the investment
cost parameters affect the InP's cost function, but do not directly enter
the SP profit, CVaR, or resource commitment decisions. Their effect on SPs is only
indirect through the small changes in $C^*$ and $\theta^*$.
\begin{table}[H]
\centering
\caption{Sensitivity to investment-cost parameters at $CV=100\%$.}
\label{tab:sensitivity_investment_costs}
\begin{tabular}{l|c|c}
\hline
Metric & Baseline & $10\times$ cost parameters \\
\hline
InP profit (\$) & $192.54$M & $191.77$M \\
$C^*$(vCore) & $86$ & $84$ \\
$\theta^*$(\$/vCore-hour) & $67$ & $69$ \\
$\hat{\nu}_1$ & $0.37$ & $0.37$ \\
$\hat{\nu}_2$ & $0.93$ & $0.93$ \\
$\hat{\nu}_3$ & $0.91$ & $0.91$ \\
\hline
\end{tabular}
\end{table}

\section{Validation with Real-World Dataset}
\label{app:dataset_aiwt}
We further strengthen the realism of our results by incorporating a real-world dataset. Since large-scale MEC deployments by network operators at their 5G base stations remain limited, possibly due to the absence of effective investment-pricing mechanisms, our study relies on the closest available empirical representation for mobile-demand. Specifically, we use the publicly available Telecom Italia Call Detail Records dataset,\footnote{\url{https://doi.org/10.7910/DVN/EGZHFV}} which contains real mobile traffic measurements collected in Milan, Italy. This dataset has been widely adopted in MEC-related studies \citep{bouet2018mobile, hussain2019mobile} as a representation for real-world user demand. We therefore use it to capture user demand variations across time and across SPs.

The dataset contains Call Detail Records collected in Milan, Italy, over November and December 2013. The records are organized by geographical grid cell and by 10-minute time interval, which makes the dataset suitable for representing time dependent mobile activity.
It also includes several categories of mobile-network usage: sent and received SMS, incoming and outgoing voice calls, and Internet activity. Each entry contains a timestamp, a square identifier, and the corresponding traffic measurements for these activity categories. Since the MEC services considered in this work are mainly associated with data-user demand, we use the Internet-usage category.
We first aggregate the Internet-usage measurements from 10-minute intervals to hourly slots. For each hour, the traffic load is computed at the grid-cell level. The grid cells are then divided among the SPs, so that each SP receives a fixed set of cells. The traffic observed in these assigned cells provides the empirical load realizations used in the dataset-based experiment. 

Let $\ell^t_{i,\omega}$ be the hourly request load associated with SP $i$ at time slot $t$ for empirical realization $\omega$. We convert this load into the revenue coefficient through
\(
a^t_{i,\omega}=e_i\ell^t_{i,\omega} \Delta,
\)
where $e_i$ denotes the benefit per request in dollars and $\Delta$ is the time slot length. Hence, $a^t_{i,\omega}$ captures the monetary revenue opportunity generated by the observed traffic assigned to SP $i$. The realizations are obtained from the empirical traffic traces. We set $e_i = 6 \times 10^{-6} \, \$/\text{req}$ according to \citep{AWSLambdaPricing}.

Table~\ref{tab:dataset_setup} summarizes the dataset-based simulation setup,
including the number of SPs, the number of time slots, the length of the time slot, and the simulation investment period.

\begin{table}[H]
\centering
\caption{Dataset-based simulation setup.}
\label{tab:dataset_setup}
\begin{tabular}{lll}
\toprule
Quantity & Symbol & Value \\
\midrule
Number of SPs & \(N\) & 5 \\
Number of time slots & \(|\pazocal T|\) & 1440 \\
Time slot length & $\Delta$ & 1 hour\\
Investment period & \(I\) & 2 months \\
\bottomrule
\end{tabular}
\end{table}

Table~\ref{tab:dataset_equilibrium} reports the equilibrium outcomes obtained
from the dataset-based experiment, including the InP's optimal capacity, access
price, profit, and computational runtime.

\begin{table}[H]
\centering
\caption{InP equilibrium outcomes for the dataset-based experiment.}
\label{tab:dataset_equilibrium}
\begin{tabular}{lll}
\toprule
Quantity & Symbol & Value \\
\midrule
Optimal capacity & \(C^*\) & 109 vCores \\
Optimal access price & \(\theta^*\) & \$27 per vCore-hour \\
InP profit & -- & \$1.37 million \\
Runtime & -- & 5 minutes \\
\bottomrule
\end{tabular}
\end{table}

Figure~\ref{fig:dataset_profit_guarantee_heatmap} reports the SP profit guarantees obtained in the dataset-based experiment. The empirical PoP is equal to one for all SPs, indicating that every SP remains profitable across the tested realizations. The theoretical lower bound on the PoP $\hat{\nu}_i$ is also high for all SPs, ranging from $0.94$ to $1.00$. This confirms that the equilibrium resource commitment decisions provide a strong profit guarantee even under real mobile-traffic data.
The figure also shows that the guarantee is not identical across SPs. SP~2 and SP 3 obtain the strongest lower bounds while SP~1 has the lowest lower bound. This difference reflects the heterogeneity in risk preferences and revenue levels across SPs. 
The small gap between the empirical PoP and the lower bound also indicates that the theoretical guarantee is relatively tight in this data-driven setting, especially for the more risk-averse SPs.
Overall, the real-data results are consistent with the findings obtained from the synthetic experiments.
\begin{figure}[t]
    \centering
    \includegraphics[width=0.82\linewidth]{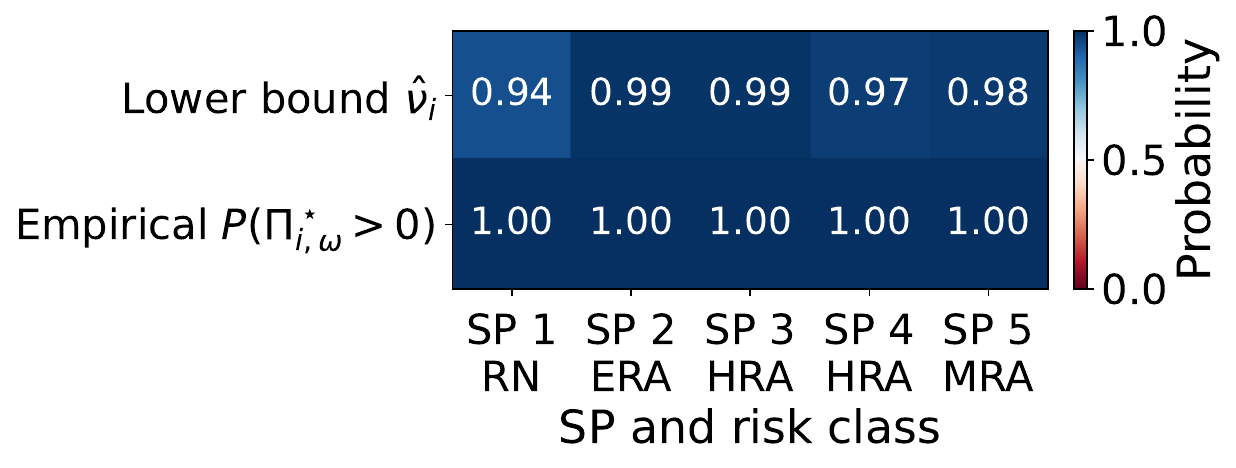}
    \caption{Dataset-based profit guarantees across SPs. Each column corresponds to one SP, with its risk class shown below the SP label. The first row reports the theoretical lower bound $\hat{\nu}_i$, while the second row reports the empirical PoP $\mathbb P(\Pi_{i,\omega}^*>0)$.}
    \label{fig:dataset_profit_guarantee_heatmap}
\end{figure}

\section{Benchmark Comparison}
\label{app:benchmark_comparison}
We evaluate the proposed model through three representative benchmarks that reflect the main modeling directions in the related literature. Because those studies are built on different assumptions, objectives, and decision structures, we do not attempt a direct numerical reproduction. Instead, we use a common experimental setting and vary one modeling layer at a time, which allows a fair comparison of the economic impact of capacity dimensioning, access pricing, and endogenous SP resource choices.

The capacity-only benchmark is motivated by capacity-investment models under uncertainty, which optimize infrastructure dimensioning without joint access-price design \citep{zhao2019flexible,sakr2026coanor}. The price-only benchmark is motivated by access-pricing and infrastructure-sharing models, which optimize price for a given infrastructure capacity \citep{huang2022edge,datar2022strategic}. The third benchmark is motivated by joint capacity--pricing models \citep{fu2018private,huang2025strategic,dong2021capacity,maglaras2003pricing}. These models capture important InP-level capacity and pricing decisions, but they do not include the SP-side structure considered here: revenue-generating SPs committing to resource levels under revenue uncertainty, risk-aware take-or-pay contracts, operational and congestion costs, and a shared capacity constraint. To obtain a benchmark that is comparable within our setting, we therefore use a joint capacity--pricing benchmark with exogenous utilization, where the InP optimizes both capacity and access price, while SP resource usage is given by a forecast rather than by the endogenous SP equilibrium.
Together, these benchmarks provide a controlled ablation of the proposed framework: the capacity-only benchmark isolates the value of access-price optimization, the price-only benchmark isolates the value of capacity dimensioning, and the exogenous-utilization benchmark isolates the value of modeling strategic risk-aware SP responses under uncertainty.
A key feature of the proposed model is that SPs can adjust their resource commitments to control downside profit exposure, which allows us to derive probabilistic lower bounds on realized positive profit.

Fig.~\ref{fig:benchmark_comparison} compares the proposed risk-aware Stackelberg model with the benchmarks under the HRA setting and $CV=100\%$. Overall, the proposed model provides the best balance across the main performance dimensions. It maintains the highest InP profit, high utilization, and strong lower bounds on the SPs' PoP at equilibrium.
The comparison with the capacity-only and price-only benchmarks demonstrates the advantage of jointly optimizing the InP's investment and pricing decisions. As shown in Fig.~\ref{fig:benchmark_utilization}, the capacity-only benchmark maintains high utilization, but fixing the access price restricts the InP's ability to balance access-revenue extraction against SP resource commitments. As a result, it yields lower InP profit in Fig.~\ref{fig:benchmark_profit} and a weaker lower bound for SP~1 and SP~3 in Fig.~\ref{fig:benchmark_nuhat}. The price-only benchmark achieves a profit close to the proposed model in Fig.~\ref{fig:benchmark_profit}, but it does so with a larger capacity investment in Fig.~\ref{fig:benchmark_capacity} and lower average utilization in Fig.~\ref{fig:benchmark_utilization}. This indicates that optimizing price alone may lead to costly over-provisioning. The exogenous-utilization benchmark performs worst across the main dimensions. Since it treats utilization as an exogenous forecast, it does not account for the fact that utilization is generated by strategic SPs whose resource commitments depend on capacity, access price, congestion, revenue uncertainty, and risk preferences. Consequently, it selects an excessively high access price in Fig.~\ref{fig:benchmark_price}, which reduces SP resource commitments and leads to severe under-utilization in Fig.~\ref{fig:benchmark_utilization}, lower InP profit in Fig.~\ref{fig:benchmark_profit}, and weaker lower-bound guarantees in Fig.~\ref{fig:benchmark_nuhat}. This confirms that ignoring the equilibrium response can weaken SP profit and lead to poor investment and pricing decisions.

Thus, the proposed model combines the main advantages of the benchmarks while avoiding their limitations. It preserves high utilization as in the capacity-only benchmark, achieves profit close to or above the best benchmark outcome, avoids the over-investment pattern observed under price-only optimization, and substantially improves over the exogenous-utilization benchmark by endogenizing utilization through the SP equilibrium response. 

\begin{figure}[!t]
    \centering

    \begin{subfigure}[t]{0.48\textwidth}
        \centering
        \includegraphics[width=\linewidth]{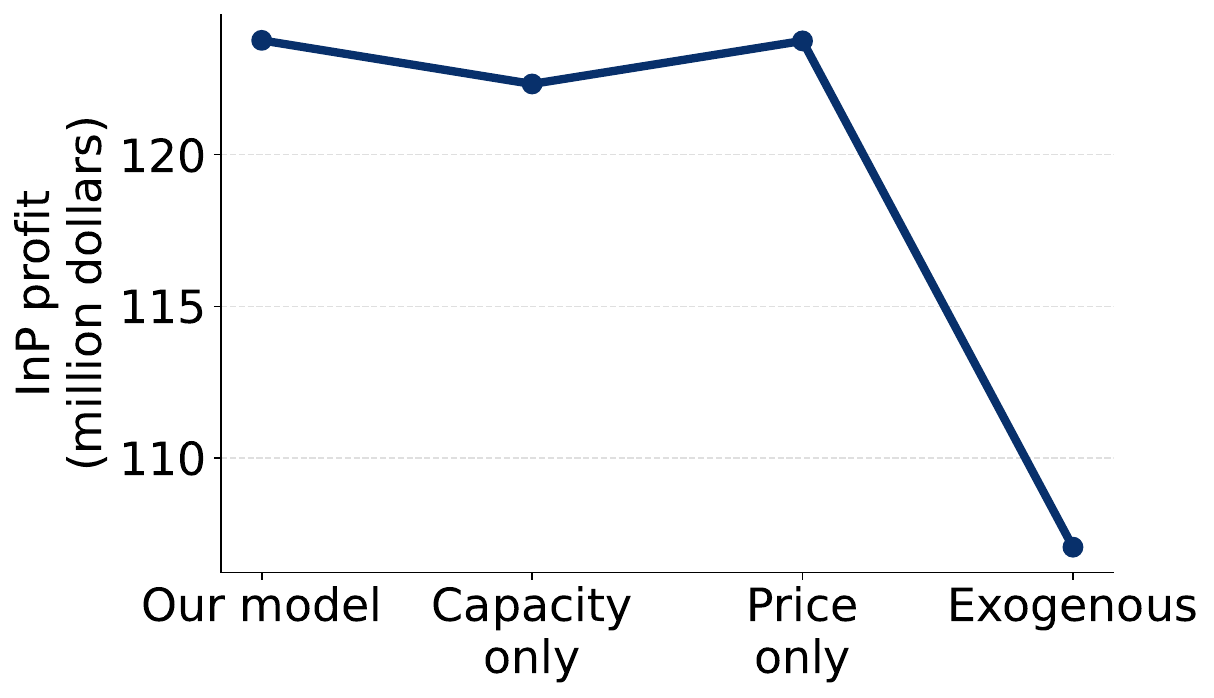}
        \caption{InP profit}
        \label{fig:benchmark_profit}
    \end{subfigure}
    \hfill
    \begin{subfigure}[t]{0.48\textwidth}
        \centering
        \includegraphics[width=\linewidth]{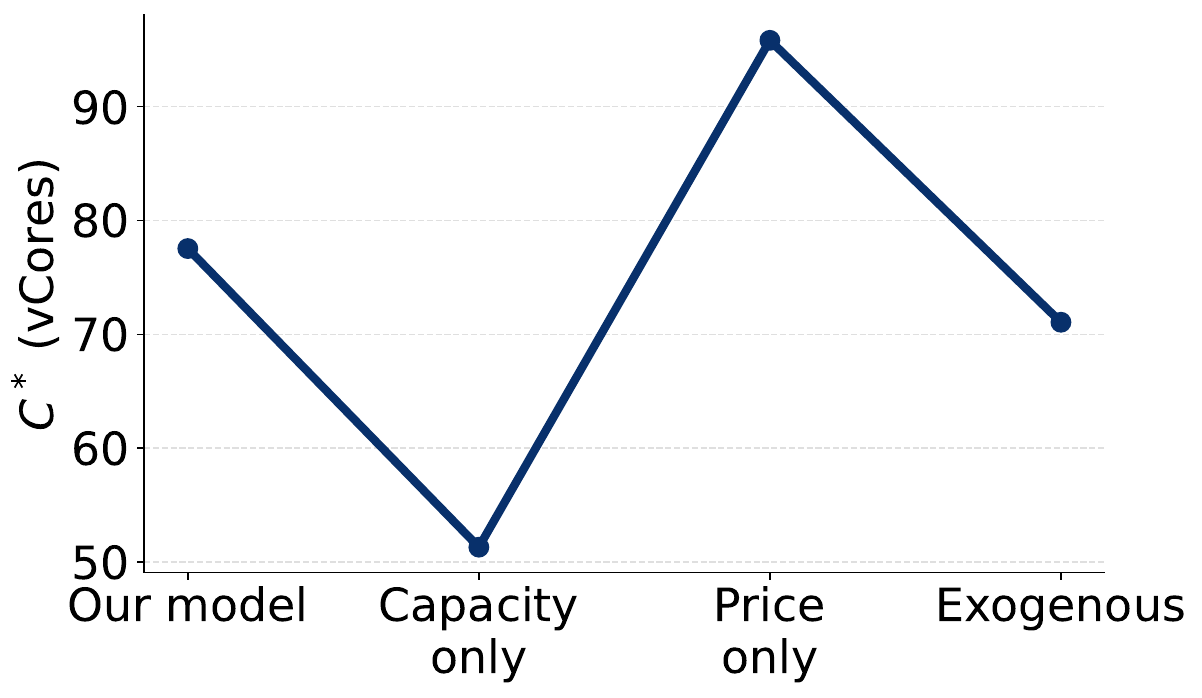}
        \caption{Optimal capacity}
        \label{fig:benchmark_capacity}
    \end{subfigure}

    \vspace{0.35cm}

    \begin{subfigure}[t]{0.48\textwidth}
        \centering
        \includegraphics[width=\linewidth]{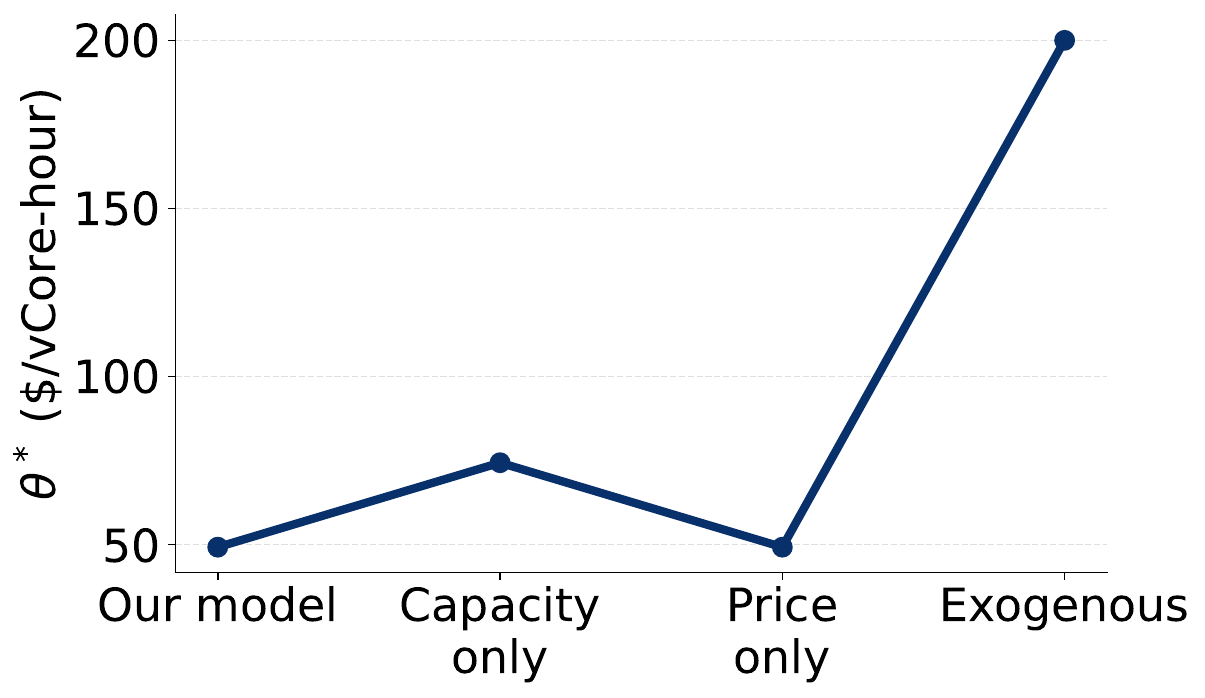}
        \caption{Optimal access price}
        \label{fig:benchmark_price}
    \end{subfigure}
    \hfill
    \begin{subfigure}[t]{0.48\textwidth}
        \centering
        \includegraphics[width=\linewidth]{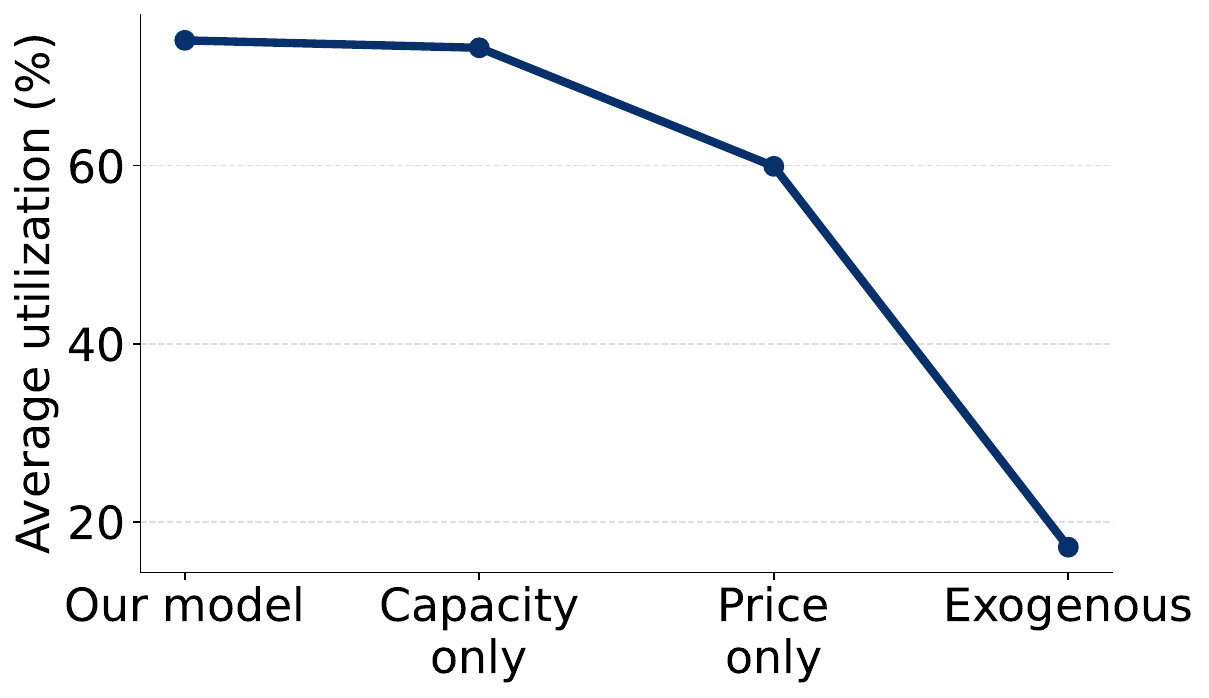}
        \caption{Average utilization}
        \label{fig:benchmark_utilization}
    \end{subfigure}

    \vspace{0.35cm}

    \begin{subfigure}[t]{0.6\textwidth}
        \centering
        \includegraphics[width=\linewidth]{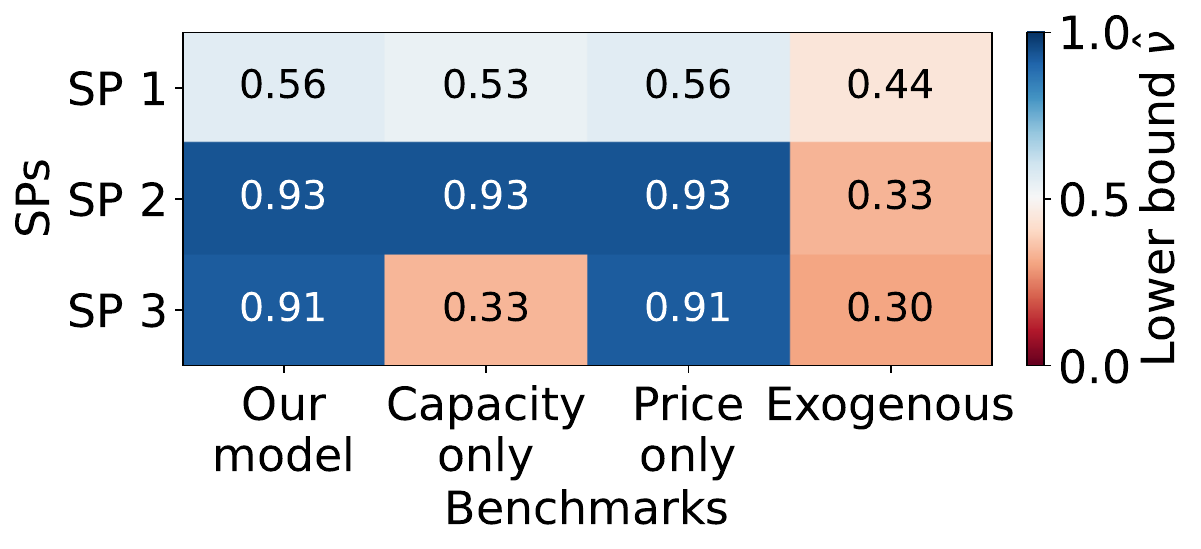}
        \caption{Lower bound on the Probability of Profit}
        \label{fig:benchmark_nuhat}
    \end{subfigure}

    \caption{Comparison of the proposed risk-aware Stackelberg model with three benchmarks under the HRA setting and $CV=100\%$.}
    \label{fig:benchmark_comparison}
\end{figure}





\bibliographystyle{elsarticle-harv}
\bibliography{ref}

\end{document}

\endinput